\newcommand{\appsection}[1]{\let\oldthesection\thesection\renewcommand{\thesection}{Appendix
\oldthesection}\section{#1}\let\thesection\oldthesection}
\newtheoremstyle{jaemin}
  {0pt}
  {0pt}
  {}
  {\parindent}
  {\itshape}
  {:}
  { }
  {#1 #2#3}
\theoremstyle{jaemin}
\newtheorem{prop}{Proposition}
\newtheorem{lem}{Lemma}
\newtheorem{cor}{Corollary}
\newcommand{\Ref}[1]{(\ref{#1})}
\newcommand{\PropRef}[1]{Proposition~\ref{#1}}
\newcommand{\PropsRef}[2]{Propositions~\ref{#1} and~\ref{#2}}
\newcommand{\PropToRef}[2]{Propositions~\ref{#1} to~\ref{#2}}
\newcommand{\LemRef}[1]{Lemma~\ref{#1}}
\newcommand{\LemToRef}[2]{Lemmas~\ref{#1} to~\ref{#2}}
\newcommand{\CorRef}[1]{Corollary~\ref{#1}}
\newcommand{\CorsRef}[2]{Corollaries~\ref{#1} and~\ref{#2}}
\newcommand{\SecRef}[1]{Section~\ref{#1}}
\newcommand{\SecToRef}[2]{Sections~\ref{#1} to~\ref{#2}}
\newcommand{\AppRef}[1]{Appendix~\ref{#1}}
\newcommand{\AppssssRef}[4]{Appendices~\ref{#1},~\ref{#2},~\ref{#3}, and~\ref{#4}}
\newcommand{\CAP}{\!\cap\!}                                               %
\newcommand{\CUP}{\!\cup\!}
\newcommand{\PREC}{\!\prec\!}                                             
\newcommand{\PRECEQ}{\!\preceq\!}
\newcommand{\SUCC}{\!\succ\!}                                             
\newcommand{\SUCCEQ}{\!\succeq\!}
\newcommand{\G}[1][]{G{#1}}                                              
\newcommand{\GV}{V}                                                      
\newcommand{\GE}{E}                                                      
\newcommand{\LG}[1][]{{\cal{G}{#1}}}                                     
\newcommand{\LGV}{{\cal{V}}}                                             
\newcommand{\LGE}{{\cal{E}}}                                             
\newcommand{\PATH}[3]{P_{#1#2}^{#3}}                                     
\newcommand{\PATHset}[3]{\bold{P}\!_{#1#2}^{#3}}                           
\newcommandtwoopt{\MONO}[2][][]{m_{\!#1}(#2)}                            
\newcommandtwoopt{\MONOL}[2][][]{\mathring{m}_{#1}(#2)}                  
\newcommandtwoopt{\EDGE}[2][][]{e_{#1}^{#2}}                             
\newcommandtwoopt{\NNNN}[2][][]{N_{#2#1}}                                
\newcommandtwoopt{\NNNNh}[2][][]{\hat{N}_{#2#1}}                         
\newcommandtwoopt{\NNNNt}[2][][]{\tilde{N}_{#2#1}}                       
\newcommandtwoopt{\mincut}[2][][]{{\small\textsf{mincut}}(#1;#2)}       
\newcommandtwoopt{\onecut}[2][][]{{\small\textsf{1cut}}(#1;#2)}         
\newcommandtwoopt{\EC}[2][][]{{\small\textsf{EC}}(#1;#2)}               
\newcommandtwoopt{\VC}[2][][]{{\small\textsf{VC}}(#1;#2)}               
\newcommand{\head}[1][]{{\small\textsf{head}}(#1)}                      
\newcommand{\tail}[1][]{{\small\textsf{tail}}(#1)}                      
\newcommand{\IN}[1][]{{\small\textsf{In}}(#1)}                          
\newcommand{\OUT}[1][]{{\small\textsf{Out}}(#1)}                        
\newcommandtwoopt{\GCD}[2][][]{{\small\textsf{GCD}}(\,#1,#2)}           
\newcommand{\RANK}[1][]{{\small\textsf{rank}}(#1)}                      
\newcommand{\DET}[1][]{{\small\textsf{det}}(#1)}                        
\newcommand{\VecSp}[1][]{\langle\,#1\,\rangle}                          
\newcommand{\ABS}[1][]{|#1|}                                            
\newcommand{\PolyEqual}{\!\!\equiv\!}
\newcommand{\PolyNotEqual}{\!\!\not\equiv\!}
\newcommand{\EqualEmpty}{\!\!=\!\emptyset}
\newcommand{\NotEqualEmpty}{\!\!\neq\!\emptyset}
\newcommand{\IMPLY}{\;$\Rightarrow$\;}
\newcommand{\OPPLY}{\;$\Leftarrow$\;}
\newcommand{\EQUIV}{\;$\Leftrightarrow$\;}
\newcommand{\AND}{\,$\wedge$\,}
\newcommand{\OR}{\,$\vee$\,}
\newcommand{\CONT}{\,false}
\newcommand{\GANA}[1][]{G_{\textrm{3ANA}}{#1}}                           
\newcommand{\TPid}{k}                                                    
\newcommand{\TPN}{K}                                                     
\newcommand{\Sid}{i}                                                     
\newcommand{\Did}{j}                                                     
\newcommand{\TSid}{t}                                                    
\newcommand{\TSN}{\tau}                                                  
\newcommand{\TSidBRAC}{\left(\!\TSid\!\right)}                           
\newcommand{\Nid}{n}                                                     
\newcommand{\NidBRAC}{\left(\!\Nid\!\right)}                             
\newcommand{\BOLDa}{L}                                                   
\newcommand{\BOLDb}{R}                                                   
\newcommand{\LRneq}{\BOLDa\,\PolyNotEqual\BOLDb}                           
\newcommand{\LReq}{\BOLDa\,\PolyEqual\BOLDb}                               
\newcommand{\LRneqVAR}{\BOLDa(\netvar)\,\PolyNotEqual\BOLDb(\netvar)}      
\newcommand{\LReqVAR}{\BOLDa(\netvar)\,\PolyEqual\BOLDb(\netvar)}          
\newcommand{\SMN}{N}                                                     
\newcommandtwoopt{\netvar}[2][][]{\bold{\underline{x}}^{#1}_{#2}}       
\newcommand{\netvarSUB}{\bold{\underline{x}}'}                          
\newcommandtwoopt{\netreal}[2][][]{\hat{\bold{x}}_{#1}^{#2}}            
\newcommandtwoopt{\Sover}[2][][]{\overline{S}_{#1#2}}                   %
\newcommandtwoopt{\Dover}[2][][]{\overline{D}_{#1#2}}                   %
\newcommandtwoopt{\GSP}[2][][]{\bold{h}_{#1}^{#2}}                      
\newcommandtwoopt{\GSPtilde}[2][][]{\bold{k}_{#1}^{#2}}                      
\newcommandtwoopt{\ChG}[2][][]{m_{#2;#1}}                                 
\newcommandtwoopt{\ChGsub}[2][][]{m'_{#2;#1}}                             
\newcommandtwoopt{\ChGANA}[2][][]{m_{#2#1}}                               
\newcommandtwoopt{\AllChG}[2][][]{\{\forall\,\ChGANA #1\textrm{ on } #2\}}
\newcommandtwoopt{\ChGL}[2][][]{\mathring{m}_{#2;#1}}                     
\newcommandtwoopt{\ChGLsub}[2][][]{\mathring{m}'_{#2;#1}}                 
\newcommandtwoopt{\ChGLT}[2][][]{\tilde{\mathring{m}}_{#2;#1}}            
\newcommand{\ChM}[3]{\bold{M}_{#2;#1}^{#3}}                               
\newcommand{\ZeroM}{\bold{0}}                                             
\newcommand{\EyeM}{\bold{I}}                                              
\newcommandtwoopt{\PreV}[2][][]{\bold{v}_{\!#1}^{\left(\!#2\!\right)}}    
\newcommandtwoopt{\PreM}[2][][]{\bold{V}_{\!#1}^{#2}}                     
\newcommandtwoopt{\PostV}[2][][]{\bold{u}_{\!#1}^{#2}}                    
\newcommandtwoopt{\PostM}[2][][]{\bold{U}_{\!#1}^{#2}}                    
\newcommandtwoopt{\InfV}[2][][]{\bold{z}_{#1}^{#2}}                       
\newcommandtwoopt{\RxV}[2][][]{\bold{y}_{\!#1}^{#2}}                      
\newcommandtwoopt{\DecV}[2][][]{\hat{\bold{z}}_{#1}^{#2}}                 
\newcommandtwoopt{\SupChM}[2][][]{\overline{\bold{M}}_{#2;#1}}            
\newcommand{\SupPreM}[1][]{\overline{\bold{V}}_{\!#1}}                    
\newcommand{\SupPostM}[1][]{\overline{\bold{U}}_{\!#1}}                   
\newcommand{\SupRxV}[1][]{\overline{\bold{y}}_{\!#1}}                     
\newcommandtwoopt{\FRM}[2][][]{\bold{S}_{#1}^{\left(\!#2\!\right)}}       
\newcommandtwoopt{\FF}[2][][]{\mathbb{F}_{#1}^{#2}}                       
\newcommandtwoopt{\RP}[2][][]{\mathbb{F}_{#1}[{#2}]}                      
\newcommandtwoopt{\NN}[2][][]{\mathbb{N}_{#1}^{#2}}                       
\newcommandtwoopt{\CC}[2][][]{\mathbb{C}_{#1}^{#2}}                       
\newcommandtwoopt{\gensize}[2][][]{l_{#1}^{#2}}                           
\newcommand{\ist}{i_{\text{st}}}
\newcommand{\iend}{i_{\text{end}}}
\newcommand{\jst}{j_{\text{st}}}
\newcommand{\jend}{j_{\text{end}}}
\newcommand{\HS}[1][]{{\bf H{#1}}}
\newcommand{\NotHS}[1][]{$\neg$\,\HS[#1]}
\newcommand{\HStilde}[1][]{{\bf K{#1}}}
\newcommand{\NotHStilde}[1][]{$\neg$\,\HStilde[#1]}
\newcommand{\LNR}{{\bf LNR}}
\newcommand{\NotLNR}{$\neg$\,\LNR}
\newcommand{\GS}[1][]{{\bf G{#1}}}
\newcommand{\NotGS}[1][]{$\neg$\,\GS[#1]}
\newcommand{\RS}[1][]{{\bf R{#1}}}
\renewcommand{\SS}[1][]{{\bf S{#1}}}
\newcommand{\NS}[1][]{{\bf N{#1}}}
\newcommand{\ES}[1][]{{\bf E{#1}}}
\newcommand{\NotES}[1][]{$\neg$\,{\bf E{#1}}}
\newcommand{\CS}[1][]{{\bf C{#1}}}
\newcommand{\NotCS}[1][]{$\neg$\,{\bf C{#1}}}
\newcommand{\DS}[1][]{{\bf D{#1}}}
\newcommand{\XS}[1][]{{\bf X{#1}}}
\newcommand{\NotXS}[1][]{$\neg$\,\XS[#1]}
\newcommand{\SWAPSD}{$(s,d\,)$}
\newcommandtwoopt{\FromTo}[2][][]{\makebox[1.1cm][c]{$s_{#1}$-to-$d_{#2}$}}
\newcommandtwoopt{\GHS}[2][][]{{\bf H$_{\mathbf{#1}}^{\mathbf{(\!#2\!)}}$}}
\newcommandtwoopt{\NotGHS}[2][][]{$\neg$\,\GHS[#1][#2]}
\newcommandtwoopt{\GES}[2][][]{{\bf E$_{\mathbf{#1}}^{\mathbf{(\!#2\!)}}$}}
\newcommandtwoopt{\NotGES}[2][][]{$\neg$\,\GES[#1][#2]}
\newcommand{\UPSTR}[1][]{{\small\textsf{upstr}}\!\left(#1\right)} 
\begin{document}
\IEEEoverridecommandlockouts
\title{Graph-Theoretic Characterization of The Feasibility of The Precoding-Based 3-Unicast Interference Alignment Scheme}

\author{ \parbox{4.3 in}{\centering Jaemin Han, Chih-Chun Wang \\
         Center of Wireless Systems and Applications (CWSA)\\
         School of Electrical and Computer Engineering, Purdue University\\
         {Email: \{han83,chihw\}@purdue.edu}}
         \hspace*{ 0.2 in}
         \parbox{2 in}{ \centering Ness B. Shroff \\
         Department of ECE and CSE\\
         The Ohio State University\\
         {Email: shroff@ece.osu.edu}}
}

\maketitle


\begin{abstract}
A new precoding-based intersession network coding (NC) scheme has
recently been proposed, which applies the interference alignment
technique, originally devised for wireless interference channels, to
the 3-unicast problem of directed acyclic networks. The main result of this work is a graph-theoretic characterization of the feasibility of the 3-unicast interference alignment scheme. To that end, we first investigate several key relationships between
the point-to-point network channel gains and the underlying graph structure. Such
relationships turn out to be critical when characterizing graph-theoretically the
feasibility of precoding-based solutions.

\end{abstract}

\begin{keywords}
Asymptotic interference alignment, interference channels, intersession network coding, 3-unicast networks.
\end{keywords}

\section{Introduction}\label{Sec1}

Deciding whether there exists a {\em network code}
\cite{AhlswedeCaiLiYeung:IT00} that satisfies the network traffic
demands has been a long-standing open challenge when there are multiple
coexisting source-destination pairs (sessions) in the network. For the
degenerate case in which there is only one multicast session in the
network, also termed the {\em single-multicast} setting, it is known
that linear network coding \cite{LiYeungCai:IT03} is capable of
achieving the information-theoretic capacity. Several papers have
since studied the network code construction problem for the single
multicast setting
\cite{KoetterMedard:TON03,SandersEgnerTolhuizen:ACM-SPAA03,ChouWuJain:Allerton03,HoMedardKoetterKargerEffrosShiLeong:IT06}.


On the other hand, when there are multiple coexisting sessions in the
network, the corresponding network code design/analysis problem, also
known as the {\em intersession network coding} (INC) problem, becomes
highly challenging due to the potential interference within the
network. For example, {\em linear network coding} no longer achieves
the capacity \cite{DoughertyFreilingZeger:IT05}. Deciding the
existence of a (linear) network code that satisfies general traffic
demands becomes an NP-hard problem
\cite{KoetterMedard:TON03,Lehman:PhD05}. Thus, recent INC studies have
focused on the optimal characterizations over some special networks or
under some restrictive rate constraints. The results along this
direction include the {\em index coding} problem
\cite{RouayhebCostas:ISIT08}, finding the capacity regions for {\em
  directed cycles} \cite{HarveyKleinbergLehman:IT06}, degree-2
three-layer {\em directed acyclic networks} (DAG)
\cite{YanYangZhang:IT06}, {\em node-constrained} line and star
networks \cite{YazdiKramer:IT11}, and the 1-hop broadcast {\em packet
  erasure channel} with feedback \cite{Wang:IT12a}, and for networks
with integer link capacity and two coexisting rate-1 multicast
sessions \cite{WangShroff:IT10}.

Recently, the authors in
\cite{DasVishwanathJafarMarkopoulou:ISIT10,RamakrishnanDaszMalekiMarkopoulouJafarVishwanath:Allerton10}
have applied interference alignment (IA), originally
developed for wireless interference channels \cite{CadambeJafar:IT08a},
to the scenario of 3 coexisting unicast sessions called 3-unicast
Asymptotic Network Alignment (ANA). The concept of interference
alignment leads to a new perspective on INC problems. Namely, the
network designer focuses on designing the {\em precoding} and {\em
  decoding mappings} at the source and destination node while allowing
randomly generated {\em local encoding kernels}
\cite{HoMedardKoetterKargerEffrosShiLeong:IT06} within the
network. Compared to the classic algebraic framework that fully
controls the encoder, decoder, and local encoding kernels
\cite{KoetterMedard:TON03}, this precoding-based framework trades off
the ultimate achievable throughput with a distributed,
implementation-friendly structure that exploits pure random linear NC
in the interior of the network. Their initial study on 3-unicast
networks shows that, under certain network topology and traffic
demand, the precoding-based NC can perform better than the pure
routing (non-coding) solution and a few widely-used simple linear NC
solutions. Such results strike a new balance between practicality
and throughput enhancement.

This work, motivated by its practical advantages over the classic network coding
framework, focuses exclusively on the precoding-based
framework and characterizes its corresponding properties. We then use
the newly developed results to analyze the 3-unicast ANA scheme
proposed in
\cite{DasVishwanathJafarMarkopoulou:ISIT10,RamakrishnanDaszMalekiMarkopoulouJafarVishwanath:Allerton10}. Specifically,
the existing results
\cite{DasVishwanathJafarMarkopoulou:ISIT10,RamakrishnanDaszMalekiMarkopoulouJafarVishwanath:Allerton10}
show that the 3-unicast ANA scheme achieves asymptotically half of the
interference-free throughput for each transmission pair when a set of
algebraic conditions on the {\em channel gains} of the networks are
satisfied. Note that for the case of {\em wireless interference channels}, these algebraic feasibility
conditions can be satisfied with close-to-one probability provided the
channel gains are independently and continuously distributed random
variables \cite{CadambeJafar:IT08a}. For comparison, the ``network
channel gains'' are usually highly correlated\footnote{The correlation
  depends heavily on the underlying network topology.} discrete random variables and thus the
algebraic channel gain conditions do not always hold with close-to-one
probability. Moreover, except for some very simple networks, checking
whether the algebraic channel gain conditions hold turns out to be
computationally prohibitive. As a result, we need new and efficient ways
to decide whether the network of interest admits a 3-unicast ANA
scheme that achieves half of the interference-free throughput. The
results in this work answer this question by developing new
graph-theoretic conditions that characterize the feasibility of the
3-unicast ANA scheme. The proposed graph-theoretic conditions can be easily computed and checked
within polynomial time.

The key contributions of this work are:

\begin{itemize}
\item We formulate the precoding-based framework and identify several fundamental properties (\PropToRef{Prop1}{Prop3}), which allow us to bridge the gap between the algebraic feasibility of the precoding-based NC problem and the underlying network topology.

\item Using these relationships, we characterize the graph-theoretic conditions for the feasibility of the 3-unicast ANA scheme.

\end{itemize}

The rest of the paper is organized as follows: \SecRef{Sec2} introduces some useful graph-theoretic definitions, and compares the algebraic formulation of the proposed precoding-based framework to that of the classic NC framework \cite{KoetterMedard:TON03}. In addition, the 3-unicast ANA scheme proposed by \cite{DasVishwanathJafarMarkopoulou:ISIT10,RamakrishnanDaszMalekiMarkopoulouJafarVishwanath:Allerton10} is introduced in the context of the precoding-based framework. \SecRef{Sec2} also discusses its algebraic feasibility conditions and the graph-theoretic conjectures proposed in \cite{RamakrishnanDaszMalekiMarkopoulouJafarVishwanath:Allerton10}. \SecRef{Sec3} identifies several key properties of the precoding-based framework and provides the corresponding proofs. Based on the new fundamental properties, in \SecRef{Sec4} we develop the graph-theoretic necessary and sufficient conditions for the feasibility of the 3-unicast ANA scheme. \SecRef{Sec5} concludes this work.

\section{Precoding-Based Intersession NC}\label{Sec2}
\subsection{System Model and Some Graph-Theoretic Definitions}\label{Sec2A}

Consider a DAG $\G\!=\!(\GV,\GE)$ where $\GV$ is the set of nodes and $\GE$ is the set of directed edges. Each edge $e\!\in\! \GE$ is represented by $e\!=\!uv$, where $u \!=\! \tail[e]$ and $v \!=\! \head[e]$ are the tail and head of $e$, respectively. For any node $v\!\in\! \GV$, we use $\IN[v]\!\subset\! \GE$ to denote the collection of its incoming edges $uv\!\in\! E$. Similarly, $\OUT[v]\!\subset\! E$ contains all the outgoing edges $vw\!\in\! \GE$.

A path $\PATH{}{}{}$ is a series of adjacent edges $e_1 e_2 \cdot\!\cdot\!\cdot e_{k}$ where $\head[e_i]\!=\!\tail[e_{i+1}]\,\forall\,i\!\in\{1,\cdot\!\cdot\!\cdot,k\!-\!1\}$. We say that $e_1$ and $e_k$ are the starting and ending edges of $P$, respectively. For any path $\PATH{}{}{}$, we use $e\!\in\!\PATH{}{}{}$ to indicate that an edge $e$ is used by $\PATH{}{}{}$. For a given path $\PATH{}{}{}$, $x\PATH{}{}{}y$ denotes the path segment of $\PATH{}{}{}$ from node $x$ to node $y$. A path starting from node $x$ and ending at node $y$ is sometimes denoted by $P_{xy}$. By slightly abusing the notation, we sometimes substitute the nodes $x$ and $y$ by the edges $e_1$ and $e_2$ and use $e_1Pe_2$ to denote the path segment from $\tail[e_1]$ to $\head[e_2]$ along $P$. Similarly, $P_{e_1e_2}$ denotes a path from $\tail[e_1]$ to $\head[e_2]$. We say a node $u$ is an {\em upstream} node of a node $v$ (or $v$ is a {\em downstream} node of $u$) if $u\!\neq\!v$ and there exists a path $\PATH{u}{v}{}$, and we denote it as $u\PREC v$. If neither $u\PREC v$ nor $u\SUCC v$, then we say that $u$ and $v$ are {\em not reachable} from each other. Similarly, $e_1$ is an upstream edge of $e_2$ if $\head[e_1]\PRECEQ\tail[e_2]$ (where $\PRECEQ$ means either $\head[e_1]\!\PREC\tail[e_2]$ or $\head[e_1]\!\!=\!\tail[e_2]$), and we denote it by $e_1\PREC e_2$. Two distinct edges $e_1$ and $e_2$ are not reachable from each other, if neither $e_1\PREC e_2$ nor $e_1\SUCC e_2$. Given any edge set $E_1$, we say an edge $e$ is one of the most upstream edges in $E_1$ if (i) $e\in E_1$; and (ii) $e$ is not reachable from any other edge $e'\in E_1\backslash e$. One can easily see that the most upstream edge may not be unique. The collection of the most upstream edges of $E_1$ is denoted by $\UPSTR[E_1]$. 
A {\em $k$-edge cut} (sometimes just the ``edge cut") separating node sets $U\!\subset\! \GV$ and $W\!\subset\!\GV$ is a collection of $k$ edges such that any path from any $u\!\in\!U$ to any $w\!\in\!W$ must use at least one of those $k$ edges. The value of an edge cut is the number of edges in the cut. (A $k$-edge cut has value $k$.) We denote the minimum value among all the edge cuts separating $U$ and $W$ as $\EC[U][W]$. By definition, we have $\EC[U][W]\!=\!0$ when $U$ and $W$ are already disconnected. By convention, if $U\cap W\NotEqualEmpty$, we define $\EC[U][W]\!=\!\infty$. We also denote the collection of all distinct $1$-edge cuts separating $U$ and $W$ as $\onecut[U][W]$. 

\subsection{The Algebraic Framework of Linear Network Coding}\label{Sec2B}

Given a network $\G\!=\!(\GV,\GE)$, we consider the multiple-unicast problem in which there are $\TPN$ coexisting source-destination pairs $(s_k,d_k)$, $k\!=\!1,\cdots\!,K$. Let $\gensize[\TPid]$ denote the number of information symbols that $s_{\TPid}$ wants to transmit to $d_{\TPid}$. Each information symbol is chosen independently and uniformly from a finite field $\FF[\!q]$ with some sufficiently large $q$. 

Following the widely-used instantaneous transmission model for DAGs \cite{KoetterMedard:TON03}, we assume that each edge is capable of transmitting one symbol in $\FF[\!q]$ in one time slot without delay. We consider {\em linear network coding} over the entire network, i.e., a symbol on an edge $e\!\in\! E$ is a linear combination of the symbols on its adjacent incoming edges $\IN[\tail[e]]$. The coefficients (also known as the network variables) used for such linear combinations are termed local encoding kernels. The collection of all local kernels $x_{e'\!e'\!'}\!\in\!\FF[\!q]$ for all adjacent edge pairs $(e',e'')$ is denoted by $\netvar\!=\!\{x_{e'\!e'\!'} : (e',e'')\!\in\!\GE^2\text{ where }\head[e']\!=\!\tail[e'']\}$. See \cite{KoetterMedard:TON03} for detailed discussion. Following this notation, the channel gain $\ChG[{e_2}][{e_1}](\netvar)$ from an edge $e_1$ to an edge $e_2$ can be written as a polynomial with respect to $\netvar$. More rigorously, $\ChG[{e_2}][{e_1}](\netvar)$ can be rewritten as
\begin{equation*}\label{Ch_DEF}
\ChG[{e_2}][{e_1}](\netvar) = \sum_{\forall \PATH{e_1}{\!e_2}{}\in\PATHset{e_1}{\!e_2}{}}\!\left(\prod_{\forall\,e'\!,\,e'\!'\in\PATH{e_1}{\!e_2}{} \,\text{where}\,\head[\!e']=\tail[\!e'\!'] } \!\!\!\!\!\!\!\!\!\!\!\!\!\!\!\!\!\!\!\!\!\!\!\!\!\!\!x_{e'\!e'\!'}\;\;\;\;\;\;\;\;\;\right) 
\end{equation*}
where $\PATHset{e_1}{\!e_2}{}$ denotes the collection of all distinct paths from $e_1$ to $e_2$.

By convention \cite{KoetterMedard:TON03}, we set $\ChG[e_2][e_1](\netvar)\!=\!1$ when $e_1\!=\!e_2$ and set $\ChG[e_2][e_1](\netvar)\!=\!0$ when $e_1\!\neq\!e_2$ and $e_2$ is not a downstream edge of $e_1$. The channel gain from a node $u$ to a node $v$ is defined by an $\ABS[{\IN[v]}]\!\times\!\ABS[{\OUT[u]}]$ polynomial matrix $\ChM{v}{u}{}(\netvar)$, where its $(i,j)$-th entry is the (edge-to-edge) channel gain from the $j$-th outgoing edge of $u$ to the $i$-th incoming edge of $v$. When considering source $s_i$ and destination $d_j$, we use $\ChM{\Did}{\Sid}{}(\netvar)$ as shorthand for $\ChM{d_j}{s_i}{}\!(\netvar)$.

We allow the precoding-based NC to code across $\TSN$ number of time slots, which are termed the precoding frame and $\TSN$ is the frame size. The network variables used in time slot $\TSid$ is denoted as $\netvar[\TSidBRAC]$, and the corresponding channel gain from $s_{\Sid}$ to $d_{\Did}$ becomes $\ChM{\Did}{\Sid}{}(\netvar[\TSidBRAC])$  for all $t=1,\cdots\!,\TSN$.

With these settings, 
let $\InfV[\Sid][]\!\in\!\FF[\!q][{\,\gensize[\Sid]\times 1}]$ be the set of to-be-sent information symbols from $s_i$. Then, for every time slot $\TSid\!=\!1,\cdots,\TSN$, we can define the precoding matrix $\PreM[\Sid][\TSidBRAC]\!\in\!\FF[\!q][{|\OUT[s_{\Sid}]|\times\gensize[\Sid]}]$ for each source $s_{\Sid}$. Given the precoding matrices, each $d_{\Did}$ receives an $|\IN[d_j]|$-dimensional column vector $\RxV[\Did][\TSidBRAC]$ at time $t$: \vspace{-0.015\columnwidth}
\begin{equation*}
\RxV[\Did][\TSidBRAC](\netvar[\TSidBRAC]) = \ChM{\Did}{\Did}{}(\netvar[\TSidBRAC]) \PreM[\Did][\TSidBRAC] \InfV[\Did] + \sum_{ \substack{ \Sid=1 \\ \Sid\neq\Did}}^{\TPN} \ChM{\Did}{\Sid}{}(\netvar[\TSidBRAC]) \PreM[\Sid][\TSidBRAC] \InfV[\Sid]. \vspace{-0.015\columnwidth}
\end{equation*}
where we use the input argument ``$(\netvar[\TSidBRAC])$" to emphasize that $\ChM{\Did}{\Did}{}$ and $\RxV[\Did][\TSidBRAC]$ are functions of the network variables $\netvar[\TSidBRAC]$.

This system model can be equivalently expressed as \vspace{-0.015\columnwidth}
\begin{equation}\label{pinc_sysmodel2} 
\SupRxV[\Did] = \SupChM[\Did][\Did] \SupPreM[\Did] \,\InfV[\Did]  + \sum_{\substack{ \Sid=1 \\ \Sid\neq\Did}}^{\TPN} \SupChM[\Did][\Sid] \SupPreM[\Sid] \,\InfV[\Sid], \vspace{-0.015\columnwidth}
\end{equation}
where $\SupPreM[\Sid]$ is the overall precoding matrix for each source $s_{\Sid}$ by vertically concatenating $\{\PreM[\Sid][\TSidBRAC]\}_{t=1}^{\TSN}$, and $\SupRxV[\Did]$ is the vertical concatenation of $\{\RxV[\Did][\TSidBRAC](\netvar[\TSidBRAC])\}_{t=1}^{\TSN}$. The overall channel matrix $\SupChM[\Did][\Sid]$ is a block-diagonal polynomial matrix with $\{\ChM{\Did}{\Sid}{}(\netvar[\TSidBRAC])\}_{t=1}^{\TSN}$ as its diagonal blocks. Note that $\SupChM[\Did][\Sid]$ is a polynomial matrix with respect to the network variables $\{\netvar[\TSidBRAC]\}_{t=1}^{\TSN}$.

After receiving packets for $\TSN$ time slots, each destination $d_j$ applies the overall decoding matrix $\SupPostM[\Did]\!\in\!\FF[\!q][{\,\gensize[\Did]\times(\TSN\cdot\ABS[{\IN[d_{\Did}]}\!])}]$. Then, the decoded message vector $\DecV[\Did]$ can be expressed as \vspace{-0.015\columnwidth} \begin{equation}\label{pinc_sysmodel3} 
\DecV[\Did] = \SupPostM[\Did]\SupRxV[\Did] = \SupPostM[\Did]\SupChM[\Did][\Did]\SupPreM[\Did] \,\InfV[\Did] + \sum_{\substack{ \Sid=1 \\ \Sid\neq\Did}}^{\TPN} \SupPostM[\Did] \SupChM[\Did][\Sid] \SupPreM[\Sid] \,\InfV[\Sid]. \vspace{-0.015\columnwidth}
\end{equation}

The combined effects of precoding, channel, and decoding from $s_i$ to $d_j$ is $\SupPostM[\Did] \SupChM[\Did][\Sid] \SupPreM[\Sid]$, which is termed the {\em network transfer matrix} from $s_i$ to $d_j$. We say that the precoding-based NC problem is feasible if there exists a pair of encoding and decoding matrices $\{\SupPreM[\Sid],\forall\,\Sid\}$ and $\{\SupPostM[\Did],\forall\,\Did\}$ (which may be a function of $\{\netvar[\TSidBRAC]\}_{t=1}^{\TSN}$) such that when choosing each element of the collection of network variables $\{\netvar[\TSidBRAC]\}_{t=1}^{\TSN}$ independently and uniformly randomly from $\FF[\!q]$, with high probability,
\begin{equation}\label{PF2}
\begin{split}
& \SupPostM[\Did] \SupChM[\Did][\Sid] \SupPreM[\Sid] = \EyeM\,\quad\text{(the identity matrix)} \quad\forall\,i=j,  \\
& \SupPostM[\Did] \SupChM[\Did][\Sid] \SupPreM[\Sid] = \ZeroM\quad\forall\,i\neq j.
\end{split}
\end{equation}

{\em Remark 1:} One can easily check by the cut-set bound that a necessary condition for the feasibility of a precoding-based NC problem is for the frame size $\TSN\!\geq\!\max_{\TPid}\{\gensize[\TPid] / \EC[s_{\TPid}][d_{\TPid}]\}$.

{\em Remark 2:} Depending on the time relationship of $\SupPreM[\Sid]$ and $\SupPostM[\Did]$ with respect to the network variables $\{\netvar[\TSidBRAC]\}_{t=1}^{\TSN}$, a precoding-based NC solution can be classified as causal vs.\ non-causal and  time-varying vs.\ time-invariant schemes.

For convenience to the reader, we have summarized in Table~\ref{tb:keydefs1} several key definitions used in the precoding-based framework.

\begin{table}[t]
\hfill{}
\begin{tabular}{|cl|}
\multicolumn{2}{c}{\small Notations for the precoding-based framework} \\
\hline
$K$                             & \parbox{6.4cm}{\footnotesize The number of coexisting unicast sessions } \\
$l_i$                           & \parbox{6.4cm}{\footnotesize The number of information symbols sent from $s_i$ to $d_i$ } \\
$\netvar$                       & \parbox{6.4cm}{\footnotesize The network variables / local encoding kernels} \\
\multirow{2}{*}{$\ChG[{e_2}][{e_1}](\netvar)$}  & \multirow{2}{*}{\parbox{6.4cm}{\footnotesize The channel gain from an edge $e_1$ to an edge $e_2$, which is a polynomial with respect to $\netvar$} } \\
& \\[-3pt]
\multirow{3}{*}{$\ChM{v}{u}{}(\netvar)$}       & \multirow{3}{*}{\parbox{6.4cm}{\footnotesize The channel gain matrix from a node $u$ to a node $v$ where its $(i,j)$-th entry is the channel gain from $j$-th outgoing edge of $u$ to $i$-th incoming edge of $v$} } \\
& \\
& \\[-3pt]
$\TSN$                          & \parbox{6.4cm}{\footnotesize The precoding frame size (number of time slot) } \\
$\netvar[\TSidBRAC]$            & \parbox{6.4cm}{\footnotesize The network variables corresponding to time slot $t$ } \\
$\PreM[\Sid][\TSidBRAC]$        & \parbox{6.4cm}{\footnotesize The precoding matrix for $s_i$ at time slot $t$ } \\
\multirow{2}{*}{$\ChM{j}{i}{}(\netvar[\TSidBRAC])$} & \multirow{2}{*}{\parbox{6.4cm}{\footnotesize The channel gain matrix from $s_i$ to $d_j$ at time slot $t$, shorthand for $\ChM{d_j}{s_i}{}(\netvar[\TSidBRAC])$} } \\
& \\[-3pt]
$\PostM[\Did][\TSidBRAC]$       & \parbox{6.4cm}{\footnotesize The decoding matrix for $d_j$ at time slot $t$ } \\
\multirow{2}{*}{$\SupPreM[\Sid]$} & \multirow{2}{*}{\parbox{6.4cm}{\footnotesize The overall precoding matrix for $s_i$ for the entire
precoding frame $t=1,\cdots,\tau$.} } \\
& \\[-3pt]
\multirow{2}{*}{$\SupChM[\Did][\Sid]$} & \multirow{2}{*}{\parbox{6.4cm}{\footnotesize The overall channel gain matrix from $s_i$ to $d_j$ for the entire precoding frame $t=1,\cdots,\tau$.} } \\
& \\[-3pt]
\multirow{2}{*}{$\SupPostM[\Did]$} & \multirow{2}{*}{\parbox{6.4cm}{\footnotesize The overall decoding matrix for $d_j$ for the entire precoding frame $t=1,\cdots,\tau$.} } \\
& \\[-3pt]
\hline

\multicolumn{2}{c}{\tiny } \\
\multicolumn{2}{c}{\small Notations for the 3-unicast ANA network } \\
\hline
$\ChGANA[j][i](\netvar)$        & \parbox{6.4cm}{\footnotesize The channel gain from $s_i$ to $d_j$} \\
\multirow{2}{*}{$\BOLDa(\netvar)$} & \multirow{2}{*}{\parbox{6.4cm}{\footnotesize The product of three channel gains: $\ChGANA[3][1](\netvar)\ChGANA[2][3](\netvar)$ $\ChGANA[1][2](\netvar)$} } \\
& \\[-3pt]
\multirow{2}{*}{$\BOLDb(\netvar)$} & \multirow{2}{*}{\parbox{6.4cm}{\footnotesize The product of three channel gains: $\ChGANA[2][1](\netvar)\ChGANA[3][2](\netvar)$ $\ChGANA[1][3](\netvar)$} } \\
& \\[-3pt]
\hline

\end{tabular}
\hfill{}
\caption{Key definitions of the precoding-based framework}
\label{tb:keydefs1}
\end{table}

\subsection{Comparison to the Existing Linear NC Framework}\label{Sec2C}

The authors in \cite{KoetterMedard:TON03} established the algebraic framework for linear network coding, which admits similar encoding and decoding equations as in \Ref{pinc_sysmodel2} and \Ref{pinc_sysmodel3} and the same algebraic feasibility conditions as in \Ref{PF2}. This original work focuses on a single time slot $\TSN\!=\!1$ while the corresponding results can be easily generalized for $\TSN\!>\!1$ as well. Note that $\TSN\!>\!1$ provides a greater degree of freedom when designing the coding matrices $\{\SupPreM[\Sid],\forall\,\Sid\}$ and $\{\SupPostM[\Did],\forall\,\Did\}$. Such {\em time extension} turns out to be especially critical in a precoding-based NC design as it is generally much harder (sometimes impossible) to design $\{\SupPreM[\Sid],\forall\,\Sid\}$ and $\{\SupPostM[\Did],\forall\,\Did\}$ when $\TSN\!=\!1$. An example of this time extension will be discussed in Section II-D.

The main difference between the precoding-based framework and the classic framework is that the latter allows the NC designer to control the network variables $\netvar$ while the former assumes that the entries of $\netvar$ are chosen independently and uniformly randomly. One can thus view the precoding-based NC as a distributed version of classic NC schemes that trades off the ultimate achievable performance for more practical distributed implementation (not controlling the behavior in the interior of the network).

One challenge when using algebraic feasibility conditions \Ref{PF2} is that given a network code, it is easy to verify whether or not \Ref{PF2} is satisfied, but it is difficult to decide whether there exists a NC solution satisfying \Ref{PF2}, see \cite{KoetterMedard:TON03,Lehman:PhD05}. Only in some special scenarios can we convert those algebraic conditions into some graph-theoretic conditions for which one can decide the existence of a feasible network code in polynomial time. For example, if there exists only a single session $(s_1,d_1)$ in the network, then the existence of a NC solution satisfying \Ref{PF2} is equivalent to the time-averaged rate $l_1/\TSN$ being no larger than $\EC[s_1][d_1]$. Moreover, if $\left(l_1/\TSN\right)\!\leq\!\EC[s_1][d_1]$, then we can use random linear network coding \cite{HoMedardKoetterKargerEffrosShiLeong:IT06} to construct the optimal network code. Another example is when there are only two sessions $(s_1,d_1)$ and $(s_2,d_2)$ with $l_1\!=\!l_2\!=\!\TSN\!=\!1$. Then, the existence of a network code satisfying \Ref{PF2} is equivalent to the conditions that the $1$-edge cuts in the network are properly placed in certain ways \cite{WangShroff:IT10}. Motivated by the above observation, the main focus of this work is to develop new graph-theoretic conditions for a special scenario of the precoding-based NC, the 3-unicast Asymptotic Network Alignment (ANA) scheme, which will be introduced in the next subsection.

\subsection{The 3-Unicast Asymptotic Network Alignment (ANA) Scheme}\label{Sec2D}

Before proceeding, we introduce some algebraic definitions. We say that a set of polynomials $\GSP(\netvar)\!=\!\{h_1(\netvar),...,h_{\SMN}(\netvar)\}$ is linearly dependent if and only if $\sum_{k=1}^{\SMN}\alpha_{k}h_k(\netvar)\!=\!0$ for some coefficients $\{\alpha_k\}_{k=1}^{\SMN}$ that are not all zeros. By treating $\GSP(\netvar[(\!k\!)])$ as a polynomial row vector and vertically concatenating them together, we have an $M\!\times\!\SMN$ polynomial matrix $[\GSP(\netvar[(\!k\!)])]_{k=1}^{M}$. We call this polynomial matrix a {\em row-invariant} matrix since each row is based on the same set of polynomials $\GSP(\netvar)$ but with different variables $\netvar[(\!k\!)]$ for each row $k$, respectively. We say that the row-invariant polynomial matrix $[\GSP(\netvar[(\!k\!)])]_{k=1}^{M}$ is generated from $\GSP(\netvar)$. For two polynomials $g(\netvar)$ and $h(\netvar)$, we say $g(\netvar)$ and $h(\netvar)$ are {\em equivalent}, denoted by $g(\netvar)\PolyEqual h(\netvar)$, if $g(\netvar)\!=c\cdot h(\netvar)$ for some non-zero $c\in\FF[\!q]$. If not, we say that $g(\netvar)$ and $h(\netvar)$ are {\em not equivalent}, denoted by $g(\netvar)\PolyNotEqual h(\netvar)$. We use $\GCD[g(\netvar)][h(\netvar)]$ to denote the greatest common factor of the two polynomials.

We now consider a special class of networks, called the 3-unicast ANA network: A network $\G$ is a 3-unicast ANA network if (i) there are 3 source-destination pairs, $(s_i,d_i),i\!=\!1,2,3$, where all source/destination nodes are distinct; (ii) $|\IN[s_i]|\!=\!0$ and $|\OUT[s_i]|\!=\!1$ $\forall\,i$ (We denote the only outgoing edge of $s_i$ as $e_{s_i}$, termed the $s_i$-source edge.); (iii) $|\IN[d_j]|\!=\!1$ and $|\OUT[d_j]|\!=\!0$ $\forall\,j$ (We denote the only incoming edge of $d_j$ as $e_{d_j}$, termed the $d_j$-destination edge.); and (iv) $d_j$ can be reached from $s_i$ for all $(i,j)$ pairs (including those with $i\!=\!j$).\footnote{The above {\em fully interfered} setting is the worst case scenario. For the scenario in which there is some $d_j$ who is not reachable from some $s_i$, one can devise an achievable solution by modifying the solution for the worst-case fully interfered 3-ANA networks \cite{DasVishwanathJafarMarkopoulou:ISIT10}.} We use the notation $\GANA$ to emphasize that we are focusing on this 3-unicast ANA network. Note that by (ii) and (iii) the matrix $\ChM{\Did}{\Sid}{}(\netvar)$ becomes a scalar, which we denote by $\ChGANA[\Did][\Sid](\netvar)$ instead.

The authors in \cite{DasVishwanathJafarMarkopoulou:ISIT10,RamakrishnanDaszMalekiMarkopoulouJafarVishwanath:Allerton10} applied interference alignment to construct the precoding matrices $\{\SupPreM[\Sid],\forall\,\Sid\}$ for the above 3-unicast ANA network. Namely, consider the following parameter values: $\TSN\!=\!2\Nid\!+\!1$, $l_1\!=\!\Nid+1$, $l_2\!=\!\Nid$, and $l_3\!=\!\Nid$ for some positive integer $\Nid$ termed symbol extension parameter, and assume that all the network variables $\netvar[\left(\!1\!\right)]$ to $\netvar[\left(\!\TSN\!\right)]$ are chosen independently and uniformly randomly from $\FF[\!q]$. The goal is to achieve the rate tuple $(\frac{\Nid+1}{2\Nid\!+\!1},\frac{\Nid}{2\Nid\!+\!1},\frac{\Nid}{2\Nid\!+\!1})$ in a 3-unicast ANA network by applying the following $\{\SupPreM[\Sid],\forall\,\Sid\}$ construction method: Define $\BOLDa(\netvar)=\ChGANA[3][1](\netvar)\ChGANA[2][3](\netvar)\ChGANA[1][2](\netvar)$ and $\BOLDb(\netvar)=\ChGANA[2][1](\netvar)\ChGANA[3][2](\netvar)\ChGANA[1][3](\netvar)$, and consider the following 3 row vectors of dimensions $n\!+\!1$, $n$, and $n$, respectively. (Each entry of these row vectors is a polynomial with respect to $\netvar$ but we drop the input argument $\netvar$ for simplicity.) 
\begin{align}
\PreV[1][\Nid]\!(\netvar) &=\ChGANA[3][2]\ChGANA[2][3]\left[\BOLDb^{\Nid},\,\,\,\BOLDb^{\Nid-1}\BOLDa,\,\,\cdots,\,\,\BOLDb\BOLDa^{\Nid-1},\,\,\,\BOLDa^{\Nid}\right], \label{v1}\\
\PreV[2][\Nid]\!(\netvar) &=\ChGANA[3][1]\ChGANA[2][3]\left[\BOLDb^{\Nid},\,\,\,\BOLDb^{\Nid-1}\BOLDa,\,\,\cdots,\,\,\BOLDb\BOLDa^{\Nid-1}\right],
\label{v2}\\
\PreV[3][\Nid]\!(\netvar) &=\ChGANA[2][1]\ChGANA[3][2]\left[\BOLDb^{\Nid-1}\BOLDa,\,\,\cdots,\,\,\BOLDb\BOLDa^{\Nid-1},\,\,\,\BOLDa^{\Nid}\right],
\label{v3} 
\end{align}
where the superscript ``$(\Nid)$" is to emphasize the value of the symbol extension parameter $n$ used in the construction. The precoding matrix for each time slot $t$ is designed to be $\PreM[i][\TSidBRAC]\!=\!\PreV[i][\Nid](\netvar[\TSidBRAC])$. The overall precoding matrix (the vertical concatenation of $\PreM[i][\left(\!1\right)]$ to $\PreM[i][\left(\!\TSN\!\right)]$) is thus $\SupPreM[i]\!=\![\PreV[i][\Nid]\!(\netvar[\TSidBRAC]\!)]_{t=1}^{2\Nid+\!1}$. 

The authors in \cite{DasVishwanathJafarMarkopoulou:ISIT10,RamakrishnanDaszMalekiMarkopoulouJafarVishwanath:Allerton10} prove that the above construction achieves the desired rates $(\frac{\Nid+1}{2\Nid\!+\!1},\frac{\Nid}{2\Nid\!+\!1},\frac{\Nid}{2\Nid\!+\!1})$ if the overall precoding matrices $\{\SupPreM[\Sid],\forall\,i\}$ satisfy the following six constraints:\footnote{Here the interference alignment is performed based on $(s_1,d_1)$-pair who achieves larger rate than others. Basically, any transmission pair can be chosen as an alignment-basis achieving $\frac{\Nid+1}{2\Nid+1}$, and the corresponding precoding matrices and six constraints can be constructed accordingly.} 
\begin{align}
d_1\!: & \, \VecSp[ {\SupChM[1][3]\SupPreM[3]} ] = \VecSp[ {\SupChM[1][2]\SupPreM[2]} ] \label{C1} \\
& \FRM[1][\Nid]\!\!\triangleq\!\left[\,\SupChM[1][1]\SupPreM[1] \,\,\,\, \SupChM[1][2]\SupPreM[2]\,\right]\!,\,\text{and } \RANK[{\FRM[1][\Nid]}]\!=\!2\Nid\!+\!1 \label{C2} \\
d_2\!: & \,\VecSp[ {\SupChM[2][3]\SupPreM[3]} ] \subseteq \VecSp[ {\SupChM[2][1]\SupPreM[1]} ] \label{C3} \\
& \FRM[2][\Nid]\!\!\triangleq\!\left[\,\SupChM[2][2]\SupPreM[2] \,\,\,\, \SupChM[2][1]\SupPreM[1]\,\right]\!,\,\text{and } \RANK[{\FRM[2][\Nid]}]\!=\!2\Nid\!+\!1 \label{C4} \\
d_3\!: & \,\VecSp[ {\SupChM[3][2]\SupPreM[2]} ] \subseteq \VecSp[ {\SupChM[3][1]\SupPreM[1]} ] \label{C5} \\
& \FRM[3][\Nid]\!\!\triangleq\!\left[\,\SupChM[3][3]\SupPreM[3] \,\,\,\, \SupChM[3][1]\SupPreM[1]\,\right]\!,\,\text{and } \RANK[{\FRM[3][\Nid]}]\!=\!2\Nid\!+\!1 \label{C6} \vspace{-0.03\columnwidth}
\end{align}
with close-to-one probability, where $\VecSp[{\bold{A}}]$ and $\RANK[{\bold{A}}]$ denote the column vector space and the rank, respectively, of a given matrix $\bold{A}$. The overall channel matrix $\SupChM[\Did][\Sid]$ is a $(2\Nid+1)\times(2\Nid+1)$ diagonal matrix with the $t$-th diagonal element $\ChGANA[\Did][\Sid](\netvar[\TSidBRAC])$ due to the assumption of $|\OUT[s_i]|\!=\!|\IN[d_j]|\!=\!1$. We also note that the construction in \Ref{C2}, \Ref{C4}, and \Ref{C6} ensures that the square matrices $\{\FRM[i][\Nid],\forall\,i\}$ are row-invariant.

The intuition behind \Ref{C1} to \Ref{C6} is straightforward. Whenever \Ref{C1} is satisfied, the interference from $s_2$ and from $s_3$ are aligned from the perspective of $d_1$. Further, by simple linear algebra we must have $\RANK[{\SupChM[1][2]\SupPreM[2]}]\!\leq\!n$ and $\RANK[{\SupChM[1][1]\SupPreM[1]}]\!\leq\!n\!+1$. \Ref{C2} thus guarantees that (i) the rank of $\left[\,\SupChM[1][1]\SupPreM[1] \,\,\,\, \SupChM[1][2]\SupPreM[2]\,\right]$ equals to $\RANK[{\SupChM[1][1]\SupPreM[1]}]+\RANK[{\SupChM[1][2]\SupPreM[2]}]$ and (ii) $\RANK[{\SupChM[1][1]\SupPreM[1]}]\!=\!n\!+\!1$. Jointly (i) and (ii) imply that $d_1$ can successfully remove the aligned interference while recovering all $l_1\!=\!n\!+\!1$ information symbols intended for $d_1$. Similar arguments can be used to justify \Ref{C3} to \Ref{C6} from the perspectives of $d_2$ and $d_3$, respectively.

By noticing the special Vandermonde form of $\SupPreM[i]$, it is shown in \cite{DasVishwanathJafarMarkopoulou:ISIT10,RamakrishnanDaszMalekiMarkopoulouJafarVishwanath:Allerton10} that \Ref{C1}, \Ref{C3}, and \Ref{C5} always hold. The authors in \cite{RamakrishnanDaszMalekiMarkopoulouJafarVishwanath:Allerton10} further prove that if
\begin{equation}\label{LRneqVAR}
\LRneqVAR
\end{equation}
and the following algebraic conditions are satisfied: \vspace{-0.02\columnwidth}
\begin{align}
& \ChGANA[1][1]\ChGANA[3][2]\sum_{i=0}^{\Nid}\alpha_i\big(\BOLDa/\BOLDb\big)^i \neq \ChGANA[1][2]\ChGANA[3][1]\sum_{j=0}^{\Nid-1}\beta_j\big(\BOLDa/\BOLDb\big)^j \label{SC1}
\end{align}\begin{align}
& \ChGANA[2][2]\ChGANA[3][1]\sum_{i=0}^{\Nid-1}\alpha_i\big(\BOLDa/\BOLDb\big)^i \neq \ChGANA[2][1]\ChGANA[3][2]\sum_{j=0}^{\Nid}\beta_j\big(\BOLDa/\BOLDb\big)^j \label{SC2} \\
& \ChGANA[3][3]\ChGANA[2][1]\sum_{i=1}^{\Nid}\alpha_i\big(\BOLDa/\BOLDb\big)^i \neq \ChGANA[3][1]\ChGANA[2][3]\sum_{j=0}^{\Nid}\beta_j\big(\BOLDa/\BOLDb\big)^j \label{SC3} 
\end{align}
for all $\alpha_i,\beta_j\!\in\!\FF[\!q]$ with at least one of $\alpha_i$ and at least one of $\beta_j$ being non-zero, then the constraints \Ref{C2}, \Ref{C4}, and \Ref{C6} hold with close-to-one probability (recalling that the network variables $\netvar[\left(\!1\right)]$ to $\netvar[\left(\!\TSN\!\right)]$ are chosen independently and uniformly randomly).

In summary, \cite{DasVishwanathJafarMarkopoulou:ISIT10,RamakrishnanDaszMalekiMarkopoulouJafarVishwanath:Allerton10} proves the following result.

{\em Proposition (page 3, \cite{RamakrishnanDaszMalekiMarkopoulouJafarVishwanath:Allerton10})}: For a sufficiently large finite field $\FF[\!q]$, the 3-unicast ANA scheme described in \Ref{v1} to \Ref{v3} achieves the rate tuple $(\frac{\Nid+1}{2\Nid\!+\!1},\frac{\Nid}{2\Nid\!+\!1},\frac{\Nid}{2\Nid\!+\!1})$ with close-to-one probability if \Ref{LRneqVAR}, \Ref{SC1}, \Ref{SC2}, and \Ref{SC3} hold simultaneously.

It can be easily seen that directly verifying the above sufficient conditions is computationally intractable. The following conjecture is thus proposed in \cite{RamakrishnanDaszMalekiMarkopoulouJafarVishwanath:Allerton10} to reduce the computational complexity when using the above proposition.

{\em Conjecture (Page 3, \cite{RamakrishnanDaszMalekiMarkopoulouJafarVishwanath:Allerton10}):} For any $n$ value used in the 3-unicast ANA scheme construction, if \Ref{LRneqVAR} and the following three conditions are satisfied simultaneously, then \Ref{SC1} to \Ref{SC3} must hold. \vspace{-0.015\columnwidth}
\begin{align}
& \ChGANA[1][1]\ChGANA[3][2]\not\equiv \ChGANA[1][2]\ChGANA[3][1] \,\,\text{  and  }\,\, \ChGANA[1][1]\ChGANA[2][3]\not\equiv \ChGANA[1][3]\ChGANA[2][1], \label{Cof1} \\
& \ChGANA[2][2]\ChGANA[3][1]\not\equiv \ChGANA[2][1]\ChGANA[3][2] \,\,\text{  and  }\,\, \ChGANA[2][2]\ChGANA[1][3]\not\equiv \ChGANA[2][3]\ChGANA[1][2], \label{Cof2} \\
& \ChGANA[3][3]\ChGANA[2][1]\not\equiv \ChGANA[3][1]\ChGANA[2][3] \,\,\text{  and  }\,\, \ChGANA[3][3]\ChGANA[1][2]\not\equiv \ChGANA[3][2]\ChGANA[1][3]. \label{Cof3} \vspace{-0.015\columnwidth}
\end{align}

Note that even if the conjecture is true, checking whether \Ref{LRneqVAR}, \Ref{Cof1}--\Ref{Cof3} are satisfied is still highly non-trivial for large networks. Moreover, recent results in \cite{RmakrishnanMeng:ISIT12} showed that the above conjecture is false. The main contribution of this work is to work on the original conditions \Ref{LRneqVAR}--\Ref{SC3} directly and provide an easily verifiable graph-theoretic characterization that supersedes their original algebraic forms.


{\em Remark:} In the setting of wireless interference channels, the individual channel gains are independently and continuously distributed, for which one can prove that the feasibility conditions \Ref{LRneqVAR}, \Ref{C2}, \Ref{C4}, and \Ref{C6} hold with probability one \cite{CadambeJafar:IT08a}. For a network setting, the channel gains $\ChG[j][i](\netvar)$ are no longer independently distributed for different $(i,j)$ pairs and the correlation depends on the underlying network topology. For example, one can verify that the 3-unicast ANA network described in Fig.~\ref{SecII-L=R-Fig} always leads to $\LReqVAR$ even when all network variables $\netvar$ are chosen uniformly randomly from an arbitrarily large finite field $\FF[\!q]$.

\begin{figure}[t]
\centering
\includegraphics[scale=0.17]{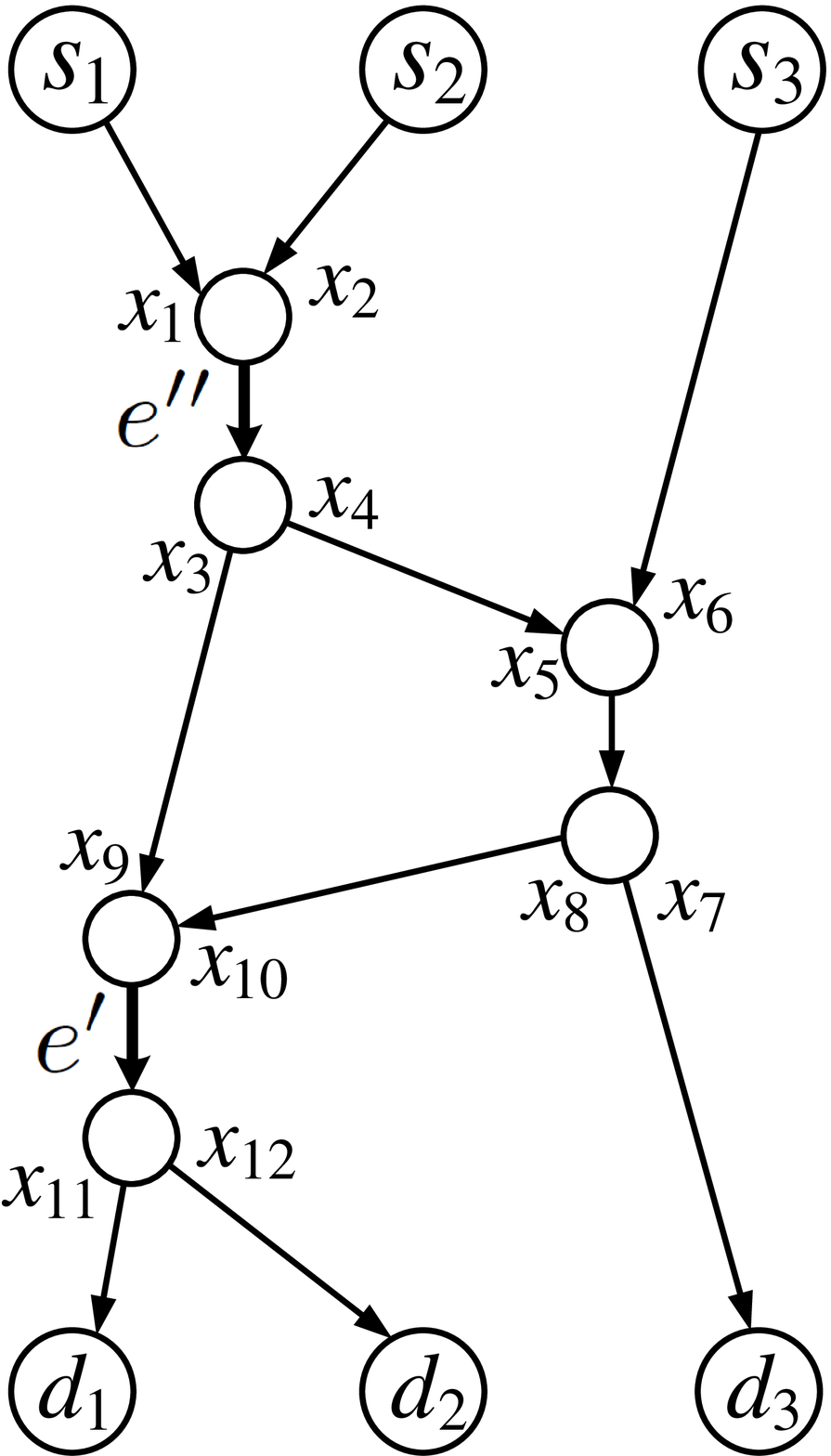} 
\caption{ Example $\GANA$ structure satisfying $\LReqVAR$ with $\netvar\!=\!\{x_1,x_2,...,x_{12}\}$. }
\label{SecII-L=R-Fig} \vspace{-0.05\columnwidth}
\end{figure}

\section{Properties of The Precoding-Based Framework}\label{Sec3}
In this section, we characterize a few fundamental relationships between the channel gains and the underlying DAG $\G$, which bridge the gap between the algebraic feasibility of the precoding-based NC problem and the underlying network structure. These properties hold for any precoding-based schemes and can be of benefit to future development of any precoding-based solution. These newly discovered results will later be used to prove the graph-theoretic characterizations of the 3-unicast ANA scheme. In \SecToRef{Sec3A}{Sec3C} we state \PropToRef{Prop1}{Prop3}, respectively. In \SecRef{Sec3D}, we discuss how these results can be applied to the existing 3-unicast ANA scheme.

\subsection{From Non-Zero Determinant to Linear Independence}\label{Sec3A}

\begin{prop}\label{Prop1}
Fix an arbitrary value of $N$. Consider any set of $N$ polynomials $\GSP(\netvar)\!=\!\{h_1(\netvar),...,h_{\SMN}(\netvar)\}$ and the polynomial matrix $[\GSP(\netvar[(\!k\!)])]_{k=1}^{\SMN}$ generated from $\GSP(\netvar)$. Then, assuming sufficiently large finite field size $q$, $\DET[{[\GSP(\netvar[(\!k\!)])]_{k=1}^{\SMN}}]$ is non-zero polynomial if and only if $\GSP(\netvar)$ is linearly independent. 
\end{prop}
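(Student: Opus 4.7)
My plan is to prove both directions of the biconditional. The ``if'' direction (linear dependence $\Rightarrow\DET[{\cdot}]\PolyEqual 0$) is straightforward: if $\GSP(\netvar)$ is linearly dependent, there exist scalars $\alpha_1,\ldots,\alpha_{\SMN}\in\FF[\!q]$, not all zero, with $\sum_{j=1}^{\SMN}\alpha_j h_j(\netvar)\PolyEqual 0$. Substituting $\netvar\leftarrow\netvar[(\!k\!)]$ for each row index $k$ then gives a nontrivial $\FF[\!q]$-linear combination of the columns of $[\GSP(\netvar[(\!k\!)])]_{k=1}^{\SMN}$ that equals the zero vector, forcing $\DET[{[\GSP(\netvar[(\!k\!)])]_{k=1}^{\SMN}}]\PolyEqual 0$.

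For the harder ``only if'' direction, I would induct on $\SMN$. The base case $\SMN=1$ is immediate: the $1\times 1$ matrix has determinant $h_1(\netvar[(1)])$, which is a nonzero polynomial precisely when $h_1\PolyNotEqual 0$, i.e., when $\{h_1\}$ is linearly independent. For the inductive step, assume $\{h_1,\ldots,h_{\SMN}\}$ is linearly independent. Since any subset of a linearly independent set is also linearly independent, $\{h_1,\ldots,h_{\SMN-1}\}$ is linearly independent, and by the inductive hypothesis the $(\SMN-1)\times(\SMN-1)$ minor $C_{\SMN}$ obtained by deleting the last row and last column of $[\GSP(\netvar[(\!k\!)])]_{k=1}^{\SMN}$ is a nonzero polynomial in the variable groups $\netvar[(1)],\ldots,\netvar[(\SMN-1)]$. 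Cofactor expansion along the last row produces
\begin{equation*}
\DET[{[\GSP(\netvar[(\!k\!)])]_{k=1}^{\SMN}}] = \sum_{j=1}^{\SMN}(-1)^{\SMN+j}\,h_j(\netvar[(\SMN)])\,C_j,
\end{equation*}
where each minor $C_j$ depends only on $\netvar[(1)],\ldots,\netvar[(\SMN-1)]$, a variable set disjoint from $\netvar[(\SMN)]$.

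Now suppose for contradiction that this determinant is the zero polynomial. Because $C_{\SMN}\PolyNotEqual 0$ and $q$ is assumed sufficiently large, the Schwartz--Zippel lemma supplies a specialization $a$ of the variables $\netvar[(1)],\ldots,\netvar[(\SMN-1)]$ into $\FF[\!q]$ making the scalar $C_{\SMN}\vert_{a}$ nonzero. Specializing the displayed identity at $a$ then yields
\begin{equation*}
\sum_{j=1}^{\SMN}(-1)^{\SMN+j}\bigl(C_j\vert_{a}\bigr)\,h_j(\netvar[(\SMN)])\PolyEqual 0
\end{equation*}
as a polynomial identity in $\netvar[(\SMN)]$ whose coefficients $C_j\vert_{a}$ lie in $\FF[\!q]$. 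Renaming the dummy variable group $\netvar[(\SMN)]$ back to $\netvar$, this is precisely a nontrivial $\FF[\!q]$-linear relation among $\{h_1,\ldots,h_{\SMN}\}$ (nontrivial because the coefficient $C_{\SMN}\vert_{a}$ of $h_{\SMN}$ is nonzero), contradicting linear independence and closing the induction.

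The main obstacle I anticipate is notational rather than conceptual: carefully separating the per-row variable groups $\netvar[(\!k\!)]$ from the abstract variable $\netvar$ that serves as the argument inside each $h_j$, and ensuring that specialization of the cofactor variables produces a bona fide $\FF[\!q]$-linear relation with scalar coefficients rather than a relation with polynomial coefficients. Schwartz--Zippel over a sufficiently large field is precisely the tool that makes this transfer rigorous and is what justifies the field-size hypothesis stated in the proposition.
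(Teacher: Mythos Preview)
Your proof is correct and follows essentially the same approach as the paper: column operations for the easy direction, and induction on $\SMN$ combined with cofactor expansion and a Schwartz--Zippel specialization for the hard direction. The only structural difference is that the paper runs the induction in the contrapositive form (assume $\DET\PolyEqual 0$, prove linear dependence) and must therefore split into two cases depending on whether the chosen cofactor is itself zero; by instead assuming linear independence and applying the inductive hypothesis to $\{h_1,\ldots,h_{\SMN-1}\}$, you know in advance that the $(\SMN,\SMN)$-minor is nonzero, which neatly avoids that case split.
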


The proof of \PropRef{Prop1} is relegated to \AppRef{Prop1-2Proof}.

{\em Remark:} Suppose a sufficiently large finite field $\FF[\!q]$ is used. If we choose the variables $\netvar[\left(\!1\!\right)]$ to $\netvar[\left(\!\SMN\!\right)]$ independently and uniformly randomly from $\FF[\!q]$, by Schwartz-Zippel lemma, we have $\DET[{[\GSP(\netvar[\left(\!k\!\right)])]_{k=1}^{\SMN}}]\!\neq\!0$ with close-to-one probability if and only if $\GSP(\netvar)$ is linearly independent.

The implication of Proposition 1 is as follows. Similar to the seminal work \cite{KoetterMedard:TON03}, most algebraic characterization of the precoding-based framework involves checking whether or not a determinant is non-zero. For example, the first feasibility condition of \Ref{PF2} is equivalent to checking whether or not the determinant of the network transfer matrix is non-zero. Also, \Ref{C2}, \Ref{C4}, and \Ref{C6} are equivalent to checking whether or not the determinant of the row-invariant matrix $\FRM[i][\Nid]$ is non-zero. \PropRef{Prop1} says that as long as we can formulate the corresponding matrix in a row-invariant form, then checking whether the determinant is non-zero is equivalent to checking whether the corresponding set of polynomials is linearly independent. As will be shown shortly after, the latter task admits more tractable analysis.

\subsection{The Subgraph Property of the Precoding-Based Framework}\label{Sec3B}

Consider a DAG $\G$ and recall the definition of the channel gain $\ChG[e_2][e_1](\netvar)$ from $e_1$ to $e_2$ in Section II-B. For a subgraph $\G[']\!\subseteq\!\G$ containing $e_1$ and $e_2$, let $\ChG[e_2][e_1](\netvarSUB)$ denote the channel gain from $e_1$ to $e_2$ in $\G[']$.

\begin{prop}[ (Subgraph Property)]\label{Prop2} Given a DAG $\G$, consider an arbitrary, but fixed, finite collection of edge pairs, $\{(e_i, e'_i)\!\in\!\GE^2: i\!\in\!I\}$ where $I$ is a finite index set, and consider two arbitrary polynomial functions $f:\FF[\!q][{|I|}]\!\mapsto\!\FF[\!q]$ and $g:\FF[\!q][{|I|}]\!\mapsto\!\FF[\!q]$. Then, $f(\{\ChG[e'_i][e_i](\netvar):\forall\,i\!\in\!I\})\PolyEqual g(\{\ChG[e'_i][e_i](\netvar):\forall\,i\!\in\!I\})$ if and only if for all subgraphs $\G['] \!\subseteq\! \G$ containing all edges in $\{e_i,e'_i:\forall\,i\!\in\!I\}$, $f(\{\ChG[e'_i][e_i](\netvarSUB):\forall\,i\!\in\!I\})\PolyEqual  g(\{\ChG[e'_i][e_i](\netvarSUB):\forall\,i\!\in\!I\})$.
\end{prop}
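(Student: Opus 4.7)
The plan is to handle the two directions of the ``if and only if'' separately. The $(\Leftarrow)$ direction is immediate: $G$ itself is a subgraph of $G$ that contains every edge in $\{e_i,e_i' : i \in I\}$, so the universally quantified hypothesis specializes to $G' = G$ and yields the desired equivalence in $G$.

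For the $(\Rightarrow)$ direction, the key ingredient is a substitution lemma relating the full-graph and subgraph channel-gain polynomials. Fix any subgraph $G' \subseteq G$ containing every edge in $\{e_i,e_i' : i \in I\}$ and define a substitution $\phi_{G'}$ on the network variables $\netvar$ that sends $x_{e'e''} \mapsto 0$ whenever at least one of $e',e''$ lies in $\GE \setminus \GE'$, and leaves all other variables intact. I claim that applying $\phi_{G'}$ to $m_{e_i;e_i'}(\netvar)$ produces exactly $m_{e_i;e_i'}(\netvarSUB)$, the channel gain computed in $G'$. This follows directly from the path-sum definition of the channel gain given in \SecRef{Sec2B}: each monomial of $m_{e_i;e_i'}(\netvar)$ corresponds to a distinct path from $e_i$ to $e_i'$ in $G$, and any path that uses an edge outside $G'$ must contain a factor $x_{e'e''}$ with one of $e',e''$ outside $\GE'$, which is zeroed out by $\phi_{G'}$; the surviving monomials are precisely those indexed by paths contained entirely in $G'$, and these together form $m_{e_i;e_i'}(\netvarSUB)$ by definition.

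Given the hypothesis $f(\{m_{e_i;e_i'}(\netvar) : i \in I\}) \PolyEqual g(\{m_{e_i;e_i'}(\netvar) : i \in I\})$ in $G$, there exists a non-zero $c \in \FF[\!q]$ for which $f(\cdot) = c \cdot g(\cdot)$ as polynomials in $\netvar$. Because any polynomial identity is preserved under any substitution of variables, applying $\phi_{G'}$ to both sides and invoking the claim above to rewrite each $m_{e_i;e_i'}(\netvar)$ as $m_{e_i;e_i'}(\netvarSUB)$ yields $f(\{m_{e_i;e_i'}(\netvarSUB)\}) = c \cdot g(\{m_{e_i;e_i'}(\netvarSUB)\})$, which is precisely the equivalence claim in $G'$.

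The main step that requires a careful argument is the substitution lemma above; once it is in place, the rest of the proof is an immediate consequence of the fact that polynomial identities are preserved under substitution. In particular, no condition on the field size is needed, since $\PolyEqual$ is a polynomial-level statement and the same non-zero constant $c$ witnesses the equivalence both before and after $\phi_{G'}$. The only subtlety worth emphasizing is that the substitution $\phi_{G'}$ does not merely zero out ``edge variables'' but rather adjacent-pair kernels $x_{e'e''}$, which is what makes the path-sum decomposition restrict cleanly to paths lying entirely in $G'$.
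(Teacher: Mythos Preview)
Your proposal is correct and follows essentially the same approach as the paper: for $(\Leftarrow)$ take $G'=G$, and for $(\Rightarrow)$ pass from $f=c\cdot g$ in $G$ to $f=c\cdot g$ in $G'$ by zeroing out every kernel variable $x_{e'e''}$ not present in $G'$, observing that this specialization turns each $m_{e_i;e'_i}(\netvar)$ into $m_{e_i;e'_i}(\netvarSUB)$ via the path-sum formula. Your write-up is simply more explicit about the substitution lemma and the path-level justification than the paper's one-line version.
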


The proof of \PropRef{Prop2} is relegated to \AppRef{Prop1-2Proof}.

{\em Remark:} \PropRef{Prop2} has a similar flavor to the classic results \cite{KoetterMedard:TON03} and \cite{HoMedardKoetterKargerEffrosShiLeong:IT06}. More specifically, for the single multicast setting from a source $s$ to the destinations $\{d_{\Did}\}$, the transfer matrix $\PostM[d_j]\ChM{s}{d_j}{}(\netvar)\PreM[s]$ from $s$ to $d_j$ is of full rank (i.e., the polynomial $\DET[{\PostM[d_j]\ChM{s}{d_j}{}(\netvar)\PreM[s]}]$ is non-zero in the original graph $\G[]$) is equivalent to the existence of a subgraph $\G[']$ (usually being chosen as the subgraph induced by a set of edge-disjoint paths from $s$ to $d_{\Did}$) satisfying the polynomial $\DET[{\PostM[d_j]\ChM{s}{d_j}{}(\netvarSUB)\PreM[s]}]$ being non-zero.

Compared to \PropRef{Prop1}, \PropRef{Prop2} further connects the linear dependence of the polynomials to the subgraph properties of the underlying network. For example, to prove that a set of polynomials over a given arbitrary network is linearly independent, we only need to construct a (much smaller) subgraph and prove that the corresponding set of polynomials is linearly independent.

\subsection{The Channel Gain Property}\label{Sec3C}

Both \PropsRef{Prop1}{Prop2} have a similar flavor to the classic results of the LNC framework \cite{KoetterMedard:TON03}. The following channel gain property, on the other hand, is unique to the precoding-based framework. 

\begin{prop}[ (The Channel Gain Property)]\label{Prop3}
Consider a DAG $\G$ and two distinct edges $e_s$ and $e_d$. For notational simplicity, we denote $\head[e_s]$ by $s$ and denote $\tail[e_d]$ by $d$. Then, the following statements must hold (we drop the variables $\netvar$ for shorthand):
\begin{itemize}
\item \makebox[2.9cm][l]{If $\EC[s][d]\!=\!0$, then} $\ChG[e_d][e_s]\!=\!0$

\item \makebox[2.9cm][l]{If $\EC[s][d]\!=\!1$, then} $\ChG[e_d][e_s]$ is reducible. Moreover, let $N\!\!\stackrel{\Delta}{=}\!\!|\onecut[s][d]|$ denote the number of $1$-edge cuts separating $s$ and $d$, and we sort the $1$-edge cuts by their topological order with $e_1$ being the most upstream and $e_N$ being the most downstream. The channel gain $\ChG[e_d][e_s]$ can now be expressed as $\ChG[e_d][e_s]\!=\!\ChG[e_1][e_s]$$\left(\prod_{i=1}^{N-1}\ChG[e_{i+1}][e_i]\right)$$\ChG[e_d][e_N]$ and all the polynomial factors $\ChG[e_1][e_s]$, $\{\ChG[e_{i+1}][e_i]\}_{i=1}^{N-1}$, and $\ChG[e_d][e_N]$ are irreducible, and no two of them are equivalent.

\item \makebox[2.1cm][l]{If $\EC[s][d]\!\geq\! 2$} (including $\infty$), then $\ChG[e_d][e_s]$ is irreducible.
\end{itemize}\end{prop}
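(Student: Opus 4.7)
The plan is to handle the three cases in a particular order: I will start with the immediate $\EC[s][d]=0$ case, then tackle the $\EC[s][d]\ge 2$ irreducibility case, and end with the $\EC[s][d]=1$ factorization, whose per-factor irreducibility invokes the middle case. The first case is immediate: when $\EC[s][d]=0$ there is no directed $s$-to-$d$ path, so no $e_s$-to-$e_d$ path, and the path-sum defining $\ChG[e_d][e_s]$ is an empty sum and equals $0$.

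For the $\EC[s][d]\ge 2$ case, I will argue by contrapositive: if $\ChG[e_d][e_s]=f\cdot g$ with both $f$ and $g$ non-constant, then there must exist a $1$-edge cut separating $s$ and $d$. The key structural fact I would exploit is that $\ChG[e_d][e_s]$ is multilinear in every local kernel variable $x_{e',e''}$, since each adjacent pair is traversed at most once on any single path. Multilinearity forces a clean variable partition: letting $V_f$ and $V_g$ be the variable sets appearing in $f$ and $g$ respectively, matching the coefficient of $x_{e',e''}^{2}$ shows $V_f\cap V_g=\emptyset$, so every path-monomial $M_P$ must decompose as a $V_f$-part times a $V_g$-part. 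I then plan to show that such a bipartition of the variables, consistent across all $e_s$-to-$e_d$ paths, forces the $V_f$-variables to correspond to ``turns'' at nodes strictly upstream of some edge $e^\ast$ and the $V_g$-variables to ``turns'' strictly downstream, making $e^\ast$ a single edge through which every path must pass, i.e., a $1$-edge cut separating $s$ and $d$. The combinatorial step establishing that the $V_f/V_g$ partition is cut-induced is the main obstacle, since a priori the partition could be irregular; I would likely choose $e^\ast$ extremally as the most upstream edge whose outgoing-kernel variables all lie in $V_g$ and verify the claim by a topological-ordering argument.

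Finally, for $\EC[s][d]=1$, sort the $1$-edge cuts topologically as $e_1\PREC e_2\PREC\cdots\PREC e_N$ and set $e_0:=e_s$, $e_{N+1}:=e_d$. Every $e_s$-to-$e_d$ path must traverse each $e_i$ in order, so it decomposes uniquely as a concatenation of sub-paths from $e_i$ to $e_{i+1}$ for $i=0,\ldots,N$. Because distinct sub-segments use variable-disjoint local kernels (a variable $x_{a,e_i}$ lives in the segment ending at $e_i$, while $x_{e_i,b}$ lives in the segment starting at $e_i$), and because any combination of segment-paths forms a valid $e_s$-to-$e_d$ path by modularity of the cuts, the path-sum separates into a product of segment-sums, yielding
\begin{equation*}
\ChG[e_d][e_s]=\ChG[e_1][e_s]\cdot\prod_{i=1}^{N-1}\ChG[e_{i+1}][e_i]\cdot\ChG[e_d][e_N].
\end{equation*}
Irreducibility of each factor then follows by applying the previous case to the endpoint pair of that segment: consecutiveness of the listed $1$-edge cuts forbids any new $1$-edge cut strictly between $e_i$ and $e_{i+1}$ (any such cut would also be a $1$-edge cut separating $s$ and $d$, contradicting the list), while the endpoints remain connected, so the endpoint edge connectivity is at least $2$. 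Non-equivalence of distinct factors is immediate from variable-disjointness, since two non-constant polynomials that are scalar multiples must share all their variables.
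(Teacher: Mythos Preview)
Your handling of the disconnected case and the $\EC[s][d]=1$ case is correct and essentially matches the paper: the factorization along consecutive $1$-edge cuts, the variable-disjointness of the resulting segments, and the reduction of each factor's irreducibility to the middle case are all as the paper does them.

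The gap is in the $\EC[s][d]\ge 2$ case. You correctly deduce from multilinearity that a hypothetical factorization $f\cdot g$ has disjoint variable sets $V_f$, $V_g$, but the step you yourself flag as the ``main obstacle''---that this partition is cut-induced---is the entire content of the case, and your extremal-choice sketch is not yet an argument. In particular, you never invoke the converse constraint coming from $m=fg$: not only does every path monomial split into a $V_f$-part times a $V_g$-part, but \emph{every} product of a monomial of $f$ with a monomial of $g$ must itself be a path monomial (distinct paths yield distinct monomials, so no cancellation occurs). Without exploiting this product-closure across different pairs of monomials, nothing forces the $V_f$-variables along a fixed path to be contiguous, let alone forces all paths to switch from $V_f$ to $V_g$ at a single common edge. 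The approach may be completable, but the missing piece is substantial, not routine.

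The paper avoids this global combinatorics altogether by passing to the line graph and inducting on its edge count. For the vertex-connectivity-at-least-$2$ case it picks one line-graph edge $e'$ into $d$ with $w=\tail[e']\neq s$ and writes $\ChGL[d][s]=\ChGL[w][s]\,x_{e'}+\ChGLsub[d][s]$, where $\ChGLsub[d][s]$ is the channel gain after deleting $e'$. A factorization $AB$ then forces one factor, say $B$, to be independent of $x_{e'}$ (there is no $x_{e'}^2$ term), whence $B$ divides both $\ChGL[w][s]$ and $\ChGLsub[d][s]$. The induction hypothesis on these two strictly smaller instances says their irreducible factors correspond to $1$-vertex cuts; a shared nontrivial factor produces a vertex that is simultaneously a $1$-vertex cut for $s$-to-$w$ and for $s$-to-$d$ in the reduced graph, hence a $1$-vertex cut for $s$-to-$d$ in the original graph---contradiction. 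This one-variable peel-off replaces your global partition analysis with a local, inductive reduction.
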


The proof of \PropRef{Prop3} is relegated to \AppRef{Prop3Proof}.

{\em Remark:} \PropRef{Prop3} only considers a channel gain between two distinct edges. If $e_s\!=\!e_d$, then by convention \cite{KoetterMedard:TON03}, we have $\ChG[e_d][e_s]\!=\!1$.

\PropRef{Prop3} relates the factoring problem of the channel gain polynomial to the graph-theoretic edge cut property. As will be shown afterwards, this observation enables us to tightly connect the algebraic and graph-theoretic conditions for the precoding-based solutions.

\subsection{Application of The Properties of The Precoding-based Framework to The 3-unicast ANA Schene}\label{Sec3D}

In this subsection, we discuss how the properties of the precoding-based framework, \PropToRef{Prop1}{Prop3}, can benefit our understanding of the 3-unicast ANA scheme.

\PropRef{Prop1} enables us to simplify the feasibility characterization of the 3-unicast ANA scheme in the following way. From the construction in \SecRef{Sec2D}, the square matrix $\FRM[i][\Nid]$ can be written as a row-invariant matrix $\FRM[i][\Nid]\!\!=
[\GSP[i][\left(\!\Nid\!\right)](\netvar[(\!t\!)])]_{t=1}^{(2\Nid\!+\!1)}$ for some set of polynomials $\GSP[i](\netvar)$. For example, by \Ref{v1}, \Ref{v2}, and \Ref{C2} we have $\FRM[1][\Nid]=[\GSP[1][\left(\!\Nid\!\right)](\netvar[(\!t\!)])]_{t=1}^{(2\Nid\!+\!1)}$ where 
\begin{equation}\label{GSP1^n}\begin{split}
\GSP[1][\NidBRAC]&(\netvar)= \{\,\ChGANA[1][1]\ChGANA[3][2]\ChGANA[2][3]\BOLDb^{\Nid},\;\ChGANA[1][1]\ChGANA[3][2]\ChGANA[2][3]\BOLDb^{\Nid-1}\BOLDa, \\
& \;\;\cdots\;,\;\ChGANA[1][1]\ChGANA[3][2]\ChGANA[2][3]\BOLDa^{\Nid},\;\ChGANA[1][2]\ChGANA[3][1]\ChGANA[2][3]\BOLDb^{\Nid},\; \\
& \;\;\ChGANA[1][2]\ChGANA[3][1]\ChGANA[2][3]\BOLDb^{\Nid-1}\BOLDa,\;\cdots\;,\;\ChGANA[1][2]\ChGANA[3][1]\ChGANA[2][3]\BOLDb\BOLDa^{\Nid-1}\,\}.
\end{split}\end{equation}

\PropRef{Prop1} implies that \Ref{C2} being true is equivalent to the set of polynomials $\GSP[1][\NidBRAC](\netvar)$ is linearly independent. Assuming the $\GANA$ of interest satisfies \Ref{LRneqVAR}, $\GSP[1][\NidBRAC](\netvar)$ being linearly independent is equivalent to \Ref{SC1} being true. As a result, \Ref{SC1} is not only sufficient but also necessary for \Ref{C2} to hold with close-to-one probability. By similar arguments \Ref{SC2} (resp. \Ref{SC3}) is both necessary and sufficient for \Ref{C4} (resp. \Ref{C6}) to hold with high probability.

\PropRef{Prop2} enables us to find the graph-theoretic equivalent counterparts of \Ref{SC1}--\,\Ref{SC3} of the {\em Conjecture} (p.~3, \cite{RamakrishnanDaszMalekiMarkopoulouJafarVishwanath:Allerton10}).

\begin{cor}[ (First stated in \cite{RamakrishnanDaszMalekiMarkopoulouJafarVishwanath:Allerton10})]\label{Cor1}
Consider a $\GANA$ and four indices $i_1$, $i_2$, $j_1$, and $j_2$ satisfying $i_1\!\neq\!i_2$ and $j_1\!\neq\!j_2$. We have $\EC[\{s_{i_1},s_{i_2}\}][\{d_{j_1},d_{j_2}\}]\!=\!1$ if and only if $\ChGANA[j_1][i_1]\ChGANA[j_2][i_2]$ $\PolyEqual\,\ChGANA[j_1][ i_2]\ChGANA[j_2][i_1]$.
\end{cor}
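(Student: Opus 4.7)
My plan is to prove both directions of the equivalence. The ``only if'' direction follows straight from the channel-gain path-sum definition, and the ``if'' direction is proved by contrapositive, combining Menger's theorem with a monomial comparison inside a subgraph built from two edge-disjoint paths and then invoking \PropRef{Prop2}.

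For the ``only if'' direction, suppose $\EC[\{s_{i_1},s_{i_2}\}][\{d_{j_1},d_{j_2}\}]\!=\!1$ and let $e^{\ast}$ be the single edge in the cut. Every path from $e_{s_i}$ (for $i\!\in\!\{i_1,i_2\}$) to $e_{d_j}$ (for $j\!\in\!\{j_1,j_2\}$) must traverse $e^{\ast}$, so the channel-gain path-sum factors at $e^{\ast}$ and gives $\ChGANA[j][i]\!=\!\ChG[e^{\ast}][e_{s_i}]\cdot\ChG[e_{d_j}][e^{\ast}]$ for all such $i,j$. Both $\ChGANA[j_1][i_1]\ChGANA[j_2][i_2]$ and $\ChGANA[j_1][i_2]\ChGANA[j_2][i_1]$ then collapse to the common expression $\bigl(\prod_i\ChG[e^{\ast}][e_{s_i}]\bigr)\bigl(\prod_j\ChG[e_{d_j}][e^{\ast}]\bigr)$, which is even stronger than the required equivalence.

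For the ``if'' direction I argue the contrapositive. The set $\{e_{s_{i_1}},e_{s_{i_2}}\}$ is already a separating cut so $\EC[\{s_{i_1},s_{i_2}\}][\{d_{j_1},d_{j_2}\}]\!\leq\!2$, while the full-reachability assumption built into the definition of $\GANA$ forces it to be at least $1$; hence if it is not $1$ it must be exactly $2$. By the edge form of Menger's theorem, I then obtain two edge-disjoint paths; because each source (resp.\ destination) has a unique outgoing (resp.\ incoming) edge, the two paths must use distinct source and distinct destination edges. After relabeling, I may assume the edge-disjoint paths are $P_1\!:\!e_{s_{i_1}}\!\to\!e_{d_{j_1}}$ and $P_2\!:\!e_{s_{i_2}}\!\to\!e_{d_{j_2}}$. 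Let $\G[']$ be the subgraph of $\G$ whose edge set is exactly $P_1\cup P_2$; it contains the four relevant edges, so any inequivalence I establish in $\G[']$ will lift back to $\G$ by \PropRef{Prop2}.

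Inside $\G[']$ I will exhibit a monomial $\mu$ that appears in the straight product $\ChGANA[j_1][i_1]\ChGANA[j_2][i_2]$ but not in the crossing product $\ChGANA[j_1][i_2]\ChGANA[j_2][i_1]$. Enumerate the vertices shared by $P_1$ and $P_2$ in topological order as $v_1,\ldots,v_k$, and at each $v_r$ let $(a_r^{(1)},b_r^{(1)})$ and $(a_r^{(2)},b_r^{(2)})$ denote the incoming/outgoing edges used by $P_1$ and $P_2$ respectively (these four edges are distinct by edge-disjointness). Define $\mu$ as the product of all local kernels along $P_1$ times those along $P_2$, so at every $v_r$ the monomial $\mu$ contributes the ``parallel pair'' $x_{a_r^{(1)}b_r^{(1)}}x_{a_r^{(2)}b_r^{(2)}}$. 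Every pair of paths in $\G[']$ visits $v_1,\ldots,v_k$ in topological order; at each $v_r$ a contributing pair can choose both parallel, both crossed ($x_{a_r^{(1)}b_r^{(2)}}x_{a_r^{(2)}b_r^{(1)}}$), or mixed—but a mixed choice forces the two paths onto the same outgoing edge and hence onto a shared downstream kernel, producing a squared monomial that cannot equal the square-free $\mu$. A ``both crossed'' choice swaps which path is currently tracking $P_1$ versus $P_2$, so the parity of the number of ``both crossed'' choices determines whether the pair's endpoints form the straight matching $\{(s_{i_1},d_{j_1}),(s_{i_2},d_{j_2})\}$ (even parity) or the crossing matching (odd parity). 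Since $\mu$ is produced only by the all-parallel pattern, the straight product contains $\mu$ with coefficient exactly $1$ (via the unique pair $(P_1,P_2)$), whereas every crossing-product pair carries at least one ``both crossed'' vertex and hence at least one crossed kernel, so $\mu$ never appears there. Thus the two products are inequivalent in $\G[']$, and by \PropRef{Prop2} they are inequivalent in $\G$.

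The main obstacle is the combinatorial bookkeeping in the last paragraph: justifying rigorously that every path of $\G[']$ visits all shared vertices in topological order, that mixed kernel choices at a shared vertex always force a downstream squared kernel, and that the parity of ``both crossed'' choices uniquely determines the matching structure of the path pair. All three reduce to local arguments using the acyclicity of $\G[']$ and the fact that $\G[']$ has exactly two incoming and two outgoing edges at each shared $v_r$ (one from each of $P_1,P_2$), so once this local structure is in place the remaining bookkeeping is routine.
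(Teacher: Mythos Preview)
Your forward direction is essentially identical to the paper's. Your backward direction is correct, but the paper takes a much shorter route: rather than restricting to the subgraph $P_1\cup P_2$ and then doing a monomial-by-monomial comparison, the paper sets to zero every kernel $x_{ee'}$ except those where both $e$ and $e'$ lie on the \emph{same} $P_i$. After this substitution the only surviving path from $s_{i_1}$ is $P_1$ itself and the only one from $s_{i_2}$ is $P_2$, so the crossing product $\ChGANA[j_1][i_2]\ChGANA[j_2][i_1]$ is identically $0$ while the straight product is the nonzero monomial $\prod_{x\in\netvarSUB}x$. Since polynomial equivalence is preserved under any variable substitution, inequivalence follows immediately.

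Your approach buys nothing extra here: the parity argument and the case analysis of parallel/crossed/mixed choices at shared vertices are unnecessary once you observe that $\mu$ contains \emph{only} the parallel kernels, so any path-pair producing $\mu$ must be all-parallel, i.e.\ must equal $(P_1,P_2)$. The paper's substitution trick encodes exactly this observation in one line. Your argument is sound, just more labor than required; you might also tighten the wording, since ``every crossing-product pair carries at least one `both crossed' vertex'' is not literally true (a pair could have mixed choices), though your earlier clause about mixed choices producing squared monomials patches this.
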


The main intuition behind \CorRef{Cor1} is as follows. When $\EC[\{s_{i_1},s_{i_2}\}][\{d_{j_1},d_{j_2}\}]\!=\!1$, one can show that we must have $\ChGANA[j_1][i_1](\netvar)\ChGANA[j_2][i_2](\netvar)\!=\!\ChGANA[j_1][ i_2](\netvar)\ChGANA[j_2][i_1](\netvar)$ by analyzing the underlying graph structure. When $\EC[\{s_{i_1},s_{i_2}\}][\{d_{j_1},d_{j_2}\}]\!\neq\!1$, we can construct a subgraph $\G[']$ satisfying $\ChGANA[j_1][i_1](\netvarSUB)\ChGANA[j_2][i_2](\netvarSUB)$ $\PolyNotEqual\ChGANA[j_1][ i_2](\netvarSUB)\ChGANA[j_2][i_1](\netvarSUB)$. \PropRef{Prop2} thus implies $\ChGANA[j_1][i_1](\netvar)$ $\ChGANA[j_2][i_2](\netvar)\,\PolyNotEqual\ChGANA[j_1][ i_2](\netvar)\ChGANA[j_2][i_1](\netvar)$. A detailed proof of \CorRef{Cor1} is relegated to \AppRef{Cor1-2Proof}.


\PropRef{Prop3} can be used to derive the following corollary, which studies the relationship of the channel polynomials $\ChGANA[j][i]$.

\begin{cor}\label{Cor2}
Given a $\GANA$, consider a source $s_i$ to destination $d_j$ channel gain $\ChGANA[j][i]$. Then, $\GCD[{\ChGANA[j_1][i_1]}][{\ChGANA[j_2][i_2]}]\,\PolyEqual\ChGANA[j_2][i_2]$ if and only if $(i_1,j_1)\!=\!(i_2,j_2)$. Intuitively, any channel gain $\ChGANA[j_1][i_1]$ from source $s_{i_1}$ to destination $d_{j_1}$ cannot contain another
source-destination channel gain $\ChGANA[j_2][i_2]$ as its factor.
\end{cor}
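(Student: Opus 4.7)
The plan is to prove the nontrivial ``only if'' direction by contrapositive; the ``if'' direction is immediate since $\GCD[{\ChGANA[j][i]}][{\ChGANA[j][i]}]\PolyEqual\ChGANA[j][i]$. I would assume $(i_1,j_1)\neq(i_2,j_2)$ and aim to show $\ChGANA[j_2][i_2]\nmid\ChGANA[j_1][i_1]$. By the symmetry between the source side and the destination side of a 3-unicast ANA network, it suffices to handle the case $i_1\neq i_2$; the case $j_1\neq j_2$ is entirely analogous after swapping ``leftmost'' with ``rightmost'' factors and invoking condition (iii) ($|\OUT[d_{j_2}]|\!=\!0$) in place of condition (ii) ($|\IN[s_{i_2}]|\!=\!0$).

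First I would apply \PropRef{Prop3} to $\ChGANA[j_2][i_2]\!=\!\ChG[e_{d_{j_2}}][e_{s_{i_2}}]$ in order to isolate an irreducible factor ``anchored'' at the source edge $e_{s_{i_2}}$. If $\EC[{\head[e_{s_{i_2}}]}][{\tail[e_{d_{j_2}}]}]\!\geq\!2$, then by \PropRef{Prop3} the entire $\ChGANA[j_2][i_2]$ is itself irreducible; otherwise it factors through the $1$-edge cuts and I take the leftmost piece $\ChG[f^{(2)}_1][e_{s_{i_2}}]$. Call the chosen irreducible factor $p$: in either case $p$ has the form $\ChG[\cdot][e_{s_{i_2}}]$ with its second subscript being an edge strictly downstream of $e_{s_{i_2}}$, so every path contributing to $p$ begins with $e_{s_{i_2}}$ followed by an edge in $\OUT[{\head[e_{s_{i_2}}]}]$; consequently $p$ involves at least one variable of the form $x_{e_{s_{i_2}}e}$.

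Now suppose toward contradiction that $p\mid\ChGANA[j_1][i_1]$. Since the polynomial ring over $\FF[\!q]$ is a UFD and $p$ is irreducible, $p$ must be equivalent (in the sense of $\PolyEqual$) to one of the irreducible factors of $\ChGANA[j_1][i_1]$ produced by \PropRef{Prop3}, and each such factor has the form $\ChG[f'][f]$ with $f\in\{e_{s_{i_1}}\}\cup\onecut[{\head[e_{s_{i_1}}]}][{\tail[e_{d_{j_1}}]}]$. The main obstacle is converting the polynomial equivalence $p\PolyEqual\ChG[f'][f]$ into concrete graph-theoretic information; the key observation is that $\PolyEqual$ preserves the monomial support and hence the set of variables occurring, so the variable $x_{e_{s_{i_2}}e}$ appearing in $p$ must also occur in $\ChG[f'][f]$ as some adjacent-pair variable $x_{fe''}$, which forces $f\!=\!e_{s_{i_2}}$. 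From here the source-degree constraint finishes the proof: if $f\!=\!e_{s_{i_1}}$ then $e_{s_{i_1}}\!=\!e_{s_{i_2}}$ and hence $i_1\!=\!i_2$, contradicting our assumption; otherwise $f$ is an intermediate $1$-edge cut, so the node $\tail[f]\!=\!s_{i_2}$ is reachable from $\head[e_{s_{i_1}}]$, which either requires an incoming edge of $s_{i_2}$ or the degenerate identity $\head[e_{s_{i_1}}]\!=\!s_{i_2}$ (making $e_{s_{i_1}}$ itself an incoming edge of $s_{i_2}$); both are impossible by $|\IN[s_{i_2}]|\!=\!0$. Either subcase yields the required contradiction, completing the proof.
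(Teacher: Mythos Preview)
Your proof is correct and follows essentially the same route as the paper's: both apply \PropRef{Prop3} to see that every irreducible factor of $\ChGANA[j_1][i_1]$ is a channel gain between consecutive edges in $\{e_{s_{i_1}}\}\cup\onecut[{\head[e_{s_{i_1}}]}][{\tail[e_{d_{j_1}}]}]\cup\{e_{d_{j_1}}\}$, then argue that the leftmost irreducible factor of $\ChGANA[j_2][i_2]$ (anchored at $e_{s_{i_2}}$) would force $e_{s_{i_2}}$ to appear among those edges, contradicting $|\IN[s_{i_2}]|\!=\!0$. The paper compresses this into the single line ``the source edge $e_{s_{i_2}}$ must be one of the $1$-edge cuts separating $s_{i_1}$ and $d_{j_1}$,'' whereas you unpack it via the variable-support argument; your version is more explicit but the idea is identical.

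One small expository wrinkle: the phrase ``must also occur in $\ChG[f'][f]$ as some adjacent-pair variable $x_{fe''}$, which forces $f\!=\!e_{s_{i_2}}$'' asserts the conclusion before the justification. The variable $x_{e_{s_{i_2}}e}$ could \emph{a priori} occur in the middle of a path contributing to $\ChG[f'][f]$, not just as the starting pair; what actually rules this out is precisely the $|\IN[s_{i_2}]|\!=\!0$ constraint you invoke in the next sentence. Moving that observation up one sentence (``$e_{s_{i_2}}$ lies on some $f$-to-$f'$ path, and since $s_{i_2}$ has no incoming edges it must be the first edge, so $f=e_{s_{i_2}}$'') would make the flow cleaner, but the argument itself is sound.
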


The intuition behind \CorRef{Cor2} is as follows. For example, suppose we actually have $\GCD[{\ChGANA[1][1]}][{\,\ChGANA[2][1]}]\,\PolyEqual\ChGANA[2][1]$ and assume that $\EC[{\head[e_{s_1}]}][{\tail[e_{d_2}]}]\!\geq\!2$. Then we must have the $d_2$-destination edge $e_{d_2}$ being an edge cut separating $s_1$ and $d_1$. The reason is that (i) \PropRef{Prop3} implies that any irreducible factor of the channel gain $\ChGANA[1][1]$ corresponds to the channel gain between two consecutive $1$-edge cuts separating $s_1$ and $d_1$; and (ii) The assumption $\EC[{\head[e_{s_1}]}][{\tail[e_{d_2}]}]\!\geq\!2$ implies that $\ChGANA[2][1]$ is irreducible. Thus (i), (ii), and $\GCD[{\ChGANA[1][1]}][{\ChGANA[2][1]}]\,\PolyEqual\,\ChGANA[2][1]$ together imply that $e_{d_2}\!\in\!\onecut[s_1][d_1]$. This, however, contradicts the assumption of $|\OUT[d_2]|\!=\!0$ for any 3-unicast ANA network $\GANA$. The detailed proof of \CorRef{Cor2}, which studies more general case in which $\EC[{\head[e_{s_1}]}][{\tail[e_{d_2}]}]\!=\!1$, is relegated to \AppRef{Cor1-2Proof}.

\section{Detailed Studies of The 3-unicast ANA Scheme}\label{Sec4}

In \SecRef{Sec3}, we investigated the basic relationships between the channel gain polynomials and the underlying DAG $\G$ for arbitrary precoding-based solutions. In this section, we turn our attention to a specific precoding-based solution, the 3-unicast ANA scheme, and characterize graph-theoretically its feasibility conditions.

\subsection{New Graph-Theoretic Notations and The Corresponding Properties}

We begin by defining some new notations. Consider three indices $i$, $j$, and $k$ in $\{1,2,3\}$ satisfying $j\!\neq\!k$ but $i$ may or may not be equal to $j$ (resp.\ $k$). Given a $\GANA$, define:
\begin{equation*}\begin{split}
\Sover[i][;\{j,k\}] & \triangleq \onecut[s_i][d_j]\cap\onecut[s_i][d_k]\backslash\{e_{s_i}\} \\
\Dover[i][;\{j,k\}] & \triangleq \onecut[s_j][d_i]\cap\onecut[s_k][d_i]\backslash\{e_{d_i}\}
\end{split}\end{equation*}
as the $1$-edge cuts separating $s_i$ and $\{d_j, d_k\}$ minus the $s_i$-source edge $e_{s_i}$ and the $1$-edge cuts separating $\{s_j,s_k\}$ and $d_i$ minus the $d_i$-destination edge $e_{d_i}$. 
When the values of indices are all distinct, we use $\Sover[i]$ (resp. $\Dover[i]$) as shorthand for $\Sover[i][;\{j,k\}]$ (resp. $\Dover[i][;\{j,k\}]$). The following lemmas prove some topological relationships between the edge sets $\Sover[i]$ and $\Dover[j]$ and the corresponding proofs are relegated to \AppRef{Lem1-7Proof}.

\begin{lem}\label{Lem1}
For all $i\!\neq\!j$, $e'\!\in\!\Sover[i]$, and $e''\!\in\!\Dover[j]$, one of the following three statements is true: $e'\PREC e''$, $e'\SUCC e''$, or $e'\!=\!e''$.
\end{lem}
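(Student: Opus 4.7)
The plan is to reduce \LemRef{Lem1} to the standard fact that any two $1$-edge cuts separating the same node pair in a DAG are totally ordered by $\PREC$. The first step is to unwind the shorthands $\Sover[i]$ and $\Dover[j]$. Because $i\neq j$, the index $j$ lies in $\{1,2,3\}\setminus\{i\}$, so the definition of $\Sover[i]$ as an intersection of two $\onecut$'s forces $e'\in\onecut[s_i][d_j]$; symmetrically, $i\in\{1,2,3\}\setminus\{j\}$, so the definition of $\Dover[j]$ forces $e''\in\onecut[s_i][d_j]$. Thus, despite $e'$ and $e''$ originating from sets indexed by different ``pivot'' indices, both edges are simultaneously $1$-edge cuts for the common source-destination pair $(s_i,d_j)$.

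The second step is to establish the auxiliary claim that for arbitrary $s,d\in\GV$, any two edges $e_1,e_2\in\onecut[s][d]$ are comparable: either $e_1\PREC e_2$, $e_1\SUCC e_2$, or $e_1=e_2$. I would prove this by fixing any single $s$-to-$d$ path $P$, which exists because every destination is reachable from every source in a $\GANA$. Since both $e_1$ and $e_2$ are $1$-edge cuts separating $s$ and $d$, $P$ must traverse both of them. If they coincide, $e_1=e_2$; otherwise they appear on $P$ in some order, say $e_1$ before $e_2$, in which case the subpath of $P$ from $\head[e_1]$ to $\tail[e_2]$ directly witnesses $\head[e_1]\PRECEQ\tail[e_2]$, which by definition is exactly $e_1\PREC e_2$. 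Applied with $e_1:=e'$ and $e_2:=e''$, this yields \LemRef{Lem1}.

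I do not anticipate any substantive obstacle; the subtlety is entirely in the bookkeeping. The decisive observation is that the intersection appearing in the definitions of $\Sover[\,\cdot\,]$ and $\Dover[\,\cdot\,]$ is precisely what pins both $e'$ and $e''$ to the common set $\onecut[s_i][d_j]$ whenever $i\neq j$; once that is recognized, the topological ordering of $1$-edge cuts along any single $s_i$-to-$d_j$ path finishes the argument, with no need for cross-path consistency or explicit cycle-exclusion arguments.
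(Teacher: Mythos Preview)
Your proposal is correct and takes essentially the same approach as the paper: both arguments observe that, since $i\neq j$, the definitions of $\Sover[i]$ and $\Dover[j]$ force $e'$ and $e''$ to lie in the common set $\onecut[s_i][d_j]$, after which any single $s_i$-to-$d_j$ path witnesses their comparability. The paper's proof simply compresses your two steps into one sentence, leaving the ``two $1$-edge cuts on a common path are ordered'' fact implicit.
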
 

\begin{lem}\label{Lem2}
For any distinct $i$, $j$, and $k$ in $\{1,2,3\}$, we have $(\Dover[i]\cap\Dover[j])\!\subset\!\Sover[k]$. 
\end{lem}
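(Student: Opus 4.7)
The plan is to reduce the claim to a single elementary fact about the $\GANA$ topology. Fix distinct indices $i, j, k \in \{1,2,3\}$ and take an arbitrary edge $e \in \Dover[i] \cap \Dover[j]$; the goal is to show $e \in \Sover[k]$. Expanding the shorthand definitions for distinct indices, this amounts to verifying three things: (a) $e \in \onecut[s_k][d_i]$, (b) $e \in \onecut[s_k][d_j]$, and (c) $e \neq e_{s_k}$.

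Conditions (a) and (b) come for free. By the definition $\Dover[i] = \onecut[s_j][d_i] \cap \onecut[s_k][d_i] \setminus \{e_{d_i}\}$, membership in $\Dover[i]$ already delivers $e \in \onecut[s_k][d_i]$, yielding (a). Symmetrically, $\Dover[j] = \onecut[s_i][d_j] \cap \onecut[s_k][d_j] \setminus \{e_{d_j}\}$ gives (b). So the entire content of the lemma reduces to certifying (c).

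The key step---and the only place where the $\GANA$ structural hypothesis is invoked---is therefore showing $e \neq e_{s_k}$. From $e \in \Dover[i]$ we inherit $e \in \onecut[s_j][d_i]$, so it suffices to show $e_{s_k} \notin \onecut[s_j][d_i]$. This follows from the $\GANA$ assumption $|\IN[s_k]| = 0$: since $s_k$ has no incoming edges, no path originating at $s_j \neq s_k$ can visit $s_k$, hence no such path can traverse $e_{s_k}$ (the unique outgoing edge of $s_k$). Because the fully interfered assumption guarantees at least one path from $s_j$ to $d_i$, and that path must avoid $e_{s_k}$, the edge $e_{s_k}$ cannot lie on every $s_j$-to-$d_i$ path and hence cannot be a $1$-edge cut separating $s_j$ and $d_i$. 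No serious obstacle is anticipated; the argument is pure definition-chasing together with the elementary topological observation that a source edge of $s_k$ cannot appear on any path originating at a different source.
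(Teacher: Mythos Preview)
Your proof is correct and follows essentially the same approach as the paper's: read off the required $\onecut$ memberships directly from the definitions of $\Dover[i]$ and $\Dover[j]$. The paper's own proof is a two-line observation that $e\in\Dover[i]\cap\Dover[j]$ forces all $s_k$-to-$d_i$ and $s_k$-to-$d_j$ paths through $e$, hence $e\in\Sover[k]$; it does not explicitly verify the side condition $e\neq e_{s_k}$, whereas you do, using $|\IN[s_k]|=0$ and the fully-interfered assumption. So your argument is strictly more careful on a point the paper treats as self-evident, but the underlying route is the same.
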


\begin{lem}\label{Lem3}
For all $i\!\neq\!j$, $e'\!\in\!\Sover[i]\backslash\Dover[j]$, and $e''\!\in\!\Dover[j]$, we have $e'\PREC e''$.
\end{lem}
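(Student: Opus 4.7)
The plan is to argue by contradiction, leveraging \LemRef{Lem1}. By that lemma, exactly one of $e'\PREC e''$, $e'\SUCC e''$, or $e'\!=\!e''$ holds; since $e''\in\Dover[j]$ while $e'\notin\Dover[j]$, the case $e'\!=\!e''$ is ruled out. So it suffices to eliminate the case $e'\SUCC e''$, i.e., $e''\PREC e'$ (meaning $\head[e'']\PRECEQ\tail[e']$). I will do so by exhibiting an $s_i$-to-$d_j$ walk that avoids $e'$, which contradicts $e'\in\onecut[s_i][d_j]$ (an immediate consequence of $e'\in\Sover[i]$).

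The preparatory step is to show $e'\notin\onecut[s_k][d_j]$. First observe that $e'\neq e_{d_j}$: the 3-unicast ANA assumption $|\OUT[d_j]|\!=\!0$ combined with the reachability of $d_k$ from $s_i$ implies that no $s_i$-to-$d_k$ path traverses $e_{d_j}$, so $e_{d_j}\notin\onecut[s_i][d_k]$; since $e'\in\onecut[s_i][d_k]$ by the definition of $\Sover[i]$, this forces $e'\neq e_{d_j}$. Combining $e'\neq e_{d_j}$, $e'\in\onecut[s_i][d_j]$, and $e'\notin\Dover[j]$ with the definition of $\Dover[j]$ then yields $e'\notin\onecut[s_k][d_j]$.

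The path construction now proceeds in three pieces. Since $e'\notin\onecut[s_k][d_j]$, there exists an $s_k$-to-$d_j$ path avoiding $e'$, and because $e''\in\onecut[s_k][d_j]$ this path must traverse $e''$; its tail segment $Q_1$ from $\head[e'']$ to $d_j$ therefore avoids $e'$. Since $e''\in\onecut[s_i][d_j]$, $e''$ lies on some $s_i$-to-$d_j$ path, which supplies a path $Q_2$ from $s_i$ to $\tail[e'']$. Under the contradictory hypothesis $\head[e'']\PRECEQ\tail[e']$, the acyclicity of $\G$ prevents $e'$ from lying on any walk that ends at $\tail[e'']$, so $Q_2$ also avoids $e'$. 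Concatenating $Q_2$, the edge $e''$, and $Q_1$ produces an $s_i$-to-$d_j$ walk avoiding $e'$, from which a simple path avoiding $e'$ can be extracted, contradicting $e'\in\onecut[s_i][d_j]$ and completing the argument.

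The main obstacle I anticipate is the preparatory step: securing $e'\notin\onecut[s_k][d_j]$ requires the detour through $e'\neq e_{d_j}$, which relies essentially on the 3-unicast ANA structural assumptions $|\OUT[d_j]|\!=\!0$ and the full reachability of every $d_k$ from every $s_i$. Once $e'\notin\onecut[s_k][d_j]$ is in hand, the concluding DAG walk-to-path surgery is routine.
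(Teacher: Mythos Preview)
Your proof is correct and follows essentially the same approach as the paper: use \LemRef{Lem1} to reduce to eliminating $e''\PREC e'$, observe that $e'\notin\onecut[s_k][d_j]$ so there is an $s_k$-to-$d_j$ path avoiding $e'$ that must pass through $e''$, then prepend an $s_i$-to-$\tail[e'']$ segment (which avoids $e'$ by acyclicity) to obtain an $s_i$-to-$d_j$ path avoiding $e'$, contradicting $e'\in\Sover[i]\subset\onecut[s_i][d_j]$. The only notable difference is that the paper first passes to the most downstream edge $e'_\ast\in\Sover[i]\backslash\Dover[j]$, proves $e'_\ast\PREC e''$, and then extends to all $e'$ via the total order on $\onecut[s_i][d_j]$; your argument applies directly to an arbitrary $e'$ and is slightly more streamlined. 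You are also more explicit than the paper about the boundary case $e'\neq e_{d_j}$, which the paper handles implicitly.
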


\begin{lem}\label{Lem4}
For any distinct $i$, $j$, and $k$ in $\{1,2,3\}$, $\Dover[j]\cap\Dover[k]\NotEqualEmpty$ if and only if both $\Sover[i]\cap\Dover[j]\NotEqualEmpty$ and $\Sover[i]\cap\Dover[k]\NotEqualEmpty$. 
\end{lem}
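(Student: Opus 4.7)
The plan is to handle the two directions separately, with the substantive work lying in the $(\Leftarrow)$ direction. For the easy $(\Rightarrow)$ direction, I would pick any $e\in\Dover[j]\cap\Dover[k]$ and invoke Lemma 2, which gives $e\in\Sover[i]$; this immediately places $e$ in both $\Sover[i]\cap\Dover[j]$ and $\Sover[i]\cap\Dover[k]$.

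For $(\Leftarrow)$, I would fix $e_1\in\Sover[i]\cap\Dover[j]$ and $e_2\in\Sover[i]\cap\Dover[k]$. Since $i\neq k$, Lemma 1 applied to $e_1\in\Sover[i]$ and $e_2\in\Dover[k]$ produces exactly one of $e_1=e_2$, $e_1\PREC e_2$, or $e_1\SUCC e_2$. The first case is trivial: the common edge lies in $\Dover[j]\cap\Dover[k]$. The two strict cases are symmetric under swapping the roles of $(e_1,j)$ and $(e_2,k)$, so I would focus on $e_1\PREC e_2$ and aim to show $e_2\in\Dover[j]$, which together with $e_2\in\Dover[k]$ completes the proof.

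To verify $e_2\in\Dover[j]$ I would check the three defining requirements. First, $e_2\in\onecut[s_i][d_j]$ is immediate from $e_2\in\Sover[i]$. Second, $e_2\neq e_{d_j}$: if $e_2=e_{d_j}$, then $e_2\in\onecut[s_i][d_k]$ would force every path from $s_i$ to the reachable $d_k$ to traverse $e_{d_j}$, which is impossible because $|\OUT[d_j]|=0$ prevents extending such a path to $d_k\neq d_j$. Third and most substantive is $e_2\in\onecut[s_k][d_j]$: given an arbitrary path $P$ from $s_k$ to $d_j$, the relation $e_1\in\onecut[s_k][d_j]$ decomposes $P$ as $P'\cdot e_1\cdot P''$ with $P''$ running from $\head[e_1]$ to $d_j$. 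I would then pick a path $R_1$ from $s_i$ to $\tail[e_1]$ (which exists since $e_1\in\onecut[s_i][d_j]$ and $d_j$ is reachable from $s_i$) and concatenate to form the walk $R_1\cdot e_1\cdot P''$ from $s_i$ to $d_j$, which must contain $e_2$ because $e_2\in\onecut[s_i][d_j]$.

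The main obstacle is the final localization step: I must argue that $e_2$ occurs inside $P''$ rather than inside $R_1$ or coinciding with $e_1$. This is where $e_1\PREC e_2$ and the DAG property are essential: every edge $e'$ of $R_1$ satisfies $\head[e']\PRECEQ\tail[e_1]$, hence $e'\PREC e_1$, and chaining with $e_1\PREC e_2$ through $\PRECEQ$-transitivity (using $\tail[e_1]\PREC\head[e_1]\PRECEQ\tail[e_2]$) yields $\head[e']\PRECEQ\tail[e_2]$; any equality $e'=e_2$ would then close a directed cycle and is therefore excluded, while the strict relation $e_1\PREC e_2$ rules out $e_2=e_1$. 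Consequently $e_2\in P''\subset P$, establishing $e_2\in\onecut[s_k][d_j]$ and finishing the lemma.
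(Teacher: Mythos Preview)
Your proof is correct. The $(\Rightarrow)$ direction matches the paper exactly, and your $(\Leftarrow)$ argument is sound: the concatenated walk $R_1\cdot e_1\cdot P''$ is automatically a simple path in a DAG, so the 1-edge-cut property of $e_2$ applies, and your acyclicity reasoning correctly localizes $e_2$ to $P''$.

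The paper's $(\Leftarrow)$ proof takes a different, shorter route. Rather than directly verifying that $e_2\in\Dover[j]$, it argues by contradiction: assume $\Dover[j]\cap\Dover[k]=\emptyset$, so that $e_2\in\Sover[i]\cap\Dover[k]\subset\Sover[i]\setminus\Dover[j]$; then \LemRef{Lem3} (applied with $e_2$ in the role of $\Sover[i]\setminus\Dover[j]$ and $e_1\in\Dover[j]$) forces $e_2\PREC e_1$, contradicting the assumed $e_1\PREC e_2$. The symmetric case is handled by swapping $j$ and $k$. In effect, the paper outsources the path-stitching work to the already-proven \LemRef{Lem3}, whereas you re-derive the needed instance of that ordering from scratch. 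Your approach is more self-contained and makes the mechanism explicit; the paper's is terser because the heavy lifting was done earlier.
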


\begin{lem}\label{Lem5}
For all $i\neq j$ and $e''\!\in\!\Dover[i] \cap \Dover[j]$, if $\Sover[i] \cap \Sover[j] \NotEqualEmpty$, then there exists $e'\!\in\!\Sover[i] \cap \Sover[j]$ such that $e' \PRECEQ e''$.
\end{lem}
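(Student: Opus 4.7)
The plan is to fix an arbitrary $\tilde{e} \in \Sover[i] \cap \Sover[j]$ (nonempty by hypothesis) and use the relative position lemmas to compare it with $e''$. If $\tilde{e} \notin \Dover[j]$, then \LemRef{Lem3} applied to $\tilde{e} \in \Sover[i] \setminus \Dover[j]$ and $e'' \in \Dover[j]$ immediately yields $\tilde{e} \PREC e''$; symmetrically, if $\tilde{e} \notin \Dover[i]$ then swapping the roles of $i$ and $j$ in \LemRef{Lem3} again gives $\tilde{e} \PREC e''$. In either situation we take $e' := \tilde{e}$. What remains is the case $\tilde{e} \in \Dover[i] \cap \Dover[j]$, where \LemRef{Lem1} (with $\tilde{e} \in \Sover[i]$ and $e'' \in \Dover[j]$, $i \neq j$) still guarantees that $\tilde{e}$ and $e''$ are topologically comparable; the only nontrivial subcase is $\tilde{e} \SUCC e''$.

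In the subcase $\tilde{e} \SUCC e''$, I claim that $e''$ itself belongs to $\Sover[i] \cap \Sover[j]$, so we may take $e' := e''$. Let $k$ denote the third index. From $e'' \in \Dover[i] \cap \Dover[j]$ we already have $e'' \in \onecut[s_i][d_j] \cap \onecut[s_j][d_i]$, so the remaining targets are $e'' \in \onecut[s_i][d_k]$, $e'' \in \onecut[s_j][d_k]$, and $e'' \notin \{e_{s_i}, e_{s_j}\}$. To establish $e'' \in \onecut[s_i][d_k]$, I would argue by contradiction: suppose a path $P$ from $s_i$ to $d_k$ avoids $e''$. Since $\tilde{e} \in \Sover[i] \subset \onecut[s_i][d_k]$, $P$ must still use $\tilde{e}$; write $Q_1$ for its prefix from $s_i$ to $\tail[\tilde{e}]$ and pick any $Q_2$ from $\head[\tilde{e}]$ to $d_j$ (such $Q_2$ exists because $\tilde{e} \in \onecut[s_i][d_j]$ and $d_j$ is reachable from $s_i$ in $\GANA$). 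Since $\tilde{e} \SUCC e''$ means $\head[e''] \PRECEQ \tail[\tilde{e}]$, the acyclicity of $\G$ prevents any descendant of $\head[\tilde{e}]$ from being equal to $\tail[e'']$, so $Q_2$ cannot use $e''$; neither does $Q_1$, by the assumption on $P$. Then $Q_1 \cdot \tilde{e} \cdot Q_2$ is an $s_i$-to-$d_j$ path avoiding $e''$, contradicting $e'' \in \onecut[s_i][d_j]$. The containment $e'' \in \onecut[s_j][d_k]$ is obtained by the symmetric splicing with $\tilde{e} \in \Sover[j]$ and $e'' \in \Dover[i]$.

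For the source-edge exclusions, if $e'' = e_{s_i}$ then $e'' \in \onecut[s_j][d_i]$ would force every $s_j$-to-$d_i$ path through the node $s_i$; but $|\IN[s_i]| = 0$ and $s_j \neq s_i$ while $d_i$ is reachable from $s_j$ in $\GANA$, a contradiction. A mirror argument excludes $e'' = e_{s_j}$, completing $e'' \in \Sover[i] \cap \Sover[j]$. The main obstacle will be the path-splicing step, specifically verifying that the continuation $Q_2$ from $\head[\tilde{e}]$ to $d_j$ cannot contain $e''$: this is the sole place where acyclicity is used in an essential way, and it requires carefully combining $\tilde{e} \SUCC e''$ (i.e., $\head[e''] \PRECEQ \tail[\tilde{e}]$) with the DAG hypothesis to rule $\tail[e'']$ out of the descendants of $\head[\tilde{e}]$. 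All other manipulations amount to definitional bookkeeping involving $\Sover[\cdot]$, $\Dover[\cdot]$, and the source structure of a 3-unicast ANA network.
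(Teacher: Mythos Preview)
Your proof is correct and shares the same core splicing argument as the paper: in the hard case you show that $e''$ itself lies in $\Sover[i]\cap\Sover[j]$ by routing through a downstream $\tilde{e}\in\Sover[i]\cap\Sover[j]$ and invoking acyclicity, exactly as the paper does (it reaches $e''_\ast\in\cap_{i}(\Sover[i]\cap\Dover[i])$ and derives the same contradiction). The only organizational difference is that you dispatch the easy cases $\tilde{e}\notin\Dover[j]$ or $\tilde{e}\notin\Dover[i]$ directly via \LemRef{Lem3}, whereas the paper folds everything into a single contradiction hypothesis (``$e''_\ast\prec e'$ for all $e'\in\Sover[1]\cap\Sover[2]$'') using only \LemRef{Lem1}; your decomposition is slightly cleaner but not materially different.
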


\begin{lem}\label{Lem6}
Consider four indices $i$, $j_1$, $j_2$, and $j_3$ taking values in $\{1,2,3\}$ for which the values of $j_1$, $j_2$ and $j_3$ must be distinct and $i$ is equal to one of $j_1$, $j_2$ and $j_3$. If
$\Sover[i][;\{j_1,j_2\}]\NotEqualEmpty$ and $\Sover[i][;\{j_1,j_3\}]\NotEqualEmpty$, then the following three statements are true: (i) $\Sover[i][;\{j_1,j_2\}]\cap\,\Sover[i][;\{j_1,j_3\}]\NotEqualEmpty$; (ii) $\Sover[i][;\{j_2,j_3\}]\NotEqualEmpty$; and (iii) $\Sover[i]\NotEqualEmpty$. 
\end{lem}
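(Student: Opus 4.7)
The plan is to exhibit a single edge $e_a$ that lies simultaneously in all three pairwise sets $\Sover[i][;\{j_1,j_2\}]$, $\Sover[i][;\{j_1,j_3\}]$, and $\Sover[i][;\{j_2,j_3\}]$. Such an $e_a$ immediately yields (i) and (ii). Statement (iii) then comes essentially for free: by the shorthand convention $\Sover[i]$ equals $\Sover[i][;\{j,k\}]$ with $\{j,k\}=\{1,2,3\}\setminus\{i\}$, and since $i$ is one of $j_1,j_2,j_3$, one of the three pairwise sets above is literally $\Sover[i]$.

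To produce $e_a$, I would first pick arbitrary $e_a\in\Sover[i][;\{j_1,j_2\}]$ and $e_b\in\Sover[i][;\{j_1,j_3\}]$. Both edges lie in $\onecut[s_i][d_{j_1}]$, and the standard DAG fact that any two $1$-edge cuts separating the same ordered pair of nodes are topologically comparable (otherwise one could splice two $s_i$-to-$d_{j_1}$ paths to produce an $s_i$-to-$d_{j_1}$ path avoiding one of the cuts) lets me assume, after possibly swapping the roles of $j_2$ and $j_3$, that $e_a\PRECEQ e_b$.

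The main step, and the only nontrivial obstacle, is to show that this upstream cut $e_a$ already separates $s_i$ from $d_{j_3}$, i.e., $e_a\in\onecut[s_i][d_{j_3}]$. I would argue by contradiction: suppose some $s_i$-to-$d_{j_3}$ path $P$ avoids $e_a$. Since $e_b\in\onecut[s_i][d_{j_3}]$, $P$ must traverse $e_b$, so the prefix $P_1$ of $P$ from $s_i$ up to $\tail[e_b]$ avoids $e_a$. On the other hand, $d_{j_1}$ is reachable from $s_i$ and every $s_i$-to-$d_{j_1}$ path uses $e_b$, so there exists a path $P_2$ from $\head[e_b]$ to $d_{j_1}$. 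Concatenating $P_1$, the edge $e_b$, and $P_2$ gives an $s_i$-to-$d_{j_1}$ walk avoiding $e_a$ (and hence, after removing cycles, a simple $s_i$-to-$d_{j_1}$ path avoiding $e_a$), contradicting $e_a\in\onecut[s_i][d_{j_1}]$.

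Once $e_a\in\onecut[s_i][d_{j_3}]$ is in hand, we have $e_a\in\onecut[s_i][d_{j_1}]\cap\onecut[s_i][d_{j_2}]\cap\onecut[s_i][d_{j_3}]$, and because $e_a\neq e_{s_i}$ (inherited from $e_a\in\Sover[i][;\{j_1,j_2\}]$), the edge $e_a$ lies in each of the three pairwise $\Sover[i][;\cdot]$ sets, and parts (i)--(iii) follow as outlined in the first paragraph. Aside from the path-splicing contradiction above, everything else is just the definition-chasing needed to translate between $\onecut$, the pairwise $\Sover[i][;\{\cdot,\cdot\}]$, and the shorthand $\Sover[i]$.
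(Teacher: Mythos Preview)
Your proposal is correct and follows essentially the same approach as the paper: both arguments use that $e_a$ and $e_b$ lie in $\onecut[s_i][d_{j_1}]$ and are therefore comparable, then splice a path from $s_i$ to $e_b$ avoiding $e_a$ (coming from an $s_i$-to-$d_{j_3}$ path) with a path from $e_b$ to $d_{j_1}$ to contradict $e_a\in\onecut[s_i][d_{j_1}]$. The only cosmetic differences are that the paper frames it as a contradiction of ``$\Sover[i][;\{j_1,j_2\}]\cap\Sover[i][;\{j_1,j_3\}]=\emptyset$'' and picks the most-downstream representatives (a choice it never actually exploits), whereas you work directly with arbitrary representatives and show the upstream one lands in all three cuts.
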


{\em Remark:} All the above lemmas are purely graph-theoretic. If we swap the roles of sources and destinations, then we can also derive the {\em \SWAPSD-symmetric version} of these lemmas. For example, the \SWAPSD-symmetric version of \LemRef{Lem2} becomes $(\Sover[i]\cap\Sover[j])\!\subseteq\!\Dover[k]$. The \SWAPSD-symmetric version of \LemRef{Lem5} is: For all $i\!\neq\!j$ and $e''\!\in\!\Sover[i] \cap \Sover[j]$, if $\Dover[i] \cap \Dover[j] \NotEqualEmpty$, then there exists $e'\!\in\!\Dover[i] \cap \Dover[j]$ such that $e' \SUCCEQ e''$.

\LemToRef{Lem1}{Lem6} discuss the topological relationship between the edge sets $\Sover[i]$ and $\Dover[j]$. The following lemma establishes the relationship between $\Sover[i]$ (resp. $\Dover[j]$) and the channel gains.

\begin{lem}\label{Lem7}
Given a $\GANA$, consider the corresponding channel gains as defined in Section II-D. Consider three indices $i$, $j_1$, and $j_2$ taking values in $\{1,2,3\}$ for which the values of $j_1$ and $j_2$ must be distinct. Then, $\GCD[{\ChGANA[j_1][i]}][{\ChGANA[j_2][i]}]\,\PolyEqual 1$ if and only if $\Sover[i][;\{j_1,j_2\}]\EqualEmpty$. Symmetrically, $\GCD[{\ChGANA[i][j_1]}][{\ChGANA[i][j_2]}]\,\PolyEqual 1$ if and only if $\Dover[i][;\{j_1,j_2\}]\EqualEmpty$.
\end{lem}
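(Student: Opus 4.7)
The plan is to expand both $\ChGANA[j_1][i]$ and $\ChGANA[j_2][i]$ via Proposition 3 into products of irreducible factors indexed by consecutive $1$-edge cuts, and then align the two factorizations through the common cut set $\onecut[s_i][\{d_{j_1},d_{j_2}\}] = \onecut[s_i][d_{j_1}] \cap \onecut[s_i][d_{j_2}]$. The first step is a structural lemma: writing $\onecut[s_i][\{d_{j_1},d_{j_2}\}] = \{h_0\!=\!e_{s_i}, h_1,\ldots,h_M\}$ in topological order, I claim that $h_k$ and $h_{k+1}$ are also consecutive in both $\onecut[s_i][d_{j_1}]$ and $\onecut[s_i][d_{j_2}]$. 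Indeed, suppose some $g\in\onecut[s_i][d_{j_1}]$ lay strictly between $h_k$ and $h_{k+1}$ topologically. Then $g$ must lie on every $\head[h_k]$-to-$\tail[h_{k+1}]$ path (otherwise, splicing a bypass with the $s_i$-to-$h_k$ prefix and the $h_{k+1}$-to-$d_{j_1}$ suffix would yield an $s_i$-to-$d_{j_1}$ path avoiding $g$). Splicing the same bypass with the $h_{k+1}$-to-$d_{j_2}$ suffix then forces $g \in \onecut[s_i][d_{j_2}]$, hence $g\in\onecut[s_i][\{d_{j_1},d_{j_2}\}]$, contradicting the consecutiveness of $h_k$ and $h_{k+1}$.

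For the reverse direction ($\Sover[i][;\{j_1,j_2\}]\NotEqualEmpty \Rightarrow \GCD[{\ChGANA[j_1][i]}][{\ChGANA[j_2][i]}]\PolyNotEqual 1$), we have $M\!\geq\!1$, and by Proposition 3 the factor $\ChG[h_1][h_0]$ is irreducible and non-constant; by the structural lemma it appears literally in the Proposition 3 factorizations of both $\ChGANA[j_1][i]$ and $\ChGANA[j_2][i]$, so it divides their GCD.

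For the forward direction ($\Sover[i][;\{j_1,j_2\}]\EqualEmpty \Rightarrow \GCD\PolyEqual 1$), I argue by contradiction using unique factorization. A common factor would produce an equivalence $\ChG[f^{(1)}_{k+1}][f^{(1)}_k]\PolyEqual\ChG[f^{(2)}_{l+1}][f^{(2)}_l]$ between two Proposition 3 irreducible factors, where $(f^{(1)}_k,f^{(1)}_{k+1})$ is consecutive in $\onecut[s_i][d_{j_1}]$ and $(f^{(2)}_l,f^{(2)}_{l+1})$ is consecutive in $\onecut[s_i][d_{j_2}]$. The key sub-claim is that this equivalence forces $(f^{(1)}_k,f^{(1)}_{k+1})=(f^{(2)}_l,f^{(2)}_{l+1})$: every monomial of $\ChG[e'][e]$, being a product of local encoding kernels along some $e$-to-$e'$ path, contains a variable of the form $x_{e,\cdot}$ (the first adjacent pair) and one of the form $x_{\cdot,e'}$ (the last), so polynomial equivalence determines $e$ and $e'$ uniquely from the variable set. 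Given this identification, $f^{(1)}_k\in\onecut[s_i][d_{j_1}]\cap\onecut[s_i][d_{j_2}]$: if $k\!\geq\!1$ then $f^{(1)}_k\in\Sover[i][;\{j_1,j_2\}]$ directly, and if $k\!=\!l\!=\!0$ then $f^{(1)}_1=f^{(2)}_1\neq e_{s_i}$ is likewise in $\Sover[i][;\{j_1,j_2\}]$---either way contradicting $\Sover[i][;\{j_1,j_2\}]\EqualEmpty$. The $(s,d)$-symmetric claim for $\Dover[i][;\{j_1,j_2\}]$ follows by reversing the roles of sources and destinations throughout the argument.

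The main obstacle is the variable-signature sub-claim in the forward direction: it is the one step where Proposition 3 as a black box is insufficient, and I must peek inside the polynomial $\ChG[e'][e]$ to recover its endpoint edges from the local encoding kernels it contains. Everything else is combinatorial path-splicing plus unique factorization in the polynomial ring.
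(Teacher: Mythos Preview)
Your proof is correct and follows the same core strategy as the paper: factor both channel gains via Proposition~3 and analyze common irreducible factors. Two differences are worth flagging. First, the paper's reverse direction is simpler---it just observes that for any $e\in\Sover[i][;\{j_1,j_2\}]$ the (possibly reducible) polynomial $\ChG[e][e_{s_i}]$ divides both channel gains, so your structural lemma is unnecessary there. Second, for the forward direction the paper first invokes Corollary~2 to force both channel gains to be reducible, and then asserts in one sentence that a shared irreducible factor yields a common intermediate cut; your variable-signature argument makes that identification explicit and, as a bonus, dispenses with the appeal to Corollary~2.
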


The proof of \LemRef{Lem7} is relegated to \AppRef{Lem1-7Proof}.

\subsection{The Graph-Theoretic Characterization of $\LRneqVAR$}

A critical condition of the 3-unicast ANA scheme \cite{DasVishwanathJafarMarkopoulou:ISIT10,RamakrishnanDaszMalekiMarkopoulouJafarVishwanath:Allerton10} is the assumption that $\LRneqVAR$, which is the fundamental reason why the Vandermonde precoding matrix $\SupPreM[i]$ is of full (column) rank. However, for some networks we may have $\LReqVAR$, for which the 3-unicast ANA scheme does not work (see Fig.~\ref{SecII-L=R-Fig}). Next, we prove the following graph-theoretic condition that fully characterizes whether $\LReqVAR$.

\begin{prop}\label{Prop4}
For a given $\GANA$, we have $\LReqVAR$ if and only if there exists a pair of distinct indices $i,j\!\in\!\{1,2,3\}$ satisfying both $\Sover[i] \cap \Sover[j] \NotEqualEmpty$ and $\Dover[i] \cap \Dover[j] \NotEqualEmpty$.
\end{prop}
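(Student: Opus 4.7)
The plan is to prove both directions by combining the irreducible factorization of each channel gain $\ChGANA[j][i]$ supplied by \PropRef{Prop3} with the topological relationships among the cut sets $\Sover[i]$ and $\Dover[j]$ developed in \LemToRef{Lem1}{Lem7}.

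For sufficiency I will give a direct factorization identity. Suppose $\Sover[i] \cap \Sover[j] \NotEqualEmpty$ and $\Dover[i] \cap \Dover[j] \NotEqualEmpty$ for some distinct $i, j$, and let $k$ denote the remaining index. The \SWAPSD-symmetric form of \LemRef{Lem5} supplies $e' \in \Sover[i] \cap \Sover[j]$ and $e'' \in \Dover[i] \cap \Dover[j]$ with $e' \PRECEQ e''$. Unpacking the definitions, the edge $e'$ lies on each of $\onecut[s_i][d_j]$, $\onecut[s_i][d_k]$, $\onecut[s_j][d_i]$, $\onecut[s_j][d_k]$, while $e''$ lies on $\onecut[s_i][d_j]$, $\onecut[s_k][d_j]$, $\onecut[s_j][d_i]$, $\onecut[s_k][d_i]$. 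I will set
\[
\alpha_i \triangleq \ChG[e'][e_{s_i}],\; \alpha_j \triangleq \ChG[e'][e_{s_j}],\; \beta_k \triangleq \ChG[e_{d_k}][e'],
\]
\[
\gamma_k \triangleq \ChG[e''][e_{s_k}],\; \delta_i \triangleq \ChG[e_{d_i}][e''],\; \delta_j \triangleq \ChG[e_{d_j}][e''],\; \mu \triangleq \ChG[e''][e'],
\]
so that \PropRef{Prop3} gives $\ChGANA[j][i] = \alpha_i \mu \delta_j$, $\ChGANA[i][j] = \alpha_j \mu \delta_i$, $\ChGANA[k][i] = \alpha_i \beta_k$, $\ChGANA[k][j] = \alpha_j \beta_k$, $\ChGANA[i][k] = \gamma_k \delta_i$, and $\ChGANA[j][k] = \gamma_k \delta_j$. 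Substituting these into $\BOLDa(\netvar)$ and $\BOLDb(\netvar)$ will collapse both products to $\alpha_i \alpha_j \beta_k \gamma_k \delta_i \delta_j \mu$, yielding $\LReqVAR$.

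For necessity I will argue by contrapositive. Assume that for every distinct pair $i, j$ either $\Sover[i] \cap \Sover[j] \EqualEmpty$ or $\Dover[i] \cap \Dover[j] \EqualEmpty$, and suppose for contradiction that $\LReqVAR$. By \PropRef{Prop3}, every $\ChGANA[j][i]$ factors uniquely (up to non-zero scalars) into pairwise-inequivalent irreducibles $\ChG[e_b][e_a]$ indexed by consecutive 1-edge cuts in $\onecut[s_i][d_j]$. The identity $\LReqVAR$ forces the multisets of irreducible factors of $\BOLDa = \ChGANA[3][1]\ChGANA[2][3]\ChGANA[1][2]$ and $\BOLDb = \ChGANA[2][1]\ChGANA[3][2]\ChGANA[1][3]$ to coincide. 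A shared irreducible factor between an $\BOLDa$-channel and a $\BOLDb$-channel with a common source $s_i$ (respectively a common destination $d_j$) pins the associated edges into $\Sover[i]$ (respectively $\Dover[j]$); the remaining ``cross'' pairings in which neither source nor destination is shared can be handled by applying \CorRef{Cor1} to the associated $1$-edge cut, which together with \LemToRef{Lem1}{Lem6} will re-express the algebraic coincidence as a further $\Sover$- or $\Dover$-sharing. A careful bookkeeping across the nine possible $\BOLDa$-versus-$\BOLDb$ pairings, combined with the hypothesis, will exhibit at least one irreducible factor of one side that has no match on the other, yielding the contradiction. If the counting on the full graph becomes unwieldy, \PropRef{Prop2} permits passing to a well-chosen subgraph that isolates the uncancelled factor.

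The most delicate step will be the necessity direction, particularly controlling the cross-pairings of irreducible factors between channel gains with entirely disjoint $(s,d)$ labels; these sharings are not immediately labelled by any single $\Sover$- or $\Dover$-set and must be pushed into that language through a chain of applications of \CorRef{Cor1} and \LemToRef{Lem1}{Lem6}.
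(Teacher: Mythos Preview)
Your sufficiency argument is correct and essentially identical to the paper's: both pick $e'\in\Sover[i]\cap\Sover[j]$ and $e''\in\Dover[i]\cap\Dover[j]$ with $e'\PRECEQ e''$ via \LemRef{Lem5} (or its \SWAPSD-symmetric form), factor the six cross channel gains through $e'$ and $e''$, and observe that the two cyclic products collapse to the same monomial.

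Your necessity plan, however, has a genuine gap. The paper does \emph{not} proceed by a global matching of irreducible factors; it first proves the auxiliary \LemRef{Lem8} (that $\LReqVAR$ forces every $\Sover[i]$ and $\Dover[j]$ to be non-empty), and then performs a case split on whether $\Sover[i]\cap\Dover[j]\EqualEmpty$ for some $i\neq j$. The easy case ($\Sover[i]\cap\Dover[j]\NotEqualEmpty$ for all $i\neq j$) is dispatched by \LemRef{Lem4}. The hard case fixes the \emph{extremal} edges $e_2^\ast$ (most downstream in $\Sover[2]$) and $e_1^\ast$ (most upstream in $\Dover[1]$), and then tracks the specific irreducible factor $\ChG[e_1^\ast][e_2^\ast]$ of $\ChGANA[1][2]$ through the identity $\LReqVAR$ to force $\{e_2^\ast,e_1^\ast\}\subset\onecut[s_1][d_2]$, then $e_2^\ast\in\onecut[s_1][d_3]$, and so on, until $e_2^\ast\in\Sover[1]\cap\Sover[2]$ and $e_1^\ast\in\Dover[1]\cap\Dover[2]$. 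Your proposal contains none of these structural moves: you do not isolate extremal edges, you do not introduce the $\Sover[i]\cap\Dover[j]$ dichotomy, and your appeal to \CorRef{Cor1} for the ``cross pairings'' is misplaced, since that corollary concerns equivalence of \emph{products} $\ChGANA[j_1][i_1]\ChGANA[j_2][i_2]\PolyEqual\ChGANA[j_1][i_2]\ChGANA[j_2][i_1]$, not the sharing of a single irreducible factor between two channel gains with disjoint source/destination labels. Without the extremal-edge selection, a naive factor-matching argument gets stuck precisely at those cross pairings, because a common irreducible factor of, say, $\ChGANA[3][1]$ and $\ChGANA[1][3]$ places the associated edge only in $\onecut[s_1][d_3]\cap\onecut[s_3][d_1]$, which lies in neither any $\Sover$ nor any $\Dover$ set by itself; pushing it further requires exactly the extremality and the graph-theoretic reasoning the paper supplies.
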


\begin{proof}[Proof of the ``$\Leftarrow$" direction] Without loss of generality, suppose $\Sover[1] \cap \Sover[2] \NotEqualEmpty$ and $\Dover[1] \cap \Dover[2] \NotEqualEmpty$ (i.e., $i\!=\!1$ and $j\!=\!2$). By \LemRef{Lem5}, we can find two edges $e'\!\in\!\Sover[1] \cap \Sover[2]$ and $e''\!\in\!\Dover[1] \cap \Dover[2]$ such that $e'\PRECEQ e''$. Also note that \LemRef{Lem2} and its \SWAPSD-symmetric version imply that $e'\!\in\!\Dover[3]$ and $e''\!\in\!\Sover[3]$. Then by \PropRef{Prop3}, the channel gains $\ChGANA[j][i](\netvar)$ for all $i\!\neq\!j$ can be expressed by (we omit the variables $\netvar$ for simplicity):
\begin{equation*}\begin{split}
& \ChGANA[3][1]\!=\! \ChG[e'][{e_{s_1}}]\!\;\ChG[{e_{d_3}}][e']\! \quad\quad\quad\;\,\,\,\,\,\,  \ChGANA[2][1]\!=\!\ChG[e'][{e_{s_1}}]\!\;\ChG[e''][e']\!\;\ChG[{e_{d_2}}][e'']\! \\
& \ChGANA[2][3]\!=\! \ChG[e''][{e_{s_3}}]\!\;\ChG[{e_{d_2}}][e'']\! \quad\quad\quad\,\,\,\,\,   \ChGANA[3][2]\!=\! \ChG[e'][{e_{s_2}}]\!\;\ChG[{e_{d_3}}][e']\! \\
& \ChGANA[1][2]\!=\! \ChG[e'][{e_{s_2}}]\!\;\ChG[e''][e']\!\;\ChG[{e_{d_1}}][e'']\! \quad\,\, \ChGANA[1][3]\!=\! \ChG[e''][{e_{s_3}}]\!\;\ChG[{e_{d_1}}][e'']\! \\
\end{split}\end{equation*}
where the expressions of $\ChGANA[2][1]$ and $\ChGANA[1][2]$ are derived based on the facts that $e'\PRECEQ e''$ and $\{e',e''\}\!\subset\!\onecut[s_1][d_2]\cap \onecut[s_2][d_1]$. By plugging in the above 6 equalities to the definitions of $\BOLDa=\ChGANA[3][1]\ChGANA[2][3]\ChGANA[1][2]$ and $\BOLDb=\ChGANA[2][1]\ChGANA[3][2]\ChGANA[1][3]$, we can easily verify that $\LReq$. The proof of this direction is complete.
\end{proof}

{\em Remark:} In the example of Fig. \ref{SecII-L=R-Fig}, one can easily see that $e'\!\in\!\Sover[1] \cap \Sover[2]$ and $e''\!\in\!\Dover[1] \cap \Dover[2]$. Hence, the above proof shows that the example network in Fig. \ref{SecII-L=R-Fig} satisfies $\LReqVAR$ without actually computing the polynomials $\BOLDa(\netvar)$ and $\BOLDb(\netvar)$.

We will now focus on proving the necessity. Before proceeding, we state and prove the following lemma.

\begin{lem}\label{Lem8}
If the $\GANA$ of interest satisfies $\LReqVAR$, then $\Sover[i]\NotEqualEmpty$ and $\Dover[j]\NotEqualEmpty$ for all $i$ and $j$, respectively.
\end{lem}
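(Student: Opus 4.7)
The plan is to argue by contrapositive: show that if some $\Sover[i]$ or some $\Dover[j]$ is empty, then $\LRneqVAR$. The first step is to reduce to a single case by exploiting the symmetries of the statement. The algebraic condition $\LReqVAR$ is invariant under every permutation of the session indices $\{1,2,3\}$ (cyclic permutations leave $\BOLDa$ and $\BOLDb$ individually fixed, while transpositions interchange them) and also under the \SWAPSD-reversal of $\GANA$ that reverses every edge and swaps each source with its destination (which sends $\Sover[i]\!\leftrightarrow\!\Dover[i]$ and swaps $\BOLDa\!\leftrightarrow\!\BOLDb$). It therefore suffices to prove the single implication $\LReqVAR \Rightarrow \Sover[1]\NotEqualEmpty$.

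The next step is to set up the contradiction. Assume $\LReqVAR$ yet $\Sover[1]\EqualEmpty$. By \LemRef{Lem7} (with $i\!=\!1$, $j_1\!=\!2$, $j_2\!=\!3$) this is equivalent to $\GCD[{\ChGANA[2][1]}][{\ChGANA[3][1]}]\,\PolyEqual 1$, i.e., the two cross-session channel gains leaving $s_1$ share no irreducible factor. Using \PropRef{Prop3}, we isolate a specific irreducible factor of $\ChGANA[3][1]\!=\!\ChG[e_{d_3}][e_{s_1}]$: let $e^\star$ be the most upstream element of $\onecut[{\head[e_{s_1}]}][{\tail[e_{d_3}]}]$ if that set of $1$-edge cuts is non-empty, and otherwise set $e^\star\!=\!e_{d_3}$. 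In either case \PropRef{Prop3} guarantees that $p\,\triangleq\,\ChG[e^\star][e_{s_1}]$ is irreducible and divides $\ChGANA[3][1]$. Expanding $\LReqVAR$ as $\ChGANA[3][1]\ChGANA[2][3]\ChGANA[1][2]\,\PolyEqual\,\ChGANA[2][1]\ChGANA[3][2]\ChGANA[1][3]$ and using the coprimality just noted together with unique factorization in the polynomial ring forces $p$ to divide $\ChGANA[3][2]\ChGANA[1][3]$, and hence (by its irreducibility) to divide either $\ChGANA[3][2]$ or $\ChGANA[1][3]$.

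The main obstacle, and the crux of the argument, is to rule out both divisibilities. By \PropRef{Prop3}, every irreducible factor of $\ChGANA[3][2]\!=\!\ChG[e_{d_3}][e_{s_2}]$ has the form $\ChG[e_b][e_a]$ with $e_a\!\in\!\{e_{s_2}\}\cup\onecut[{\head[e_{s_2}]}][{\tail[e_{d_3}]}]$, and analogously every irreducible factor of $\ChGANA[1][3]$ has the form $\ChG[e_b][e_a]$ with $e_a\!\in\!\{e_{s_3}\}\cup\onecut[{\head[e_{s_3}]}][{\tail[e_{d_1}]}]$. If $\ChG[e^\star][e_{s_1}]\,\PolyEqual\,\ChG[e_b][e_a]$, then a variable-support comparison forces $e_a\!=\!e_{s_1}$: every monomial of $\ChG[e^\star][e_{s_1}]$ carries a network variable of the form $x_{e_{s_1}\,e^{\prime\prime}}$ (since every $e_{s_1}$-to-$e^\star$ path begins with the edge $e_{s_1}$), and such a variable can occur in $\ChG[e_b][e_a]$ only when some $e_a$-to-$e_b$ path passes through $e_{s_1}$; because $\ABS[{\IN[s_1]}]\!=\!0$, no edge can reach $s_1$, so any path through $e_{s_1}$ must \emph{start} at $e_{s_1}$, forcing $e_a\!=\!e_{s_1}$. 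But the source edges $e_{s_1}$, $e_{s_2}$, $e_{s_3}$ are distinct, and by the same no-incoming-edge reason $e_{s_1}$ cannot belong to $\onecut[{\head[e_{s_2}]}][{\tail[e_{d_3}]}]$ or to $\onecut[{\head[e_{s_3}]}][{\tail[e_{d_1}]}]$ (no $s_2$-to-$d_3$ or $s_3$-to-$d_1$ path can use $e_{s_1}$).

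This contradicts the divisibility derived from $\LReqVAR$, so the supposition $\Sover[1]\EqualEmpty$ must fail; the symmetry reduction from the first paragraph then yields $\Sover[i]\NotEqualEmpty$ and $\Dover[j]\NotEqualEmpty$ for all $i$ and $j$.
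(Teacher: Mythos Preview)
Your proof is correct and follows essentially the same approach as the paper: isolate the initial irreducible factor of a channel gain emanating from $e_{s_1}$, then argue via $|\IN[s_1]|=0$ and \PropRef{Prop3} that it cannot appear in any channel gain not starting from $s_1$. The only cosmetic differences are that you work with a factor of $\ChGANA[3][1]$ and invoke \LemRef{Lem7} for the coprimality with $\ChGANA[2][1]$, whereas the paper symmetrically takes a factor of $\ChGANA[2][1]$ and argues coprimality with $\ChG[e_{13}][e_{s_1}]$ directly; your variable-support justification is also spelled out more explicitly than the paper's one-line appeal to $|\IN[s_1]|=0$.
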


\begin{proof} We prove this by contradiction. Suppose $\Sover[1]\EqualEmpty$. Denote the most upstream $1$-edge cut separating $\head[e_{s_1}\!]$ and $d_2$ by $e_{12}$ (we have at least the $d_2$-destination edge $e_{d_2}$). Also denote the most upstream $1$-edge cut separating $\head[e_{s_1}\!]$ and $d_3$ by $e_{13}$ (we have at least the $d_3$-destination edge $e_{d_3}$). Since $\Sover[1]\EqualEmpty$ and by the definition of the 3-unicast ANA network, it is obvious that $e_{12}\neq e_{13}$. Moreover, both of the two polynomials $\ChG[{e_{12}}][{e_{s_1}}]$ (a factor of $\ChGANA[2][1]$) and $\ChG[{e_{13}}][{e_{s_1}}]$ (a factor of $\ChGANA[3][1]$) are irreducible and non-equivalent to each other. Therefore, these two polynomials are coprime. If we plug in the two polynomials into $\LReqVAR$, then it means that one of the following three cases must be true: (i) $\ChG[{e_{d_3}}][{e_{13}}]$ contains $\ChG[{e_{12}}][{e_{s_1}}]$ as a factor; (ii) $\ChGANA[2][3]$ contains $\ChG[{e_{12}}][{e_{s_1}}]$ as a factor; or (iii) $\ChGANA[1][2]$ contains $\ChG[{e_{12}}][{e_{s_1}}]$ as a factor. However, (i), (ii), and (iii) cannot be true as $|\IN[s_1]|\!=\!0$ and by \PropRef{Prop3}. The proof is thus complete by applying symmetry.
\end{proof}

\begin{proof}[Proof of the ``$\Rightarrow$" direction of \PropRef{Prop4}] Suppose the $\GANA$ of interest satisfies $\LReqVAR$. By \LemRef{Lem8}, we know that $\Sover[i]\NotEqualEmpty$ and $\Dover[j]\NotEqualEmpty$ for all $i$ and $j$. Then it is obvious that $\EC[{\head[{e_{s_i}}]}][{\tail[{e_{d_j}}]}]\!=\!1$ for all $i\!\neq\!j$ because if (for example) $\EC[{\head[{e_{s_1}}]}][{\tail[{e_{d_2}}]}]\!\geq\!2$ then both $\Sover[1]$ and $\Dover[2]$ will be empty by definition. Thus by \PropRef{Prop3}, we can express each channel gain $\ChGANA[j][i]$ ($i\neq j$) as a product of irreducibles, each corresponding to the channel gain between two consecutive $1$-edge cuts (including $e_{s_i}$ and $e_{d_j}$) separating $s_i$ and $d_j$. We now consider two cases.



{\bf Case~1:} $\Sover[i]\cap\Dover[j]\EqualEmpty$ for some $i\!\neq\!j$. Assume without loss of generality that $\Sover[2]\cap\Dover[1]\EqualEmpty$ (i.e., $i\!=\!2$ and $j\!=\!1$). Let $e_2^\ast$ denote the most downstream edge in $\Sover[2]$ and let $e_1^\ast$ denote the most upstream edge in $\Dover[1]$. Since $\Sover[2]\cap\Dover[1]\EqualEmpty$, the edge $e_2^\ast$ must not be in $\Dover[1]$. By \LemRef{Lem3}, we have $e_2^{\ast} \PREC  e_1^{\ast}$.

For the following, we will prove $\{e_2^\ast, e_1^\ast\}\!\subset\!\onecut[s_1][d_2]$.  We first notice that by definition, $e_2^\ast\!\in\!\Sover[2]\!\subset\! \onecut[s_2][d_1]$ and $e_1^\ast\!\in$ $\!\Dover[1]\!\subset\!\onecut[s_2][d_1]$. Hence by \PropRef{Prop3}, we can express $\ChGANA[1][2]$ as $\ChGANA[1][2]=\ChG[e^\ast_2][e_{s_2}]\ChG[e^\ast_1][e^\ast_2]\ChG[e_{d_1}][e^\ast_1]$. Note that by our construction $e_2^\ast\PREC e_1^\ast$ we have $\ChG[e_1^\ast][e_2^\ast]\,\PolyNotEqual 1$.

We now claim $\GCD[{\ChG[e_1^\ast][e_2^\ast]}][{\,\ChGANA[3][2]\ChGANA[1][3]}]\PolyEqual 1$, i.e., $\ChGANA[3][2]\ChGANA[1][3]$ cannot contain any factor of $\ChG[e_1^\ast][e_2^\ast]$. We will prove this claim by contradiction. Suppose $\GCD[{\ChG[e_1^\ast][e_2^\ast]}][{\,\ChGANA[3][2]}]\PolyNotEqual 1$, i.e., $\ChGANA[3][2]$ contains an irreducible factor of $\ChG[e_1^\ast][e_2^\ast]$. Since that factor is also a factor of $\ChGANA[1][2]$, by \PropRef{Prop3}, there must exist at least one edge $e$ satisfying (i) $e^\ast_2\PREC e \PRECEQ e^\ast_1$; and (ii) $e\!\in\!\onecut[s_2][d_1]\cap\onecut[s_2][d_3]$. These jointly implies that we have an $\Sover[2]$ edge in the downstream of $e^\ast_2$. This, however, contradicts the assumption that $e^\ast_2$ is the most downstream edge of $\Sover[2]$. By a symmetric argument, we can also show that $\ChGANA[1][3]$ must not contain any irreducible factor of $\ChG[e_1^\ast][e_2^\ast]$. The proof of the claim $\GCD[{\ChG[e_1^\ast][e_2^\ast]}][{\,\ChGANA[3][2]\ChGANA[1][3]}]\,\PolyEqual 1$ is complete. Since the assumption $\LReqVAR$ implies that $\GCD[{\ChG[e_1^\ast][e_2^\ast]}][{\,\BOLDb}]=\ChG[e_1^\ast][e_2^\ast]$, we must have $\GCD[{\ChG[e_1^\ast][e_2^\ast]}][{\,\ChGANA[2][1]}]\!=\!\ChG[e_1^\ast][e_2^\ast]$. This implies by \PropRef{Prop3} that $\{e^\ast_2,e^\ast_1\}\!\subset\!\onecut[s_1][d_2]$.

For the following, we will prove that $e_2^\ast\!\in\!\onecut[s_1][d_3]$. To that end, we consider the factor $\ChG[e_{d_3}][e^\ast_2]$ of the channel gain $\ChGANA[3][2]$. This is possible by \PropRef{Prop3} because $e^\ast_2\!\in\!\Sover[2]\!\subset\!\onecut[s_2][d_3]$. Then similarly following the above discussion, we must have $\GCD[{\ChGANA[1][2]}][\,{\ChG[e_{d_3}][e^\ast_2]}]\,\PolyEqual 1$ otherwise there will be an $\Sover[2]$ edge in the downstream of $e^\ast_2$. Since the assumption $\LReqVAR$ means that $\GCD[{\BOLDa}][{\,\ChG[e_{d_3}][e_2^\ast]}]\!=\!\ChG[e_{d_3}][e_2^\ast]$, this further implies that $\GCD[{\ChGANA[3][1]\ChGANA[2][3]}][{\,\ChG[e_{d_3}][e_2^\ast]}]=\ChG[e_{d_3}][e_2^\ast]$.

Now consider the most upstream $\onecut[s_2][d_3]$ edge that is in the downstream of $e^\ast_2$, and denote it as $e_u$ (we have at least the $d_3$-destination edge $e_{d_3}$). Obviously, $e^\ast_2\PREC e_u \PRECEQ e_{d_3}$ and $\ChG[e_u][e^\ast_2]$ is an irreducible factor of $\ChG[e_{d_3}][e_2^\ast]$. Then we must have $\GCD[{\ChGANA[2][3]}][\,{\ChG[e_u][e^\ast_2]}]\PolyEqual 1$ and the reason is as follows. If not, then by $\ChG[e_u][e^\ast_2]$ being irreducible we have $e^\ast_2\!\in\!\onecut[s_3][d_2]$.
Then every path from $s_3$ to $\tail[e^\ast_1]$ must use $e^\ast_2$, otherwise $s_3$ can reach $e^\ast_1$ without using $e^\ast_2$ and finally arrive at $d_2$ since $e^\ast_1$ can reach $d_2$ (we showed in the above discussion that $e^\ast_1\!\in\!\onecut[s_1][d_2]$). This contradicts the previously constructed $e^\ast_2\!\in\!\onecut[s_3][d_2]$. Therefore, we must have $e^\ast_2\!\in\!\onecut[s_3][{\tail[e^\ast_1]}]$. Since $e^\ast_1\!\in\!\Dover[1]\!\subset\!\onecut[s_3][d_1]$, this in turn implies that $e^\ast_2$ is also an $1$-edge cut separating $s_3$ and $d_1$. However, note by the assumption that $e^\ast_2\!\in\!\Sover[2]\!\subset\!\onecut[s_2][d_1]$. Thus, $e^\ast_2$ will belong to $\Dover[1]$, which contradicts the assumption that $e^\ast_1$ is the most upstream $\Dover[1]$ edge. We thus have proven $\GCD[{\ChGANA[2][3]}][\,{\ChG[e_u][e^\ast_2]}]\PolyEqual 1$. Since we showed that $\GCD[{\ChGANA[3][1]\ChGANA[2][3]}][{\,\ChG[e_{d_3}][e_2^\ast]}]\!=\!\ChG[e_{d_3}][e_2^\ast]$, this further implies that the irreducible factor $\ChG[e_u][e^\ast_2]$ of $\ChG[e_{d_3}][e_2^\ast]$ must be contained by $\ChGANA[3][1]$ as a factor. Therefore, we have proven that $e^\ast_2\!\in\!\onecut[s_1][d_3]$. Symmetrically applying the above argument using the factor $\ChG[e^\ast_1][e_{s_3}]$ of the channel gain $\ChGANA[1][3]$, we can also prove that $e^\ast_1\!\in\!\onecut[s_3][d_2]$.

Thus far, we have proven that $e_2^\ast\!\in\!\onecut[s_1][d_2]$ and $e_2^\ast\!\in\!\onecut[s_1][d_3]$. However, $e^\ast_2\!=\!e_{s_1}$ is not possible since $e^\ast_2$, by our construction, is a downstream edge of $e_{s_2}$ but $e_{s_1}$ is not (since $|\IN[s_1]|\!=\!0$). As a result, we have proven $e_2^\ast\!\in\!\Sover[1]$. Recall that $e_2^\ast$ was chosen as one edge in $\Sover[2]$. Therefore, $\Sover[1]\cap\Sover[2]\NotEqualEmpty$. Similarly, we can also prove that $e_1^\ast\!\in\!\Dover[1]\cap\Dover[2]$ and thus $\Dover[1]\cap \Dover[2]\NotEqualEmpty$. The proof of {\bf Case~1} is complete.

{\bf Case~2:} $\Sover[i]\cap\Dover[j]\NotEqualEmpty$ for all $i\!\neq\!j$. By \LemRef{Lem4} and its \SWAPSD-symmetric version, we must have $\Sover[i]\cap\Sover[j]\NotEqualEmpty$ and $\Dover[i]\cap\Dover[j]\NotEqualEmpty$ $\forall\,i\neq j$. The proof of {\bf Case~2} is complete.
\end{proof}

\subsection{The Graph-Theoretic Conditions of the Feasibility of the 3-unicast ANA Scheme}\label{Sec4C}


\PropRef{Prop4} provides the graph-theoretic condition that characterizes whether or not the $\GANA$ of interest satisfies the algebraic condition of \Ref{LRneqVAR}, which implies that \Ref{C1}, \Ref{C3}, and \Ref{C5} hold simultaneously with close-to-one probability. However, to further ensure the feasibility of the 3-unicast ANA scheme, $\DET[{\FRM[\Sid][\Nid]}]$ must be non-zero polynomial (see \Ref{C2}, \Ref{C4}, and \Ref{C6}) for all $i\!\in\!\{1,2,3\}$. As a result, we need to prove the graph-theoretic characterization for the inequalities $\DET[{\FRM[\Sid][\Nid]}]\!\neq\!0$. Note by \PropRef{Prop1} that the condition $\DET[{\FRM[\Sid][\Nid]}]\!\neq\!0$ is equivalent to for all $i\!\in\!\{1,2,3\}$ the set of polynomials $\GSP[i][\left(\!\Nid\!\right)](\netvar)$ is linearly independent, where $\GSP[1][\left(\!\Nid\!\right)](\netvar)$ is defined in \Ref{GSP1^n} and $\GSP[2][\left(\!\Nid\!\right)](\netvar)$ and $\GSP[3][\left(\!\Nid\!\right)](\netvar)$ are defined as follows:
\begin{equation}\label{GSP2^n}\begin{split}
\GSP[2][\NidBRAC]&(\netvar) = \{\,\ChGANA[2][2]\ChGANA[3][1]\ChGANA[2][3]\BOLDb^{\Nid},\;\ChGANA[2][2]\ChGANA[3][1]\ChGANA[2][3]\BOLDb^{\Nid-1}\BOLDa, \\
& \qquad \;\cdots\;,\ChGANA[2][2]\ChGANA[3][1]\ChGANA[2][3]\BOLDb\BOLDa^{\Nid-1},\;\ChGANA[2][1]\ChGANA[3][2]\ChGANA[2][3]\BOLDb^{\Nid},\; \\
& \qquad\; \ChGANA[2][1]\ChGANA[3][2]\ChGANA[2][3]\BOLDb^{\Nid-1}\BOLDa,\;\cdots\;,\;\ChGANA[2][1]\ChGANA[3][2]\ChGANA[2][3]\BOLDa^{\Nid}\,\},
\end{split}\end{equation}\vspace{-0.015\columnwidth}
\begin{equation}\label{GSP3^n}\begin{split}
\GSP[3][\NidBRAC]&(\netvar) = \{\,\ChGANA[3][3]\ChGANA[2][1]\ChGANA[3][2]\BOLDb^{\Nid-1}\BOLDa,\;\cdots\;, \qquad\qquad\qquad\; \\
& \qquad \;\ChGANA[3][3]\ChGANA[2][1]\ChGANA[3][2]\BOLDb\BOLDa^{\Nid-1},\;\ChGANA[3][3]\ChGANA[2][1]\ChGANA[3][2]\BOLDa^{\Nid}, \\
& \qquad \;\ChGANA[3][1]\ChGANA[3][2]\ChGANA[2][3]\BOLDb^{\Nid},\;\ChGANA[3][1]\ChGANA[3][2]\ChGANA[2][3]\BOLDb^{\Nid-1}\BOLDa, \\
& \qquad \;\cdots\;,\;\ChGANA[3][1]\ChGANA[3][2]\ChGANA[2][3]\BOLDa^{\Nid}\,\}.
\end{split}\end{equation} \vspace{-0.01\columnwidth}

Thus in this subsection, we prove a graph-theoretic condition that characterizes the linear independence of $\GSP[i][\NidBRAC](\netvar)$ for all $i\!\in\!\{1,2,3\}$ when $\Nid\!=\!1$ and $\Nid\!\geq\!2$, respectively. Consider the following graph-theoretic conditions: \vspace{-0.015\columnwidth}

\noindent\begin{minipage}{\columnwidth+10pt}
\begin{align}
\hspace*{-10pt} & \Sover[\Sid] \cap \Sover[\Did]\,\EqualEmpty \textrm{ or } \Dover[\Sid] \cap \Dover[\Did]\,\EqualEmpty \;\; \forall\,\Sid,\Did\in\{1,2,3\},\,\Sid\neq\Did, \label{GTC1} \\
\hspace*{-10pt} & \EC[\{s_1,s_2\}][\{d_1,d_3\}]\!\geq\! 2,\, \EC[\{s_1,s_3\}][\{d_1,d_2\}]\!\geq\! 2, \label{GTC2} \\
\hspace*{-10pt} & \EC[s_1][d_1]\!\geq\! 1\;\text{on}\;\GANA\!\backslash\!\left\{\UPSTR[{(\Sover[2]\CAP\Dover[3])\CUP(\Sover[3]\CAP\Dover[2])}]\!\right\}\!,\!\!\!\! \label{GTC2A} \\
\hspace*{-10pt} & \EC[\{s_1,s_2\}][\{d_2,d_3\}]\!\geq\! 2,\, \EC[\{s_2,s_3\}][\{d_1,d_2\}]\!\geq\! 2, \label{GTC3} \\
\hspace*{-10pt} & \EC[s_2][d_2]\!\geq\! 1\;\text{on}\;\GANA\!\backslash\!\left\{\UPSTR[{(\Sover[1]\CAP\Dover[3])\CUP(\Sover[3]\CAP\Dover[1])}]\!\right\}\!,\!\!\!\! \label{GTC3A} \\
\hspace*{-10pt} & \EC[\{s_1,s_3\}][\{d_2,d_3\}]\!\geq\! 2,\, \EC[\{s_2,s_3\}][\{d_1,d_3\}]\!\geq\! 2, \label{GTC4} \\
\hspace*{-10pt} & \EC[s_3][d_3]\!\geq\! 1\;\text{on}\;\GANA\!\backslash\!\left\{\UPSTR[{(\Sover[1]\CAP\Dover[2])\CUP(\Sover[2]\CAP\Dover[1])}]\!\right\}\!.\!\!\!\! \label{GTC4A}
\end{align}
\end{minipage} \vspace{+0.015\columnwidth}

Note that (i) \Ref{GTC1} is equivalent to $\LRneqVAR$ by \PropRef{Prop4}; (ii) \Ref{GTC2}, \Ref{GTC3}, and \Ref{GTC4} are equivalent to (17) to (19) by \CorRef{Cor1}; and (iii) \Ref{GTC2A}, \Ref{GTC3A}, and \Ref{GTC4A} are the new conditions that help characterize (14) to (16).


To further simplify the analysis, we consider the following set of polynomials: \vspace{-0.01\columnwidth}
\begin{equation}\label{GSP2^n_NEW}\begin{split}
\GSPtilde[1][\NidBRAC]&(\netvar) = \{\,\ChGANA[1][1]\ChGANA[3][2]\ChGANA[1][3]\BOLDa^{\Nid},\;\ChGANA[1][1]\ChGANA[3][2]\ChGANA[1][3]\BOLDa^{\Nid-1}\BOLDb, \\
& \qquad \;\cdots\;,\ChGANA[1][1]\ChGANA[3][2]\ChGANA[1][3]\BOLDa\BOLDb^{\Nid-1},\;\ChGANA[1][2]\ChGANA[3][1]\ChGANA[1][3]\BOLDa^{\Nid},\; \\
& \qquad\; \ChGANA[1][2]\ChGANA[3][1]\ChGANA[1][3]\BOLDa^{\Nid-1}\BOLDb,\;\cdots\;,\;\ChGANA[1][2]\ChGANA[3][1]\ChGANA[1][3]\BOLDb^{\Nid}\,\},
\end{split}\vspace{-0.015\columnwidth}\end{equation}
where $\GSPtilde[1][\NidBRAC](\netvar)$ is obtained by swapping the roles of $s_1$ and $s_2$ (resp. $s_3$), and the roles of $d_1$ and $d_2$ (resp. $d_3$) to the expression of $\GSP[2][\NidBRAC](\netvar)$ in \Ref{GSP2^n} (resp. $\GSP[3][\NidBRAC](\netvar)$ in \Ref{GSP3^n}). Note that $\BOLDb=\ChGANA[2][1]\ChGANA[3][2]\ChGANA[1][3]$ becomes $\BOLDa=\ChGANA[3][1]\ChGANA[2][3]\ChGANA[1][2]$ and vice versa by such swap operation. Once we characterize the graph-theoretic conditions for the linear independence of $\GSPtilde[1][\NidBRAC](\netvar)$, then the characterization for $\GSP[2][\NidBRAC](\netvar)$ and $\GSP[3][\NidBRAC](\netvar)$ being linearly independent will be followed symmetrically.\footnote{In \SecRef{Sec2D}, $(s_1,d_1)$-pair was chosen to achieve larger rate than other pairs when aligning the interference. Thus the feasibility characterization for the other transmission pairs, $(s_2,d_2)$ and $(s_3,d_3)$ who achieve the same rate, becomes symmetric.}

\begin{prop}\label{Prop5} For a given $\GANA$, when $\Nid\!=\!1$, we have
\begin{itemize}
\item[(H1)] $\GSP[1][\left(\!1\!\right)](\netvar)$ is linearly independent if and only if $\GANA$ satisfies \Ref{GTC1} and \Ref{GTC2}.
\item[(K1)] $\GSPtilde[1][\left(\!1\!\right)](\netvar)$ is linearly independent if and only if $\GANA$ satisfies \Ref{GTC1}, \Ref{GTC2}, and \Ref{GTC2A}.
\end{itemize}
Moreover when $\Nid\!\geq\!2$, we have
\begin{itemize}
\item[(H2)] $\GSP[1][\left(\!\Nid\!\right)](\netvar)$ is linearly independent if and only if $\GANA$ satisfies \Ref{GTC1}, \Ref{GTC2}, and \Ref{GTC2A}.
\item[(K2)] $\GSPtilde[1][\left(\!\Nid\!\right)](\netvar)$ is linearly independent if and only if $\GANA$ satisfies \Ref{GTC1}, \Ref{GTC2}, and \Ref{GTC2A}.
\end{itemize}
\end{prop}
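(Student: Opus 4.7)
The plan is to prove the four parts (H1), (K1), (H2), (K2) as ``if and only if'' statements by linking the linear (in)dependence of the polynomial sets $\GSP[1][\NidBRAC](\netvar)$ and $\GSPtilde[1][\NidBRAC](\netvar)$ to the explicit irreducible factorizations of the channel gains $\ChGANA[j][i]$ guaranteed by \PropRef{Prop3}. The first step is to factor out the common factor $\ChGANA[2][3]$ (or the analogous $\ChGANA[1][3]$ for the $\GSPtilde$ case) from every element of the set, so that linear dependence of $\GSP[1][\NidBRAC](\netvar)$ reduces to the existence of coefficients $\{\alpha_i\}_{i=0}^{\Nid}$ and $\{\beta_j\}_{j=0}^{\Nid-1}$, not all zero, such that
\[
\ChGANA[1][1]\ChGANA[3][2]\sum_{i=0}^{\Nid}\alpha_i\BOLDa^i\BOLDb^{\Nid-i}\,\PolyEqual\,\ChGANA[1][2]\ChGANA[3][1]\sum_{j=0}^{\Nid-1}\beta_j\BOLDa^j\BOLDb^{\Nid-j},
\]
which is exactly the negation of \Ref{SC1}. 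By \PropRef{Prop4}, the assumption $\LRneqVAR$ is equivalent to \Ref{GTC1}, and by \CorRef{Cor1} the conditions \Ref{GTC2}, \Ref{GTC3}, \Ref{GTC4} are exactly the graph-theoretic counterparts of the pairwise non-degeneracies of the form $\ChGANA[j_1][i_1]\ChGANA[j_2][i_2]\,\PolyNotEqual\,\ChGANA[j_1][i_2]\ChGANA[j_2][i_1]$. So the real work is to understand the role of \Ref{GTC2A} and why it is not needed for (H1) but is needed for (K1), (H2), and (K2).

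For the necessity direction (graph condition $\Rightarrow$ linear independence, i.e., its contrapositive), I will construct explicit non-trivial linear combinations that vanish whenever one of the graph conditions fails. If \Ref{GTC1} fails then \PropRef{Prop4} gives $\LReqVAR$, so the two ``halves'' of $\GSP[1][\NidBRAC]$ collapse onto each other and the dependence is immediate. If \Ref{GTC2} fails then \CorRef{Cor1} supplies $\ChGANA[1][1]\ChGANA[3][2]\,\PolyEqual\,\ChGANA[1][2]\ChGANA[3][1]$ (up to a constant) for one of the pairs, which again produces dependence directly. The subtle case is when \Ref{GTC2A} fails: using \PropRef{Prop3}, a $1$-edge cut in the reduced graph $\GANA\!\backslash\!\UPSTR[\cdot]$ yields a non-trivial irreducible factor common to $\ChGANA[1][1]\ChGANA[3][2]$ and $\ChGANA[1][2]\ChGANA[3][1]$ of a very specific type (it must lie downstream of the alignment edges in $\Sover[\cdot]\cap\Dover[\cdot]$), and I will exhibit a specific polynomial identity of the form above that it forces. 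For (H1), the polynomial set has only three elements so no such higher-order coincidence can arise, which explains why \Ref{GTC2A} is absent.

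For sufficiency (all three graph conditions $\Rightarrow$ linear independence), assume that \Ref{GTC1}, \Ref{GTC2}, and (when applicable) \Ref{GTC2A} hold, and suppose for contradiction that there is a non-trivial polynomial identity as displayed above. Using \PropRef{Prop3}, expand $\ChGANA[1][1]$, $\ChGANA[3][2]$, $\ChGANA[1][2]$, $\ChGANA[3][1]$ as products of irreducibles indexed by consecutive $1$-edge cuts separating the corresponding source/destination pairs. \LemRef{Lem7} together with \Ref{GTC2} tells us which irreducible factors can coincide between $\ChGANA[1][1]\ChGANA[3][2]$ and $\ChGANA[1][2]\ChGANA[3][1]$; these coincidences must occur precisely at edges in $(\Sover[2]\cap\Dover[3])\cup(\Sover[3]\cap\Dover[2])$ and their upstream edges. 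The condition \Ref{GTC2A} guarantees that after factoring out this common part, the residual factors attached to the $(s_1,d_1)$ pair are still non-trivial, which prevents any cancellation of the form required by the identity. The $n$-dependence enters because for $n\geq 2$ one can combine two or more $\BOLDa^i\BOLDb^{n-i}$ monomials to cancel higher-order coincidences, whereas for $n=1$ in the (H1) setting the only available cancellation is the first-order one already excluded by \Ref{GTC2}. The parts (K1), (K2), (H2) all require the extra \Ref{GTC2A} because the extra $\BOLDa^{\Nid}$ term on the right-hand side (present in $\GSPtilde$ and in $\GSP[1][\NidBRAC]$ for $\Nid\ge 2$) opens up the higher-order cancellation channel.

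The main obstacle is the sufficiency proof for \Ref{GTC2A}: identifying exactly which irreducibles can be shared between $\ChGANA[1][1]\ChGANA[3][2]$ and $\ChGANA[1][2]\ChGANA[3][1]$ and arguing that the residual gain, corresponding to paths in $\GANA$ minus the ``alignment'' upstream structure, is non-trivial. This requires the full toolkit of \LemToRef{Lem1}{Lem7} to locate the common $1$-edge cuts inside $\Sover[2]\cap\Dover[3]$ and $\Sover[3]\cap\Dover[2]$, together with \PropRef{Prop2} applied to a carefully chosen subgraph (induced by the $(s_1,d_1)$ residual paths after deleting $\UPSTR[\cdot]$) to convert the remaining algebraic non-equivalence into a purely graph-theoretic edge-connectivity statement. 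Because each of (H1), (K1), (H2), (K2) has a slightly different ``shape'' (different number of monomials in each block and different common factor), I expect the proof to be organized as one master lemma handling the generic combinatorial/algebraic reduction, followed by four short case verifications that plug the appropriate polynomial set into it.
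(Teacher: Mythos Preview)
Your overall shape (necessity by exhibiting dependences, sufficiency by contradiction using the irreducible factorization from \PropRef{Prop3}) matches the paper, but the mechanism you propose for the role of \Ref{GTC2A} is wrong in both directions, and this is where the real content lies.

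For necessity when \Ref{GTC2A} fails, you claim the failure produces ``a non-trivial irreducible factor common to $\ChGANA[1][1]\ChGANA[3][2]$ and $\ChGANA[1][2]\ChGANA[3][1]$.'' That is not what happens. Failure of \Ref{GTC2A} means every \FromTo[1][1] path meets $\UPSTR[(\Sover[2]\cap\Dover[3])\cup(\Sover[3]\cap\Dover[2])]$; the paper shows (through a chain of graph-theoretic case analyses the paper labels \GS[3]--\GS[6]) that under \LNR\ and \GS[1] this forces an \emph{additive} decomposition $\ChGANA[1][1]=m'_{11}+m''_{11}$, one summand for the paths through $\Sover[2]\cap\Dover[3]$ and one for $\Sover[3]\cap\Dover[2]$, yielding the identity $\ChGANA[1][1]\ChGANA[3][2]\ChGANA[2][3]=\BOLDa+\BOLDb$. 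This single additive identity is what kills independence of $\GSPtilde[1][(1)]$, $\GSP[1][(n)]$ for $n\ge 2$, and $\GSPtilde[1][(n)]$, and it also explains cleanly why \HS[1] survives: the three polynomials in $\GSP[1][(1)]$ are simply not linked by $\BOLDa+\BOLDb$. A common-factor argument cannot produce this dependence.

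For sufficiency, ``identify the shared irreducibles, factor them out, show the residual is non-trivial'' is not enough, and the expectation of one master lemma plus four short verifications is far off. The paper's actual argument starts from a hypothetical dependence $\ChGANA[1][1]\ChGANA[3][2]\psi_\alpha^{(n)}(\BOLDb,\BOLDa)=\ChGANA[3][1]\ChGANA[1][2]\psi_\beta^{(n)}(\BOLDb,\BOLDa)$ and, by comparing the extremal indices $\ist,\iend,\jst,\jend$, extracts GCD constraints of the type $\GCD[{\ChGANA[1][1]\BOLDa^{l}}][{\ChGANA[2][1]\ChGANA[1][3]}]=\ChGANA[2][1]\ChGANA[1][3]$ and five siblings. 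Each such GCD statement is then contradicted, under \LNR, \GS[1], and \GS[2], by constructing an explicit subgraph $\G[']$ built from $8$--$11$ carefully chosen path segments on which either $s_1$ cannot reach $d_1$ (forcing $\psi_\beta=0$ hence $\LReq$ on $\G[']$) or an edge $\tilde{e}$ isolates $\ChGANA[1][1]$ in a variable no other gain sees. \PropRef{Prop2} transports the contradiction back to $\GANA$. The case split is large (the paper organizes it as fourteen logic relationships \SS[1]--\SS[14] feeding dozens of sub-relationships \RS[1]--\RS[47]); none of it reduces to ``the residual factor is non-trivial.'' Your plan is missing both the extremal-index extraction step that converts the polynomial identity into GCD constraints, and the subgraph-construction technique that discharges those constraints.
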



{\em Remark:} \PropRef{Prop5} proves that the conjecture in \cite{RamakrishnanDaszMalekiMarkopoulouJafarVishwanath:Allerton10} holds only for the linearly independent $\GSP[1][\left(\!1\!\right)](\netvar)$. In general, it is no longer true for the case of $\Nid\!\geq\!2$ and even for $\Nid\!=\!1$. This coincides with the recent results \cite{RmakrishnanMeng:UCI-TecRep}, which show that for the case of $\Nid\!\geq\!2$, the conjecture in \cite{RamakrishnanDaszMalekiMarkopoulouJafarVishwanath:Allerton10} no longer holds.

\begin{proof} Similar to most graph-theoretic proofs, the proofs of (H1), (K1), (H2), and (K2) involve detailed discussion of several subcases. To structure our proof, we first define the following logic statements. Each statement could be true or false. We will later use these statements to complete the proof.

\noindent $\bullet$ \HS[1]{\bf:}\label{ref:HS[1]} $\GSP[1][\left(\!\Nid\!\right)](\netvar)$ is linearly independent for $\Nid\!=\!1$.

\noindent $\bullet$ \HStilde[1]{\bf:}\label{ref:HStilde[1]} $\GSPtilde[1][\left(\!\Nid\!\right)](\netvar)$ is linearly independent for $\Nid\!=\!1$.

\noindent $\bullet$ \HS[2]{\bf:}\label{ref:HS[2]} $\GSP[1][\left(\!\Nid\!\right)](\netvar)$ is linearly independent for some $\Nid\!\geq\!2$.

\noindent $\bullet$ \HStilde[2]{\bf:}\label{ref:HStilde[2]} $\GSPtilde[1][\left(\!\Nid\!\right)](\netvar)$ is linearly independent for some $\Nid\!\geq\!2$.

\noindent $\bullet$ \LNR{\bf:}\label{ref:LNR} $\LRneqVAR$.

\noindent $\bullet$ \GS[1]{\bf:}\label{ref:GS[1]} $\ChGANA[1][1]\ChGANA[3][2]\,\PolyNotEqual\,\ChGANA[1][2]\ChGANA[3][1]$ and $\ChGANA[1][1]\ChGANA[2][3]\,\PolyNotEqual\,\ChGANA[1][3]\ChGANA[2][1]$.

\noindent \makebox[0.9cm][l]{$\bullet$ \GS[2]{\bf:}}\label{ref:GS[2]} $\EC[s_1][d_1]\!\!\geq\!\!1\!$ on $\!\GANA\!\backslash\!\!\left\{\UPSTR[{(\Sover[2]\CAP\Dover[3]\!)\CUP(\Sover[3]\CAP\Dover[2]\!)}]\!\right\}$.

One can clearly see that proving Statement~(H1) is equivalent to proving ``\LNR\AND\GS[1]\EQUIV\HS[1]" where ``$\wedge$" is the AND operator. Similarly, proving Statements~(K1),~(H2), and~(K2) is equivalent to proving ``\LNR\AND\GS[1]\AND\GS[2]\EQUIV\HStilde[1]", ``\LNR\AND\GS[1]\\\noindent\AND\GS[2]\EQUIV\HS[2]", and ``\LNR\AND\GS[1]\AND\GS[2]\EQUIV\HStilde[2]", respectively.

The reason why we use the notation of ``logic statements" (e.g., \HS[1], \LNR, etc.) is that it enables us to break down the overall proof into proving several smaller ``logic relationships" (e.g, ``\LNR\AND\GS[1]\EQUIV\HS[1]", etc.) and later assemble all the logic relationships to derive the final results. The interested readers can thus separate the verification of the proof of each individual logic relationship from the examination of the overall structure of the proof of the main results. The proof of each logic relationship is kept no longer than one page and is independent from the proof of any other logic relationship. This allows the readers to set their own pace when going through the proofs.

To give an insight how the proof works, here we provide the proof of ``\LNR\AND\GS[1]\OPPLY\HS[1]" at the bottom. All the other proofs are relegated to the appendices. Specifically, we provide the general structured proofs for the necessity direction ``$\Leftarrow$" in \AppRef{GeneralNecessityProof}. Applying this result, the proofs of ``\LNR\AND\GS[1]\\\noindent\AND\GS[2]\OPPLY\HS[2], \HStilde[1], \HStilde[2]" are provided in \AppRef{Prop5<=Proof}. Similarly, the general structured proofs for the sufficiency direction ``$\Rightarrow$" is provided in \AppRef{GeneralSufficiencyProof}. The proofs of ``\LNR\AND\GS[1]\IMPLY\HS[1]" and ``\LNR\AND\GS[1]\AND\GS[2]\IMPLY\HStilde[1], \HS[2], \HStilde[2]" are provided in \AppRef{Prop5=>Proof}.

{\em The proof of ``\LNR\AND\GS[1]\OPPLY\HS[1]":\quad}\label{Proof=>Prop5-1} We prove the following statement instead: (\NotLNR)\OR(\NotGS[1])\IMPLY(\NotHS[1]) where $\neg$ is the NOT logic operator and ``$\vee$" is the OR operator. From the expression of $\GSP[1][\left(\!\Nid\!\right)](\netvar)$ in \Ref{GSP1^n}, consider $\GSP[1][\left(\!1\!\right)](\netvar)$ which contains 3 polynomials:
\begin{equation}\label{GSP1^1}
\GSP[1][\left(\!1\!\right)](\netvar)=\{\,\ChGANA[1][1]\ChGANA[3][2]\ChGANA[2][3]\BOLDb, \;\ChGANA[1][1]\ChGANA[3][2]\ChGANA[2][3]\BOLDa, \;\ChGANA[1][2]\ChGANA[3][1]\ChGANA[2][3]\BOLDb\,\}.
\end{equation}

Suppose $\GANA$ satisfies (\NotLNR)\OR(\NotGS[1]), which means $\GANA$ satisfies either $\LReqVAR$ or $\ChGANA[1][1]\ChGANA[3][2]\PolyEqual\ChGANA[1][2]\ChGANA[3][1]$ or $\ChGANA[1][1]\ChGANA[2][3]\PolyEqual\ChGANA[1][3]\ChGANA[2][1]$. If $\LReqVAR$, then we notice that $\ChGANA[1][1]\ChGANA[3][2]\ChGANA[2][3]\BOLDb\,\PolyEqual\,\ChGANA[1][1]\ChGANA[3][2]\ChGANA[2][3]\BOLDa$ and $\GSP[1][\left(\!1\!\right)](\netvar)$, defined in \Ref{GSP1^1}, is thus linearly dependent. If $\ChGANA[1][1]\ChGANA[3][2]\PolyEqual\ChGANA[1][2]\ChGANA[3][1]$, then we notice that $\ChGANA[1][1]\ChGANA[3][2]\ChGANA[2][3]\BOLDb \,\PolyEqual\, \ChGANA[1][2]\ChGANA[3][1]\ChGANA[2][3]\BOLDb$. Similarly if $\ChGANA[1][1]\ChGANA[2][3]\PolyEqual\ChGANA[1][3]\ChGANA[2][1]$, then we have $\ChGANA[1][1]\ChGANA[3][2]\ChGANA[2][3]\BOLDa \,\PolyEqual\, \ChGANA[1][2]\ChGANA[3][1]\ChGANA[2][3]\BOLDb$. The proof is thus complete.
\end{proof}

\section{Conclusion and Future Works}\label{Sec5}
The main subject of this work is the general class of precoding-based NC schemes, which focus on designing the precoding and decoding mappings at the sources and destinations while using randomly generated local encoding kernels within the network. One example of the precoding-based structure is the 3-unicast ANA scheme, originally proposed in \cite{DasVishwanathJafarMarkopoulou:ISIT10,RamakrishnanDaszMalekiMarkopoulouJafarVishwanath:Allerton10}. In this work, we have identified new graph-theoretic relationships for the precoding-based NC solutions. Based on the findings on the general precoding-based NC, we have further characterized the graph-theoretic feasibility conditions of the 3-unicast ANA scheme. We believe that the analysis in this work will serve as a precursor to fully understand the notoriously challenging multiple-unicast NC problem and design practical, distributed NC solutions based on the precoding-based framework.

\appendices
\renewcommand{\thesubsectiondis}{\thesection-\arabic{subsection}.}
\renewcommand{\thesubsection}{\thesection-\arabic{subsection}}
\section{Proofs of \PropsRef{Prop1}{Prop2}}\label{Prop1-2Proof}
We prove \PropRef{Prop1} as follows.
\begin{proof}[Proof of $\Rightarrow$] We prove this direction by contradiction. Suppose that $\GSP(\netvar)$ is linearly dependent. Then, there exists a set of coefficients $\{\alpha_k\}_{k=1}^{\SMN}$ such that $\sum_{k=1}^\SMN \alpha_k h_k(\netvar)\!=\!0$ and at least one of them is non-zero. Since $[\GSP(\netvar[(\!k\!)])]_{k=1}^{\SMN}$ is row-invariant, we can perform elementary column operations on $[\GSP(\netvar[(\!k\!)])]_{k=1}^{\SMN}$ using $\{\alpha_k\}_{k=1}^{\SMN}$ to create an all-zero column. Thus, $\DET[{[\GSP(\netvar[(\!k\!)])]_{k=1}^{\SMN}}]$ is a zero polynomial.
\renewcommand{\IEEEQED}{} \end{proof}
\begin{proof}[Proof of $\Leftarrow$] This direction is also proven by contradiction. Suppose that $\DET[{[\GSP(\netvar[(\!k\!)])]_{k=1}^{\SMN}}]$ is a zero polynomial. We will prove that $\GSP(\netvar)$ is linearly dependent by induction on the value of $\SMN$. For $N\!=\!1$, $\DET[{[\GSP(\netvar[(\!k\!)])]_{k=1}^{\SMN}}]\!=\!0$ implies that $h_1(\netvar)$ is a zero polynomial, which by definition is linearly dependent.

Suppose that the statement holds for any $\SMN\!\!<\!\!n_0$. When $\SMN\!\!=\!n_0$, consider the (1,1)-th cofactor of $[\GSP(\netvar[(\!k\!)])]_{k=1}^{\SMN}$, which is the determinant of the submatrix of the intersection of the 2nd to $\SMN$-th rows and the 2nd to $\SMN$-th columns. Consider the following two cases. Case~1: the $(1,1)$-th cofactor is a zero polynomial. Then by the induction assumption $\{h_2(\netvar),...,h_{\SMN}(\netvar)\}$ is linearly dependent. By definition, so is $\GSP(\netvar)$. Case~2: the $(1,1)$-th cofactor is a non-zero polynomial. Since we assume a sufficiently large $q$, there exists an assignment $\netreal[2]\!\in\!\FF[\!q][{|\netvar|}]$ to $\netreal[\SMN]\!\in\!\FF[\!q][{|\netvar|}]$ such that the value of the (1,1)-th cofactor is non-zero when evaluated by $\netreal[2]$ to $\netreal[\SMN]$. But note that by the Laplace expansion, we also have $\sum_{k=1}^{\SMN} h_{k}(\netvar[(\!1\!)])\, C_{1k} = 0$ where $C_{1k}$ is the $(1,k)$-th cofactor. By evaluating $C_{1k}$ with $\{\netreal[i]\}_{i=2}^{\SMN}$, we can conclude that $\GSP(\netvar)$ is linearly dependent since at least one of $C_{1k}$ (specifically $C_{11}$) is non-zero.
\end{proof}

We prove \PropRef{Prop2} as follows.
\begin{proof}[Proof of $\Leftarrow$] This can be proved by simply choosing $\G[']\!=\!\G$.
\renewcommand{\IEEEQED}{} \end{proof}
\begin{proof}[Proof of $\Rightarrow$]
Since $f(\{\ChG[e'_i][e_i](\netvar):\forall\,i\!\in\!I\})\PolyEqual g(\{\ChG[e'_i][e_i]$ $(\netvar):\forall\,i\!\in\!I\})$, we can assume $f(\{\ChG[e'_i][e_i](\netvar):\forall\,i\!\in\!I\})\!=\!\alpha g(\{\ChG[e'_i][e_i](\netvar):\forall\,i\!\in\!I\})$ for some non-zero $\alpha\!\in\!\FF[\!q]$. Consider any subgraph $\G[']$ containing all edges in $\{e_i,e'_i:\forall\,i\!\in\!I\}$ and the channel gain $\ChG[e'_i][e_i](\netvarSUB)$ on $\G[']$. Then, $\ChG[e'_i][e_i](\netvarSUB)$ can be derived from $\ChG[e'_i][e_i](\netvar)$ by substituting those $\netvar$ variables that are not in $\G[']$ by zero. As a result, we immediately have $f(\{\ChG[e'_i][e_i](\netvarSUB):\forall\,i\!\in\!I\})\!=\!\alpha g(\{\ChG[e'_i][e_i](\netvarSUB):\forall\,i\!\in\!I\})$ for the same $\alpha$. The proof of this direction is thus complete.
\end{proof}

\section{Proofs of \CorsRef{Cor1}{Cor2}}\label{Cor1-2Proof}
We prove \CorRef{Cor1} as follows.
\begin{proof}[Proof of $\Rightarrow$] We assume $(i_1,i_2)\!=\!(1,2)$ and $(j_1,j_2)\!=\!(1,3)$ without loss of generality. Since $\EC[\{s_1,s_2\}][\{d_1,d_3\}]$ $\!=\!1$, there exists an edge $e^\ast$ that separates $\{d_1,d_3\}$ from $\{s_1,s_2\}$. Therefore, we must have $\ChGANA[1][1]\!=\!\ChG[e^\ast][e_{s_1}]\ChG[e_{d_1}][e^\ast]$, $\ChGANA[3][1]\!=\!\ChG[e^\ast][e_{s_1}]\ChG[e_{d_3}][e^\ast]$, $\ChGANA[1][2]\!=\!\ChG[e^\ast][e_{s_2}]$ $\ChG[e_{d_1}][e^\ast]$, and $\ChGANA[3][2]\!=\!\ChG[e^\ast][e_{s_2}]\ChG[e_{d_3}][e^\ast]$. As a result, $\ChGANA[1][1]\ChGANA[3][2]\,\PolyEqual\,\ChGANA[1][2]\ChGANA[3][1]$.
\renewcommand{\IEEEQED}{} \end{proof}
\begin{proof}[Proof of $\Leftarrow$] We prove this direction by contradiction. Suppose $\EC[\{s_{i_1},s_{i_2}\}][\{d_{j_1},d_{j_2}\}]\!\geq\! 2$. In a $\GANA$ network, each source (resp. destination) has only one outgoing (resp. incoming) edge. Therefore, $\EC[\{s_{i_1},s_{i_2}\}][\{d_{j_1},d_{j_2}\}]\!\geq\! 2$ implies that at least one of the following two cases must be true: Case~1: There exists a pair of edge-disjoint paths $P_{s_{i_1}d_{j_1}}$ and $P_{s_{i_2}d_{j_2}}$; Case~2: There exists a pair of edge-disjoint paths $P_{s_{i_1}d_{j_2}}$ and $P_{s_{i_2}d_{j_1}}$. For\;\,Case~1, we consider the network variables that are along the two edge-disjoint paths, i.e., consider the collection $\netvarSUB$ of network variables $x_{ee'}\!\in\!\netvar$ such that either both $e$ and $e'$ are used by $P_{s_{i_1}d_{j_1}}$ or both $e$ and $e'$ are used by $P_{s_{i_2}d_{j_2}}$. We keep those variables in $ \netvarSUB$ intact and set the other network variables to be zero. As a result, we will have $\ChGANA[j_1][i_1](\netvarSUB)\ChGANA[j_2][i_2](\netvarSUB)=\prod_{\forall x_{ee'}\in\netvarSUB} x_{ee'}$ and $\ChGANA[j_1][i_2](\netvarSUB)\ChGANA[j_2][i_1](\netvarSUB)\!=\!0$ where the latter is due the edge-disjointness between two paths $P_{s_{i_1}d_{j_1}}$ and $P_{s_{i_2}d_{j_2}}$. This implies that before hardwiring the variables outside $\netvarSUB$, we must have $\ChGANA[j_1][i_1](\netvar)\ChGANA[j_2][i_2](\netvar)$ $\,\PolyNotEqual\,\ChGANA[j_1][i_2](\netvar)\ChGANA[j_2][i_1](\netvar)$. The proof of Case~1 is complete. Case~2 can be proven by swapping the labels of $j_1$ and $j_2$.
\end{proof}

We prove \CorRef{Cor2} as follows.
\begin{proof} When $(i_1,j_1)\!=\!(i_2,j_2)$, obviously $\ChGANA[j_1][i_1]\!=\!\ChGANA[j_2][i_2]$ and $\GCD[{\ChGANA[j_1][i_1]}][{\,\ChGANA[j_2][i_2]}]\PolyEqual\ChGANA[j_2][i_2]$. Suppose that for some $(i_1,j_1)\!\neq\!(i_2,j_2)$, $\GCD[{\ChGANA[j_1][i_1]}][{\,\ChGANA[j_2][i_2]}]\,\PolyEqual\,\ChGANA[j_2][i_2]$. Without loss of generality, we assume $i_1\!\neq\!i_2$. Since the channel gains are defined for two distinct sources, we must have $\ChGANA[j_1][i_1]\PolyNotEqual\ChGANA[j_2][i_2]$. As a result, $\GCD[{\ChGANA[j_1][i_1]}][{\,\ChGANA[j_2][i_2]}]$
$\PolyEqual\ChGANA[j_2][i_2]$ implies that $\ChGANA[j_1][i_1]$ must be reducible. By \PropRef{Prop3}, $\ChGANA[j_1][i_1]$ must be expressed as $\ChGANA[j_1][i_1]\!=\ChG[e_1][e_{s_{i\!_1}}]\!\!\left(\prod_{i=1}^{N-1}\!\ChG[e_{i+1}][e_i]\!\right)\!\ChG[e_{d_{j\!_1}}][e_N]\!$ where each term corresponds to a pair of consecutive $1$-edge cuts separating $s_{i_1}$ and $d_{j_1}$. For $\ChGANA[j_1][i_1]$ to contain $\ChGANA[j_2][i_2]$ as a factor, the source edge $e_{s_{i_2}}$ must be one of the $1$-edge cuts separating $s_{i_1}$ and $d_{j_1}$. This contradicts the assumption that in a 3-unicast ANA network $|\IN[s_i]|\!=\!0$ for all $i$. The proof is thus complete.
\end{proof}

\section{Proof of \PropRef{Prop3}}\label{Prop3Proof}

\PropRef{Prop3} will be proven through the concept of the line graph, which is defined as follows: The line graph of a DAG $\G\!=\!(\GV,\GE)$ is represented as $\LG\!=\!(\LGV,\LGE)$, with the vertex set $\LGV \!=\! \GE$ and edge set $\LGE\!=\!\{(e',e'')\!\in\!\GE^2:\,\head[e']\!=\!\tail[e'']\}$ (the set representing the adjacency relationships between the edges of $\GE$). Provided that $\G$ is directed acyclic, its line graph $\LG$ is also directed acyclic. The graph-theoretic notations for $\G$ defined in \SecRef{Sec2A} are applied in the same way as in $\LG$.

Note that the line graph translates the edges into vertices. Thus, a {\em vertex cut} in the line graph is the counterpart of the edge cut in a normal graph. Specifically, a {\em $k$-vertex cut} separating vertex sets $U$ and $W$ is a collection of $k$ vertices other than the vertices in $U$ and $W$ such that any path from any $u\!\in\!U$ to any $w\!\in\!W$ must use at least one of those $k$ vertices. Moreover, the minimum value (number of vertices) of all the possible vertex cuts between vertex sets $U$ and $W$ is termed $\VC[U][W]$. For any nodes $u$ and $v$ in $\GV$, one can easily see that $\EC[u][v]$ in $\G[]$ is equal to $\VC[\tilde{u}][\tilde{v}]$ in $\LG$ where $\tilde{u}$ and $\tilde{v}$ are the vertices in $\LG$ corresponding to any incoming edge of $u$ and any outgoing edge of $v$, respectively. 

Once we focus on the line graph $\LG$, the network variables $\netvar$, originally defined over the $(e', e'')$ pairs of the normal graph, are now defined on the edges of the line graph. We can thus define the channel gain from a vertex $u$ to a vertex $v$ on $\LG$ as 
\begin{equation}\label{ChGL}
\ChGL[v][u]=\sum_{\forall\,P_{uv}\in\PATHset{u}{v}{}}\prod_{\forall\,e\in P_{uv}} x_e,
\end{equation}
where $\PATHset{u}{v}{}$ denotes the collection of all distinct paths from $u$ to $v$. For notational simplicity, we sometimes simply use ``an edge $e$" to refer to the corresponding network variable $x_e$. Each $x_e$ (or $e$) thus takes values in $\FF[\!q]$. When $u\!=\!v$, simply set $\ChGL[v][u]\!=\!1$. 

The line-graph-based version of \PropRef{Prop3} is described as follows:
\begin{cor}\label{Cor3} Given the line graph $\LG$ of a DAG $\G$, $\mathring{m}$ defined above, and two distinct vertices $s$ and $d$, the following is true:
\begin{itemize}
\item \makebox[2.9cm][l]{If $\VC[s][d]\!=\!0$, then} $\ChGL[d][s]\!=\!0$

\item \makebox[2.9cm][l]{If $\VC[s][d]\!=\!1$, then} $\ChGL[d][s]$ is reducible and can be expressed as $\ChGL[d][s]\!=\!\ChGL[u_1][s]$$\left(\prod_{i=1}^{N-1}\ChGL[u_{i+1}][u_i]\right)$$\ChGL[d][u\!_N]$ where $\{u_i\}_{i=1}^{N}$ are all the distinct $1$-vertex cuts between $s$ and $d$ in the topological order (from the most upstream to the most downstream). Moreover, the polynomial factors $\ChGL[u_1][s]$, $\{\ChGL[u_{i+1}][u_i]\}_{i=1}^{N-1}$, and $\ChGL[d][u\!_N]$ are all irreducible, and no two of them are equivalent.

\item \makebox[2.1cm][l]{If $\VC[s][d]\!\geq\! 2$} (including $\infty$), then $\ChGL[d][s]$ is irreducible.
\end{itemize}\end{cor}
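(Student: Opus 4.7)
The proof strategy is case analysis on $\VC[s][d]$, tackling the main case $\VC[s][d]\geq 2$ first and then deriving the $\VC[s][d]=1$ factorization as a consequence; the case $\VC[s][d]=0$ is immediate because the sum in (\ref{ChGL}) is empty.

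For $\VC[s][d]\geq 2$, the plan is to prove irreducibility by contradiction. Suppose $\ChGL[d][s]=f\cdot g$ with both $f$ and $g$ non-constant. Each monomial of $\ChGL[d][s]$ is indexed by an $s$-$d$ path and equals the square-free product of the edge variables along that path, and distinct paths yield distinct edge sets hence distinct monomials, so $\ChGL[d][s]$ is multilinear. This forces the variable sets $F$ and $G$ of $f$ and $g$ to be disjoint, and for every $s$-$d$ path $P$ the coefficient identity
\[
1 \;=\; \alpha_f\!\left(\prod_{e\in P\cap F}\! x_e\right)\cdot \beta_g\!\left(\prod_{e\in P\cap G}\! x_e\right)
\]
forces both factors to be non-zero. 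Invoking Menger's theorem, I would pick two internally vertex-disjoint $s$-$d$ paths $P_1^*$ and $P_2^*$. The cross monomial $\prod_{e\in(P_1^*\cap F)\cup(P_2^*\cap G)} x_e$ has non-zero coefficient in $f\cdot g$ and hence must correspond to an $s$-$d$ path in $\LG$, but internal vertex-disjointness forbids combining edges from two vertex-disjoint paths into a single $s$-$d$ path unless one of the two intersections is empty and the other is the whole path. Either way both paths end up in the same color class; WLOG $P_1^*,P_2^*\subset F$, which in particular forces the constant term of $g$ to be non-zero.

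With this in hand I exploit the following bootstrap. For an arbitrary $s$-$d$ path $P$, the monomial $\prod_{e\in P\cap F} x_e$ now has non-zero coefficient in $f\cdot g=\ChGL[d][s]$ (non-zero $f$-coefficient from the coefficient identity, times the non-zero constant term of $g$), so $P\cap F$ must itself be the edge set of an $s$-$d$ path in the DAG $\LG$. Since a proper subset of a path's edges cannot form an $s$-$d$ path, $P\cap F=P$ and therefore $P\subset F$ for every $P$. This says no edge lying on any $s$-$d$ path has its variable in $G$; but $g$ non-constant means some $G$-variable appears in $\ChGL[d][s]$ and hence sits on some $s$-$d$ path --- contradiction. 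This Menger-plus-constant-term bootstrap, lifting color purity from a single vertex-disjoint pair to all paths, is the step I expect to require the most care; the rest of the argument is bookkeeping.

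For $\VC[s][d]=1$, enumerate the $1$-vertex cuts $u_1\PREC u_2\PREC\cdots\PREC u_N$ in topological order. Every $s$-$d$ path decomposes uniquely into an $s$-$u_1$ segment, successive $u_i$-$u_{i+1}$ segments, and a $u_N$-$d$ segment, and the edge variables of these segments live in pairwise disjoint portions of $\LG$. Grouping the sum defining $\ChGL[d][s]$ by segment choices yields
\[
\ChGL[d][s] \;=\; \ChGL[u_1][s]\,\prod_{i=1}^{N-1}\ChGL[u_{i+1}][u_i]\,\ChGL[d][u_N].
\]
Between two consecutive $1$-vertex cuts (and between $s$ and $u_1$, and between $u_N$ and $d$) there is by construction no intermediate $1$-vertex cut, so each factor satisfies $\VC\geq 2$ in its sub-line-graph and is irreducible by the previous case. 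Pairwise non-equivalence is then immediate: the factors involve pairwise disjoint non-empty variable sets, so none can be a non-zero scalar multiple of another.
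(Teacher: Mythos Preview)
Your argument is correct and takes a genuinely different route from the paper's. The paper proceeds by induction on the number of edges of $\LG$: for $\VC[s][d]\geq 2$ it picks an edge $e'$ into $d$ with tail $w\neq s$, writes $\ChGL[d][s]=\ChGL[w][s]\,x_{e'}+\ChGLsub[d][s]$ (where $\ChGLsub$ is the gain in $\LG\setminus\{e'\}$), and shows that any putative factorization $\ChGL[d][s]=AB$ would force $\ChGL[w][s]$ and $\ChGLsub[d][s]$ to share a nontrivial common factor; the induction hypothesis on the strictly smaller graph $\LG\setminus\{e'\}$ then converts that common factor into a $1$-vertex cut of $s$--$d$, contradicting $\VC\geq 2$. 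Your proof, by contrast, is direct: multilinearity forces the variable supports of $f$ and $g$ to be disjoint, the cross-monomial built from two Menger paths must itself be a path monomial, and internal vertex-disjointness pins that monomial to one of the two paths, from which the bootstrap finishes. The paper's approach is more modular (each step reduces to a smaller instance of the same statement) and never needs Menger explicitly; your approach avoids induction entirely and isolates the combinatorial content in a single clean step. One tiny edge case you should state explicitly: when $\VC[s][d]=\infty$ because of a direct edge $s\!\to\!d$ and that edge is the \emph{only} $s$--$d$ path, Menger does not hand you two paths --- but then $\ChGL[d][s]$ is a single variable and irreducibility is immediate. The paper handles the analogous degeneracy as its Case~3.1.
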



\begin{proof} We use the induction on the number of edges $\ABS[\LGE]$ of $\LG\!=\!(\LGV,\LGE)$. When $\ABS[\LGE]\!=\!0$, then $\VC[s][d]\!=\!0$ since there are no edges in $\LG$. Thus $\ChGL[d][s]\!=\!0$ naturally.

Suppose that the above three claims are true for $\ABS[\LGE]\!=\!k-1$. We would like to prove that those claims also hold for the line graph $\LG$ with $\ABS[\LGE]\!=\!k$.

\makebox[3.9cm][l]{{\bf Case 1:} $\VC[s][d]\!=\!0$ on $\LG$.} In this case, $s$ and $d$ are already disconnected. Therefore, $\ChGL[d][s]\!=\!0$.

\makebox[3.9cm][l]{{\bf Case 2:} $\VC[s][d]\!=\!1$ on $\LG$.} Consider all distinct $1$-vertex cuts $u_1,\cdots\!,u\!_N$ between $s$ and $d$ in the topological order. If we define $u_0\!\triangleq\!s$ and $u\!_{N+1}\!\triangleq\!d$, then we can express $\ChGL[d][s]$ as $\ChGL[d][s]\!=\!\prod_{i=0}^{N}\ChGL[u_{i+1}][u_i]$. Since we considered all distinct $1$-vertex cuts between $s$ and $d$, we must have $\VC[u_i][u_{i+1}]\!\geq\! 2$ for $i\!=\!0,\cdots\!,N$. By induction, $\{\ChGL[u_{i+1}][u_i]\}_{i=0}^{N}$ are all irreducible. Also, since each sub-channel gain $\ChGL[u_{i+1}][u_i]$ covers a disjoint portion of $\LG$, no two of them are equivalent.

\makebox[3.9cm][l]{{\bf Case 3:} $\VC[s][d]\!\geq\!2$ on $\LG$.} Without loss of generality, we can also assume that $s$ can reach any vertex $u\!\in\!\LGV$ and $d$ can be reached from any vertex $u\!\in\!\LGV$. Consider two subcases: Case 3.1: all edges in $\LGE$ have their tails being $s$ and their heads being $d$. In this case, $\ChGL[d][s]=\sum_{e\in \LGE} x_e$. Obviously $\ChGL[d][s]$ is irreducible. Case 3.2: at least one edge in $\LGE$ is not directly connecting $s$ and $d$. In this case, there must exist an edge $e'$ such that $s \PREC \tail[e']$ and $\head[e']\!=\!d$. Arbitrarily pick one such edge $e'$ and fix it. We denote the tail vertex of the chosen $e'$ by $w$.  By the definition of \Ref{ChGL}, we have
\begin{align}
\ChGL[d][s] = \ChGL[w][s] x_{e'} + \ChGLsub[d][s], \label{Case3-1}
\end{align}
where $\ChGL[w][s]$ is the channel gain from $s$ to $w$, and $\ChGLsub[d][s]$ is the channel gain from $s$ to $d$ on the subgraph $\LG[']\!=\!\LG\backslash\{e'\}$ that removes $e'$ from $\LG$. Note that there always exists a path from $s$ to $d$ not using $w$ on $\LG[']$ otherwise $w$ will be a cut separating $s$ and $d$ on $\LG$, contradicting the assumption that $\VC[s][d]\!\geq\!2$. 

We now argue by contradiction that $\ChGL[d][s]$ must be irreducible. Suppose not, then $\ChGL[d][s]$ can be written as a product of two polynomials $A$ and $B$ with the degrees of $A$ and $B$ being larger than or equal to $1$. We can always write $A = x_{e'}A_1 +A_2$ by singling out the portion of $A$ that has $x_{e'}$ as a factor. Similarly we can write $B = x_{e'}B_1 + B_2$. We then have
\begin{align}
\ChGL[d][s]= (x_e' A_1 + A_2)( x_e' B_1 +B_2). \label{Case3-2}
\end{align}

We first notice that by \Ref{Case3-1} there is no quadratic term of $x_{e'}$ in $\ChGL[d][s]$. Therefore, one of $A_1$ and $B_1$ must be a zero polynomial. Assume $B_1 = 0$. Comparing \Ref{Case3-1} and \Ref{Case3-2} shows that $\ChGL[w][s] = A_1 B_2$ and $\ChGLsub[d][s] = A_2 B_2$. Since the degree of $B$ is larger than or equal to $1$ and $B_1 = 0$, the degree of $B_2$ must be larger than equal to $1$. As a result, we have $\GCD[{\ChGL[w][s]}][{\,\ChGLsub[d][s]}]\,\PolyNotEqual 1$ (having at least a non-zero polynomial $B_2$ as its common factor).

The facts that $\GCD[{\!\ChGL[w][s]}][{\,\ChGLsub[d][s]}]\,\PolyNotEqual 1$ and $w \PREC d$ imply that one of the following three cases must be true: (i) Both $\ChGL[w][s]$ and $\ChGLsub[d][s]$ are reducible; (ii) $\ChGL[w][s]$ is reducible but $\ChGLsub[d][s]$ is not; and (iii) $\ChGLsub[d][s]$ is reducible but $\ChGL[w][s]$ is not. For Case (i), by applying \PropRef{Prop3} to the subgraph $\LG[']\!=\!\LG\backslash\{e'\}$, we know that $\VC[s][w]\!=\!\VC[s][d]\!=\!1$ and both polynomials $\ChGL[w][s]$ and $\ChGLsub[d][s]$ can be factorized according to their $1$-vertex cuts, respectively. Since $\ChGL[w][s]$ and $\ChGLsub[d][s]$ have a common factor, there exists a vertex $u$ that is both a $1$-vertex cut separating $s$ and $w$ and a $1$-vertex cut separating $s$ and $d$ when focusing on $\LG[']$. As a result, such $u$ is a $1$-vertex cut separating $s$ and $d$ in the original graph $\LG$. This contradicts the assumption $\VC[s][d]\!\geq\!2$ in $\LG$. For Case (ii), by applying \PropRef{Prop3} to $\LG[']$, we know that $\VC[s][w]\!=\!1$ and $\ChGL[w][s]$ can be factorized according to their $1$-vertex cuts. Since $\ChGL[w][s]$ and the irreducible $\ChGLsub[d][s]$ have a common factor, $\ChGL[w][s]$ must contain $\ChGLsub[d][s]$ as a factor, which implies that $d$ is a $1$-vertex cut separating $s$ and $w$ in $\LG[']$. This contradicts the construction of $\LG[']$ where $w \PREC d$. For Case (iii), by applying \PropRef{Prop3} to $\LG[']$, we know that $\VC[s][d]\!=\!1$ and $\ChGLsub[d][s]$ can be factorized according to their $1$-vertex cuts. Since $\ChGLsub[d][s]$ and the irreducible $\ChGL[w][s]$ have a common factor, $\ChGLsub[d][s]$ must contain $\ChGL[w][s]$ as a factor, which implies that $w$ is a $1$-vertex cut separating $s$ and $d$ in $\LG[']$. As a result, $w$ is a $1$-vertex cut separating $s$ and $d$ in the original graph $\LG$. This contradicts the assumption $\VC[s][d]\!\geq\!2$ in $\LG$.
\end{proof}

\section{Proofs of Lemmas~1 to~7}\label{Lem1-7Proof}
We prove \LemRef{Lem1} as follows.
\begin{proof}
Consider indices $i\!\neq\!j$. By the definition, all paths from $s_i$ to $d_j$ must use all edges in $\Sover[i]$ and all edges in $\Dover[j]$. Thus, for any $e'\!\in\!\Sover[i]$ and any $e''\!\in\!\Dover[j]$, one of the following statements must be true: $e'\PREC e''$, $e'\SUCC e''$, or $e'\!=\!e''$.
\end{proof}

We prove \LemRef{Lem2} as follows.
\begin{proof} Consider three indices $i$, $j$, and $k$ taking distinct values in $\{1,2,3\}$. Consider an arbitrary edge $e\!\in\! \Dover[i]\cap\Dover[j]$. By definition, all paths from $s_k$ to $d_i$, and all paths from $s_k$ to $d_j$ must use $e$. Therefore, $e\!\in\!\Sover[k]$.
\end{proof}

We prove \LemRef{Lem3} as follows.
\begin{proof} Without loss of generality, let $i\!=\!1$ and $j\!=\!2$. Choose the most downstream edge in $\Sover[1]\backslash\Dover[2]$ and denote it as $e'_\ast$. Since $e'_\ast$ belongs to $\onecut[s_1][d_2]\CAP\onecut[s_1][d_3]$ but not to $\onecut[s_3][d_2]$, there must exist a \FromTo[3][2] path $P_{32}$ not using $e'_\ast$. In addition, for any $e''\!\in\!\Dover[2]$, we have either $e''\PREC e'_\ast$, $e''\SUCC e'_\ast$, or $e''\!=\!e'_\ast$ by \LemRef{Lem1}. Suppose there exists an edge $e''\!\in\!\Dover[2]$ such that $e'' \PREC  e'_\ast$. Then by definition, any \FromTo[3][2] path must use $e''$. Also note that since $e''\!\in\!\Dover[2]$, there exists a path $P_{s_1 \tail[e'']}$ from $s_1$ to $\tail[e'']$. Consider the concatenated \FromTo[1][2] path $P_{s_1 \tail[e'']}e''P_{32}$. We first note that since $e''\PREC e'_\ast$, the path segment $P_{s_1 \tail[e'']}e''$ does not use $e'_\ast$. By our construction, $P_{32}$ also does not use $e'_\ast$. Jointly, the above observations contradict the fact that $e'_\ast\!\in\!\Sover[1]$ is a $1$-edge cut separating $s_1$ and $d_2$. By contradiction, we must have $e'_\ast\PRECEQ e''$. Note that since by our construction $e'_\ast$ must not be in $\Dover[2]$ while $e''$ is in $\Dover[2]$, we must have $e'_\ast\!\neq\!e''$ and thus $e'_\ast\PREC e''$. Since $e'_\ast$ was chosen as the most downstream edge of $\Sover[1]\backslash \Dover[2]$, we have $e'\PREC e''$ for all $e'\!\in\!\Sover[1]\backslash \Dover[2]$ and $e''\!\in\!\Dover[2]$. The proof is thus complete.
\end{proof}

We prove \LemRef{Lem4} as follows.
\begin{proof}[Proof of $\Rightarrow$] We note that $(\Sover[i]\CAP\Dover[j])\!\supset\!(\Sover[i]\CAP\Dover[j]\CAP\Dover[k])\!=\!(\Dover[j]\CAP\Dover[k])$ where the equality follows from \LemRef{Lem2}. As a result, when $\Dover[j]\CAP\Dover[k]\NotEqualEmpty$, we also have $\Sover[i]\CAP\Dover[j]\NotEqualEmpty$.  \renewcommand{\IEEEQED}{} \end{proof}
\begin{proof}[Proof of $\Leftarrow$] Consider three indices $i$, $j$, and $k$ taking distinct values in $\{1,2,3\}$. Suppose that $\Sover[i]\cap\Dover[j]\NotEqualEmpty$ and $\Sover[i]\cap\Dover[k]\NotEqualEmpty$. Then, for any $e'\!\in\!\Sover[i]\cap\Dover[j]$ and any $e''\!\in\!\Sover[i]\cap\Dover[k]$, we must have either $e'\PREC e''$, $e'\SUCC e''$, or $e'\!=\!e''$ by \LemRef{Lem1}. Suppose that $\Dover[j]\CAP\Dover[k]\EqualEmpty$. Then we must have $e'\!\neq\!e''$, which leaves only two possibilities: either $e'\PREC e''$ or $e'\SUCC e''$. However, $e'\PREC e''$ contradicts \LemRef{Lem3} because $e'\!\in\!(\Sover[i]\CAP\Dover[j])\!\subset\!\Dover[j]$ and $e''\!\in\!(\Sover[i]\CAP\Dover[k])\!\subset\!(\Sover[i]\backslash\Dover[j])$, the latter of which is due to the assumption of $\Dover[j]\CAP\Dover[k]\EqualEmpty$. By swapping the roles of $j$ and $k$, one can also show that it is impossible to have $e'\SUCC e''$. By contradiction, we must have $\Dover[j]\cap\Dover[k]\NotEqualEmpty$. The proof is thus complete.
\end{proof}

We prove \LemRef{Lem5} as follows.
\begin{proof} Without loss of generality, consider $i\!=\!1$ and $j\!=\!2$. Note that by \LemRef{Lem1} any $e'\!\in\!\Sover[1]\CAP\Sover[2]$ and any $e''\!\in\!\Dover[1] \CAP\Dover[2]$ must satisfy either $e'\PREC e''$, $e'\SUCC e''$, or $e'\!=\!e''$. For the following, we prove this lemma by contradiction.

Suppose that there exists an edge $e''_\ast\!\in\!\Dover[1]\CAP\Dover[2]$ such that for all $e'\!\in\!\Sover[1]\CAP\Sover[2]$ we have $e''_\ast \PREC  e'$. For the following, we first prove that any path from $s_{i}$ to $d_{j}$ where $i,j\!\in\!\{1,2,3\}$ and $i\neq j$ must pass through $e''_\ast$. To that end, we first notice that by the definition of $\Dover[1]$ and $\Dover[2]$ and by the assumption $e''_\ast\!\in\!\Dover[1]\cap\Dover[2]$, any path from $\{s_2,s_3\}$ to $d_1$, and any path from $\{s_1,s_3\}$ to $d_2$ must use $e''_\ast$. Thus, we only need to prove that any path from $\{s_1,s_2\}$ to $d_3$ must use $e''_\ast$ as well.

Suppose there exists a \FromTo[1][3] path $P_{13}$ that does not use $e''_\ast$. By the definition of $\Sover[1]$, $P_{13}$ must use all edges of $\Sover[1]\cap\Sover[2]$, all of which are in the downstream of $e''_\ast$ by the assumption. Also $d_2$ is reachable from any $e'\!\in\!\Sover[1]\cap\Sover[2]$. Choose arbitrarily one edge $e'_\ast\!\in\!\Sover[1]\cap\Sover[2]$ and a path $P_{\head[e'_\ast]d_2}$ from $\head[e'_\ast]$ to $d_2$. Then, we can create an path $P_{13}\,e'_\ast\,P_{\head[e'_\ast]d_2}$ from $s_1$ to $d_2$ without using $e''_\ast$. The reason is that $P_{13}$ does not use $e''_\ast$ by our construction and $e'_\ast P_{\head[e'_\ast]d_2}$ does not use $e''_\ast$ since $e''_\ast\PREC e'_\ast$. Such an \FromTo[1][2] path not using $e''_\ast$ thus contradicts the assumption of $e''_\ast\!\in\!(\Dover[1]\CAP\Dover[2])\!\subset\!\onecut[s_1][d_2]$. Symmetrically, any \FromTo[2][3] path must use $e''_\ast$.

In summary, we have shown that $e''_\ast\!\in\!\cap_{i=1}^{3}\left(\Sover[i]\CAP\Dover[i]\right)$. However, this contradicts the assumption that $e''_\ast$ is in the upstream of all $e'\!\in\!\Sover[1]\cap\Sover[2]$, because we can simply choose $e'\!=\!e''_\ast\!\in\!\cap_{i=1}^{3}\left(\Sover[i]\CAP\Dover[i]\right)\!\subset\!(\Sover[1]\CAP\Sover[2])$ and $e''_\ast$ cannot be an upstream edge of itself $e'\!=\!e''_\ast$. The proof is thus complete.
\end{proof}

We prove \LemRef{Lem6} as follows.
\begin{proof} Without loss of generality, let $i\!=\!1$, $j_1\!=\!1$, $j_2\!=\!2$, and $j_3\!=\!3$. Suppose that $\Sover[1][;\{1,2\}]\NotEqualEmpty$ and $\Sover[1][;\{1,3\}]\NotEqualEmpty$. For the following, we prove this lemma by contradiction. 

Suppose that $\Sover[1][;\{1,2\}]\CAP\Sover[1][;\{1,3\}]\EqualEmpty$. For any $e'\!\in\!\Sover[1][;\{1,2\}]$ and any $e''\!\in\!\Sover[1][;\{1,3\}]$, since both $e'$ and $e''$ are $1$-edge cuts separating $s_1$ and $d_1$, it must be either $e'\PREC e''$ or $e'\SUCC e''$, or $e'\!=\!e''$. The last case is not possible since we assume $\Sover[1][;\{1,2\}]\cap\Sover[1][;\{1,3\}]\EqualEmpty$. Consider the most downstream edges $e'_\ast\!\in\!\Sover[1][;\{1,2\}]$ and $e''_\ast\!\in\!\Sover[1][;\{1,3\}]$, respectively. We first consider the case $e'_\ast\PREC e''_\ast$. If all paths from $s_1$ to $d_3$ use $e'_\ast$, which, by definition, use $e''_\ast$, then $e'_\ast$ will belong to $\onecut[s_1][d_3]$, which contradicts the assumption that $\Sover[1][;\{1,2\}]\cap\Sover[1][;\{1,3\}]\EqualEmpty$. Thus, there exists a \FromTo[1][3] path $P_{13}$ using $e''_\ast$ but not $e'_\ast$. Then, $s_1$ can follow $P_{13}$ and reach $d_1$ via $e''_\ast$ without using $e'_\ast$. Such a \FromTo[1][1] path contradicts the definition $e'_\ast\!\in\!\Sover[1][;\{1,2\}]\!\subset\!\onecut[s_1][d_1]$. Therefore, it is impossible to have $e'_\ast\PREC e''_\ast$. By symmetric arguments, it is also impossible to have $e'_\ast \SUCC e''_\ast$. By definition, any edge in $\Sover[1][;\{1,2\}]\cap\Sover[1][;\{1,3\}]$ is a $1$-edge cut separating $s_1$ and $\{d_2,d_3\}$, which implies that $\Sover[1][;\{2,3\}]\NotEqualEmpty$ and $\Sover[1]\NotEqualEmpty$.
\end{proof}

We prove \LemRef{Lem7} as follows.
\begin{proof}[Proof of $\Rightarrow$] Suppose $\Sover[i][;\{j_1,j_2\}]\NotEqualEmpty$. By definition, there exists an edge $e\!\in\!\onecut[s_i][d_{j_1}]\cap\onecut[s_i][d_{j_2}]$ in the downstream of the $s_i$-source edge $e_{s_i}$. Then, the channel gains $\ChGANA[j_1][i]$ and $\ChGANA[j_2][i]$ have a common factor $\ChG[e][e_{s_i}]$ and we thus have $\GCD[{\ChGANA[j_1][i]}][{\,\ChGANA[j_2][i]}]\,\PolyNotEqual 1$. \renewcommand{\IEEEQED}{} \end{proof}


\begin{proof}[Proof of $\Leftarrow$] We prove this direction by contradiction. Suppose $\GCD[{\ChGANA[j_1][i]}][{\ChGANA[j_2][i]}]\,\PolyNotEqual 1$. By \CorRef{Cor2}, we know that $\GCD[{\ChGANA[j_1][i]}][{\,\ChGANA[j_2][i]}]$ must not be $\ChGANA[j_1][i]$ nor $\ChGANA[j_2][i]$. Thus, both must be reducible and by \PropRef{Prop3} can be expressed as the product of irreducibles, for which each factor corresponds to the consecutive $1$-edge cuts in $\onecut[s_i][d_{j_1}]$ and $\onecut[s_i][d_{j_2}]$, respectively. Since they have at least one common irreducible factor, there exists an edge $e\!\in\!\onecut[s_i][d_{j_1}]\cap\onecut[s_i][d_{j_2}]$ in the downstream of the $s_i$-source edge $e_{s_i}$. Thus, $e\!\in\!\Sover[i][;\{j_1,j_2\}]$. The case for $\GCD[{\ChGANA[i][j_1]}][{\,\ChGANA[i][j_2]}]\,\PolyEqual 1$ can be proven symmetrically. The proof is thus complete.
\end{proof}

\newcommand{\PageGeneralNecessityProof}{\thepage}
\section{General Structured Proof for the Necessity}\label{GeneralNecessityProof}

In this appendix, we provide \CorRef{Cor4}, which will be used to prove the graph-theoretic necessary direction of 3-unicast ANA network for arbitrary $\Nid$ values. Since we already provided the proof for ``\LNR\AND\GS[1]\OPPLY\HS[1]" in \PropRef{Prop5}, here we focus on proving ``\LNR\AND\GS[1]\AND\GS[2]\OPPLY\HS[2], \HStilde[1], \HStilde[2]". After introducing \CorRef{Cor4}, the main proof of  "\LNR\AND\GS[1]\AND\GS[2]\OPPLY\HS[2], \HStilde[1], \HStilde[2]" will be provided in \AppRef{Prop5<=Proof}.

Before proceeding, we need the following additional logic statements to describe the general proof structure.

\subsection{The first set of logic statements}

Consider the following logic statements.

\noindent $\bullet$ \GS[0]{\bf:}\label{ref:GS[0]} $\ChGANA[1][1]\ChGANA[3][2]\ChGANA[2][3]=\BOLDb+\BOLDa$.

\noindent $\bullet$ \GS[3]{\bf:}\label{ref:GS[3]} $\Sover[2]\cap\Dover[3]\EqualEmpty$.

\noindent $\bullet$ \GS[4]{\bf:}\label{ref:GS[4]} $\Sover[3]\cap\Dover[2]\EqualEmpty$.

Several implications can be made when \GS[3] is true. We term those implications {\em the properties of \GS[3]}. Several properties of \GS[3] are listed as follows, for which their proofs are provided in \AppRef{ProofsG3G4}.

{\em Consider the case in which \GS[3] is true.} Use $e^\ast_2$ to denote the most downstream edge in $\onecut[s_2][d_1]\cap\onecut[s_2][d_3]$. Since the source edge $e_{s_2}$ belongs to both $\onecut[s_2][d_1]$ and $\onecut[s_2][d_3]$, such $e^\ast_2$ always exists. Similarly, use $e^\ast_3$ to denote the most upstream edge in $\onecut[s_1][d_3]\cap\onecut[s_2][d_3]$. The properties of \GS[3] can now be described as follows.

\noindent \makebox[3.3cm][l]{$\diamond$ {\bf Property~1 of \GS[3]:}}$e^\ast_2 \prec e^\ast_3$ and the channel gains $\ChGANA[3][1]$, $\ChGANA[1][2]$, and $\ChGANA[3][2]$ can be expressed as $\ChGANA[3][1] = \ChG[e^\ast_3][e_{s_1}]\ChG[e_{d_3}][e^\ast_3]$, $\ChGANA[1][2] = \ChG[e^\ast_2][e_{s_2}]\ChG[e_{d_1}][e^\ast_2]$, and $\ChGANA[3][2] = \ChG[e^\ast_2][e_{s_2}]\ChG[e^\ast_3][e^\ast_2]\ChG[e_{d_3}][e^\ast_3]$.

\noindent \makebox[3.3cm][l]{$\diamond$ {\bf Property~2 of \GS[3]:}}$\GCD[{\ChG[e^\ast_3][e_{s_1}]}][\;{\ChG[e^\ast_2][e_{s_2}]\ChG[e^\ast_3][e^\ast_2]}]\PolyEqual 1$,
$\GCD[{\ChG[e^\ast_3][e^\ast_2]\ChG[e_{d_3}][e^\ast_3]}][\;{\ChG[e_{d_1}][e^\ast_2]}]\PolyEqual 1$, $\GCD[{\ChGANA[3][1]}][\;{\ChG[e^\ast_3][e^\ast_2]}]\PolyEqual 1$, and $\GCD[{\ChGANA[1][2]}][\;{\ChG[e^\ast_3][e^\ast_2]}]\PolyEqual 1$.

On the other hand, when \GS[3] is false, or equivalently when \NotGS[3] is true where ``$\neg$" is the NOT operator, we can also derive several implications, termed {\em the properties of \NotGS[3]}.

{\em Consider the case in which \GS[3] is false}.  Use $e^{23}_u$ (resp.\ $e^{23}_v$) to denote the most upstream (resp.\ the most downstream) edge in $\Sover[2]\cap\Dover[3]$. By definition, it must be $e^{23}_u \PRECEQ e^{23}_v$. We now describe the following properties of \NotGS[3].

\noindent \makebox[3.45cm][l]{$\diamond$ {\bf Property~1 of \NotGS[3]:}}The channel gains $\ChGANA[3][1]$, $\ChGANA[1][2]$, and $\ChGANA[3][2]$ can be expressed as $\ChGANA[3][1] = \ChG[e^{23}_u][e_{s_1}]\ChG[e^{23}_v][e^{23}_u]\ChG[e_{d_3}][e^{23}_v]$, $\ChGANA[1][2]=$ $\ChG[e^{23}_u][e_{s_2}]\ChG[e^{23}_v][e^{23}_u]\ChG[e_{d_1}][e^{23}_v]\!,\!\!\text{ and }\!\ChGANA[3][2]\!\!=\! \ChG[e^{23}_u][e_{s_2}]\ChG[e^{23}_v][e^{23}_u]\ChG[e_{d_3}][e^{23}_v]\!$.

\noindent \makebox[3.6cm][l]{$\diamond$~{\bf Property~2 of \NotGS[3]:}}$\GCD[{\ChG[e^{23}_u][e_{s_1}]}][\;{\ChG[e^{23}_u][e_{s_2}]}]\PolyEqual 1$ and $\GCD[{\ChG[e_{d_1}][e^{23}_v]}][\;{\ChG[e_{d_3}][e^{23}_v]}]\,\PolyEqual 1$.

\noindent \makebox[3.6cm][l]{$\diamond$~{\bf Property~3 of \NotGS[3]:}}$\{e^{23}_u,e^{23}_v\}\!\subset\!\onecut[s_1][{\head[e^{23}_v]}]$ and $\{e^{23}_u,e^{23}_v\}\!\subset\!\onecut[{\tail[e^{23}_u]}][d_1]$. This further implies that for any \FromTo[1][1] path $P$, if there exists a vertex $w\in P$ satisfying $\tail[e^{23}_u]\PRECEQ w\PRECEQ \head[e^{23}_v]$, then we must have $\{e^{23}_u, e^{23}_v\}\subset P$.

Symmetrically, we define the following properties of \GS[4] and \NotGS[4].

{\em Consider the case in which \GS[4] is true.} Use $e^\ast_3$ to denote the most downstream edge in $\onecut[s_3][d_1]\cap\onecut[s_3][d_2]$, and use $e^\ast_2$ to denote the most upstream edge in $\onecut[s_1][d_2]\cap\onecut[s_3][d_2]$. We now describe the following properties of \GS[4].

\noindent \makebox[3.3cm][l]{$\diamond$~{\bf Property~1 of \GS[4]:}}$e^\ast_3 \prec e^\ast_2$ and
the channel gains $\ChGANA[2][1]$, $\ChGANA[1][3]$, and $\ChGANA[2][3]$ can be expressed as $\ChGANA[2][1] = \ChG[e^\ast_2][e_{s_1}]\ChG[e_{d_2}][e^\ast_2]$, $\ChGANA[1][3] = \ChG[e^\ast_3][e_{s_3}]\ChG[e_{d_1}][e^\ast_3]$, and $\ChGANA[2][3] = \ChG[e^\ast_3][e_{s_3}]\ChG[e^\ast_2][e^\ast_3]\ChG[e_{d_2}][e^\ast_2]$.

\noindent \makebox[3.3cm][l]{$\diamond$~{\bf Property~2 of \GS[4]:}}$\GCD[{\ChG[e^\ast_2][e_{s_1}]}][\;{\ChG[e^\ast_3][e_{s_3}]\ChG[e^\ast_2][e^\ast_3]}]\PolyEqual 1$,
$\GCD[{\ChG[e^\ast_2][e^\ast_3]\ChG[e_{d_2}][e^\ast_2]}][\;{\ChG[e_{d_1}][e^\ast_3]}]\PolyEqual 1$, $\GCD[{\ChGANA[2][1]}][\;{\ChG[e^\ast_2][e^\ast_3]}]\PolyEqual 1$, and $\GCD[{\ChGANA[1][3]}][\;{\ChG[e^\ast_2][e^\ast_3]}]\PolyEqual 1$.

{\em Consider the case in which \GS[4] is false.} Use $e^{32}_u$ (resp.\ $e^{32}_v$) to denote the most upstream (resp.\ the most downstream) edge in $\Sover[3]\cap\Dover[2]$. By definition, it must be $e^{32}_u \PRECEQ e^{32}_v$. We now describe the following properties of \NotGS[4].

\noindent \makebox[3.45cm][l]{$\diamond$~{\bf Property~1 of \NotGS[4]:}}The channel gains $\ChGANA[2][1]$, $\ChGANA[1][3]$, and $\ChGANA[2][3]$ can be expressed as $\ChGANA[2][1] = \ChG[e^{32}_u][e_{s_1}]\ChG[e^{32}_v][e^{32}_u]\ChG[e_{d_2}][e^{32}_v]$, $\ChGANA[1][3]=$ $\ChG[e^{32}_u][e_{s_3}]\ChG[e^{32}_v][e^{32}_u]\ChG[e_{d_1}][e^{32}_v]\!,\!\!\text{ and }\!\ChGANA[2][3]\!\!=\! \ChG[e^{32}_u][e_{s_3}]\ChG[e^{32}_v][e^{32}_u]\ChG[e_{d_2}][e^{23}_v]\!$.

\noindent \makebox[3.6cm][l]{$\diamond$~{\bf Property~2 of \NotGS[4]:}}$\GCD[{\ChG[e^{32}_u][e_{s_1}]}][\;{\ChG[e^{32}_u][e_{s_3}]}]\PolyEqual 1$ and $\GCD[{\ChG[e_{d_1}][e^{32}_v]}][\;{\ChG[e_{d_2}][e^{32}_v]}]\,\PolyEqual 1$

\noindent \makebox[3.6cm][l]{$\diamond$~{\bf Property~3 of \NotGS[4]:}}$\{e^{32}_u,e^{32}_v\}\!\subset\!\onecut[s_1][{\head[e^{32}_v]}]$ and $\{e^{32}_u,e^{32}_v\}\!\subset\!\onecut[{\tail[e^{32}_u]}][d_1]$. This further implies that for any \FromTo[1][1] path $P$, if there exists a vertex $w\in P$ satisfying $\tail[e^{32}_u]\PRECEQ w\PRECEQ \head[e^{32}_v]$, then we must have $\{e^{32}_u, e^{32}_v\}\subset P$.

The following logic statements are well-defined if and only if (\NotGS[3])\AND(\NotGS[4]) is true. Recall the definition of $e^{23}_u$, $e^{23}_v$, $e^{32}_u$, and $e^{32}_v$ when (\NotGS[3])\AND(\NotGS[4]) is true.

\noindent $\bullet$ \GS[5]{\bf:}\label{ref:GS[5]} Either $e^{23}_u \PREC\,e^{32}_u$ or $e^{23}_u \SUCC e^{32}_u$. 

\noindent $\bullet$ \GS[6]{\bf:}\label{ref:GS[6]} Any vertex $w'$ where $\tail[e^{23}_u]\PRECEQ w' \PRECEQ\head[e^{23}_v]$ and any vertex $w''$ where $\tail[e^{32}_u]\PRECEQ w'' \PRECEQ\head[e^{32}_v]$ are not reachable from each other. (That is, neither $w'\PRECEQ w''$ nor $w''\PRECEQ w'$.)


It is worth noting that a {\em statement} being well-defined does not mean that it is true. Any well-defined logic statement can be either true or false. For comparison, a {\em property} of \GS[3] is both well-defined and true whenever \GS[3] is true.

\subsection{General Necessity Proof Structure}

The following ``logic relationships" are proved in \AppRef{ProofsNS[1]-NS[9]}, which will be useful for the proof of the following \CorRef{Cor4}.

\noindent $\bullet$ \NS[1]{\bf:}\label{ref:NS[1]} \HS[2]\IMPLY \LNR\AND\GS[1].

\noindent $\bullet$ \NS[2]{\bf:}\label{ref:NS[2]} \HStilde[1]\IMPLY \LNR\AND\GS[1].

\noindent $\bullet$ \NS[3]{\bf:}\label{ref:NS[3]} \HStilde[2]\IMPLY \LNR\AND\GS[1].

\noindent $\bullet$ \NS[4]{\bf:}\label{ref:NS[4]} (\NotGS[2])\AND\GS[3]\AND\GS[4]\IMPLY \CONT.

\noindent $\bullet$ \NS[5]{\bf:}\label{ref:NS[5]} \GS[1]\AND(\NotGS[2])\AND(\NotGS[3])\AND\GS[4]\IMPLY \CONT.

\noindent $\bullet$ \NS[6]{\bf:}\label{ref:NS[6]} \GS[1]\AND(\NotGS[2])\AND\GS[3]\AND(\NotGS[4])\IMPLY \CONT.

\noindent $\bullet$ \NS[7]{\bf:}\label{ref:NS[7]} \LNR\AND(\NotGS[3])\AND(\NotGS[4])\AND(\NotGS[5])\IMPLY \GS[6].

\noindent $\bullet$ \NS[8]{\bf:}\label{ref:NS[8]} \GS[1]\AND(\NotGS[2])\AND(\NotGS[3])\AND(\NotGS[4])\AND\GS[5]\IMPLY \CONT.

\noindent $\bullet$ \NS[9]{\bf:}\label{ref:NS[9]} (\NotGS[2])\AND(\NotGS[3])\AND(\NotGS[4])\AND(\NotGS[5])\AND\GS[6]\IMPLY \GS[0].


\begin{cor}\label{Cor4} \quad Let $\GSP(\netvar)$ be a set of (arbitrarily chosen) polynomials based on the $9$ channel gains $\ChGANA[j][i]$ of the 3-unicast ANA network, and define \XS[] to be the logic statement that $\GSP(\netvar)$ is linearly independent. If we can prove that \XS[]\IMPLY\LNR\AND\GS[1]\\\noindent and \XS[]\AND\GS[0]\IMPLY\CONT, then the logic relationship \XS[]\IMPLY\LNR\AND\\\noindent\GS[1]\AND\GS[2] must hold.
\end{cor}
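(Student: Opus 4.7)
The plan is to argue by contradiction, leveraging the chain of auxiliary relationships \NS[1]--\NS[9]. Since the first hypothesis already supplies \XS\IMPLY\LNR\AND\GS[1], it suffices to derive \XS\IMPLY\GS[2]; combining the two then gives the conclusion. I would therefore assume both \XS and \NotGS[2] and aim to drive the combined package to a contradiction.

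Granted \XS, I have \LNR and \GS[1] at my disposal. The first step is a case split on the truth values of \GS[3] and \GS[4]. Three of the four branches terminate immediately: \GS[3]\AND\GS[4] fails by \NS[4] (which consumes only \NotGS[2]); (\NotGS[3])\AND\GS[4] fails by \NS[5]; and \GS[3]\AND(\NotGS[4]) fails by \NS[6]. Note that \NS[5] and \NS[6] each consume \GS[1] as part of their hypothesis, which is exactly why it is important that the first implication \XS\IMPLY\LNR\AND\GS[1] is already in force at this stage.

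The only surviving branch is (\NotGS[3])\AND(\NotGS[4]), which I would split further on \GS[5]. If \GS[5] holds, \NS[8] closes the branch. If \NotGS[5] holds, then \NS[7] (which consumes \LNR together with the three negations (\NotGS[3])\AND(\NotGS[4])\AND(\NotGS[5])) yields \GS[6]; feeding this into \NS[9] produces \GS[0]. At this point the second hypothesis \XS\AND\GS[0]\IMPLY\CONT finishes the argument, since \XS is still on hand. Every branch therefore contradicts \NotGS[2], so \XS\IMPLY\GS[2] and the claim follows.

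The main obstacle is not the corollary itself, which reduces to pure Boolean bookkeeping over the nine \NS[i] relationships. All of the delicate graph-theoretic content is pushed into those relationships, particularly \NS[7]--\NS[9], which rely on the topological properties of \GS[3], \GS[4] and on the constructed cut-edges $e^{23}_u$, $e^{23}_v$, $e^{32}_u$, $e^{32}_v$ to manufacture the channel-gain identity \GS[0]. The value of isolating this corollary is that every forthcoming linear-independence claim for a specific $\GSP(\netvar)$ can be discharged by checking only the two ``entry-point'' implications \XS\IMPLY\LNR\AND\GS[1] and \XS\AND\GS[0]\IMPLY\CONT, rather than re-running the full case analysis on \GS[3]--\GS[6] each time.
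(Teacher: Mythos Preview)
Your proof is correct and follows essentially the same logical route as the paper. The only difference is presentational: the paper forward-chains the implications (first combining \NS[7] and \NS[9], then folding in \NS[8], then \NS[4]--\NS[6], to obtain $\textbf{LNR}\wedge\textbf{G1}\wedge(\neg\textbf{G2})\Rightarrow\textbf{G0}$, and only then invoking the two hypotheses on \XS), whereas you unpack the same Boolean content as an explicit case split on \GS[3], \GS[4], \GS[5]; both arguments consume exactly the same relationships \NS[4]--\NS[9] in exactly the same roles.
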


\begin{proof} Suppose \XS[]\IMPLY \LNR\AND\GS[1] and \XS[]\AND\GS[0]\IMPLY\CONT. We first see that \NS[7] and \NS[9] jointly imply
\begin{equation*}\label{Nec-1}
\textbf{LNR}\wedge(\neg\,\textbf{G2})\wedge(\neg\,\textbf{G3})\wedge(\neg\,\textbf{G4})\wedge(\neg\,\textbf{G5})\Rightarrow\,\textbf{G0}.
\end{equation*}

Combined with \NS[8], we thus have
\begin{equation*}\label{Nec-2}
\textbf{LNR}\wedge\textbf{G1}\wedge(\neg\,\textbf{G2})\wedge(\neg\,\textbf{G3})\wedge(\neg\,\textbf{G4})\Rightarrow\,\textbf{G0}.
\end{equation*}

This, jointly with \NS[4], \NS[5], and \NS[6], further imply
\begin{equation*}\label{Nec-3}
\textbf{LNR}\wedge\textbf{G1}\wedge(\neg\,\textbf{G2})\Rightarrow\,\textbf{G0}.
\end{equation*}

Together with the assumption that \XS[]\AND\GS[0]\IMPLY\CONT, we have \XS[]\AND\LNR\AND\GS[1]\AND(\NotGS[2])\IMPLY\CONT. Combining with the assumption that \XS[]\IMPLY\LNR\AND\GS[1] then yields
\begin{equation*}\label{Nec-4}
\textbf{X}\wedge(\neg\,\textbf{G2})\Rightarrow\,\text{false},
\end{equation*}
which equivalently implies that \XS[]\IMPLY\GS[2]. The proof is thus complete. 
\end{proof}

\section{The Reference Table}\label{reftable}

For the ease of exposition, we provide the Table~\ref{tb:reftable}, the reference table. The reference table helps finding where to look for the individual logic statements and relationships for the entire proof of \PropRef{Prop5}.
\begin{table}[h]
\hfill{}
\begin{tabular}{|cl|cl|}
\multicolumn{4}{c}{\small The Logic Statements for the Proof of \PropRef{Prop5}} \\
\hline
{\footnotesize\CS[0] to \CS[6]} & {\footnotesize defined in p.~\pageref{ref:CS[0]}.} & {\footnotesize\GS[7] to \GS[15]} & {\footnotesize defined in p.~\pageref{ref:GS[7]}.} \\[-2pt]
{\footnotesize\DS[1] to \DS[6]} & {\footnotesize defined in p.~\pageref{ref:DS[1]}.} & {\footnotesize\GS[16] to \GS[26]} & {\footnotesize defined in p.~\pageref{ref:GS[16]}.} \\[-2pt]
{\footnotesize\ES[0] to \ES[2]} & {\footnotesize defined in p.~\pageref{ref:ES[0]}.} & {\footnotesize\GS[27] to \GS[31]} & {\footnotesize defined in p.~\pageref{ref:GS[27]}.} \\[-2pt]
{\footnotesize\GS[0]} & {\footnotesize defined in p.~\pageref{ref:GS[0]}.} & {\footnotesize\GS[32] to \GS[36]} & {\footnotesize defined in p.~\pageref{ref:GS[32]}.} \\[-2pt]
{\footnotesize\GS[1], \GS[2]} & {\footnotesize defined in p.~\pageref{ref:GS[1]}.} & {\footnotesize\GS[37] to \GS[43]} & {\footnotesize defined in p.~\pageref{ref:GS[37]}.} \\[-2pt]
{\footnotesize\GS[3], \GS[4]} & {\footnotesize defined in p.~\pageref{ref:GS[3]}.} & {\footnotesize\HS[1], \HS[2], \HStilde[1], \HStilde[2]} & {\footnotesize defined in p.~\pageref{ref:HS[1]}.} \\[-2pt]
{\footnotesize\GS[5], \GS[6]} & {\footnotesize defined in p.~\pageref{ref:GS[5]}.} & {\footnotesize\LNR} & {\footnotesize defined in p.~\pageref{ref:LNR}.} \\
\hline
\end{tabular}
\hfill{}
\end{table}
\begin{table}[h]
\begin{tabular}{|cl|}
\multicolumn{2}{c}{\small The Logic Relationships for the Proof of \PropRef{Prop5}} \\
\hline
& \\[-9pt]
{\footnotesize\NS[1] to \NS[9]} & {\parbox{6.4cm}{\footnotesize defined in p.~\pageref{ref:NS[1]}, to help proving \CorRef{Cor4}, the general structured proof for the necessity of \PropRef{Prop5}.}} \\[+2pt]
{\footnotesize\RS[1] to \RS[10]} & {\footnotesize defined in p.~\pageref{ref:RS[1]}, to help proving \SS[11].} \\
{\footnotesize\RS[11] to \RS[25]} & {\footnotesize defined in p.~\pageref{ref:RS[11]}, to help proving \SS[13].} \\
{\footnotesize\RS[26] to \RS[33]} & {\footnotesize defined in p.~\pageref{ref:RS[26]}, to help proving \SS[14].} \\
{\footnotesize\RS[34] to \RS[40]} & {\footnotesize defined in p.~\pageref{ref:RS[34]}, to help proving \RS[28].} \\
{\footnotesize\RS[41] to \RS[47]} & {\footnotesize defined in p.~\pageref{ref:RS[41]}, to help proving \RS[29].} \\[+2pt]
{\footnotesize\SS[1] to \SS[14]} & {\parbox{6.4cm}{\footnotesize defined in p.~\pageref{ref:SS[1]}, to help proving \CorRef{Cor5}, the general structured proof for the sufficiency of \PropRef{Prop5}.}} \\[3pt]
\hline
\end{tabular}
\caption{The reference table for the proof of \PropRef{Prop5}.}
\label{tb:reftable}
\end{table}

\section{Proof of ``\LNR\AND\GS[1]\AND\GS[2]\OPPLY\HStilde[1]\OR\HS[2]\OR\HStilde[2]"}\label{Prop5<=Proof}

Thanks to \CorRef{Cor4} and the logic relationships \NS[1], \NS[2], and \NS[3] in \AppRef{GeneralNecessityProof}, we only need to show that (i) \HStilde[1]\AND\GS[0]\\\noindent\IMPLY\CONT; (ii) \HS[2]\AND\GS[0]\IMPLY\CONT; and (iii) \HStilde[2]\AND\GS[0]\IMPLY\CONT.

We prove ``\HStilde[1]\AND\GS[0]\IMPLY\CONT" as follows.
\begin{proof} We prove an equivalent form: \GS[0]\IMPLY(\NotHStilde[1]). Suppose \GS[0] is true. Consider $\GSPtilde[1][\left(\!1\!\right)](\netvar)$ which contains 3 polynomials (see \Ref{GSP2^n_NEW} when $\Nid\!=\!1$):
\begin{equation}\label{GSP2^1_NEW}
\GSPtilde[1][\left(\!1\!\right)](\netvar)=\{\,\ChGANA[1][1]\ChGANA[3][2]\ChGANA[1][3]\BOLDa, \;\ChGANA[1][2]\ChGANA[3][1]\ChGANA[1][3]\BOLDa, \;\ChGANA[1][2]\ChGANA[3][1]\ChGANA[1][3]\BOLDb\,\}.
\end{equation}

Since $\BOLDa=\ChGANA[3][1]\ChGANA[2][3]\ChGANA[1][2]$, the first polynomial in $\GSPtilde[1][\left(\!1\!\right)](\netvar)$ is equivalent to $\ChGANA[1][1]\ChGANA[3][2]\ChGANA[2][3]\ChGANA[1][2]\ChGANA[3][1]\ChGANA[1][3]$. Then $\GSPtilde[1][\left(\!1\!\right)](\netvar)$ becomes linearly dependent by substituting $\BOLDb + \BOLDa$ for $\ChGANA[1][1]\ChGANA[3][2]\ChGANA[2][3]$ (from \GS[0] being true). The proof is thus complete. 
\end{proof}

We prove ``\HS[2]\AND\GS[0]\IMPLY\CONT" as follow.
\begin{proof} We prove an equivalent form: \GS[0]\IMPLY(\NotHS[2]). Suppose \GS[0] is true. Consider $\GSP[1][\left(\!\Nid\!\right)](\netvar)$ in \Ref{GSP1^n}. Substituting $\BOLDb + \BOLDa$ for $\ChGANA[1][1]\ChGANA[3][2]\ChGANA[2][3]$ (from \GS[0] being true) and $\BOLDa=\ChGANA[1][2]\ChGANA[3][1]\ChGANA[2][3]$ to the expression of $\GSP[1][\left(\!\Nid\!\right)](\netvar)$, then we have
\begin{equation*}\begin{split}
\GSP[1][\NidBRAC](\netvar) &= \{\,(\BOLDb + \BOLDa)\BOLDb^{\Nid},\;(\BOLDb + \BOLDa)\BOLDb^{\Nid-1}\BOLDa,\;\cdots,\;(\BOLDb + \BOLDa)\BOLDa^{\Nid}, \\
& \qquad ,\;\BOLDb^{\Nid}\BOLDa,\;\BOLDb^{\Nid-1}\BOLDa^2,\;\cdots,\;\BOLDb\BOLDa^{\Nid}\,\}.
\end{split}\end{equation*}

One can see that $\GSP[1][\left(\!n\!\right)](\netvar)$ becomes linearly dependent when $\Nid\!\geq\!2$. The proof is thus complete.
\end{proof}

We prove ``\HStilde[2]\AND\GS[0]\IMPLY\CONT" as follow.
\begin{proof} Similarly following the proof of ``\HStilde[1]\AND\GS[0]\IMPLY\\\CONT", we further have
\begin{equation*}\begin{split}
&\GSPtilde[1][\NidBRAC](\netvar) = \ChGANA[1][2]\ChGANA[3][1]\ChGANA[1][3]\,\{\,(\BOLDb+\BOLDa)\BOLDa^{\Nid-1},\;(\BOLDb+\BOLDa)\BOLDa^{\Nid-2}\BOLDb, \\
& \qquad \;\cdots\;,(\BOLDb+\BOLDa)\BOLDb^{\Nid-1},\;\BOLDa^{\Nid},\; \BOLDa^{\Nid-1}\BOLDb,\;\cdots\;,\BOLDa\BOLDb^{\Nid-1},\;\BOLDb^{\Nid}\,\},
\end{split}\end{equation*}
which becomes linearly dependent when $\Nid\!\geq\!2$. The proof is thus complete.
\end{proof}


\section{General Structured Proof for the Sufficiency}\label{GeneralSufficiencyProof}

In this appendix, we provide \CorRef{Cor5}, which will be used to prove the graph-theoretic sufficient direction of 3-unicast ANA network for arbitrary $\Nid\!>\!0$ values. We need the following additional logic statements to describe the general proof structure.

\subsection{The second set of logic statements}

Given a 3-unicast ANA network $\GANA$, recall the definitions $\BOLDa=\ChGANA[3][1]\ChGANA[2][3]\ChGANA[1][2]$ and $\BOLDb=\ChGANA[2][1]\ChGANA[3][2]\ChGANA[1][3]$ (we drop the input argument $\netvar$ for simplicity). By the definition of $\GANA$, any channel gains are non-trivial, and thus $\BOLDb$ and $\BOLDa$ are non-zero polynomials. Let  $\psi^{(\!\Nid\!)}_\alpha(\BOLDb,\BOLDa)$ and $\psi^{(\!\Nid\!)}_\beta(\BOLDb,\BOLDa)$ to be some polynomials with respect to $\netvar$, represented by 
\begin{align*}
\psi^{(\!\Nid\!)}_\alpha(\BOLDb,\BOLDa) \!=\! \sum_{i=0}^{\Nid} \alpha_i \BOLDb^{\Nid-i}\BOLDa^{i},\quad
\psi^{(\!\Nid\!)}_\beta(\BOLDb,\BOLDa) \!=\! \sum_{j=0}^{\Nid} \beta_j \BOLDb^{\Nid-j}\BOLDa^{j},
\end{align*}
with some set of coefficients $\{\alpha_i\}_{i=0}^{\Nid}$ and $\{\beta_j\}_{j=0}^{\Nid}$, respectively. Basically, given a value of $n$ and the values of $\{\alpha_i\}_{i=0}^{\Nid}$ and $\{\beta_j\}_{j=0}^{\Nid}$, $\psi^{(\!\Nid\!)}_\alpha(\BOLDb,\BOLDa)$ (resp. $\psi^{(\!\Nid\!)}_\beta(\BOLDb,\BOLDa)$) represents a linear combination of $\{ \BOLDb^{\Nid},\,\BOLDb^{\Nid-1}\BOLDa,\,\cdots\,,\,\BOLDb\BOLDa^{\Nid-1},\,\BOLDa^{\Nid}\}$, the set of Vandermonde polynomials

We need the following additional logic statements.

\noindent \makebox[1cm][l]{$\bullet$ \ES[0]{\bf:}}\label{ref:ES[0]}Let $I_{\textrm{3ANA}}$ be a finite index set defined by $I_{\textrm{3ANA}}=\{(i,j):\;i,j\!\in\!\{1,2,3\}\;\textrm{and}\;i\neq j\}$. Consider two non-zero polynomial functions $f: \FF[\!q][|I_{\textrm{3ANA}}|]\!\mapsto\!\FF[\!q]$ and $g: \FF[\!q][|I_{\textrm{3ANA}}|]\!\mapsto\!\FF[\!q]$. Then given a $\GANA$ of interest, there exists some coefficient values $\{\alpha_i\}_{i=0}^{\Nid}$ and $\{\beta_j\}_{j=0}^{\Nid}$ such that
\begin{equation*}\begin{split}
\ChGANA[1][1]&\,f(\{\ChGANA[j][i]:\forall\,(i,j)\!\in\!I_{\textrm{3ANA}}\})\,\psi^{(\!\Nid\!)}_\alpha(\BOLDb,\BOLDa) \\
& \qquad\;\; = \; g(\{\ChGANA[j][i]:\forall\,(i,j)\!\in\!I_{\textrm{3ANA}}\})\,\psi^{(\!\Nid\!)}_\beta(\BOLDb,\BOLDa),
\end{split}\end{equation*}
\noindent with (i) At least one of coefficients $\{\alpha_i\}_{i=0}^{\Nid}$ is non-zero; and (ii) At least one of coefficients $\{\beta_j\}_{j=0}^{\Nid}$ is non-zero.

Among $\{\alpha_{i}\}_{i=0}^{\Nid}$ and $\{\beta_{j}\}_{j=0}^{\Nid}$, define $\ist$ (resp. $\jst$) as the smallest $i$ (resp. $j$) such that $\alpha_{i}\neq 0$ (resp. $\beta_{j}\neq 0$). Similarly, define $\iend$ (resp. $\jend$) as the largest $i$ (resp. $j$) such that $\alpha_{i}\neq 0$ (resp. $\beta_{j}\neq 0$).\footnote{From definition, $0\!\leq\!\ist\!\leq\!\iend\!\leq\!\Nid$ and $0\!\leq\!\jst\!\leq\!\jend\!\leq\!\Nid$.} Then, we can rewrite the above equation as follows:
\begin{equation}\label{E0}\begin{split}
\sum_{i=\ist}^{\iend}& \alpha_{i}\,\ChGANA[1][1]\,f(\{\ChGANA[j][i]:\forall\,(i,j)\!\in\!I_{\textrm{3ANA}}\})\,\BOLDb^{\Nid-i}\BOLDa^{i} \\
& = \; \sum_{j=\jst}^{\jend}\beta_{j}\,g(\{\ChGANA[j][i]:\forall\,(i,j)\!\in\!I_{\textrm{3ANA}}\})\,\BOLDb^{\Nid-j}\BOLDa^{j}.
\end{split}\end{equation}

\noindent \makebox[1cm][l]{$\bullet$ \ES[1]{\bf:}}\label{ref:ES[1]}Continue from the definition of \ES[0]. The considered $\GANA$ satisfies \Ref{E0} with (i) $f(\{\ChGANA[j][i]:\forall\,(i,j)\!\in\!I_{\textrm{3ANA}}\})=\ChGANA[3][2]$; and (ii) $g(\{\ChGANA[j][i]:\forall\,(i,j)\!\in\!I_{\textrm{3ANA}}\})=\ChGANA[3][1]\ChGANA[1][2]$. Then, \Ref{E0} reduces to
\begin{equation}\label{E1}
\sum_{i=\ist}^{\iend} \alpha_{i}\ChGANA[1][1]\ChGANA[3][2]\BOLDb^{\Nid-i}\BOLDa^{i} = \sum_{j=\jst}^{\jend}\beta_{j}\ChGANA[3][1]\ChGANA[1][2]\BOLDb^{\Nid-j}\BOLDa^{j}.
\end{equation}


\noindent \makebox[1cm][l]{$\bullet$ \ES[2]{\bf:}}\label{ref:ES[2]}Continue from the definition of \ES[0]. The chosen coefficients $\{\alpha_i\}_{i=0}^{\Nid}$ and $\{\beta_j\}_{j=0}^{\Nid}$ which satisfy \Ref{E0} in the given $\GANA$ also satisfy (i) $\alpha_k\!\neq\!\beta_k$ for some $k\!\in\!\{0,...,\Nid\}$; and (ii) either $\alpha_0\!\neq\!0$ or $\beta_{\Nid}\!\neq\!0$ or $\alpha_{k}\!\neq\!\beta_{k-1}$ for some $k\!\in\!\{1,...,\Nid\}$.

One can see that whether the above logic statements are true or false depends on the polynomials $\ChGANA[j][i]$ and on the $\{\alpha_i\}_{i=0}^{\Nid}$ and $\{\beta_j\}_{j=0}^{\Nid}$ values being considered. 

The following logic statements are well-defined if and only if \ES[0] is true. Whether the following logic statements are true depends on the values of $\ist$, $\iend$, $\jst$, and $\jend$.

\noindent \makebox[1cm][l]{$\bullet$ \CS[0]{\bf:}}\label{ref:CS[0]}$\ist\!>\!\jst$ and $\iend\!=\!\jend$.

\noindent \makebox[1cm][l]{$\bullet$ \CS[1]{\bf:}}\label{ref:CS[1]}$\ist\!<\!\jst$.

\noindent \makebox[1cm][l]{$\bullet$ \CS[2]{\bf:}}\label{ref:CS[2]}$\ist\!>\!\jst$.

\noindent \makebox[1cm][l]{$\bullet$ \CS[3]{\bf:}}\label{ref:CS[3]}$\ist\!=\!\jst$.

\noindent \makebox[1cm][l]{$\bullet$ \CS[4]{\bf:}}\label{ref:CS[4]}$\iend\!<\!\jend$.

\noindent \makebox[1cm][l]{$\bullet$ \CS[5]{\bf:}}\label{ref:CS[5]}$\iend\!>\!\jend$.

\noindent \makebox[1cm][l]{$\bullet$ \CS[6]{\bf:}}\label{ref:CS[6]}$\iend\!=\!\jend$.

We also define the following statements for the further organization.

\noindent \makebox[1cm][l]{$\bullet$ \DS[1]{\bf:}}\label{ref:DS[1]}$\GCD[\!{\ChGANA[2][1]^{l_1}\ChGANA[3][2]^{l_1}\ChGANA[1][3]^{l_1}}][{\,\ChGANA[2][3]}]\!=\!\ChGANA[2][3]$ for some integer $l_1\!\!>\!0$.

\noindent \makebox[1cm][l]{$\bullet$ \DS[2]{\bf:}}\label{ref:DS[2]}$\GCD[\!{\ChGANA[3][1]^{l_2}\ChGANA[2][3]^{l_2}\ChGANA[1][2]^{l_2}}][{\,\ChGANA[3][2]}]\!=\!\ChGANA[3][2]$ for some integer $l_2\!\!>\!0$.

\noindent \makebox[1cm][l]{$\bullet$ \DS[3]{\bf:}}\label{ref:DS[3]}$\GCD[{\ChGANA[1][1]\ChGANA[3][1]^{l_3}\ChGANA[2][3]^{l_3}\ChGANA[1][2]^{l_3}}][{\,\ChGANA[2][1]\ChGANA[1][3]}]\!=\!\ChGANA[2][1]\ChGANA[1][3]$ for some integer $l_3\!>\!0$.

\noindent \makebox[1cm][l]{$\bullet$ \DS[4]{\bf:}}\label{ref:DS[4]}$\GCD[{\ChGANA[1][1]\ChGANA[2][1]^{l_4}\ChGANA[3][2]^{l_4}\ChGANA[1][3]^{l_4}}][{\,\ChGANA[3][1]\ChGANA[1][2]}]\!=\!\ChGANA[3][1]\ChGANA[1][2]$ for some integer $l_4\!>\!0$.

\noindent \makebox[1cm][l]{$\bullet$ \DS[5]{\bf:}}\label{ref:DS[5]}$\GCD[{\ChGANA[1][1]\ChGANA[2][1]^{l_5}\ChGANA[3][2]^{l_5}\ChGANA[1][3]^{l_5}}][{\,\ChGANA[2][3]}]\!=\!\ChGANA[2][3]$
for some integer $l_5\!>\!0$.

\noindent \makebox[1cm][l]{$\bullet$ \DS[6]{\bf:}}\label{ref:DS[6]}$\GCD[{\ChGANA[1][1]\ChGANA[3][1]^{l_6}\ChGANA[2][3]^{l_6}\ChGANA[1][2]^{l_6}}][{\,\ChGANA[3][2]}]\!=\!\ChGANA[3][2]$ for some integer $l_6\!>\!0$.

\subsection{General Sufficiency Proof Structure}

We prove the following ``logic relationships," which will be used for the proof of \CorRef{Cor5}. 

\noindent $\bullet$ \SS[1]:\label{ref:SS[1]} \DS[1]\IMPLY\DS[5].

\noindent $\bullet$ \SS[2]:\label{ref:SS[2]} \DS[2]\IMPLY\DS[6].

\noindent $\bullet$ \SS[3]:\label{ref:SS[3]} \ES[0]\AND\ES[1]\AND\CS[1]\IMPLY\DS[4]\AND\DS[5].

\noindent $\bullet$ \SS[4]:\label{ref:SS[4]} \ES[0]\AND\ES[1]\AND\CS[2]\IMPLY\DS[1].

\noindent $\bullet$ \SS[5]:\label{ref:SS[5]} \GS[1]\AND\ES[0]\AND\ES[1]\AND\CS[3]\IMPLY\DS[4].

\noindent $\bullet$ \SS[6]:\label{ref:SS[6]} \ES[0]\AND\ES[1]\AND\CS[4]\IMPLY\DS[2]\AND\DS[3].

\noindent $\bullet$ \SS[7]:\label{ref:SS[7]} \ES[0]\AND\ES[1]\AND\CS[5]\IMPLY\DS[3].

\noindent $\bullet$ \SS[8]:\label{ref:SS[8]} \GS[1]\AND\ES[0]\AND\ES[1]\AND\CS[6]\IMPLY\DS[2].

\noindent $\bullet$ \SS[9]:\label{ref:SS[9]} \ES[0]\AND\ES[1]\AND\CS[0]\IMPLY\ES[2].

\noindent $\bullet$ \SS[10]:\label{ref:SS[10]} \GS[1]\AND\ES[0]\AND\ES[1]\AND(\NotCS[0])\IMPLY(\DS[1]\AND\DS[3])\OR(\DS[2]\AND\DS[4])\OR\\\noindent(\DS[3]\AND\DS[4]).

\noindent $\bullet$ \SS[11]:\label{ref:SS[11]} \LNR\AND\GS[1]\AND\ES[0]\AND\DS[1]\AND\DS[3]\IMPLY\CONT.

\noindent $\bullet$ \SS[12]:\label{ref:SS[12]} \LNR\AND\GS[1]\AND\ES[0]\AND\DS[2]\AND\DS[4]\IMPLY\CONT.

\noindent $\bullet$ \SS[13]:\label{ref:SS[13]} \LNR\AND\GS[1]\AND\GS[2]\AND\ES[0]\AND\ES[1]\AND\ES[2]\AND\DS[1]\AND\DS[2]\IMPLY\CONT.


\noindent $\bullet$ \SS[14]:\label{ref:SS[14]} \LNR\AND\GS[1]\AND\ES[0]\AND\DS[3]\AND\DS[4]\IMPLY\CONT.

The proofs of \SS[1] to \SS[10] are relegated to \AppRef{ProofsSS[1]-SS[10]}. The proofs of \SS[11] to \SS[14] are relegated to \AppssssRef{ProofSS[11]}{ProofSS[12]}{ProofSS[13]}{ProofSS[14]}, respectively. Note that the above \SS[1] to \SS[14] relationships greatly simplify the analysis of finding the graph-theoretic conditions for the feasibility of the 3-unicast ANA network. This observation is summarized in \CorRef{Cor5}.

\begin{cor}\label{Cor5} \quad Let $\GSP(\netvar)$ be a set of (arbitrarily chosen) polynomials based on the $9$ channel gains $\ChGANA[j][i]$ of the 3-unicast ANA network, and define \XS[] to be the logic statement that $\GSP(\netvar)$ is linearly independent. Let \GS[] to be an arbitrary logic statement in the 3-unicast ANA network. If we can prove that
\begin{itemize}
\item[(A)] \GS[]\AND(\NotXS[])\IMPLY \ES[0]\AND\ES[1]\AND(\NotCS[0]),
\end{itemize}
then the logic relationship \LNR\AND\GS[1]\AND\GS[]\AND(\NotXS[])\IMPLY\CONT\; must also hold.

Also, if we can prove that
\begin{itemize}
\item[(B)] \GS[]\AND(\NotXS[])\IMPLY \ES[0]\AND\ES[1]\AND\CS[0],
\end{itemize}
then the logic relationship \LNR\AND\GS[1]\AND\GS[2]\AND\GS[]\AND(\NotXS[])\IMPLY\\\noindent\CONT\;must also hold.
\end{cor}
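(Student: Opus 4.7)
The plan is to regard Corollary~\ref{Cor5} as a packaging result and to prove both parts by directly chaining the previously established logic relationships \SS[1] to \SS[14]. No new algebraic or graph-theoretic work is required; the corollary merely lets a later user discharge an arbitrary statement \GS[] by verifying only its interaction with \ES[0], \ES[1], and \CS[0].

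For part~(A), I would begin from the ambient hypotheses \LNR\AND\GS[1]\AND\GS[]\AND(\NotXS[]), which by the given implication yield \ES[0]\AND\ES[1]\AND(\NotCS[0]). Applying \SS[10] to \GS[1]\AND\ES[0]\AND\ES[1]\AND(\NotCS[0]) produces a three-way disjunction: (\DS[1]\AND\DS[3]), (\DS[2]\AND\DS[4]), or (\DS[3]\AND\DS[4]). I would then close each subcase independently -- the first by \SS[11], the second by \SS[12], and the third by \SS[14]. All three invocations require only \LNR\AND\GS[1]\AND\ES[0], which is already part of the ambient context, so every branch terminates in \CONT.

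For part~(B), the ambient hypotheses now include \GS[2], and the given implication yields \ES[0]\AND\ES[1]\AND\CS[0]. The key observation I would exploit is that \CS[0] (which asserts $\ist\!>\!\jst$ and $\iend\!=\!\jend$) simultaneously satisfies both \CS[2] (namely $\ist\!>\!\jst$) and \CS[6] (namely $\iend\!=\!\jend$). Then \SS[4] yields \DS[1], \SS[8] (which uses the extra hypothesis \GS[1]) yields \DS[2], and \SS[9] yields \ES[2]. At that point all seven premises of \SS[13] are assembled, and \SS[13] itself closes the argument with \CONT.

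The only structural point that is not entirely mechanical will be the decomposition \CS[0]\IMPLY\CS[2]\AND\CS[6] used in part~(B), together with the check in part~(A) that \SS[10]'s three-way split is genuinely covered by \SS[11], \SS[12], and \SS[14] without ever requiring a hypothetical (\DS[1]\AND\DS[2]) branch; such a branch would need \SS[13] and hence also \GS[2], which is unavailable in part~(A). The real difficulty of this section, however, lies upstream, in the proofs of \SS[10] and \SS[13] themselves, where the Vandermonde structure of $\BOLDa$ and $\BOLDb$ and the channel-gain factorisation of Proposition~\ref{Prop3} are actually consumed.
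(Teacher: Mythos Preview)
Your proposal is correct and follows essentially the same approach as the paper: part~(A) chains \SS[10] with \SS[11], \SS[12], \SS[14] to kill the three disjuncts, and part~(B) uses the decomposition \CS[0]\,$\Leftrightarrow$\,\CS[2]\AND\CS[6] together with \SS[4], \SS[8], \SS[9] to assemble the hypotheses of \SS[13]. The paper organizes the same steps by first packaging them into two standalone implications (its \eqref{Suf-2} and \eqref{Suf-4}) before specializing to \GS[]\AND(\NotXS[]), but the logical content is identical.
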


\begin{proof} First, notice that \SS[11], \SS[12], and \SS[14] jointly imply
\begin{equation}\label{Suf-1}\begin{split}
\textbf{LNR}&\wedge\textbf{G1}\wedge\textbf{E0}\,\wedge \\
& \{(\textbf{D1}\wedge\textbf{D3})\vee(\textbf{D2}\wedge\textbf{D4})\vee(\textbf{D3}\wedge\textbf{D4})\}\Rightarrow\,\text{false}.
\end{split}\end{equation}

Then, \Ref{Suf-1}, jointly with \SS[10] further imply
\begin{equation}\label{Suf-2}
\textbf{LNR}\wedge\textbf{G1}\wedge\textbf{E0}\,\wedge\textbf{E1}\wedge(\neg\,\textbf{C0})\Rightarrow\,\text{false}.
\end{equation}

Note that by definition \CS[0] is equivalent to \CS[2]\AND\CS[6]. Then \SS[4] and \SS[8] jointly imply
\begin{equation}\label{Suf-3}
\textbf{G1}\wedge\textbf{E0}\,\wedge\textbf{E1}\wedge\textbf{C0}\Rightarrow\,\textbf{D1}\wedge\textbf{D2}.
\end{equation}

Then, \Ref{Suf-3}, \SS[9], and \SS[13] jointly imply
\begin{equation}\label{Suf-4}
\textbf{LNR}\wedge\textbf{G1}\wedge\textbf{G2}\wedge\textbf{E0}\,\wedge\textbf{E1}\wedge\textbf{C0}\Rightarrow\,\text{false}.
\end{equation}

Now we prove the result using \Ref{Suf-2} and \Ref{Suf-4}. Suppose we can also prove (A) \GS[]\AND(\NotXS[])\IMPLY\ES[0]\AND\ES[1]\AND(\NotCS[0]). Then, one can see that this, jointly with \Ref{Suf-2}, implies \LNR\AND\GS[1]\AND\GS[]\AND\\\noindent(\NotXS[])\IMPLY\CONT. Similarly, (B) \GS[]\AND(\NotXS[])\IMPLY\ES[0]\AND\ES[1]\AND\CS[0] and \Ref{Suf-4} jointly imply \LNR\AND\GS[1]\AND\GS[2]\AND\GS[]\AND(\NotXS[])\IMPLY\CONT. The proof is thus complete.
\end{proof}

\subsection{The insight on proving the sufficiency}

To prove the sufficiency directions, we need to show that a set of polynomials is linearly independent given any 3-unicast ANA network, for example, ``\LNR\AND\GS[]\IMPLY\XS[]". To that end, we prove the equivalent relationship ``\LNR\AND\GS[]\AND(\NotXS)\IMPLY\CONT."  Focusing on the linear dependence condition \NotXS, although there are many possible cases, allows us to use the subgraph property (Proposition 2) to simplify the proof. Further, we use the logic statements \SS[3] to \SS[10] to convert all the cases of the linear dependence condition into the greatest common divisor statements \DS[1] to \DS[6], for which the channel gain property (Proposition 3) further helps us to find the corresponding graph-theoretic implication. 

\section{Proofs of ``\LNR\AND\GS[1]\IMPLY\HS[1]" and ``\LNR\AND\GS[1]\AND\GS[2]\IMPLY\HStilde[1]\OR\HS[2]\OR\HStilde[2]"}\label{Prop5=>Proof}

As discussed in \AppRef{GeneralSufficiencyProof}, we use \CorRef{Cor5} to prove the sufficiency directions. We first show that (i) \LNR\AND\GS[1]\AND\\\noindent\GS[2]\IMPLY\HS[2]; and (ii) \LNR\AND\GS[1]\AND\GS[2]\IMPLY\HStilde[2]. Then the remaining sufficiency directions ``\LNR\AND\GS[1]\IMPLY\HS[1]" and ``\LNR\AND\GS[1]\\\noindent\AND\GS[2]\IMPLY\HStilde[1]" are derived using simple facts of ``\HS[2]\IMPLY\HS[1]" and ``\HStilde[2]\IMPLY\HStilde[1]", respectively. Note that \HS[2]\IMPLY\HS[1] is straightforward since $\GSP[1][\left(\!1\!\right)](\netvar)$ is a subset of the polynomials $\GSP[1][\left(\!\Nid\!\right)](\netvar)$ (multiplied by a common factor) and whenever $\GSP[1][\left(\!\Nid\!\right)](\netvar)$ is linearly independent, so is $\GSP[1][\left(\!1\!\right)](\netvar)$. Similarly, we have \HStilde[2]\IMPLY\HStilde[1].

We prove ``\LNR\AND\GS[1]\AND\GS[2]\IMPLY\HS[2]" as follows.
\begin{proof} By the definition of linear dependence, \NotHS[2] implies that there exist two sets of coefficients $\{\alpha_i\}_{i=0}^{\Nid}$ and $\{\beta_j\}_{j=0}^{\Nid-1}$ such that
\begin{equation}\label{hn-LD}
\sum_{i=0}^{\Nid} \alpha_{i} \ChGANA[1][1]\ChGANA[3][2]\BOLDb^{\Nid-i}\BOLDa^{i} = \sum_{j=0}^{\Nid-1} \beta_{j}\ChGANA[3][1]\ChGANA[1][2]\BOLDb^{\Nid-j}\BOLDa^{j}.
\end{equation}

We will now argue that at least one of $\{\alpha_i\}_{i=0}^{\Nid}$ and at least one of $\{\beta_{j}\}_{j=0}^{\Nid-1}$ are non-zero if $\LRneq$. The reason is as follows. For example, suppose that all $\{\beta_{j}\}_{j=0}^{\Nid-1}$ are zero. By definition (iv) of the 3-unicast ANA network, any channel gain is non-trivial. Thus $\ChGANA[1][1]\ChGANA[3][2]$ is a non-trivial polynomial. Then, \Ref{hn-LD} becomes $\sum_{i=0}^{\Nid} \alpha_{i} \BOLDb^{\Nid-i}\BOLDa^{i}=0$, which implies that the set of $(\Nid+1)$ polynomials, $\tilde{\bold{h}}(\netvar)=\{\BOLDb^{\Nid},\,\BOLDb^{\Nid\!-\!1}\BOLDa,\,...,\BOLDb\BOLDa^{\Nid\!-\!1},\,\BOLDa^{\Nid}\}$, is linearly dependent. By \PropRef{Prop1}, the determinant of the Vandermonde matrix $[\tilde{\bold{h}}(\netvar[(\!k\!)])]_{k=1}^{\Nid+1}$ is thus zero, which implies $\LReqVAR$. This contradicts the assumption \LNR. The fact that not all $\{\alpha_{i}\}_{i=0}^{\Nid}$ are zero can be proven similarly.

As a result, there exist two sets of coefficients $\{\alpha_i\}_{i=0}^{\Nid}$ and $\{\beta_{j}\}_{j=0}^{\Nid-1}$ with at least one of each group being non-zero such that the following logic relationship holds:
\begin{equation}\label{HS[2]proof-1}
\textbf{LNR}\wedge(\neg\,\textbf{H2})\Rightarrow\,\textbf{E0}\wedge\textbf{E1}.
\end{equation}

Then, note that \Ref{HS[2]proof-1} implies
\begin{equation*}\begin{split}
\textbf{LNR}\wedge(\neg\,\textbf{C0})\wedge(\neg\,\textbf{H2})&\Rightarrow\,\textbf{E0}\wedge\textbf{E1}\wedge(\neg\,\textbf{C0}), \\
\text{and}\quad\textbf{LNR}\wedge\textbf{C0}\wedge(\neg\,\textbf{H2})&\Rightarrow\,\textbf{E0}\wedge\textbf{E1}\wedge\textbf{C0}.
\end{split}\end{equation*}

Applying \CorRef{Cor5}(A) (substituting \GS[] by \LNR\AND(\NotCS[0]) and \XS[] by \HS[2], respectively), the former implies \LNR\AND\GS[1]\AND\\\noindent(\NotCS[0])\AND(\NotHS[2])\IMPLY\CONT. By \CorRef{Cor5}(B), the latter implies \LNR\AND\GS[1]\AND\GS[2]\AND\CS[0]\AND(\NotHS[2])\IMPLY\CONT. These jointly imply
\begin{equation*}
\textbf{LNR}\wedge\textbf{G1}\wedge\textbf{G2}\wedge(\neg\,\textbf{H2})\Rightarrow\,\text{false},
\end{equation*}
which is equivalent to \LNR\AND\GS[1]\AND\GS[2]\IMPLY\HS[2]. The proof is thus complete.
\end{proof}

We prove ``\LNR\AND\GS[1]\AND\GS[2]\IMPLY\HStilde[2]" as follows.
\begin{proof} We will only show the logic relationship ``\LNR\AND\\\noindent(\NotHStilde[2])\IMPLY\ES[0]\AND\ES[1]" so that the rest can be proved by \CorRef{Cor5} as in the proof of ``\LNR\AND\GS[1]\AND\GS[2]\IMPLY\HS[2]". Suppose \NotHStilde[2] is true. Then, there exists two sets of coefficients $\{\alpha_i\}_{i=1}^{\Nid}$ and $\{\beta_j\}_{j=0}^{\Nid}$ such that
\begin{equation}\label{kn-LD}
\sum_{i=1}^{\Nid} \alpha_{i} \ChGANA[1][1]\ChGANA[3][2]\BOLDb^{\Nid-i}\BOLDa^{i} = \sum_{j=0}^{\Nid} \beta_{j}\ChGANA[3][1]\ChGANA[1][2]\BOLDb^{\Nid-j}\BOLDa^{j}.
\end{equation}

One can easily see that, similarly to the above proof, the assumption \LNR\;results in the not-being-all-zero condition on both $\{\alpha_i\}_{i=1}^{\Nid}$ and  $\{\beta_j\}_{j=0}^{\Nid}$, which in turn implies that ``\LNR\AND\\\noindent(\NotHStilde[2])\IMPLY\ES[0]\AND\ES[1]". The proof is thus complete.
\end{proof}


\section{Proofs of the properties of \GS[3], \GS[4], \NotGS[3], and \NotGS[4]}\label{ProofsG3G4}

We prove Properties~1 and~2 of \GS[3] as follows.
\begin{proof} Suppose \GS[3] is true, that is, $\Sover[2]\cap\Dover[3]\EqualEmpty$. Consider $e^\ast_2$, the most downstream edge of $\onecut[s_2][d_1]\cap\onecut[s_2][d_3]$ and $e^\ast_3$, the most upstream edge of $\onecut[s_1][d_3]\cap\onecut[s_2][d_3]$. If either $e^\ast_2\!=\!e_{s_2}$ or $e^\ast_3\!=\!e_{d_3}$ (or both), we must have $e^\ast_2 \PREC e^\ast_3$ otherwise it contradicts definitions (ii) and (iii) of the 3-unicast ANA network. Consider the case in which  both $e^\ast_2\neq\!e_{s_2}$ and $e^\ast_3\neq\!e_{d_3}$.
Recall the definitions of $\Sover[2]\triangleq\!\onecut[s_2][d_1]\cap\onecut[s_2][d_3]\backslash\{e_{s_2}\}$ and $\Dover[3]\!\triangleq\!\onecut[s_1][d_3]\cap\onecut[s_2][d_3]\backslash\{e_{d_3}\}$. We thus have $e^\ast_2\!\in\!\Sover[2]$ and $e^\ast_3\!\in\!\Dover[3]$. By the assumption $\Sover[2]\cap\Dover[3]\,\EqualEmpty$ and \LemRef{Lem3}, we must have $e^\ast_2\PREC e^\ast_3$ as well.

From the construction of $e^\ast_2$ and $e^\ast_3$, the channel gains $\ChGANA[3][1]$, $\ChGANA[1][2]$, and $\ChGANA[3][2]$ can be expressed as $\ChGANA[3][1] = \ChG[e^\ast_3][e_{s_1}]\ChG[e_{d_3}][e^\ast_3]$, $\ChGANA[1][2] = \ChG[e^\ast_2][e_{s_2}]\ChG[e_{d_1}][e^\ast_2]$, and $\ChGANA[3][2] = \ChG[e^\ast_2][e_{s_2}]\ChG[e^\ast_3][e^\ast_2]\ChG[e_{d_3}][e^\ast_3]$. Moreover, we have both $\GCD[{\ChG[e^\ast_3][e_{s_1}]}][\;{\ChG[e^\ast_2][e_{s_2}]\ChG[e^\ast_3][e^\ast_2]}]\PolyEqual 1$ and
$\GCD[{\ChG[e^\ast_3][e^\ast_2]\ChG[e_{d_3}][e^\ast_3]}][\;{\ChG[e_{d_1}][e^\ast_2]}]\PolyEqual 1$ otherwise it violates that $e^\ast_2$ (resp. $e^\ast_3$) is the most downstream (resp. upstream) edge of $\Sover[2]$ (resp. $\Dover[3]$). The same argument also leads to $\GCD[{\ChGANA[3][1]}][\;{\ChG[e^\ast_3][e^\ast_2]}]\PolyEqual 1$ and $\GCD[{\ChGANA[1][2]}][\;{\ChG[e^\ast_3][e^\ast_2]}]\PolyEqual 1$.
\end{proof}

We prove Properties~1,~2, and~3 of \NotGS[3] as follows.
\begin{proof} Suppose \NotGS[3] is true, i.e., $\Sover[2]\cap\Dover[3]\NotEqualEmpty$. Choose the most upstream $e^{23}_u$ and most downstream $e^{23}_v$ edges in $\Sover[2]\cap\Dover[3]$. Then, the channel gains $\ChGANA[3][1]$, $\ChGANA[1][2]$, and $\ChGANA[3][2]$ can be expressed as $\ChGANA[3][1]\!=\ChG[e^{23}_u][e_{s_1}]\ChG[e^{23}_v][e^{23}_u]\ChG[e_{d_3}][e^{23}_v]$,
$\ChGANA[1][2]\!=\ChG[e^{23}_u][e_{s_2}]\ChG[e^{23}_v][e^{23}_u]$ $\ChG[e_{d_1}][e^{23}_v]$, and $\ChGANA[3][2]\!= \ChG[e^{23}_u][e_{s_2}]\ChG[e^{23}_v][e^{23}_u]\ChG[e_{d_3}][e^{23}_v]$. Moreover, we must have both $\GCD[{\!\ChG[e^{23}_u][e_{s_1}]}][{\ChG[e^{23}_u][e_{s_2}]\!}]\PolyEqual 1$ and $\GCD[{\!\ChG[e_{d_3}][e^{23}_v]}][{\ChG[e_{d_1}][e^{23}_v]\!}] \PolyEqual 1$ otherwise it violates \LemRef{Lem3} and/or $e^{23}_u$ (resp. $e^{23}_v$) being the most upstream (resp. downstream) edge among $\Sover[2]\cap\Dover[3]$. For example, if $\GCD[{\ChG[e^{23}_u][e_{s_1}]}][\;{\ChG[e^{23}_u][e_{s_2}]}]\PolyNotEqual 1$, then by \LemRef{Lem7} and the assumption $e^{23}_u\!\in\!\Sover[2]\CAP\Dover[3]\!\subset\!\Dover[3]$, there must exist an edge $e\!\in\!\Dover[3]$ such that $e \PREC e^{23}_u$. If such edge $e$ is also in $\Sover[2]$, then this $e$ violates the construction that $e^{23}_u$ is the most upstream edge of $\Sover[2]\cap\Dover[3]$. If such edge $e$ is not in $\Sover[2]$, then it contradicts the conclusion in \LemRef{Lem3}.


We now prove Property~3 of \NotGS[3]. Suppose that at least one of $\{e^{23}_u,e^{23}_v\}$ is not an $1$-edge cut separating $s_1$ and $\head[e^{23}_v]$. Say $e^{23}_u\!\not\in\!\onecut[s_1][{\head[e^{23}_v]}]$, then $s_1$ can reach $\head[e^{23}_v]$ without using $e^{23}_u$. Since $\head[e^{23}_v]$ reaches $d_3$, we can create an \FromTo[1][3] path not using $e^{23}_u$. This contradicts the construction that $e^{23}_u\!\in\!\Sover[2]\cap\Dover[3]\!\subset\!\Dover[3]$. Similarly, we can also prove that $e^{23}_v\!\not\in\!\onecut[s_1][{\head[e^{23}_v]}]$ leads to a contradiction. Therefore, we have proven  $\{e^{23}_u,e^{23}_v\}\!\subset\!\onecut[s_1][{\head[e^{23}_v]}]$. Symmetrically applying the above arguments, we can also prove that $\{e^{23}_u,e^{23}_v\}\!\subset\!\onecut[{\tail[e^{23}_u]}][d_1]$.

Now consider an \FromTo[1][1] path $P$ such that there exists one vertex $w\!\in\!P$ satisfying $\tail[e^{23}_u]\PRECEQ w\PRECEQ \head[e^{23}_v]$. If the path of interest $P$ does not use $e^{23}_u$ and $w\!=\!\tail[e^{23}_u]$, then $\tail[e^{23}_u]$ can follow $P$ to $d_1$ without using $e^{23}_u$, which contradicts $e^{23}_u\!\in\!\onecut[{\tail[e^{23}_u]}][d_1]$. If $P$ does not use $e^{23}_u$ and $\tail[e^{23}_u]\PREC w\PRECEQ \head[e^{23}_v]$, then $s_1$ can follow $P$ to $w$ and reach $\head[e^{23}_v]$ without using $e^{23}_u$, which contradicts $e^{23}_u\!\in\!\onecut[s_1][{\head[e^{23}_v]}]$. By the similar arguments, we can also prove the case when $P$ does not use $e^{23}_v$ leads to a contradiction. Therefore, we must have $\{e^{23}_u,e^{23}_v\}\subset P$. The proof is complete.
\end{proof}

By swapping the roles of $s_2$ and $s_3$, and the roles of $d_2$ and $d_3$, the above proofs can also be used to prove Properties 1 and 2 of \GS[4] and Properties~1, 2, and 3, of \NotGS[4].

\section{Proofs of \NS[1] to \NS[9]}\label{ProofsNS[1]-NS[9]}

We prove \NS[1] as follows.
\begin{proof} Instead of proving directly, we prove \HS[2]\IMPLY\HS[1] and use the existing result of ``\LNR\AND\GS[1]\OPPLY\HS[1]" established in the proof of \PropRef{Prop5}. \HS[2]\IMPLY\HS[1] is straightforward since $\GSP[1][\left(\!1\!\right)](\netvar)$ is a subset of the polynomials $\GSP[1][\left(\!\Nid\!\right)](\netvar)$ (multiplied by a common factor) and whenever $\GSP[1][\left(\!\Nid\!\right)](\netvar)$ is linearly independent, so is $\GSP[1][\left(\!1\!\right)](\netvar)$. The proof is thus complete.
\end{proof}

%

We prove \NS[2] as follows.
\begin{proof} We prove an equivalent relationship: (\NotLNR)\OR\\(\NotGS[1])\IMPLY(\NotHStilde[1]). Consider $\GSPtilde[1][\left(\!1\!\right)](\netvar)$ as in \Ref{GSP2^1_NEW}. Suppose $\GANA$ satisfies (\NotLNR)\OR(\NotGS[1]), which means $\GANA$ satisfies either $\LReqVAR$ or $\ChGANA[1][1]\ChGANA[3][2]\PolyEqual\ChGANA[1][2]\ChGANA[3][1]$ or $\ChGANA[1][1]\ChGANA[2][3]\PolyEqual\ChGANA[1][3]$ $\ChGANA[2][1]$. If $\LReqVAR$, then we notice that $\ChGANA[1][2]\ChGANA[3][1]\ChGANA[1][3]\BOLDa\,\PolyEqual\,\ChGANA[1][2]\ChGANA[3][1]\ChGANA[1][3]\BOLDb$ and $\GSPtilde[1][\left(\!1\!\right)](\netvar)$ is thus linearly dependent. If $\ChGANA[1][1]$ $\ChGANA[3][2]\,\PolyEqual\ChGANA[1][2]\ChGANA[3][1]$, then we notice $\ChGANA[1][1]\ChGANA[3][2]\ChGANA[1][3]\BOLDa\,\PolyEqual \,\ChGANA[1][2]\ChGANA[3][1]\ChGANA[1][3]$ $\BOLDa$. Similarly if $\ChGANA[1][1]\ChGANA[2][3]\,\PolyEqual\,\ChGANA[1][3]\ChGANA[2][1]$, then we have $\ChGANA[1][1]\ChGANA[3][2]\ChGANA[1][3]$ $\BOLDa \,\PolyEqual\, \ChGANA[1][2]\ChGANA[3][1]\ChGANA[1][3]\BOLDb$. The proof is thus complete.
\end{proof}

Following similar arguments used in proving \NS[2], i.e., \HStilde[2]\\\noindent\IMPLY\HStilde[1], one can easily prove \NS[3].

We prove \NS[4] as follows.
\begin{proof} (\NotGS[2])\AND\GS[3]\AND\GS[4] implies that $s_1$ cannot reach $d_1$ on $\GANA$. This violates the definition (iv) of the 3-unicast ANA network.
\end{proof}

We prove \NS[5] as follow.
\begin{proof} We prove an equivalent relationship: (\NotGS[2])\AND \\
\noindent (\NotGS[3])\AND\GS[4]\IMPLY(\NotGS[1]). \!Suppose (\NotGS[2])\AND(\NotGS[3])\AND\GS[4] is true.
Then the most upstream edge of $\Sover[2]\cap\Dover[3]$ is an $1$-edge cut separating $s_1$ and $d_1$. Therefore we have $\EC[\{s_1,s_2\}][\{d_1,d_3\}]=1$ and thus by \CorRef{Cor1}, $\ChGANA[1][1]\ChGANA[3][2]\PolyEqual\ChGANA[1][2]\ChGANA[3][1]$. This further implies that \GS[1] is false.
\end{proof}

By swapping the roles of $s_2$ and $s_3$, and the roles of $d_2$ and $d_3$, the above \NS[5] proof can also be used to prove \NS[6].

We prove \NS[7] as follows.
\begin{proof} Suppose \LNR\AND(\NotGS[3])\AND(\NotGS[4])\AND(\NotGS[5]) is true. From \LNR\;being true, any $\Sover[2]\cap\Dover[3]$ edge and any $\Sover[3]\cap\Dover[2]$ edge must be distinct, otherwise (if there exists an edge $e\!\in\!\Sover[2]\cap\Sover[3]\cap\Dover[2]\cap\Dover[3]$) it contradicts the assumption \LNR\;by \PropRef{Prop4}. From \GS[5] being false, we have either $e^{23}_u\!=\!e^{32}_u$ or both $e^{23}_u$ and $e^{32}_u$ are not reachable from each other. But $e^{23}_u\!=\!e^{32}_u$ cannot be true by the assumption \LNR.

Now we prove \GS[6], i.e., any vertex $w'$ where $\tail[e^{23}_u]\PRECEQ w' \PRECEQ\head[e^{23}_v]$ and any vertex $w''$ where $\tail[e^{32}_u]\PRECEQ w'' \PRECEQ\head[e^{32}_v]$ are not reachable from each other. Suppose not and assume that some vertex $w'$ satisfying $\tail[e^{23}_u]\PRECEQ w' \PRECEQ\head[e^{23}_v]$ and some vertex $w''$ satisfying $\tail[e^{32}_u]\PRECEQ w'' \PRECEQ\head[e^{32}_v]$ are reachable from each other. Since $s_1$ can reach $\tail[e^{23}_u]$ or $\tail[e^{32}_u]$ and $d_1$ can be reached from $\head[e^{23}_v]$ or $\head[e^{32}_v]$ by Property~1 of \NotGS[3] and \NotGS[4], we definitely have an \FromTo[1][1] path $P$ who uses both $w'$ and $w''$. The reason is that if $w'\PRECEQ w''$, then $s_1$ can first reach $\tail[e^{23}_u]$, visit $w'$, $w''$, and $\head[e^{32}_v]$, and finally arrive at $d_1$. The case when $w''\PRECEQ w'$ can be proven by symmetry. By Property~3 of \NotGS[3], such path must use $\{e^{23}_u, e^{23}_v\}$. Similarly by Property~3 of \NotGS[4], such path must also use $\{e^{32}_u, e^{32}_v\}$. Together with the above discussion that any $\Sover[2]\cap\Dover[3]$ edge and any $\Sover[3]\cap\Dover[2]$ edge are distinct, this implies that all four edges $\{e^{23}_u, e^{23}_v, e^{32}_u, e^{32}_v\}$ are not only distinct but also used by a single path $P$. However, this contradicts the assumption \LNR\AND(\NotGS[5]) that $e^{23}_u$ and $e^{32}_u$ are not reachable from each other. 
\end{proof}

We prove \NS[8] as follows.
\begin{proof} Suppose \GS[1]\AND(\NotGS[2])\AND(\NotGS[3])\AND(\NotGS[4])\AND\GS[5] is true. Consider $e^{23}_u$ and $e^{32}_u$, the most upstream edges of $\Sover[2]\cap\Dover[3]$ and $\Sover[3]\cap\Dover[2]$, respectively. Say we have $e^{23}_u\PREC e^{32}_u$. Then \NotGS[2] implies that removing $e^{23}_u$ will disconnect $s_1$ and $d_1$. Therefore, $e^{23}_u\!\in\!\Sover[2]\cap\Dover[3]$ also belongs to $\onecut[s_1][d_1]$. This further implies that we have $\EC[\{s_1,s_2\}][\{d_1,d_3\}]\!=\!1$ and thus $\GANA$ satisfies $\ChGANA[1][1]\ChGANA[3][2]\PolyEqual\ChGANA[3][1]\ChGANA[1][2]$. However, this contradicts the assumption that \GS[1] is true. Similar arguments can be applied to show that the case when $e^{32}_u \PREC e^{23}_u$ also contradicts \GS[1]. The proof of \NS[8] is thus complete.
\end{proof}

We prove \NS[9] as follows.
\begin{proof} Suppose that (\NotGS[2])\AND(\NotGS[3])\AND(\NotGS[4])\AND(\NotGS[5])\AND\\\noindent\GS[6] is true. Consider $e^{23}_u$ and $e^{32}_u$, the most upstream edges of $\Sover[2]\cap\Dover[3]$ and $\Sover[3]\cap\Dover[2]$, respectively. From (\NotGS[5])\AND\GS[6] being true, one can see that $e^{23}_u$ and $e^{32}_u$ are not only distinct but also not reachable from each other. Thus by \NotGS[2] being true, $\{e^{23}_u, e^{32}_u\}$ constitutes an edge cut separating $s_1$ and $d_1$. Note from Property 1 of \NotGS[3] and \NotGS[4] that $s_1$ can reach $d_1$ through either $e^{23}_u$ or $e^{32}_u$. Since $e^{23}_u$ and $e^{32}_u$ are not reachable from each other, both have to be removed to disconnect $s_1$ and $d_1$ (removing only one of them is not enough).

From \GS[6] being true, any vertex $w'$ where $\tail[e^{23}_u]\PRECEQ w' \PRECEQ\head[e^{23}_v]$ and any vertex $w''$ where $\tail[e^{32}_u]\PRECEQ$ \noindent$w'' \PRECEQ\head[e^{32}_v]$ are not reachable from each other. Thus $e^{23}_u$ (resp. $e^{32}_u$) cannot reach $e^{32}_v$ (resp. $e^{23}_v$). Moreover, $e^{23}_v$ and $e^{32}_v$ are not only distinct but also not reachable from each other. This implies that $e^{23}_u$ (resp. $e^{32}_u$) can only reach $e^{23}_v$ (resp. $e^{32}_v$) if $e^{23}_u\neq e^{23}_v$ (resp. $e^{32}_u\neq e^{32}_v$). Then the above discussions further that imply $\{e^{23}_v,e^{32}_v\}$ is also an edge cut separating $s_1$ and $d_1$.



Let $m'_{11} = \ChG[e^{23}_u][e_{s_1}]\ChG[e^{23}_v][e^{23}_u]\ChG[e_{d_1}][e^{23}_v]$, which takes into account the overall path gain from $s_1$ to $d_1$ for all paths that use both $e^{23}_u$ and $e^{23}_v$. Similarly denote $m''_{11} = \ChG[e^{32}_u][e_{s_1}]\ChG[e^{32}_v][e^{32}_u]$ $\ChG[e_{d_1}][e^{32}_v]$ to be the overall path gain from $s_1$ to $d_1$ for all paths that use both $e^{32}_u$ and $e^{32}_v$. Then the discussions so far imply that the channel gain $\ChGANA[1][1]$ consists of two polynomials: $\ChGANA[1][1]=m'_{11} + m''_{11}$. Then, it follows that 
\begin{equation*}\begin{split}
& \ChGANA[1][1]\ChGANA[3][2]\ChGANA[2][3] = (m'_{11}+m''_{11})\,\ChGANA[3][2]\ChGANA[2][3]
\\
& = (\ChG[e^{23}_u][e_{s_1}]\ChG[e^{23}_v][e^{23}_u]\ChG[e_{d_1}][e^{23}_v])\,\ChGANA[3][2]\ChGANA[2][3] \\
& \; + (\ChG[e^{32}_u][e_{s_1}]\ChG[e^{32}_v][e^{32}_u]\ChG[e_{d_1}][e^{32}_v]\!)\,\ChGANA[3][2]\ChGANA[2][3]
\\
& = (\ChG[e^{23}_u][e_{s_1}]\ChG[e^{23}_v][e^{23}_u]\ChG[e_{d_1}][e^{23}_v]\!)(\ChG[e^{23}_u][e_{s_2}]\ChG[e^{23}_v][e^{23}_u]\ChG[e_{d_3}][e^{23}_v]\!)\,\ChGANA[2][3] \\
& \; + (\ChG[e^{32}_u][e_{s_1}]\ChG[e^{32}_v][e^{32}_u]\ChG[e_{d_1}][e^{32}_v]\!)\,\ChGANA[3][2]\,(\ChG[e^{32}_u][e_{s_3}]\ChG[e^{32}_v][e^{32}_u]\ChG[e_{d_2}][e^{32}_v]\!) \\
& = (\ChG[e^{23}_u][e_{s_1}]\ChG[e^{23}_v][e^{23}_u]\ChG[e_{d_3}][e^{23}_v]\!)\,\ChGANA[2][3]\,(\ChG[e^{23}_u][e_{s_2}]\ChG[e^{23}_v][e^{23}_u]\ChG[e_{d_1}][e^{23}_v]\!) \\
& \; + (\ChG[e^{32}_u][e_{s_1}]\ChG[e^{32}_v][e^{32}_u]\ChG[e_{d_2}][e^{32}_v]\!)\,\ChGANA[3][2]\,(\ChG[e^{32}_u][e_{s_3}]\ChG[e^{32}_v][e^{32}_u]\ChG[e_{d_1}][e^{32}_v]\!) \\
& = \ChGANA[3][1]\ChGANA[2][3]\ChGANA[1][2] + \ChGANA[2][1]\ChGANA[3][2]\ChGANA[1][3] = \BOLDa + \BOLDb.
\end{split}\end{equation*}
where the third and fourth equalities follow from the Property 1 of both \NotGS[3] and \NotGS[4]. The proof is thus complete.
\end{proof}


\section{Proofs of \SS[1] to \SS[10]}\label{ProofsSS[1]-SS[10]}

We prove \SS[1] as follows.
\begin{proof} Suppose \DS[1] is true, that is, $\GANA$ satisfies $\GCD[{\ChGANA[2][1]^{l_1}\ChGANA[3][2]^{l_1}\ChGANA[1][3]^{l_1}}][{\,\ChGANA[2][3]}]\!=\!\ChGANA[2][3]$ for some integer $l_1\!>\!0$. Then $\GANA$ also satisfies $\GCD[{\ChGANA[1][1]\ChGANA[2][1]^{l_1}\ChGANA[3][2]^{l_1}\ChGANA[1][3]^{l_1}}][{\,\ChGANA[2][3]}]\!=\!\ChGANA[2][3]$ obviously. Thus we have \DS[5].
\end{proof}

By swapping the roles of $s_2$ and $s_3$, and the roles of $d_2$ and $d_3$, the proof for \SS[1] can be applied symmetrically to the proof for \SS[2].

We prove \SS[3] as follows.
\begin{proof} Suppose \ES[0]\AND\ES[1]\AND\CS[1] is true. By \ES[0]\AND\ES[1] being true, $\GANA$ of interest satisfies \Ref{E1}.
By the definition of \CS[1], we have $\ist\!<\!\jst$.

By \Ref{E1}, we can divide $\BOLDa^{\ist}$ on both sides. Then we have
\begin{align*}
\sum_{i=\ist}^{\iend} \alpha_{i} \ChGANA[1][1]\ChGANA[3][2]\BOLDb^{\Nid-i}\BOLDa^{i-\ist} & = \sum_{j=\jst}^{\jend}\beta_{j}\ChGANA[3][1]\ChGANA[1][2]\BOLDb^{\Nid-j}\BOLDa^{j-\ist}.
\end{align*}

Since $\ist\!<\!\jst$, each term with non-zero $\beta_j$ in the right-hand side (RHS) has $\BOLDa$ as a common factor. Similarly, each term with non-zero $\alpha_i$ on the left-hand side (LHS) has $\BOLDa$ as a common factor except for the first term (since $\alpha_{\ist}\!\neq\!0$). Therefore the first term $\alpha_{\ist}\ChGANA[1][1]\ChGANA[3][2]\BOLDb^{\Nid-\ist}$ must contain  $\BOLDa\!=\!\ChGANA[3][1]\ChGANA[2][3]\ChGANA[1][2]$ as a factor, which implies $\GCD[{\ChGANA[1][1]\ChGANA[2][1]^{\Nid-\ist}\ChGANA[3][2]^{\Nid-\ist+1}\ChGANA[1][3]^{\Nid-\ist}}][{\,\ChGANA[3][1]\ChGANA[2][3]\ChGANA[1][2]}]\!=\!\ChGANA[3][1]\ChGANA[2][3]$ $\ChGANA[1][2]$. Since $\ist\!<\!\jst\!\leq\!\Nid$, we have $n-\ist\!\geq\!1$. Hence, we have $\GCD[{\ChGANA[1][1]\ChGANA[2][1]^{k}\ChGANA[3][2]^{k+1}\ChGANA[1][3]^{k}}][{\,\ChGANA[3][1]\ChGANA[2][3]\ChGANA[1][2]}]$ $=\ChGANA[3][1]\ChGANA[2][3]\ChGANA[1][2]$ for some integer $k\!\geq\!1$. This observation implies the following two statements. Firstly, $\GCD[{\ChGANA[1][1]\ChGANA[2][1]^{l_4}\ChGANA[3][2]^{l_4}\ChGANA[1][3]^{l_4}}][{\,\ChGANA[3][1]\ChGANA[1][2]}]$ $\!=\!\ChGANA[3][1]\ChGANA[1][2]$
when $l_4\!=\!k+1\!\geq\!2$ and thus we have proven \DS[4]. Secondly,  $\GCD[{\ChGANA[1][1]\ChGANA[2][1]^{l_5}\ChGANA[3][2]^{l_5}\ChGANA[1][3]^{l_5}}][{\,\ChGANA[2][3]}]=\ChGANA[2][3]$ when $l_5\!=\!k+1\!\geq\! 2$ and thus we have proven \DS[5]. The proof is thus complete.
\end{proof}

We prove \SS[4] as follows.
\begin{proof} Suppose \ES[0]\AND\ES[1]\AND\CS[2] is true. Then $\GANA$ of interest satisfies \Ref{E1} and we have $\ist\!>\!\jst$.

We now divide $\BOLDa^{\jst}$ on both sides of \Ref{E1}, which leads to
\begin{align*}
\sum_{i=\ist}^{\iend} \alpha_{i} \ChGANA[1][1]\ChGANA[3][2]\BOLDb^{\Nid-i}\BOLDa^{i-\jst} & = \sum_{j=\jst}^{\jend}\beta_{j}\ChGANA[3][1]\ChGANA[1][2]\BOLDb^{\Nid-j}\BOLDa^{j-\jst}.
\end{align*}

Each term with non-zero $\alpha_i$ on the LHS has $\BOLDa$ as a common factor. Similarly, each term with non-zero $\beta_j$ on the RHS has $\BOLDa$ as a common factor except for the first term (since $\beta_{\jst}\!\neq\!0$). As a result, the first term $\beta_{\jst}\ChGANA[3][1]\ChGANA[1][2]\BOLDb^{\Nid-\jst}$ must contain $\BOLDa\!=\!\ChGANA[3][1]\ChGANA[2][3]\ChGANA[1][2]$ as a factor. This implies that $\GCD[{\BOLDb^{\Nid-\jst}}][{\,\ChGANA[2][3]}]=\ChGANA[2][3]$. Since $\jst\!<\!\ist\!\leq\!\Nid$, we have $\Nid - \jst$ $\!\geq\!1$ and thus $\GCD[{\BOLDb^{k}}][{\,\ChGANA[2][3]}]=\ChGANA[2][3]$ for some positive integer $k$, which is equivalent to \DS[1]. The proof is thus complete.
\end{proof}

We prove \SS[5] as follows.
\begin{proof} Suppose \GS[1]\AND\ES[0]\AND\ES[1]\AND\CS[3] is true. By \ES[0]\AND\ES[1] being true, $\GANA$ of interest satisfies \Ref{E1}. Since $\ist\!=\!\jst$, we can divide $\BOLDa^{\ist}\!=\!\BOLDa^{\jst}$ on both sides of \Ref{E1}, which leads to
\begin{align*}
\sum_{i=\ist}^{\iend} \alpha_{i} \ChGANA[1][1]\ChGANA[3][2]\BOLDb^{\Nid-i}\BOLDa^{i-\ist} & = \sum_{j=\jst}^{\jend}\beta_{j}\ChGANA[3][1]\ChGANA[1][2]\BOLDb^{\Nid-j}\BOLDa^{j-\jst}.
\end{align*}

Note that if $\ist\!=\!\jst\!=\!\Nid$ meaning that $\ist\!=\!\jst\!=\!\iend\!=\!\jend\!=\!\Nid$, then \Ref{E1} reduces to $\ChGANA[1][1]\ChGANA[3][2]\,\PolyEqual\ChGANA[3][1]\ChGANA[1][2]$ (since $\alpha_{\ist}\!\neq\!0$ and $\beta_{\jst}\!\neq\!0$). This contradicts the assumption \GS[1].

Thus for the following, we only consider the case when $\ist\!=\!\jst\!\leq\!\Nid-1$. Note that each term with non-zero $\beta_j$ on the RHS has a common factor $\ChGANA[3][1]\ChGANA[1][2]$. Similarly, each term with non-zero $\alpha_i$ on the LHS has a common factor $\BOLDa=\ChGANA[3][1]\ChGANA[2][3]\ChGANA[1][2]$ except for the first term ($i\!=\!\ist$). As a result, the first term $\alpha_{\ist} \ChGANA[1][1]\ChGANA[3][2]\BOLDb^{\Nid-\ist}$ must contain $\ChGANA[3][1]\ChGANA[1][2]$ as a factor. Since $\ist\!\leq\!\Nid-1$, we have $\GCD[{\ChGANA[1][1]\ChGANA[2][1]^k\ChGANA[3][2]^{k+1}\ChGANA[1][3]^k}][\,{\ChGANA[3][1]\ChGANA[1][2]}]=\ChGANA[3][1]\ChGANA[1][2]$ for some integer $k\!\geq\!1$. Therefore, we have \DS[4].
\end{proof}

We prove \SS[6] as follows.
\begin{proof} Suppose \ES[0]\AND\ES[1]\AND\CS[4] is true. By \ES[0]\AND\ES[1] being true, $\GANA$ of interest satisfies \Ref{E1}. Since $\iend\!<\!\jend$, we can divide $\BOLDb^{\Nid-\jend}$ on both sides of \Ref{E1}. Then, we have
\begin{align*}
\sum_{i=\ist}^{\iend} \alpha_{i} \ChGANA[1][1]\ChGANA[3][2]\BOLDb^{\jend-i}\BOLDa^{i} & = \sum_{j=\jst}^{\jend}\beta_{j}\ChGANA[3][1]\ChGANA[1][2]\BOLDb^{\jend-j}\BOLDa^{j}.
\end{align*}

Each term with non-zero $\alpha_i$ on the LHS has $\BOLDb$ as a common factor. Similarly, each term with non-zero $\beta_j$ on the RHS has $\BOLDb$ as a common factor except for the last term (since $\beta_{\jend}\!\neq\!0$). Thus, the last term $\beta_{\jend}\ChGANA[3][1]\ChGANA[1][2]\BOLDa^{\jend}$ must be divisible by $\BOLDb\!=\!\ChGANA[2][1]\ChGANA[3][2]\ChGANA[1][3]$, which implies that $\GCD[{\ChGANA[3][1]^{k+1}\ChGANA[2][3]^{k}\ChGANA[1][2]^{k+1}}][{\,\ChGANA[2][1]\ChGANA[3][2]\ChGANA[1][3]}]\!=\!\ChGANA[2][1]\ChGANA[3][2]\ChGANA[1][3]$ for some integer $k\!=\!\jend \!\geq\! \iend+1 \!\geq\! 1$. This observation has two implications. Firstly, $\GCD[{\ChGANA[1][1]\ChGANA[3][1]^{l_3}\ChGANA[2][3]^{l_3}\ChGANA[1][2]^{l_3}}][{\,\ChGANA[2][1]\ChGANA[1][3]}]\!=\!\ChGANA[2][1]\ChGANA[1][3]$
for some positive integer $l_3\!=\!k+1$ and thus we have proven \DS[3]. Secondly, we also have $\GCD[{\ChGANA[3][1]^{l_2}\ChGANA[2][3]^{l_2}\ChGANA[1][2]^{l_2}}][{\,\ChGANA[3][2]}]=\ChGANA[3][2]$ for some positive integer $l_2\!=\!k+1$ and thus we have proven \DS[2]. The proof is thus complete.
\end{proof}

We prove \SS[7] as follows.
\begin{proof} Suppose \ES[0]\AND\ES[1]\AND\CS[5] is true. By \ES[0]\AND\ES[1] being true, $\GANA$ of interest satisfies \Ref{E1}. Since $\iend\!>\!\jend$, we can divide $\BOLDb^{\Nid-\iend}$ on both sides of \Ref{E1}. Then we have
\begin{align*}
\sum_{i=\ist}^{\iend} \alpha_{i} \ChGANA[1][1]\ChGANA[3][2]\BOLDb^{\iend-i}\BOLDa^{i} & = \sum_{j=\jst}^{\jend}\beta_{j}\ChGANA[3][1]\ChGANA[1][2]\BOLDb^{\iend-j}\BOLDa^{j}.
\end{align*}

Each term on the RHS has $\BOLDb$ as a common factor. Similarly, each term on the LHS has $\BOLDb$ as a common factor except for the last term (since $\alpha_{\iend}\!\neq\!0$). Thus, the last term $\alpha_{\iend}\ChGANA[1][1]\ChGANA[3][2]\BOLDa^{\iend}$
must be divisible by $\BOLDb\!=\!\ChGANA[2][1]\ChGANA[3][2]\ChGANA[1][3]$, which implies that $\GCD[{\ChGANA[1][1]\BOLDa^{k}}][{\,\ChGANA[2][1]\ChGANA[1][3]}]=\ChGANA[2][1]\ChGANA[1][3]$ for some integer $k\!=\!\iend$
\\\noindent $\geq\!\jend+1 \!\geq\! 1$. This further implies \DS[3].
\end{proof}

We prove \SS[8] as follows.
\begin{proof} Suppose \GS[1]\AND\ES[0]\AND\ES[1]\AND\CS[6] is true. By \ES[0]\AND\ES[1] being true, $\GANA$ of interest satisfies \Ref{E1}. Since \GS[5], $\iend=\jend$, is true, define $t\!=\!\iend\!=\!\jend$ and $m\!=\!\text{min}\{\ist,\jst\}$. Then by dividing $\BOLDb^{\Nid-t}$ and $\BOLDa^{m}$ from both sides of \Ref{E1}, we have
\begin{align}\label{Lem15C1}
\sum_{i=\ist}^{t} \alpha_{i} \ChGANA[1][1]\ChGANA[3][2]\BOLDb^{t-i}\BOLDa^{i-m} = \sum_{j=\jst}^{t}\beta_{j}\ChGANA[3][1]\ChGANA[1][2]\BOLDb^{t-j}\BOLDa^{j-m}.
\end{align}

Each term with non-zero $\alpha_i$ on the LHS has a common factor $\ChGANA[3][2]$. We first consider the case of $m\!<\!t$. Then each term with non-zero $\beta_j$ on the RHS has a common factor $\BOLDb=\ChGANA[2][1]\ChGANA[3][2]\ChGANA[1][3]$ except the last term $\beta_{t} \ChGANA[3][1]\ChGANA[1][2]\BOLDa^{t-m}$. As a result, $\beta_{t} \ChGANA[3][1]\ChGANA[1][2]\BOLDa^{t-m}$ must be divisible by $\ChGANA[3][2]$, which implies that $\GCD[{\ChGANA[3][1]^{k+1}\ChGANA[2][3]^{k}\ChGANA[1][2]^{k+1}}][{\,\ChGANA[3][2]}]=\ChGANA[3][2]$ for some $k\!=\!t-m\!\geq\!1$. This implies \DS[2].

On the other hand, we argue that we cannot have $m\!=\!t$. If so, then $\ist\!=\!\jst\!=\!\iend\!=\!\jend$ and \Ref{E1} reduces to $\ChGANA[1][1]\ChGANA[3][2]\PolyEqual\ChGANA[3][1]\ChGANA[1][2]$. However, this contradicts the assumption \GS[1].
The proof is thus complete.
\end{proof}

We prove \SS[9] as follows.
\begin{proof} Suppose \ES[0]\AND\ES[1]\AND\CS[0] is true. By \ES[0]\AND\ES[1] being true, $\GANA$ of interest satisfies \Ref{E1} with not-being-all-zero coefficients $\{\alpha_i\}_{i=0}^{\Nid}$ and $\{\beta_j\}_{j=0}^{\Nid}$. Our goal is to prove that, when $\ist\!>\!\jst$ and $\iend\!=\!\jend$, we have \ES[2]: (i) $\alpha_k\!\neq\!\beta_k$ for some $k\!\in\!\{0,...,\Nid\}$; and (ii) either $\alpha_0\!\neq\!0$ or $\beta_{\Nid}\!\neq\!0$ or $\alpha_{k}\!\neq\!\beta_{k-1}$ for some $k\!\in\!\{1,...,\Nid\}$.

Note that (i) is obvious since $\ist\!>\!\jst$. Note by definition that $\ist$ (resp. $\jst$) is the smallest $i$ (resp. $j$) among $\alpha_i\!\neq\!0$ (resp. $\beta_j\!\neq\!0$). Then, $\ist\!>\!\jst$ implies that $\alpha_{\jst}\!=\!0$ while $\beta_{\jst}\!\neq\!0$. Thus simply choosing $k\!=\!\jst$ proves (i).

We now prove (ii). Suppose (ii) is false such that $\alpha_0\!=\!0$; $\beta_{\Nid}\!=\!0$; and $\alpha_{k}\!=\!\beta_{k-1}$ for all $k\!\in\!\{1,...,\Nid\}$. Since $\beta_{\Nid}\!=\!0$, by definition, $\jend$ must be less than or equal to $\Nid-1$. Since we assumed $\iend\!=\!\jend$, this in turn implies that $\alpha_{\Nid}\!=\!0$. Then $\beta_{\Nid-1}\!$ must be zero because $\beta_{\Nid-1}\!=\!\alpha_{\Nid}$. Again this implies $\jend\!\leq\!\Nid-2$. Applying iteratively, we have all zero coefficients $\{\alpha_i\}_{i=0}^{\Nid}$ and $\{\beta_j\}_{j=0}^{\Nid}$. However, this contradicts the assumption \ES[0] since we assumed that at least one of each coefficient group is non-zero. The proof of \SS[9] is thus complete.
\end{proof}

%

We prove \SS[10] as follows.
\begin{proof} Suppose \GS[1]\AND\ES[0]\AND\ES[1]\AND(\NotCS[0]) is true. By \ES[0]\AND\ES[1] being true, $\GANA$ of interest satisfies \Ref{E1} with some values of $\ist$, $\jst$, $\iend$, and $\jend$. Investigating their relationships, there are total 9 possible cases that $\GANA$ can satisfy \Ref{E1}: (i) $\ist < \jst$ and $\iend < \jend$; (ii) $\ist < \jst$ and $\iend > \jend$; (iii) $\ist < \jst$ and $\iend = \jend$; (iv) $\ist > \jst$ and $\iend < \jend$; (v) $\ist > \jst$ and $\iend > \jend$; (vi) $\ist > \jst$ and $\iend = \jend$; (vii) $\ist = \jst$ and $\iend < \jend$; (viii) $\ist = \jst$ and $\iend > \jend$; and (ix) $\ist = \jst$ and $\iend = \jend$.

Note that \CS[0] is equivalent to (vi). Since we assumed that \CS[0] is false, $\GANA$ can satisfy \Ref{E1} with all the possible cases except (vi). We also note that (i) is equivalent to \CS[1]\AND\CS[4], (ii) is equivalent to \CS[1]\AND\CS[5], etc. By applying \SS[3] and \SS[6], we have

\noindent $\bullet$ \ES[0]\AND\ES[1]\AND(i)\IMPLY(\DS[4]\AND\DS[5])\AND(\DS[2]\AND\DS[3]).

By similarly applying \SS[3] to \SS[8], we have the following relationships:

\noindent $\bullet$ \ES[0]\AND\ES[1]\AND(ii)\IMPLY(\DS[4]\AND\DS[5])\AND\DS[3].

\noindent $\bullet$ \GS[1]\AND\ES[0]\AND\ES[1]\AND(iii)\IMPLY(\DS[4]\AND\DS[5])\AND\DS[2].

\noindent $\bullet$ \ES[0]\AND\ES[1]\AND(iv)\IMPLY\DS[1]\AND(\DS[2]\AND\DS[3]).

\noindent $\bullet$ \ES[0]\AND\ES[1]\AND(v)\IMPLY\DS[1]\AND\DS[3].


\noindent $\bullet$ \GS[1]\AND\ES[0]\AND\ES[1]\AND(vii)\IMPLY\DS[4]\AND(\DS[2]\AND\DS[3]).

\noindent $\bullet$ \GS[1]\AND\ES[0]\AND\ES[1]\AND(viii)\IMPLY\DS[4]\AND\DS[3].

\noindent $\bullet$ \GS[1]\AND\ES[0]\AND\ES[1]\AND(ix)\IMPLY\DS[4]\AND\DS[2].

Then, the above relationships jointly imply \GS[1]\AND\ES[0]\AND\ES[1]\\\AND(\NotCS[0])\IMPLY(\DS[1]\AND\DS[3])\OR(\DS[2]\AND\DS[4])\OR\noindent(\DS[3]\AND\DS[4]). The proof of \SS[10] is thus complete.
\end{proof}

\section{Proof of \SS[11]}\label{ProofSS[11]}

\subsection{The third set of logic statements}

To prove \SS[11], we need the third set of logic statements.

\noindent $\bullet$ \GS[7]{\bf:}\label{ref:GS[7]} There exists an edge $\tilde{e}$ such that both the following conditions are satisfied: (i) $\tilde{e}$ can reach $d_1$ but cannot reach any of $d_2$ and $d_3$; and (ii) $\tilde{e}$ can be reached from $s_1$ but not from any of $s_2$ nor $s_3$.

\noindent $\bullet$ \GS[8]{\bf:}\label{ref:GS[8]} $\Sover[3]\NotEqualEmpty$ and $\Dover[2]\NotEqualEmpty$.

The following logic statements are well-defined if and only if \GS[4]\AND\GS[8] is true. Recall the definition of $e^\ast_3$ and $e^\ast_2$ when \GS[4] is true.

\noindent $\bullet$ \GS[9]{\bf:}\label{ref:GS[9]} $\{e^\ast_3,e^\ast_2\}\subset \onecut[s_2][d_3]$.

\noindent $\bullet$ \GS[10]{\bf:}\label{ref:GS[10]} $e^\ast_3\in \onecut[s_2][d_1]$.

\noindent $\bullet$ \GS[11]{\bf:}\label{ref:GS[11]} $e^\ast_3\in \onecut[s_1][d_1]$.

\noindent $\bullet$ \GS[12]{\bf:}\label{ref:GS[12]} $e^\ast_2\in \onecut[s_1][d_3]$.

\noindent $\bullet$ \GS[13]{\bf:}\label{ref:GS[13]} $e^\ast_2\in \onecut[s_1][d_1]$.

The following logic statements are well-defined if and only if \NotGS[4] is true. Recall the definition of $e^{32}_u$ and $e^{32}_v$ when \NotGS[4] is true.

\noindent $\bullet$ \GS[14]{\bf:}\label{ref:GS[14]} $e^{32}_u\!\not\in\!\onecut[s_1][d_1]$. 

\noindent \makebox[1cm][l]{$\bullet$ \GS[15]{\bf:}}\label{ref:GS[15]} Let $\tilde{e}_u$ denote the most downstream edge among $\onecut[s_1][d_1]\cap\onecut[s_1][{\tail[e^{32}_u]}]$. Also let $\tilde{e}_v$ denote the most upstream edge among $\onecut[s_1][d_1]\cap\onecut[{\head[e^{32}_v]}][d_1]$. Then we have (a) $\head[\tilde{e}_u]\PREC\tail[e^{32}_u]$ and $\head[e^{32}_v]\PREC\tail[\tilde{e}_v]$; there exists a \FromTo[1][1] path $P_{11}^\ast$ through $\tilde{e}_u$ and $\tilde{e}_v$ satisfying the following two conditions: (b) $P_{11}^\ast$ is vertex-disjoint from any \FromTo[3][2] path; and (c) there exists an edge $\tilde{e}\in P_{11}^\ast$ where $\tilde{e}_u\PREC\tilde{e}\PREC\tilde{e}_v$ that is not reachable from any of $\{e^{32}_u, e^{32}_v\}$.

%
%
%

\subsection{The skeleton of proving \SS[11]}

We prove the following relationships, which jointly prove \SS[11]. The proofs for the following statements are relegated to \AppRef{ProofsRS[1]-RS[10]}.

\noindent $\bullet$ \RS[1]{\bf:}\label{ref:RS[1]} \DS[1]\IMPLY\GS[8].

\noindent $\bullet$ \RS[2]{\bf:}\label{ref:RS[2]} \GS[4]\AND\GS[8]\AND\DS[1]\IMPLY \GS[9].

\noindent $\bullet$ \RS[3]{\bf:}\label{ref:RS[3]} \GS[4]\AND\GS[8]\AND\GS[9]\AND\DS[3]\IMPLY (\GS[10]\OR\GS[11])\AND(\GS[12]\OR\GS[13]).

\noindent $\bullet$ \RS[4]{\bf:}\label{ref:RS[4]} \GS[4]\AND\GS[8]\AND\GS[9]\AND(\NotGS[10])\AND\GS[11]\AND\ES[0]\IMPLY \CONT.

\noindent $\bullet$ \RS[5]{\bf:}\label{ref:RS[5]} \GS[4]\AND\GS[8]\AND\GS[9]\AND(\NotGS[12])\AND\GS[13]\AND\ES[0]\IMPLY \CONT.

\noindent $\bullet$ \RS[6]{\bf:}\label{ref:RS[6]} \GS[4]\AND\GS[8]\AND\GS[9]\AND\GS[10]\AND\GS[12]\IMPLY (\NotLNR).

\noindent $\bullet$ \RS[7]{\bf:}\label{ref:RS[7]} \GS[1]\AND(\NotGS[4])\IMPLY \GS[14].

\noindent $\bullet$ \RS[8]{\bf:}\label{ref:RS[8]} (\NotGS[4])\AND\GS[14]\IMPLY \GS[15].

\noindent $\bullet$ \RS[9]{\bf:}\label{ref:RS[9]} (\NotGS[4])\AND\GS[14]\AND\DS[3]\IMPLY \GS[7].

\noindent $\bullet$ \RS[10]{\bf:}\label{ref:RS[10]} \GS[7]\AND\ES[0]\IMPLY \CONT.

One can easily verify that jointly \RS[4] to \RS[6] imply
\begin{equation}\label{LRS1}\begin{split}
&\textbf{LNR}\wedge\textbf{G4}\wedge\textbf{G8}\wedge\textbf{G9}\wedge\textbf{E0}\wedge(\textbf{G10}\vee\textbf{G11})\wedge \\
& \qquad\qquad\qquad\qquad\qquad\qquad (\textbf{G12}\vee\textbf{G13})\Rightarrow\,\text{false}.
\end{split}\end{equation}

Together with \RS[3], \Ref{LRS1} reduces to
\begin{equation}\label{LRS2}
\textbf{LNR}\wedge\textbf{G4}\wedge\textbf{G8}\wedge\textbf{G9}\wedge\textbf{E0}\wedge\textbf{D3}\Rightarrow\,\text{false}.
\end{equation}

Jointly with \RS[1] and \RS[2], \Ref{LRS2} further reduces to
\begin{equation}\label{LRS3}
\textbf{LNR}\wedge\textbf{G4}\wedge\textbf{E0}\wedge\textbf{D1}\wedge\textbf{D3}\Rightarrow\,\text{false}.
\end{equation}


In addition, \RS[7], \RS[9], and \RS[10] jointly imply
\begin{equation}\label{LRS5}
\textbf{G1}\wedge(\neg\,\textbf{G4})\wedge\textbf{E0}\wedge\textbf{D3}\Rightarrow\,\text{false}.
\end{equation}


One can easily verify that jointly \Ref{LRS3} and \Ref{LRS5} imply \SS[11]. The skeleton of the proof of \SS[11] is complete.

\section{Proofs of \RS[1] to \RS[10]}\label{ProofsRS[1]-RS[10]}

We prove \RS[1] as follows.
\begin{proof}
Suppose \DS[1] is true. By \CorRef{Cor2}, any channel gain cannot have the other channel gain as a factor. Therefore, $\ChGANA[2][3]$ must be reducible. Furthermore we must have $\GCD[{\ChGANA[2][1]}][{\,\ChGANA[2][3]}]\PolyNotEqual 1$ since $\ChGANA[2][1]$ is the only channel gain in the LHS of \DS[1] that reaches $d_2$. (See the proof of \LemRef{Lem8} for detailed discussion). Similarly, we must have $\GCD[{\ChGANA[1][3]}][{\,\ChGANA[2][3]}]$ $\PolyNotEqual 1$. \LemRef{Lem7} then implies $\Sover[3]\NotEqualEmpty$ and $\Dover[2]\NotEqualEmpty$.
\end{proof}


We prove \RS[2] as follows.
\begin{proof}
Suppose \GS[4]\AND\GS[8]\AND\DS[1] is true. From \GS[4]\AND\GS[8] being true, by definition, $e^\ast_3$ (resp. $e^\ast_2$) is the most downstream (resp. upstream) edge of $\Sover[3]$ (resp. $\Dover[2]$) and $e^\ast_3\PREC e^\ast_2$. 
For the following, we will prove that $\{e^\ast_3,e^\ast_2\}\!\subset\!\onecut[s_2][d_3]$.

We now consider $\ChG[e^\ast_2][e^\ast_3]$, a part of $\ChGANA[2][3]$. From \DS[1] and Property 2 of \GS[4], we have
\begin{align}\label{I.2-3}
\GCD[{\ChGANA[3][2]^{l_1}}][\;{\ChG[e^\ast_2][e^\ast_3]}] = \ChG[e^\ast_2][e^\ast_3],
\end{align}
for some positive integer $l_1$. This implies that $\ChG[e^\ast_2][e^\ast_3]$ is a factor of $\ChGANA[3][2]$. By \PropRef{Prop3}, we have $\{e^\ast_3,e^\ast_2\}\!\subset\!\onecut[s_2][d_3]$. The proof is thus complete.
\end{proof}

We prove \RS[3] as follows.
\begin{proof} Suppose \GS[4]\AND\GS[8]\AND\GS[9]\AND\DS[3] is true. Therefore, the $e^\ast_3$ (resp. $e^\ast_2$) defined in the properties of \GS[4] must also be the most downstream (resp. upstream) edge of $\Sover[3]$ (resp. $\Dover[2]$). Moreover, since $\{e^\ast_3,e^\ast_2\}\!\subset\!\onecut[s_2][d_3]$, we can express $\ChGANA[3][2]$ as $\ChGANA[3][2] = \ChG[e^\ast_3][e_{s_2}]\ChG[e^\ast_2][e^\ast_3]\ChG[e_{d_3}][e^\ast_2]$. For the following, we will prove that $e^\ast_3\!\in\!\onecut[s_1][d_1]\cup\onecut[s_2][d_1]$.

We use the following observation: For any edge $e'\!\in\!\onecut[s_3][d_2]$ that is in the upstream of $e^\ast_2$, there must exist a path from $s_1$ to $\tail[e^\ast_2]$ that does not use such $e'$. Otherwise, $e'\!\in\!\onecut[s_3][d_2]$ is also a $1$-edge cut separating $s_1$ and $d_2$, which contradicts that $e^\ast_2$ is the most upstream edge of $\Dover[2]$.

We now consider $\ChG[e_{d_1}][e^\ast_3]$, a factor of $\ChGANA[1][3]$. From \DS[3] and Property 2 of \GS[4], we have $\GCD[{\ChGANA[1][1]\ChGANA[3][1]^{l_3}\ChGANA[1][2]^{l_3}}][\,{\ChG[e_{d_1}][e^\ast_3]}]=\ChG[e_{d_1}][e^\ast_3]$.
By \PropRef{Prop3}, we must have $e^\ast_3\in\onecut[s_1][d_1]\cup\onecut[s_1][d_3]\cup\onecut[s_2][d_1]$. We also note that by the observation in the beginning of this proof, there exists a path from $s_1$ to $\tail[e^\ast_2]$  not using $e^\ast_3$. Furthermore, $e^\ast_2\in\onecut[s_2][d_3]$ implies that $e^\ast_2$ can reach $d_3$. These jointly shows that there exists a path from $s_1$ through $e^\ast_2$ to $d_3$ without using $e^\ast_3$, which means $e^\ast_3\not\in\onecut[s_1][d_3]$. Therefore, $e^\ast_3$ belongs to $\onecut[s_1][d_1]\cup\onecut[s_2][d_1]$. The proof of $e^\ast_2\!\in\!\onecut[s_1][d_1]\cup\onecut[s_1][d_3]$ can be derived similarly. The proof \RS[3] is thus complete.
\end{proof}

We prove \RS[4] as follows.
\begin{proof}
Assume \GS[4]\AND\GS[8]\AND\GS[9]\AND (\NotGS[10])\AND\GS[11]\AND \ES[0] is true. Recall that $e^\ast_3$ is the most downstream edge in $\Sover[3]$ and $e^\ast_2$ is the most upstream edge in $\Dover[2]$. For the following we construct 8 path segments that interconnects $s_1$ to $s_3$, $d_1$ to $d_3$, and two edges $e^\ast_3$ and $e^\ast_2$.

\noindent \makebox[1cm][l]{$\bullet$ $P_1$:}a path from $s_1$ to $\tail[e^\ast_2]$ without using $e^\ast_3$. This is always possible due to Properties 1 and 2 of \GS[4].

\noindent \makebox[1cm][l]{$\bullet$ $P_2$:}a path from $s_2$ to $\tail[e^\ast_3]$. This is always possible due to \GS[8] and \GS[9] being true.

\noindent \makebox[1cm][l]{$\bullet$ $P_3$:}a path from $s_3$ to $\tail[e^\ast_3]$. This is always possible due to \GS[4] and \GS[8] being true.

\noindent \makebox[1cm][l]{$\bullet$ $P_4$:}a path from $s_2$ to $d_1$ without using $e^\ast_3$. This is always possible due to \GS[10] being false.

\noindent \makebox[1cm][l]{$\bullet$ $P_5$:}a path from $\head[e^\ast_3]$ to $d_1$ without using $e^\ast_2$. This is always possible due to Properties 1 and 2 of \GS[4].

\noindent \makebox[1cm][l]{$\bullet$ $P_6$:}a path from $\head[e^\ast_3]$ to $\tail[e^\ast_2]$. This is always possible due to Property 1 of \GS[4].

\noindent \makebox[1cm][l]{$\bullet$ $P_7$:}a path from $\head[e^\ast_2]$ to $d_2$. This is always possible due to \GS[4] and \GS[8] being true.

\noindent \makebox[1cm][l]{$\bullet$ $P_8$:}a path from $\head[e^\ast_2]$ to $d_3$. This is always possible due to \GS[8] and \GS[9] being true.

\begin{figure}[t]
\centering
\includegraphics[scale=0.2]{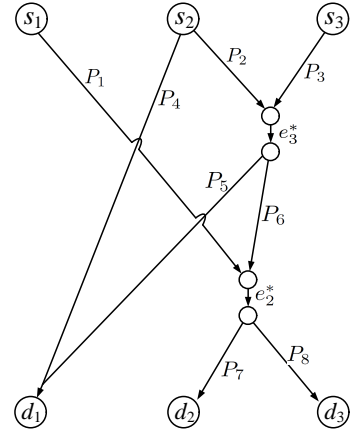}
\caption{The subgraph $\G[']$ of the 3-unicast ANA network $\GANA$ induced by the union of the 8 paths plus two edges $e^\ast_3$ and $e^\ast_2$ in the proof of \RS[4].} \vspace{-0.06\columnwidth}
\label{fig:RS[4]}
\end{figure}

Fig. \ref{fig:RS[4]} illustrates the relative topology of these 8 paths. We now consider the subgraph $\G[']$ induced by the 8 paths and two edges $e^\ast_3$ and $e^\ast_2$. One can easily check that $s_i$ can reach $d_j$ for any $i\!\neq\!j$. In particular, $s_1$ can reach $d_2$ through $P_1 e^\ast_2 P_7$; $s_1$ can reach $d_3$ through $P_1 e^\ast_2 P_8$; $s_2$ can reach $d_1$ through either $P_4$ or $P_2 e^\ast_3 P_5$; $s_2$ can reach $d_3$ through $P_2 e^\ast_3 P_6 e^\ast_2 P_8$; $s_3$ can reach $d_1$ through $P_3 e^\ast_3 P_5$; and $s_3$ can reach $d_2$ through $P_3 e^\ast_3 P_6 e^\ast_2 P_7$.

We first show the following topological relationships: $P_1$ is vertex-disjoint with $P_2$, $P_3$, and $P_4$, respectively, in the induced subgraph $\G[']$. From \GS[9], $\{P_1, P_2\}$ must be vertex-disjoint paths otherwise $s_2$ can reach $d_3$ without using $e^\ast_3\!\in\!\onecut[s_2][d_3]$. Similarly from the fact that $e^\ast_3\!\in\!\Sover[3]$, $\{P_1, P_3\}$ must be vertex-disjoint paths. Also notice that by \GS[11], $e^\ast_3$ is a $1$-edge cut separating $s_1$ and $d_1$ in the original graph. Therefore any \FromTo[1][1] path in the subgraph must use $e^\ast_3$ as well. But by definition, both $P_1$ and $P_4$ do not use $e^\ast_3$ and $s_1$ can reach $d_1$ if they share a vertex. This thus implies that $\{P_1, P_4\}$ are vertex-disjoint paths.

The above topological relationships further imply that $s_1$ cannot reach $d_1$ in the induced subgraph $\G[']$. The reason is as follows. We first note that $P_1$ is the only path segment that $s_1$ can use to reach other destinations, and any \FromTo[1][1] path, if exists, must use path segment $P_1$ in the very beginning. Since $P_1$ ends at $\tail[e^\ast_2]$, using path segment $P_1$ alone is not possible to reach $d_1$. Therefore, if a \FromTo[1][1] path exists, then at some point, it must use one of the other $7$ path segments $P_2$ to $P_8$. On the other hand, we also note that $e^\ast_3\!\in\!\onecut[s_1][d_1]$ and the path segments $P_5$ to $P_8$ are in the downstream of $e^\ast_3$. Therefore, for any \FromTo[1][1] path, if it uses any of the vertices of $P_5$ to $P_8$, it must first go through $\tail[e^\ast_3]$, the end point of path segments $P_2$ and $P_3$. As a result, we only need to consider the scenario in which one of $\{P_2, P_3, P_4\}$ is used by the \FromTo[1][1] path when this path switches from $P_1$ to a new path segment. But we have already showed that $P_1$ and $\{P_2,P_3,P_4\}$ are vertex-disjoint with each other. As a result, no \FromTo[1][1] path can exist. Thus $s_1$ cannot reach $d_1$ on the induced graph $\G[']$.


By \ES[0] being true and \PropRef{Prop2}, any subgraph who contains the source and destination edges (hence $\G[']$) must satisfy \ES[0]. Note that we already showed there is no \FromTo[1][1] path on $\G[']$. Recalling \Ref{E0}, its LHS becomes zero. Thus, we have $g(\{\ChGANA[j][i]:\forall\,(i,j)\!\in\!I_{\textrm{3ANA}}\})\,\psi^{(n)}_\beta(\BOLDb,\BOLDa) = 0$ with at least one non-zero coefficient $\beta_j$. But note also that any channel gain $\ChGANA[j][i]$ where $i\!\neq\!j$ is non-trivial on $\G[']$. Thus $\BOLDb$, $\BOLDa$, and $g(\{\ChGANA[j][i]:\forall\,(i,j)\!\in\!I_{\textrm{3ANA}}\})$ are all non-zero polynomials. Therefore, $\G[']$ must satisfy $\psi^{(n)}_\beta(\BOLDb,\BOLDa) = 0$ with at least one non-zero coefficient $\beta_j$ and this further implies that the set of polynomials $\{\BOLDb^{n},\BOLDb^{n-1}\BOLDa, \cdots, \BOLDb\BOLDa^{n-1}, \BOLDa^{n}\}$ is linearly dependent on $\G[']$. Since this is the Vandermonde form, it is equivalent to that $\LReq$ holds on $\G[']$.

For the following, we further show that in the induced graph $\G[']$, the following three statements are true: (a) $\Sover[2]\cap\Sover[3]\EqualEmpty$; (b) $\Sover[1]\cap\Sover[2]\EqualEmpty$; and (c) $\Sover[1]\cap\Sover[3]\EqualEmpty$, which implies by \PropRef{Prop4} that $\G[']$ must have $\LRneq$. We thus have a contradiction.

(a) $\Sover[2]\cap\Sover[3]\EqualEmpty$ on $\G[']$: Suppose there is an edge $e\!\in\!\Sover[2]\cap\Sover[3]$ on $\G[']$. Since $e\!\in\!\Sover[2]$, such $e$ must belong to $P_4$ and any \FromTo[2][3] path. Since both $e\!\in\!P_4$ and $e^\ast_3\!\not\in\!P_4$ belong to $\onecut[s_2][d_3]$, we have either $e\PREC e^\ast_3$ or $e\SUCC e^\ast_3$. We first note that $e$ must not be in the downstream of $e^\ast_3$. Otherwise, $s_2$ can use $P_4$ to reach $e$ without using $e^\ast_3$ and finally to $d_3$ (since $e\!\in\!\Sover[2])$, which contradicts the assumption of \GS[9] that $e^\ast_3\!\in\!\onecut[s_2][d_3]$. As a result, $e\PREC e^\ast_3$ and any path from $s_2$ to $\tail[e^\ast_3]$ must use $e$. This in turn implies that $P_2$ uses $e$. We now argue that $P_3$ must also use $e$. The reason is that the \FromTo[3][1] path $P_3 e^\ast_3 P_5$ must use $e$ since $e\!\in\!\Sover[3]$ and $e\PREC e^\ast_3$. Then these jointly contradict that $e^\ast_3\!\in\!\Sover[3]$ since $s_3$ can follow $P_3$, switch to $P_4$ through $e$, and reach $d_1$ without using $e^\ast_3$.


(b) $\Sover[1]\cap\Sover[2]\EqualEmpty$ on $\G[']$: Suppose there is an edge $e\!\in\!\Sover[1]\cap\Sover[2]$. Since $e\!\in\!\Sover[2]$, by the same arguments as used in proving (a), we know that $e\PREC e^\ast_3$ and $e$ must be used by both $P_2$ and $P_4$. We then note that $e$ must also be used by the \FromTo[1][3] path $P_1 e^\ast_2 P_8$ since $e\!\in\!\Sover[1]$. This in turn implies that $P_1$ must use $e$ since $e\PREC e^\ast_3 \PREC e^\ast_2$. However, these jointly contradict the fact $P_1$ and $\{P_2,P_3,P_4\}$ being vertex-disjoint, which were proved previously. The proof of (b) is complete.


(c) $\Sover[1]\cap\Sover[3]\EqualEmpty$ on $\G[']$: Suppose there is an edge $e\!\in\!\Sover[1]\cap\Sover[3]$. We then note that $e$ must be used by the \FromTo[1][3] path $P_1 e^\ast_2 P_8$ since $e\!\in\!\Sover[1]$. Then $e$ must be either $e^\ast_3$ or used by $P_3$ since $e^\ast_3$ is the most downstream edge of $\Sover[3]$. Therefore, $P_1$ must use $e$ (since $e^\ast_3\PREC e^\ast_2$). In addition, since by our construction $P_1$ does not use $e^\ast_3$, it is $P_3$ who uses $e$. However, $P_1$ and $P_3$ are vertex-disjoint with each other, which contradicts what we just derived $e\!\in\!P_1\cap P_3$. The proof of (c) is complete.
\end{proof}

We prove \RS[5] as follows.
\begin{proof} We notice that \RS[5] is a symmetric version of \RS[4] by simultaneously reversing the roles of sources and destinations and relabeling flow $2$ by flow $3$, i.e., we swap the roles of the following three pairs: $(s_1,d_1)$, $(s_2, d_3)$, and $(s_3,d_2)$. We can then reuse the proof of \RS[4].
\end{proof}

We prove \RS[6] as follows.
\begin{proof} Assume \GS[4]\AND\GS[8] is true and recall that $e^\ast_3$ is the most downstream edge in $\Sover[3]$ and $e^\ast_2$ is the most upstream edge in $\Dover[2]$. From \GS[9]\AND\GS[10]\AND\GS[12] being true, we further have $e^\ast_3\!\in\!\onecut[s_2][d_1]\cap\onecut[s_2][d_3]$ and $e^\ast_2\!\in\!\onecut[s_1][d_3]\cap\onecut[s_2][d_3]$. This implies that $e^\ast_3$ (resp. $e^\ast_2$) belongs to $\Sover[2]\cap\Sover[3]$ (resp. $\Dover[2]\cap\Dover[3]$). We thus have \NotLNR\;by \PropRef{Prop4}.
\end{proof}

We prove \RS[7] as follows.
\begin{proof} We prove an equivalent relationship: (\NotGS[4])\AND\\\noindent(\NotGS[14])\IMPLY(\NotGS[1]). From \GS[4] being false, we have $e^{32}_u\!\in\!\Sover[3]\CAP\Dover[2]\!\subset\!\onecut[s_3][d_2]\cap\onecut[s_1][d_2]\cap\onecut[s_3][d_1]$. From \GS[14] being false, we have $e^{32}_u\!\in\! \onecut[s_1][d_1]$. As a result, $e^{32}_u$ is a $1$-edge cut separating $\{s_1, s_3\}$ and $\{d_1,d_2\}$. This implies $\ChGANA[1][1]\ChGANA[2][3]\PolyEqual\ChGANA[2][1]\ChGANA[1][3]$ and thus \NotGS[1]. The proof of \RS[7] is thus complete.
\end{proof}

We prove \RS[8] as follows.
\begin{proof} Suppose that (\NotGS[4])\AND\GS[14] is true. From Property 3 of \NotGS[4], any \FromTo[1][1] path who uses a vertex $w$ where $\tail[e^{32}_u]\PRECEQ w \PRECEQ \head[e^{32}_v]$ must use both $e^{32}_u$ and $e^{32}_v$. Since we have $e^{32}_u\!\not\in\onecut[s_1][d_1]$ from \GS[14], there must exist a \FromTo[1][1] path not using $e^{32}_u$. Then, these jointly imply that there exists a \FromTo[1][1] path which does not use any vertex in-between $\tail[e^{32}_u]$ and $\head[e^{32}_v]$. Fix arbitrarily one such path as $P_{11}^\ast$.

If the chosen $P_{11}^\ast$ shares a vertex with any path segment from $s_3$ to $\tail[e^{32}_u]$, then $s_3$ can reach $d_1$ without using $e^{32}_u$, contradicting $e^{32}_u\!\in\!\Sover[3]\cap\Dover[2]\!\subset\!\onecut[s_3][d_1]$. By the similar argument, $P_{11}^\ast$ should not share a vertex with any path segment from $\head[e^{32}_v]$ to $d_2$. Then jointly with the above discussion, we can conclude that $P_{11}^\ast$ is vertex-disjoint with any \FromTo[3][2] path. We thus have proven (b) of \GS[15].

Now consider $\tilde{e}_u$ (we have at least the $s_1$-source edge $e_{s_1}$) and $\tilde{e}_v$ (we have at least the $d_1$-destination edge $e_{d_1}$) defined in \GS[15]. By definition, $\tilde{e}_u\PREC e^{32}_u$ and $e^{32}_v \PREC \tilde{e}_v$, and the chosen $P_{11}^\ast$ must use both $\tilde{e}_u$ and $\tilde{e}_v$. Thus if $\head[\tilde{e}_u]\!=\!\tail[e^{32}_u]$, then this contradicts the above discussion since $\tail[e^{32}_u]\!\in\!P_{11}^\ast$. Therefore, it must be $\head[\tilde{e}_u]\PREC\tail[e^{32}_u]$. Similarly, it must also be $\head[e^{32}_v]\PREC\tail[\tilde{e}_v]$. Thus we have proven (a) of \GS[15].

We now prove (c) of \GS[15]. We prove this by contradiction. Fix arbitrarily one edge $e\!\in\!P_{11}^\ast$ where $\tilde{e}_u\PREC e\PREC \tilde{e}_v$ and assume that this edge $e$ is reachable from either $e^{32}_u$ or $e^{32}_v$ or both. We first prove that whenever $e^{32}_u$ reaches $e$, then $e$ must be in the downstream of $e^{32}_v$. The reason is as follows. If $e^{32}_u$ reaches $e$, then $e\!\in\!P_{11}^\ast$ should not reach $e^{32}_v$ because it will be located in-between $e^{32}_u$ and $e^{32}_v$, and this contradicts the above discussion. The case when $e$ are $e^{32}_v$ are not reachable from each other is also not possible because $s_1$ can first reach $e$ through $e^{32}_u$ and follow $P_{11}^\ast$ to $d_1$ without using $e^{32}_v$, which contradicts the Property~3 of \NotGS[4]. Thus, if $e^{32}_u\PREC e$, then it must be $e^{32}_v \PREC e$. By the similar argument, we can show that if $e \PREC e^{32}_v$, it must be $e \PREC e^{32}_u$. Therefore, only two cases are possible when $e$ is reachable from either $e^{32}_u$ or $e^{32}_v$ or both: either $e \PREC e^{32}_u$ or $e^{32}_v \PREC e$. Extending this result to every edges of $P_{11}^\ast$ from $\tilde{e}_u$ to $\tilde{e}_v$, we can group them into two: edges in the upstream of $e^{32}_u$; and edges in the downstream of $e^{32}_v$. Since $\tilde{e}_u\PREC e^{32}_u\PREC e^{32}_v\PREC \tilde{e}_v$, this further implies that the chosen $P_{11}^\ast$ must be disconnected. This, however, contradicts the construction $P_{11}^\ast$. Therefore, there must exist an edge $\tilde{e}\!\in\!P_{11}^\ast$ where $\tilde{e}_u\PREC e\PREC \tilde{e}_v$ that is not reachable from any of $\{e^{32}_u, e^{32}_v\}$. We thus have proven (c) of \GS[15]. The proof of \RS[8] is complete.
\end{proof}

We prove \RS[9] as follows.
\begin{proof} Suppose (\NotGS[4])\AND\GS[14]\AND\DS[3] is true. From \RS[8], \GS[15] must also be true, and we will use the \FromTo[1][1] path $P_{11}^\ast$, the two edges $\tilde{e}_u$ and $\tilde{e}_v$, and the edge $\tilde{e}\!\in\!P_{11}^\ast$ defined in \GS[15]. 
For the following, we will prove that the specified $\tilde{e}$ satisfies \GS[7]. Since $\tilde{e}\!\in\!P_{11}^\ast$, we only need to prove that $\tilde{e}$ cannot be reached by any of $\{s_2, s_3\}$ and cannot reach any of $\{d_2,d_3\}$.

We first claim that $\tilde{e}$ cannot be reached from $s_3$. Suppose not. Then we can consider a new \FromTo[3][1] path: $s_3$ can reach $\tilde{e}$ and follow $P_{11}^\ast$ to $d_1$. Since $\tilde{e}$ is not reachable from any of $\{e^{32}_u,e^{32}_v\}$ by (c) of \GS[15], this new \FromTo[3][1] path must not use any of $\{e^{32}_u, e^{32}_v\}$. However, this contradicts the construction $\{e^{32}_u,e^{32}_v\}\!\subset\!\Sover[3]\CAP\Dover[2]\!\subset\!\onecut[s_3][d_1]$. We thus have proven the first claim that $\tilde{e}$ cannot be reached from $s_3$. Symmetrically, we can also prove that $\tilde{e}$ cannot reach $d_2$.

What remains to be proven is that $\tilde{e}$ cannot be reached from $s_2$ and cannot reach $d_3$. Since \DS[3] is true, there exists a positive integer $l_3$ satisfying $\GCD[{\ChGANA[1][1]\ChGANA[3][1]^{l_3}\ChGANA[2][3]^{l_3}\ChGANA[1][2]^{l_3}}][{\,\ChGANA[2][1]\ChGANA[1][3]}]\!=\!\ChGANA[2][1]\ChGANA[1][3]$. Consider $\ChG[e^{32}_u][e_{s_1}]$, a part of $\ChGANA[2][1]$, and $\ChG[e_{d_1}][e^{32}_v]$, a part of $\ChGANA[1][3]$. By Property~1 of \NotGS[4], we have
\begin{equation*}
\GCD[{\ChGANA[1][1]\ChGANA[3][1]^{l_3}\ChGANA[1][2]^{l_3}}][\;{\ChG[e^{32}_u][e_{s_1}]\ChG[d_{d_1}][e^{32}_v]}] = \ChG[e^{32}_u][e_{s_1}]\ChG[d_{d_1}][e^{32}_v].
\end{equation*}

Recall the definition of $\tilde{e}_u$ (resp. $\tilde{e}_v$) being the most downstream (resp. upstream) edge among $\onecut[s_1][{\tail[e^{32}_v]}]\cap\onecut[s_1][d_1]$ (resp. $\onecut[{\head[e^{32}_v]}][d_1]\cap\onecut[s_1][d_1]$). Then we can further factorize $\ChG[e^{32}_u][e_{s_1}]=\ChG[\tilde{e}_u][e_{s_1}]\ChG[e^{32}_u][\tilde{e}_u]$ and $\ChG[e_{d_1}][e^{32}_v]=\ChG[\tilde{e}_v][e^{32}_v]\ChG[e_{d_1}][\tilde{e}_v]$, respectively. Since both $\tilde{e}_u$ and $\tilde{e}_v$ separate $s_1$ and $d_1$, we can express $\ChGANA[1][1]$ as $\ChGANA[1][1]=\ChG[\tilde{e}_u][e_{s_1}]\ChG[\tilde{e}_v][\tilde{e}_u]\ChG[e_{d_1}][\tilde{e}_v]$. Then one can see that the middle part of $\ChGANA[1][1]$, i.e., $\ChG[\tilde{e}_v][\tilde{e}_u]$, must be co-prime to both $\ChG[e^{32}_u][\tilde{e}_u]$ and $\ChG[\tilde{e}_v][e^{32}_v]$, otherwise it violates the construction of $\tilde{e}_u$ (resp. $\tilde{e}_v$) being the most downstream (resp. upstream) edge among $\onecut[s_1][{\tail[e^{32}_v]}]\cap\onecut[s_1][d_1]$ (resp. $\onecut[{\head[e^{32}_v]}][d_1]\cap\onecut[s_1][d_1]$). 
The above equation thus reduces to
\begin{equation}\label{I.1-3}
\GCD[{\ChGANA[3][1]^{l_3}\ChGANA[1][2]^{l_3}}][\;{\ChG[e^{32}_u][\tilde{e}_u]\ChG[\tilde{e}_v][e^{32}_v]}] = \ChG[e^{32}_u][\tilde{e}_u]\ChG[\tilde{e}_v][e^{32}_v].
\end{equation}

Using \Ref{I.1-3} and the previous constructions, we first prove that $\tilde{e}$ cannot reach $d_3$. Since $\head[\tilde{e}_u]\PREC\tail[e^{32}_u]$ by (a) of \GS[15], we must have $0\!<\!\EC[{\head[\tilde{e}_u]}][{\tail[e^{32}_u]}]\!<\!\infty$. By \PropRef{Prop3}, $\ChG[e^{32}_u][\tilde{e}_u]$ is either irreducible or the product of irreducibles corresponding to the consecutive edges among $\tilde{e}_u$, $\onecut[{\head[\tilde{e}_u]}][{\tail[e^{32}_u]}]$, and $e^{32}_u$. Consider the following edge set $E_u\!=\!\{\tilde{e}_u\}\cup\onecut[{\head[\tilde{e}_u]}][{\tail[e^{32}_u]}]\cup\{e^{32}_u\}$, the collection of $\onecut[{\head[\tilde{e}_u]}][{\tail[e^{32}_u]}]$ and two edges $\tilde{e}_u$ and $e^{32}_u$. Note that in the proof of \RS[8], $P_{11}^\ast$ was chosen arbitrarily such that $\tilde{e}_u\!\in\!P_{11}^\ast$ and $e^{32}_u\!\not\in\!P_{11}^\ast$ but there was no consideration for the $1$-edge cuts from $\head[\tilde{e}_u]$ to $\tail[e^{32}_u]$ if non-empty. In other words, when $s_1$ follow the chosen $P_{11}^\ast$, it is obvious that it first meets $\tilde{e}_u$ but it is not sure when it starts to deviate not to use $e^{32}_u$ if we have non-empty $\onecut[{\head[\tilde{e}_u]}][{\tail[e^{32}_u]}]$. Let $e^{u}_1$ denote the most downstream edge of $E_u\cap P_{11}^\ast$ (we have at least $\tilde{e}_u$) and let $e^{u}_2$ denote the most upstream edge of $E_u \backslash P_{11}^\ast$ (we have at least $e^{32}_u$). From the constructions of $P_{11}^\ast$ and $E_u$, the defined edges $e^{u}_1\!\in\!P_{11}^\ast$ and $e^{u}_2\!\not\in\!P_{11}^\ast$ are edges of $E_u$ such that $\tilde{e}_u\PRECEQ e^{u}_1\PREC e^{u}_2 \PRECEQ e^{32}_u$; $\ChG[e^{u}_2][e^{u}_1]$ is irreducible; and $\ChG[e^{32}_u][\tilde{e}_u]$ contain $\ChG[e^{u}_2][e^{u}_1]$ as a factor. By doing this way, we can clearly specify the location (in-between $e^{u}_1\!\in\!P_{11}^\ast$ and $e^{u}_2\!\not\in\!P_{11}^\ast$) when $P_{11}^\ast$ starts to deviate not to use $e^{32}_u$.

For the following, we first argue that $\GCD[{\ChGANA[3][1]}][\,{\ChG[e^{u}_2][e^{u}_1]}]\,\PolyNotEqual 1$. Suppose not then we have $\GCD[{\ChGANA[1][2]}][\,{\ChG[e^{u}_2][e^{u}_1]}]\!=\!\ChG[e^{u}_2][e^{u}_1]$ from \Ref{I.1-3}. By \PropRef{Prop3}, we have $\{e^{u}_1,e^{u}_2\}\!\subset\!\onecut[s_2][d_1]$. However from the above construction, $e^{u}_1\!\in\!\onecut[s_2][d_1]$ implies that $s_2$ can first reach $e^{u}_1\!\in\!P_{11}^\ast$ and then follow $P_{11}^\ast$ to $d_1$ without using $e^{u}_2$ since $e^{u}_1\PREC e^{u}_2$ and $e^{u}_2\!\not\in\!P_{11}^\ast$. This contradicts $e^{u}_2\!\in\!\onecut[s_2][d_1]$ that we just established. This thus proves that $\GCD[{\ChGANA[3][1]}][\,{\ChG[e^{u}_2][e^{u}_1]}]\,\PolyNotEqual 1$. Since $\ChG[e^{u}_2][e^{u}_1]$ is irreducible, again by \PropRef{Prop3}, we further have $\{e^{u}_1,e^{u}_2\}\!\subset\!\onecut[s_1][d_3]$.


We now argue that $\tilde{e}$ cannot reach $d_3$. Suppose not and assume that there exists a path segment $Q$ from $\tilde{e}$ to $d_3$. Since $\tilde{e}\!\in\!P_{11}^\ast$ is not reachable from any of $\{e^{32}_u,e^{32}_v\}$ by (c) of \GS[15], it is obvious that $\tilde{e}$ must be in the downstream of $e^{u}_1\!\in\!P_{11}^\ast$ since $e^{u}_1 \PREC e^{32}_u$ from the above construction. Then when $s_1$ follow $P_{11}^\ast$ to $\tilde{e}$ (through $e^{u}_1$) and switch to $Q$ to reach $d_3$, it will not use $e^{u}_2$ unless $\tilde{e}\PREC e^{u}_2$ and $e^{u}_2\!\in\!Q$, but $\tilde{e}$ cannot be in the upstream of $e^{u}_2$ since $e^{u}_2\PRECEQ e^{32}_u$ from the above construction. Therefore, this \FromTo[1][3] path $P_{11}^\ast \tilde{e} Q$ will not use $e^{u}_2$ and thus contradicts $e^{u}_2\!\in\!\onecut[s_1][d_3]$ that we just established. As a result, $\tilde{e}$ cannot reach $d_3$.

The proof that $\tilde{e}$ cannot be reached from $s_2$ can be derived symmetrically. In particular, we can apply the above proof arguments ($\tilde{e}$ cannot reach $d_3$) by symmetrically using the following: the edge set $E_v\!=\!\{e^{32}_v\}\cup\onecut[{\head[e^{32}_v]}][{\tail[\tilde{e}_v]}]\cup\{\tilde{e}_v\}$ and denote $e^{v}_1$ (resp. $e^{v}_2$) be the most downstream (resp. upstream) edge of $E_v\backslash P_{11}^\ast$ (resp. $E_v\cap P_{11}^\ast$) such that $\{e^{v}_1,e^{v}_2\}\!\subset\!\onecut[s_2][d_1]$ from \Ref{I.1-3}.

Therefore we have proven that $\tilde{e}$ cannot be reached from $s_2$ and cannot reach $d_3$. The proof of \RS[9] is thus complete.
\end{proof}

We prove \RS[10] as follows.
\begin{proof} We prove an equivalent relationship: \GS[7]\!\IMPLY\!(\NotES[0]). Suppose \GS[7] is true and consider the edge $\tilde{e}$ defined in \GS[7]. Consider an \FromTo[1][1] path $P_{11}$ that uses $\tilde{e}$ and an edge $e\!\in\!P_{11}$ that is immediate downstream of $\tilde{e}$ along this path, i.e., $\head[\tilde{e}]\!=\!\tail[e]$. Such edge $e$ always exists since $\tilde{e}$ cannot be the $d_1$-destination edge $e_{d_1}$. (Recall that $\tilde{e}$ cannot be reached by $s_2$.) We now observe that since \GS[7] is true, such $e$ cannot reach any of $\{d_2,d_3\}$ (otherwise $\tilde{e}$ can reach one of $\{d_2,d_3\}$). Now consider a local kernel $x_{\tilde{e}e}$ from $\tilde{e}$ to $e$. Then, one can see that by the facts that $\tilde{e}$ cannot be reached by any of $\{s_2,s_3\}$ and $e$ cannot reach any of $\{d_2,d_3\}$, any channel gain $\ChGANA[j][i]$  where $i\!\neq\!j$ cannot depend on $x_{\tilde{e}e}$. On the other hand, the channel gain polynomial $\ChGANA[1][1]$ has degree $1$ in $x_{\tilde{e}e}$ since both $\tilde{e}$ and $e$ are used by a path $P_{11}$.

Since any channel gain $\ChGANA[j][i]$ where $i\!\neq\!j$ is non-trivial on a given $\GANA$, the above discussion implies that $f(\{\ChGANA[j][i]:\forall\,(i,j)\!\in\!I_{\textrm{3ANA}}\})$, $g(\{\ChGANA[j][i]:\forall\,(i,j)\!\in\!I_{\textrm{3ANA}}\})$, $\BOLDb$, and $\BOLDa$ become all non-zero polynomials, any of which does not depend on $x_{\tilde{e}e}$. Thus recalling \Ref{E0}, its RHS does not depend on $x_{\tilde{e}e}$. However, the LHS of \Ref{E0} has a common factor $\ChGANA[1][1]$ and thus has degree $1$ in $x_{\tilde{e}e}$. This implies that $\GANA$ does not satisfy \Ref{E0} if we have at least one non-zero coefficient $\alpha_i$ and $\beta_j$, respectively. This thus implies \NotES[0].
\end{proof}
%

\section{Proof of \SS[12]}\label{ProofSS[12]}

If we swap the roles of sources and destinations, then the proof of \SS[11] in \AppRef{ProofSS[11]} can be directly applied to show \SS[12]. More specifically, note that \DS[1] (resp. \DS[3]) are converted back and forth from \DS[2] (resp. \DS[4]) by such \SWAPSD-swapping. Also, one can easily verify that \LNR, \GS[1], and \ES[0] remain the same after the index swapping. Thus we can see that \SS[11] becomes \SS[12] after reverting flow indices. The proofs of \SS[11] in \AppRef{ProofSS[11]} can thus be used to prove \SS[12].

\section{Proof of \SS[13]}\label{ProofSS[13]}

\subsection{The fourth set of logic statements}

To prove \SS[13], we need the fourth set of logic statements.

\noindent $\bullet$ \GS[16]{\bf:}\label{ref:GS[16]} There exists a subgraph $\G[']\!\subset\!\GANA$ such that in $\G[']$ both the following conditions are true: (i) $s_i$ can reach $d_j$ for all $i\!\neq\!j$; and (ii) $s_1$ can reach $d_1$.

\noindent $\bullet$ \GS[17]{\bf:}\label{ref:GS[17]} Continue from the definition of \GS[16]. The considered subgraph $\G[']$ also contains an edge $\tilde{e}$ such that both the following conditions are satisfied: (i) $\tilde{e}$ can reach $d_1$ but cannot reach any of $\{d_2,d_3\}$; (ii) $\tilde{e}$ can be reached from $s_1$ but not from any of $\{s_2,s_3\}$.

\noindent $\bullet$ \GS[18]{\bf:}\label{ref:GS[18]} Continue from the definition of \GS[16]. There exists a subgraph $\G['']\!\subset\!\G[']$ such that (i) $s_i$ can reach $d_j$ for all $i\!\neq\!j$; and (ii) $s_1$ can reach $d_1$. Moreover, the considered subgraph $\G['']$ also satisfies (iii) $\ChGANA[1][1]\ChGANA[3][2] = \ChGANA[3][1]\ChGANA[1][2]$; and (iv) $\LRneq$.

\noindent $\bullet$ \GS[19]{\bf:}\label{ref:GS[19]} Continue from the definition of \GS[16]. There exists a subgraph $\G['']\!\subset\!\G[']$ such that (i) $s_i$ can reach $d_j$ for all $i\!\neq\!j$; and (ii) $s_1$ can reach $d_1$. Moreover, the considered subgraph $\G['']$ also satisfies (iii) $\ChGANA[1][1]\ChGANA[2][3] = \ChGANA[2][1]\ChGANA[1][3]$; and (iv) $\LRneq$.

\noindent $\bullet$ \GS[20]{\bf:}\label{ref:GS[20]} $\Sover[2]\NotEqualEmpty$ and $\Dover[3]\NotEqualEmpty$.

The following logic statements are well-defined if and only if \GS[3]\AND\GS[20] is true. Recall the definition of $e^\ast_2$ and $e^\ast_3$ when \GS[3] is true.

\noindent $\bullet$ \GS[21]{\bf:}\label{ref:GS[21]} $\{e^\ast_2,e^\ast_3\}\subset \onecut[s_3][d_2]$.

The following logic statements are well-defined if and only if (\NotGS[3])\AND(\NotGS[4]) is true. Recall the definition of $e^{23}_u$, $e^{23}_v$, $e^{32}_u$, and $e^{32}_v$ when (\NotGS[3])\AND(\NotGS[4]) is true.

\noindent $\bullet$ \GS[22]{\bf:}\label{ref:GS[22]} There exists a path $P_{11}^\ast$ from $s_1$ to $d_1$ who does not use any vertex in-between $\tail[e^{23}_u]$ and $\head[e^{23}_v]$, and any vertex in-between $\tail[e^{32}_u]$ and $\head[e^{32}_v]$.


\noindent $\bullet$ \GS[23]{\bf:}\label{ref:GS[23]} $e^{23}_u \PREC e^{32}_u$.

\noindent $\bullet$ \GS[24]{\bf:}\label{ref:GS[24]} $e^{32}_u \PREC e^{23}_u$.

\noindent $\bullet$ \GS[25]{\bf:}\label{ref:GS[25]} $e^{32}_u \PREC e^{23}_v$.

\noindent $\bullet$ \GS[26]{\bf:}\label{ref:GS[26]} $e^{23}_u \PREC e^{32}_v$.

\subsection{The skeleton of proving \SS[13]}

We prove the following relationships, which jointly prove \SS[13]. The proofs for the following statements are relegated to \AppRef{ProofsRS[11]-RS[25]}.

\noindent $\bullet$ \RS[11]{\bf:}\label{ref:RS[11]} \DS[1]\IMPLY \GS[8] (identical to \RS[1]).

\noindent $\bullet$ \RS[12]{\bf:}\label{ref:RS[12]} \GS[4]\AND\GS[8]\AND\DS[1]\IMPLY \GS[9] (identical to \RS[2]).

\noindent $\bullet$ \RS[13]{\bf:}\label{ref:RS[13]} \LNR\AND\GS[4]\AND\GS[8]\AND\GS[9]\AND\DS[2]\IMPLY \CONT.

\noindent $\bullet$ \RS[14]{\bf:}\label{ref:RS[14]} \DS[2]\IMPLY \GS[20].

\noindent $\bullet$ \RS[15]{\bf:}\label{ref:RS[15]} \GS[3]\AND\GS[20]\AND\DS[2]\IMPLY \GS[21].

\noindent $\bullet$ \RS[16]{\bf:}\label{ref:RS[16]} \LNR\AND\GS[3]\AND\GS[20]\AND\GS[21]\AND\DS[1]\IMPLY \CONT.

\noindent $\bullet$ \RS[17]{\bf:}\label{ref:RS[17]} \LNR\AND\GS[2]\AND(\NotGS[3])\AND(\NotGS[4])\AND(\NotGS[5])\IMPLY \GS[7].

\noindent $\bullet$ \RS[18]{\bf:}\label{ref:RS[18]} \GS[16]\AND\GS[17]\AND\ES[0]\IMPLY \CONT.

\noindent $\bullet$ \RS[19]{\bf:}\label{ref:RS[19]} \GS[16]\AND(\GS[18]\OR\GS[19])\AND\ES[0]\AND\ES[1]\AND\ES[2]\IMPLY \CONT.

\noindent $\bullet$ \RS[20]{\bf:}\label{ref:RS[20]} \GS[1]\AND(\NotGS[3])\AND(\NotGS[4])\AND(\NotGS[22])\AND\GS[23]\IMPLY \GS[16]\AND\GS[18].

\noindent $\bullet$ \RS[21]{\bf:}\label{ref:RS[21]} \LNR\AND(\NotGS[3])\AND(\NotGS[4])\AND\GS[22]\AND\GS[23]\AND\GS[25]\IMPLY \GS[16]\AND\\\noindent\GS[17].

\noindent $\bullet$ \RS[22]{\bf:}\label{ref:RS[22]} \LNR\AND(\NotGS[3])\AND(\NotGS[4])\AND\GS[22]\AND\GS[23]\AND(\NotGS[25])\IMPLY \\\noindent\GS[16]\AND(\GS[17]\OR\GS[18]).

\noindent $\bullet$ \RS[23]{\bf:}\label{ref:RS[23]} \GS[1]\AND(\NotGS[3])\AND(\NotGS[4])\AND(\NotGS[22])\AND\GS[24]\IMPLY \GS[16]\AND\GS[19].

\noindent $\bullet$ \RS[24]{\bf:}\label{ref:RS[24]} \LNR\AND(\NotGS[3])\AND(\NotGS[4])\AND\GS[22]\AND\GS[24]\AND\GS[26]\IMPLY \GS[16]\AND\\\noindent\GS[17].

\noindent $\bullet$ \RS[25]{\bf:}\label{ref:RS[25]} \LNR\AND(\NotGS[3])\AND(\NotGS[4])\AND\GS[22]\AND\GS[24]\AND(\NotGS[26])\IMPLY \\\noindent\GS[16]\AND(\GS[17]\OR\GS[19]).

One can easily verify that jointly \RS[11] to \RS[13] imply
\begin{equation}\label{R15-1}
\textbf{LNR}\wedge\textbf{G4}\wedge\textbf{D1}\wedge\textbf{D2}\Rightarrow\,\text{false}.
\end{equation}

Similarly, \RS[14] to \RS[16] jointly imply
\begin{equation}\label{R15-2}
\textbf{LNR}\wedge\textbf{G3}\wedge\textbf{D1}\wedge\textbf{D2}\Rightarrow\,\text{false}.
\end{equation}

Thus, \Ref{R15-1} and \Ref{R15-2} together imply
\begin{equation}\label{R15-3}
\textbf{LNR}\wedge(\textbf{G3}\vee\textbf{G4})\wedge\textbf{D1}\wedge\textbf{D2}\Rightarrow\,\text{false}.
\end{equation}

Now recall \RS[10], i.e., \GS[7]\AND\ES[0]\IMPLY\CONT. Then, jointly \RS[10] and \RS[17] imply
\begin{align}
\textbf{LNR}\wedge\textbf{G2}\wedge(\neg\,\textbf{G3})\wedge(\neg\,\textbf{G4})\wedge(\neg\,\textbf{G5})\wedge\textbf{E0}\Rightarrow\text{false}. \label{R15-5}
\end{align}

One can easily verify that jointly \RS[18] and \RS[19] imply
\begin{align}
\textbf{G16}\wedge(\textbf{G17}\vee\textbf{G18}\vee\textbf{G19})\wedge\textbf{E0}\wedge\textbf{E1}\wedge\textbf{E2}\Rightarrow\text{false}.
\label{R15-6}
\end{align}

One can see that jointly \Ref{R15-6}, \RS[20], \RS[21], and \RS[22] imply
\begin{equation}\begin{split}
& \textbf{LNR}\wedge\textbf{G1}\wedge(\neg\,\textbf{G3})\wedge(\neg\,\textbf{G4})\wedge\textbf{G23} \\
& \qquad\qquad\qquad\qquad\qquad \wedge\textbf{E0}\wedge\textbf{E1}\wedge\textbf{E2}\Rightarrow\text{false}.
\label{R15-7}
\end{split}\end{equation}

By similar arguments as used in deriving \Ref{R15-7}, jointly \Ref{R15-6}, \RS[23], \RS[24], and \RS[25] imply
\begin{equation}\begin{split}
& \textbf{LNR}\wedge\textbf{G1}\wedge(\neg\,\textbf{G3})\wedge(\neg\,\textbf{G4})\wedge\textbf{G24} \\
& \qquad\qquad\qquad\qquad\qquad \wedge\textbf{E0}\wedge\textbf{E1}\wedge\textbf{E2}\Rightarrow\text{false}.
\label{R15-8}
\end{split}\end{equation}

Since by definition (\NotGS[3])\AND(\NotGS[4])\AND\GS[5]\IMPLY (\NotGS[3])\AND(\NotGS[4])\\\noindent\AND(\GS[23]\OR\GS[24]),
jointly \Ref{R15-7} and \Ref{R15-8} imply
\begin{equation}\begin{split}
& \textbf{LNR}\wedge\textbf{G1}\wedge(\neg\,\textbf{G3})\wedge(\neg\,\textbf{G4})\wedge\textbf{G5} \\
& \qquad\qquad\qquad\qquad\qquad \wedge\textbf{E0}\wedge\textbf{E1}\wedge\textbf{E2}\Rightarrow\,\text{false}.
\label{R15-9}
\end{split}\end{equation}

By similar arguments as used in deriving \eqref{R15-7}, \Ref{R15-9} and \Ref{R15-5} further imply
\begin{equation}\begin{split}
& \textbf{LNR}\wedge\textbf{G1}\wedge\textbf{G2}\wedge(\neg\,\textbf{G3})\wedge(\neg\,\textbf{G4}) \\
& \qquad\qquad\qquad\qquad\qquad \wedge\textbf{E0}\wedge\textbf{E1}\wedge\textbf{E2}\Rightarrow\,\text{false}.
\label{R15-10}
\end{split}\end{equation}

Finally, one can easily verify that jointly \Ref{R15-3} and \Ref{R15-10} imply that we have \LNR\AND\GS[1]\AND\GS[2]\AND\ES[0]\AND\ES[1]\AND\ES[2]\AND\DS[1]\AND\DS[2]\IMPLY\\\CONT, which proves \SS[13]. The skeleton of the proof of \SS[13] is complete.

\section{Proofs of \RS[11] to \RS[25]}\label{ProofsRS[11]-RS[25]}

Since \RS[11] and \RS[12] is identical to \RS[1] and \RS[2], respectively, see \AppRef{ProofsRS[1]-RS[10]} for their proofs.



We prove \RS[13] as follows.
\begin{proof} We prove an equivalent relationship: \GS[4]\AND\GS[8]\AND\\ \noindent\GS[9]\AND\DS[2]\IMPLY \NotLNR. Suppose \GS[4]\AND\GS[8]\AND\GS[9] is true. The $e^\ast_3$ (resp. $e^\ast_2$) defined in the properties of \GS[4] must be the most downstream (resp. upstream) edge of $\Sover[3]$ (resp. $\Dover[2]$), both of which belongs to $\onecut[s_2][d_3]$.

For the following, we will prove that there exists an edge in-between $\{e_{s_2},e_{s_3}\}$ and $e^\ast_3$ who belongs to $\Sover[2]\cap\Sover[3]$. We will also prove that there exists an edge in-between $e^\ast_2$ and $\{e_{d_2},e_{d_3}\}$ who belongs to $\Dover[2]\cap\Dover[3]$. By \PropRef{Prop4} we thus have \LNR\;being false.

Define a node $u\!=\!\tail[e^\ast_3]$. Since $e^\ast_3\!\in\!\onecut[s_2][d_3]$, $u$ is reachable from $s_2$. Since $e^\ast_3\!\in\!\Sover[3]$, $u$ is also reachable form $s_3$. Consider the set of edges $\{\onecut[s_2][u]\cap\onecut[s_3][u]\}\cup\{e^\ast_3\}$ and choose $e''$ as the most upstream one (we have at least $e^\ast_3$). Let $e'$ denote the most downstream edge of $\onecut[s_2][{\tail[e'']}]$ (we have at least the $s_2$-source edge $e_{s_2}$). Since we choose $e'$ to be the most downstream one, by \PropRef{Prop3} the channel gain $\ChG[e''][e']$ must be irreducible.

Moreover, since $e^\ast_3\!\in\! \onecut[s_2][d_3]$, both $e'$ and $e''$ must be in $\onecut[s_2][d_3]$. The reason is that by $e^\ast_3\!\in\!\onecut[s_2][d_3]$ any path from $s_2$ to $d_3$ must use $e^\ast_3$, which in turn implies that any path from $s_2$ to $d_3$ must use $e''$ since $e''$ separates $s_2$ and $\tail[e^\ast_3]$. Therefore $e''\!\in\!\onecut[s_2][d_3]$. Similarly, any \FromTo[2][3] path must use $e''$, which means any \FromTo[2][3] path must use $e'$ as well since $e'\!\in\!\onecut[s_2][{\tail[e'']}]$. As a result, the channel gain $\ChGANA[3][2]$ contains $\ChG[e''][e']$ as a factor.

Since \DS[2] is true, it implies that $\ChG[e''][e']$ must be a factor of one of the following three channel gains $\ChGANA[3][1]$, $\ChGANA[2][3]$, and $\ChGANA[1][2]$. We first argue that $\ChG[e''][e']$ is not a factor of $\ChGANA[2][3]$. The reason is that if $\ChG[e''][e']$ is a factor of $\ChGANA[2][3]$, then $e'\!\in\!\onecut[s_3][d_2]$, which means that $e'\!\in\!\onecut[s_3][{\tail[e^\ast_3]}]$. Since $e'$ is also in $\onecut[s_2][{\tail[e^\ast_3]}]$, this contradicts the construction that $e''$ is the most upstream edge of $\onecut[s_2][{\tail[e^\ast_3]}]\cap\onecut[s_3][{\tail[e^\ast_3]}]$.

Now we argue that $\GCD[{\ChGANA[3][1]}][\,{\ChG[e''][e']}]\PolyEqual 1$. Suppose not. Then since $\ChG[e''][e']$ is irreducible, \PropRef{Prop3} implies that $\{e',e''\}$ are $1$-edge cuts separating $s_1$ and $d_3$. Also from Property~1 of \GS[4], there always exists a path segment from $s_1$ to $e^\ast_2$ without using $e^\ast_3$. Since $e^\ast_2\!\in\!\onecut[s_2][d_3]$, $e^\ast_2$ can reach $d_3$ and we thus have a \FromTo[1][3] path without using $e^\ast_3$. However by the assumption that $e'\!\in\!\onecut[s_1][d_3]$, this chosen path must use $e'$. As a result, $s_2$ can first reach $e'$ and then reach $d_3$ through the chosen path without using $e^\ast_3$, which contradicts the assumption \GS[9], i.e., $e^\ast_3\!\in\!\onecut[s_2][d_3]$.

From the above discussion $\ChG[e''][e']$ must be a factor of $\ChGANA[1][2]$, which by \PropRef{Prop3} implies that $\{e',e''\}$ also belong to $\onecut[s_2][d_1]$. Since by our construction $e''$ satisfies $e''\!\in\!\Sover[3]\cap\onecut[s_2][d_3]$, we have thus proved that $e''\!\in\!\Sover[2]\cap\Sover[3]$. The proof for the existence of an edge satisfying $\Dover[2]\cap\Dover[3]$ can be followed symmetrically. The proof of \RS[12] is thus complete.
\end{proof}

By swapping the roles of $s_2$ and $s_3$, and the roles of $d_2$ and $d_3$, the proofs of \RS[11] to \RS[13] can also be used to prove \RS[14] to \RS[16], respectively. More specifically, \DS[1] and \DS[2] are converted back and forth from each other when swapping the flow indices. The same thing happens between \GS[3] and \GS[4]; between \GS[20] and \GS[8]; and between \GS[21] and \GS[9]. Moreover, \LNR\;remains the same after the index swapping. The above proofs can thus be used to prove \RS[14] to \RS[16].

We prove \RS[17] as follows.
\begin{proof} Suppose \LNR\AND\GS[2]\AND(\NotGS[3])\AND(\NotGS[4])\AND(\NotGS[5]) is true. Recall the definitions of  $e^{23}_u$, $e^{32}_u$, $e^{23}_v$, and $e^{32}_v$ from Properties of both \NotGS[3] and \NotGS[4]. Since
\LNR\AND(\NotGS[3])\AND(\NotGS[4]) is true, we have \GS[6] if we recall \NS[7]. Together with \NotGS[5], $e^{23}_u$ and $e^{32}_u$ are distinct and not reachable from each other. Thus from \GS[2] being true, there must exist a \FromTo[1][1] path who does not use any of $\{e^{23}_u, e^{32}_u\}$. Combined with Property 3 of \NotGS[3] and \NotGS[4], this further implies that such \FromTo[1][1] path does not use any of $\{e^{23}_v, e^{32}_v\}$. Fix one such \FromTo[1][1] path as $P_{11}^\ast$. 

We will now show that there exists an edge in $P_{11}^\ast$ satisfying \GS[7]. To that end, we will show that if an edge $e\!\in\!P_{11}^\ast$ can be reached from $s_2$, then it must be in the downstream of $e^{23}_v$. We first argue that $e^{23}_v$ and $e$ are reachable from each other. The reason is that we now have a \FromTo[2][1] path by first going from $s_2$ to $e\!\in\!P_{11}^\ast$ and then use $P_{11}^\ast$ to $d_1$. Since $e^{23}_v\!\in\!\onecut[s_2][d_1]$ by definition, such path must use $e^{23}_v$. As a result, we either have $e^{23}_v \PREC e$ or $e\PREC e^{23}_v$. ($e\!=\!e^{23}_v$ is not possible since $e^{23}_v\!\not\in\!P_{11}^\ast$.) We then prove that $e\PREC e^{23}_v$ is not possible. The reason is that $P_{11}^\ast$ does not use $e^{23}_u$ and thus $s_1$ must not reach $e^{23}_v$ through $P_{11}^\ast$ due to Property~3 of \NotGS[3]. As a result, we must have $e^{23}_v\PREC e$. By symmetric arguments, any $e\!\in\!P_{11}^\ast$ that can be reached from reach $s_3$  must be in the downstream of $e^{32}_v$ and any $e\!\in\!P_{11}^\ast$ that can reach $d_3$ (resp. $d_2$) must be in the upstream of $e^{23}_u$ (resp. $e^{32}_u$).

For the following, we prove that there exists an edge $\tilde{e}\!\in\!P_{11}^\ast$ that cannot reach any of $\{d_2,d_3\}$, and that cannot be reached from any of $\{s_2,s_3\}$. Since $\tilde{e}\!\in\!P_{11}^\ast$, this will imply \GS[7]. Let $e'$ denote the most downstream edge of $P_{11}^\ast$ that can reach at least one of $\{d_2, d_3\}$ (we have at least the $s_1$-source edge $e_{s_1}$). Among all the edges in $P_{11}^\ast$ that are downstream of $e'$, let $e''$ denote the most upstream one that can be reached by at least one of $\{s_2, s_3\}$ (we have at least the $d_1$-destination edge $e_{d_1}$). In the next paragraph, we argue that $e''$ is not the immediate downstream edge of $e'$, i.e., $\head[e']\PREC \tail[e'']$. This conclusion directly implies that we have at least one edge $\tilde{e}$ that satisfies \GS[7] (which is in-between $e'$ and $e''$).

Without loss of generality, assume that $\head[e']\!=\!\tail[e'']$ and $e'$ can reach $d_2$. Then, by our previous arguments, $e'$ is an upstream edge of $e^{32}_u$. Consider two cases: Case 1: Suppose $e''$ is reachable from $s_3$, then by our previous arguments, $e''$ is a downstream edge of $e^{32}_v$. However, this implies that we can go from $\head[e']$ through $e^{32}_u$ to $e^{32}_v$ and then back to $\tail[e'']\!=\!\head[e']$, which contradicts the assumption that $\G[]$ is acyclic. Consider the Case 2: $e''$ is reachable from $s_2$. Then by our previous arguments, $e''$ is a downstream edge of $e^{23}_v$. Then we can go from $e^{23}_u$ to $e^{23}_v$, then to $\tail[e'']\!=\!\head[e']$ and then to $e^{32}_u$. This contradicts the assumption of \NotGS[5]. The proof of \RS[17] is thus complete.
\end{proof}

We prove \RS[18] as follows.
\begin{proof} Suppose \GS[16]\AND\GS[17]\AND\ES[0] is true. From \ES[0] being true, $\GANA$ satisfies \Ref{E0} with at least two non-zero coefficients $\alpha_i$ and $\beta_j$. From \GS[16] being true, the considered subgraph $\G[']$ has the non-trivial channel gain polynomials $\ChGANA[j][i]$ for all $i\!\neq\!j$ and $\ChGANA[1][1]$. By \PropRef{Prop2}, $\G[']$ also satisfies \Ref{E0} with the same set of non-zero coefficients $\alpha_i$ and $\beta_j$.

From \GS[17] being true, consider the defined edge $\tilde{e}\!\in\!\G[']$ that cannot reach any of $\{d_2,d_3\}$ (but reaches $d_1$) and cannot be reached by any of $\{s_2,s_3\}$ (but reached from $s_1$). This chosen $\tilde{e}$ must not be the $s_1$-source edge $e_{s_1}$ otherwise ($\tilde{e}\!=\!e_{s_1}$) $\tilde{e}$ will reach $d_2$ or $d_3$ and thus contradict the assumption \GS[17].

Choose an edge $e\!\in\!\G[']$ such that $e_{s_1}\PRECEQ e$ and $\head[e]\!=\!\tail[{\tilde{e}}]$. This is always possible because $s_1$ can reach $\tilde{e}$ and $e_{s_1}\PREC \tilde{e}$ on $\G[']$. Then, this chosen edge $e$ should not be reached from $s_2$ or $s_3$ otherwise $s_2$ or $s_3$ can reach $\tilde{e}$ and this contradicts the assumption \GS[17]. Now consider a local kernel $x_{e\tilde{e}}$ from $e$ to $\tilde{e}$. Then, one can quickly see that the channel gains $\ChGANA[1][2]$, $\ChGANA[3][2]$, $\ChGANA[1][3]$, and $\ChGANA[2][3]$ must not have $x_{e\tilde{e}}$ as a variable since $e$ is not reachable from $s_2$ nor $s_3$. Also $\ChGANA[2][1]$ and $\ChGANA[3][1]$ must not have $x_{e\tilde{e}}$ as a variable since $\tilde{e}$ doe not reach any of $\{d_2,d_3\}$.


This further implies that the RHS of \Ref{E0} does not depend on $x_{e\tilde{e}}$. However, the LHS of \Ref{E0} has a common factor $\ChGANA[1][1]$ and thus has degree $1$ in $x_{e\tilde{e}}$. This contradicts the above discussion that $\G[']$ also satisfies \Ref{E0}.
\end{proof}


We prove \RS[19] as follows.
\begin{proof} Equivalently, we prove the following two relationships: \GS[16]\AND\GS[18]\AND\ES[0]\AND\ES[1]\AND\ES[2]\IMPLY\CONT;\;and \GS[16]\AND\GS[19]\AND\\\noindent\ES[0]\AND\ES[1]\AND\ES[2]\IMPLY\,\CONT.

We first prove the former. Suppose that \GS[16]\AND\GS[18]\AND\ES[0]\AND\\\noindent\ES[1]\AND\ES[2] is true. From \ES[0]\AND\ES[1]\AND\ES[2] being true, there exists some coefficient values $\{\alpha_i\}_{i=0}^{\Nid}$ and $\{\beta_j\}_{j=0}^{\Nid}$ such that $\GANA$ of interest satisfies
\begin{align}\label{R34-1}
\ChGANA[1][1]\ChGANA[3][2]\,\psi^{(n)}_\alpha(\BOLDb,\BOLDa) = \ChGANA[3][1]\ChGANA[1][2]\,\psi^{(n)}_\beta(\BOLDb,\BOLDa),
\end{align}
with (i) At least one of $\alpha_i$ is non-zero; (ii) At least one of $\beta_j$ is non-zero; (iii) $\alpha_k\!\neq\!\beta_k$ for some $k\!\in\!\{0,...,\Nid\}$; and (iv) either $\alpha_0\!\neq\!0$ or $\beta_{\Nid}\!\neq\!0$ or $\alpha_{k}\!\neq\!\beta_{k-1}$ for some $k\!\in\!\{1,...,\Nid\}$.

From the assumption that \GS[16] is true, consider a subgraph $\G[']$ which has the non-trivial channel gain polynomials $\ChGANA[j][i]$ for all $i\!\neq\!j$ and $\ChGANA[1][1]$. Thus by \PropRef{Prop2}, $\G[']$ also satisfies \Ref{R34-1} with the same coefficient values.

Now from \GS[18] being true, we will prove the first relationship, i.e., \GS[16]\AND\GS[18]\AND\ES[0]\AND\ES[1]\AND\ES[2]\IMPLY\CONT. Since \GS[18] is true, there exists a subgraph $\G['']\!\subset\!\G[']$ which also has the non-trivial channel gains $\ChGANA[j][i]$ for all $i\!\neq\!j$ and $\ChGANA[1][1]$. Thus again by \PropRef{Prop2}, $\G['']$ satisfies \Ref{R34-1} with the same coefficients. Since $\G['']$ also satisfies $\ChGANA[1][1]\ChGANA[3][2] = \ChGANA[3][1]\ChGANA[1][2]$, by \Ref{R34-1}, we know that $\G['']$ satisfies
\begin{align}\label{R34-3}
\psi^{(n)}_\alpha(\BOLDb,\BOLDa) = \psi^{(n)}_\beta(\BOLDb,\BOLDa).
\end{align}

Note that by (iii), the coefficient values were chosen such that $\alpha_k\!\neq\!\beta_k$ for some $k\!\in\!\{0,...,\Nid\}$. Then \Ref{R34-3} further implies that $\G['']$ satisfies $\sum_{k=0}^{\Nid} \gamma_k\BOLDb^{\Nid-k}\BOLDa^{k}=0$ with at least one non-zero $\gamma_k$. Equivalently, this means that the set of polynomials $\{\BOLDb^{n}, \BOLDb^{n-1}\BOLDa , \cdots, \BOLDb\BOLDa^{n-1}, \BOLDa^{n}\}$ is linearly dependent. Since this is the Vandermonde form, it is equivalent to that $\LReq$ holds on $\G['']$. However, this contradicts the assumption \GS[18] that $\G['']$ satisfies $\LRneq$.

To prove the second relationship, i.e., \GS[16]\AND\GS[19]\AND\ES[0]\AND\ES[1]\\\noindent\AND\ES[2]\IMPLY\CONT, we assume \GS[19] is true. Since \GS[19] is true, there exists a subgraph $\G['']\!\subset\!\G[']$ which also has the non-trivial channel gains $\ChGANA[j][i]$ for all $i\!\neq\!j$ and $\ChGANA[1][1]$. Thus again by \PropRef{Prop2}, $\G['']$ satisfies \Ref{R34-1} with the same coefficients. Moreover, $\G['']$ satisfies $\ChGANA[1][1]\ChGANA[2][3] = \ChGANA[2][1]\ChGANA[1][3]$, which together with \Ref{R34-1} imply that $\G['']$ also satisfies
\begin{align}\label{R34-4}
\BOLDb\,\psi^{(n)}_\alpha(\BOLDb,\BOLDa) = \BOLDa\,\psi^{(n)}_\beta(\BOLDb,\BOLDa),
\end{align}
where we first multiply $\ChGANA[2][3]$ on both sides of \Ref{R34-1}.

Expanding \Ref{R34-4}, we have
\begin{equation}\label{R34-5}\begin{split}
& \BOLDb\,\psi^{(n)}_\alpha(\BOLDb,\BOLDa) - \BOLDa\,\psi^{(n)}_\beta(\BOLDb,\BOLDa) \\
& = \alpha_0 \BOLDb^{\Nid+1} + \sum_{k=1}^{\Nid} (\alpha_k - \beta_{k-1}) \BOLDb^{\Nid+1-k}\BOLDa^{k} + \beta_{\Nid}\BOLDa^{\Nid+1} \\
& = \sum_{k=0}^{\Nid+1} \gamma_k \BOLDb^{\Nid+1-k} \BOLDa^{k} = 0
\end{split}\end{equation}

By (iv), the coefficient values were chosen such that either $\alpha_0\!\neq\!0$ or $\beta_{\Nid}\!\neq\!0$ or $\alpha_{k}\!\neq\!\beta_{k-1}$ for some $k\!\in\!\{1,...,\Nid\}$. Then \Ref{R34-5} further implies that $\G['']$ satisfies $\sum_{k=0}^{\Nid+1} \gamma_k\BOLDb^{\Nid+1-k}\BOLDa^{k}=0$ with some non-zero $\gamma_k$. Equivalently, this means that the set of polynomials $\{\BOLDb^{\Nid+1}, \BOLDb^{\Nid}\BOLDa , \cdots, \BOLDb\BOLDa^{\Nid}, \BOLDa^{\Nid+1}\}$ is linearly dependent, and thus $\G['']$ satisfies $\LReq$. This contradicts the assumption \GS[19] that $\LRneq$ holds on $\G['']$. The proof of \RS[19] is thus complete.
\end{proof}

We prove \RS[20] as follows.
\begin{proof} Suppose \GS[1]\AND(\NotGS[3])\AND(\NotGS[4])\AND(\NotGS[22])\AND\GS[23] is true. Recall the definitions of  $e^{23}_u$, $e^{32}_u$, $e^{23}_v$, and $e^{32}_v$ when (\NotGS[3])\AND(\NotGS[4]) is true. From Property 1 of both \NotGS[3] and \NotGS[4], we know that $s_1$ can reach $e^{23}_u$ (resp. $e^{32}_u$) and then use $e^{23}_v$ (resp. $e^{32}_v$) to arrive at $d_1$. Note that \NotGS[22] being true implies that every \FromTo[1][1] path must use a vertex $w$ in-between $\tail[e^{23}_u]$ and $\head[e^{23}_v]$ or in-between $\tail[e^{32}_u]$ and $\head[e^{32}_v]$ or both. Combined with Property 3 of both \NotGS[3] and \NotGS[4], this further implies that every \FromTo[1][1] path must use $\{e^{23}_u,e^{23}_v\}$ or $\{e^{32}_u,e^{32}_v\}$ or both. 

From \GS[23] being true, we have $e^{23}_u \PREC e^{32}_u$. For the following we prove that (i) $\head[e^{23}_v] \PREC \tail[e^{32}_u]$; and (ii) there exists a path segment from $\head[e^{23}_v]$ to $d_1$ which is vertex-disjoint with any vertex in-between $\tail[e^{32}_u]$ and $\head[e^{32}_v]$. First we note that $e^{23}_u$ is not an $1$-edge cut separating $s_1$ and $\tail[e^{32}_u]$. The reason is that if $e^{23}_u\!\in\!\onecut[s_1][{\tail[e^{32}_u]}]$, then $e^{23}_u$ must be an $1$-edge cut separating $s_1$ and $d_1$ since any \FromTo[1][1] path must use $\{e^{23}_u,e^{23}_v\}$ or $\{e^{32}_u,e^{32}_v\}$ or both. However, since $e^{23}_u\!\in\!\Sover[2]\cap\Dover[3]$, this implies $e^{23}_u\!\in\!\onecut[{\{s_1, s_2\}}][{\{d_1,d_3\}}]$. This contradicts the assumption \GS[1]. We now consider all the possible cases: either $e^{23}_v\PREC e^{32}_u$ or $e^{32}_u\PRECEQ e^{23}_v$ or not reachable from each other. We first show that the last case is not possible. The reason is that suppose $e^{23}_v$ and $e^{32}_u$ are not reachable from each other, then $s_1$ can first reach $e^{23}_u$, then reach $e^{32}_u$ to $d_1$ without using $e^{23}_v$. This contradicts Property~3 of \NotGS[3]. Similarly, the second case is not possible because when $e^{32}_u\PRECEQ e^{23}_v$, we can find a path from $s_1$ to $e^{32}_u$ to $e^{23}_v$ to $d_1$ not using $e^{23}_u$ since $e^{23}_u\!\not\in\!\onecut[s_1][{\tail[e^{32}_u]}]$. This also contradicts Property~3 of \NotGS[3]. We thus have shown $e^{23}_v\PREC e^{32}_u$. Now we still need to show that $e^{23}_v$ and $e^{32}_u$ are not immediate neighbors: $\head[e^{23}_v]\PREC \tail[e^{32}_u]$. Suppose not, i.e., $\head[e^{23}_v]$
$=\!\tail[e^{32}_u]$. Then by Property~3 of \NotGS[3], we know that any path from $\head[e^{23}_v]\!=\!\tail[e^{32}_u]$ to $d_1$ must use both $e^{32}_u$ and $e^{32}_v$. By the conclusion in the first paragraph of this proof, we know that this implies $\{e^{32}_u,e^{32}_v\}\!\subset\!\onecut[s_1][d_1]$. However, this further implies that $\{e^{32}_u,e^{32}_v\}\!\subset\!\onecut[{\{s_1,s_3\}}][{\{d_1,d_2\}}]$, which contradicts \GS[1].  The proof of (i) is complete.

We now prove (ii). Suppose that every path from $\head[e^{23}_v]$ to $d_1$ has at least one vertex $w$ that satisfies $\tail[e^{32}_u]\PRECEQ w\PRECEQ \head[e^{32}_v]$.  Then by Property~3 of \NotGS[3], every \FromTo[1][1] path that uses $e^{23}_v$ must use both $e^{32}_u$ and $e^{32}_v$. By the findings in the first paragraph of this proof, this also implies that any \FromTo[1][1] path must use both $e^{32}_u$ and $e^{32}_v$. However, this further implies that $\{e^{32}_u,e^{32}_v\}\!\subset\!\onecut[{\{s_1, s_3\}}][{\{d_1,d_2\}}]$. This contradicts \GS[1]. We have thus proven (ii).


\begin{figure}[t]
\centering
\includegraphics[scale=0.2]{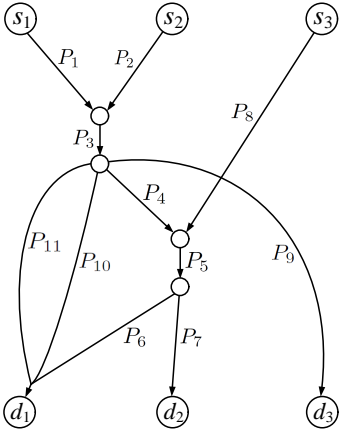}
\caption{The subgraph $\G[']$ of the 3-unicast ANA network $\GANA$ induced by the union of the 11 paths in the proof of \RS[20].} \vspace{-0.04\columnwidth}
\label{fig:RS[20]}
\end{figure}

Using the assumptions and the above discussions, we construct the following 11 path segments.

\noindent \makebox[1cm][l]{$\bullet$ $P_1$:}a path from $s_1$ to $\tail[e^{23}_u]$. This is always possible due to \GS[3] being false.

\noindent \makebox[1cm][l]{$\bullet$ $P_2$:}a path from $s_2$ to $\tail[e^{23}_u]$, which is edge-disjoint with $P_1$. This is always possible due to \GS[3] being false.

\noindent \makebox[1cm][l]{$\bullet$ $P_3$:}a path starting from $e^{23}_u$ and ending at $e^{23}_v$. This is always possible due to \GS[3] being false.

\noindent \makebox[1cm][l]{$\bullet$ $P_4$:}a path from $\head[e^{23}_v]$ to $\tail[e^{23}_u]$. This is always possible since we showed (i) in the above discussion.

\noindent \makebox[1cm][l]{$\bullet$ $P_5$:}a path starting from $e^{32}_u$ and ending at $e^{32}_v$. This is always possible due to \GS[4] being false.

\noindent \makebox[1cm][l]{$\bullet$ $P_6$:}a path from $\head[e^{32}_v]$ to $d_1$. This is always possible due to \GS[4] being false.

\noindent \makebox[1cm][l]{$\bullet$ $P_7$:}a path from $\head[e^{32}_v]$ to $d_2$, which is edge-disjoint with $P_6$. This is always possible due to \GS[4] being false and Property~2 of \NotGS[4].

\noindent \makebox[1cm][l]{$\bullet$ $P_8$:}a path from $s_3$ to $\tail[e^{32}_u]$. This is always possible due to \GS[4] being false.

\noindent \makebox[1cm][l]{$\bullet$ $P_9$:}a path from $\head[e^{23}_v]$ to $d_3$. This is always possible due to \GS[3] being false.

\noindent \makebox[1.15cm][l]{$\bullet$ $P_{10}$:}a path from $\head[e^{23}_v]$ to $d_1$, which is vertex-disjoint with $P_5$. This is always possible since we showed (ii) in the above discussion.

\noindent \makebox[1.15cm][l]{$\bullet$ $P_{11}$:}a path from $\head[e^{23}_v]$ to $d_1$, which is edge-disjoint with $P_9$. This is always possible due to \GS[3] being false.

Fig. \ref{fig:RS[20]} illustrates the relative topology of these 11 paths. We now consider the subgraph $\G[']$ induced by the above $11$ path segments. First, one can see that $s_i$ can reach $d_j$ for all $i\!\neq\!j$. In particular, $s_1$ can reach $d_2$ through $P_1 P_3 P_4 P_5 P_7$; $s_1$ can reach $d_3$ through $P_1 P_3 P_9$; $s_2$ can reach $d_1$ through either $P_2 P_3 P_4 P_5 P_6$ or $P_2 P_3 P_{10}$ or $P_2 P_3 P_{11}$; $s_2$ can reach $d_3$ through $P_2 P_3 P_9$; $s_3$ can reach $d_1$ through $P_8 P_5 P_6$; and $s_3$ can reach $d_2$ through $P_8 P_5 P_7$. Moreover, $s_1$ can reach $d_1$ through either $P_1 P_3 P_4 P_5 P_6$ or $P_1 P_3 P_{10}$ or $P_1 P_3 P_{11}$. Thus we showed \GS[16].

For the following, we will prove that $\ChGANA[1][1]\ChGANA[3][2]=\ChGANA[3][1]\ChGANA[1][2]$ and $\LRneq$ hold in the above $\G[']$. Note that $\{P_1, P_2, P_3, P_{10}\}$ must be vertex-disjoint with $P_8$, otherwise $s_3$ can reach $d_1$ without using $P_5$ and this contradicts $\{e^{32}_u,e^{32}_v\}\!\subset\!\Sover[3]\cap\Dover[2]\!\subset\!\onecut[s_3][d_1]$. Since $P_8$ is vertex-disjoint from $\{P_1,P_2\}$, one can easily see that removing $P_3$ separates $\{s_1,s_2\}$ and $\{d_1,d_3\}$. Thus $\G[']$ satisfies $\ChGANA[1][1]\ChGANA[3][2] = \ChGANA[3][1]\ChGANA[1][2]$.


To show that $\LRneq$ holds on $\G[']$, we make the following arguments. First, we show that $\G[']$ satisfies $\Sover[2]\cap\Sover[3]\EqualEmpty$. Note that any $\Sover[2]$ edge can exist only as one of four cases: (i) $P_2$; (ii) $P_3$; (iii) an edge that $P_{4}$, $P_{9}$, $P_{10}$, and $P_{11}$ share; and (iv) an edge that $P_{6}$, $P_{9}$, $P_{10}$, and $P_{11}$ share. Note also that any $\Sover[3]$ edge can exist only as one of three cases: (i) $P_8$; (ii) $P_5$; and (iii) an edge that $P_6$ and $P_7$ shares. But since $P_6$ and $P_7$ were chosen to be edge-disjoint with each other from the above construction, any $\Sover[3]$ edge can exist on either $P_8$ or $P_5$. However, $P_5$ was chosen to be vertex-disjoint with $P_{10}$ from the above construction and we also showed that $P_8$ is vertex-disjoint with $\{P_2,P_3,P_{10}\}$. Thus, $\Sover[2]\cap\Sover[3]\EqualEmpty$ on $\G[']$.

Second, we show that $\G[']$ satisfies $\Dover[1]\cap\Dover[2]\EqualEmpty$. Note that any $\Dover[1]$ edge can exist on an edge that all $P_{6}$, $P_{10}$, and $P_{11}$ share since $P_6$ cannot share an edge with any of its upstream paths (in particular $P_2$, $P_3$, $P_4$, and $P_5$); $P_5$ cannot share an edge with $P_{10}$ due to vertex-disjointness; and $P_8$ cannot share edge with $\{P_2,P_3,P_{10}\}$ otherwise there will be an \FromTo[3][1] path not using $P_5$. Note also that any $\Dover[2]$ edge can exist on (i) an edge that both $P_4$ and $P_8$ share; (ii) $P_5$; and (iii) $P_7$. However, $P_7$ was chosen to be edge-disjoint with $P_6$, and $P_5$ was chosen to be vertex-disjoint with $P_{10}$. Moreover, we already showed that $P_8$ is vertex-disjoint with $P_{10}$. Thus, $\Dover[1]\cap\Dover[2]\EqualEmpty$ on $\G[']$.

Third, we show that $\G[']$ satisfies $\Dover[1]\cap\Dover[3]\EqualEmpty$. Note that any $\Dover[1]$ edge can exist on an edge that $P_6$, $P_{10}$ and $P_{11}$ share. Note also that any $\Dover[3]$ edge can exist on (i) $P_3$; and (ii) $P_9$. However, all  $P_6$, $P_{10}$ and $P_{11}$ are the downstream paths of $P_3$. Moreover, $P_9$ was chosen to be edge-disjoint with $P_{11}$ by our construction. Thus, $\Dover[1]\cap\Dover[3]\EqualEmpty$ on $\G[']$.

Hence, the above discussions, together with \PropRef{Prop4}, implies that the considered $\G[']$ satisfies $\LRneq$. Thus we have proven \GS[18] being true. The proof is thus complete.
\end{proof}

We prove \RS[21] as follows.
\begin{proof} Suppose \LNR\AND(\NotGS[3])\AND(\NotGS[4])\AND\GS[22]\AND\GS[23]\AND\\\noindent\GS[25] is true. Recall the definitions of $e^{23}_u$, $e^{32}_u$, $e^{23}_v$, and $e^{32}_v$ when (\NotGS[3])\AND(\NotGS[4]) is true. 
From Property~1 of both \NotGS[3] and \NotGS[4], $s_1$ reaches $e^{23}_u$ and $e^{32}_u$, respectively. From \GS[22] being true, there exists a \FromTo[1][1] path $P_{11}^\ast$ who does not use any vertex in-between $\tail[e^{23}_u]$ and $\head[e^{23}_v]$, and any vertex in-between $\tail[e^{32}_u]$ and $\head[e^{32}_v]$.

Note that \GS[23]\AND\GS[25] implies $e^{23}_u\PREC e^{32}_u \PREC e^{23}_v$. For the following, we prove that $e^{32}_v \PREC e^{23}_v$. Note that by our construction $e^{32}_u\PRECEQ e^{32}_v$. As a result, we have $e^{23}_u\PREC$ $e^{32}_u \PRECEQ e^{32}_v \PREC e^{23}_v$.  To that end, we consider all the possible cases between $e^{32}_v$ and $e^{23}_v$: $e^{32}_v \PREC e^{23}_v$; or $e^{23}_v \PREC e^{32}_v$; or $e^{32}_v\!=\!e^{23}_v$; or they are not reachable from each other. We first show that the third case is not possible. The reason is that if $e^{32}_v\!=\!e^{23}_v$, then we have $\Sover[2]\cap\Sover[3]\cap\Dover[2]\cap\Dover[3]\NotEqualEmpty$, which contradicts the assumption \LNR. The last case in which $e^{32}_v$ and $e^{23}_v$ are not reachable from each other is also not possible. The reason is that by our construction, there is always an \FromTo[1][1] path through $e^{23}_u$, $e^{32}_u$, and $e^{32}_v$ without using $e^{23}_v$. Note that by Property~3 of \NotGS[3], such \FromTo[1][1] path must use $e^{23}_v$, which is a contradiction. We also claim that the second case, $e^{23}_v \PREC e^{32}_v$, is not possible. The reason is that if $e^{23}_v \PREC e^{32}_v$, then together with the assumption \GS[23]\AND\GS[25] we have $e^{23}_u\PREC e^{32}_u \PREC e^{23}_v \PREC e^{32}_v$. We also note that $e^{32}_u$ must be an $1$-edge cut separating $s_1$ and $\tail[e^{23}_v]$, otherwise $s_1$ can reach $\tail[e^{23}_v]$ without using $e^{32}_u$ and then use $e^{23}_v$ and $e^{32}_v$ to arrive at $d_2$. This contradicts the construction $e^{32}_u\!\in\!\Sover[3]\cap\Dover[2]\!\subset\!\onecut[s_1][d_2]$. Since $e^{23}_v\!\in\!\Sover[2]\cap\Dover[3]$ is also an $1$-edge cut separating $s_1$ and $d_3$, this in turn implies that $e^{32}_u\!\in\!\onecut[s_1][d_3]$. Symmetrically following this argument, we can also prove that $e^{23}_v\!\in\!\onecut[s_3][d_1]$. Since $e^{32}_u\!\in\!\Sover[3]\CAP\Dover[2]$ and $e^{23}_v\!\in\!\Sover[2]\CAP\Dover[3]$, these further imply that $e^{32}_u\!\in\!\Sover[1]\CAP\Sover[3]\CAP\Dover[2]$ and $e^{23}_v\!\in\!\Sover[2]\CAP\Dover[1]\CAP\Dover[3]$, which contradicts the assumption \LNR\;by \PropRef{Prop4}. We have thus established $e^{32}_v\PREC e^{23}_v$ and together with the assumption \GS[23]\AND\GS[25], we have $e^{23}_u$ $\PREC$ $e^{32}_u$ $\PRECEQ$ $e^{32}_v$ $\PREC e^{23}_v$.

Using the assumptions and the above discussions, we construct the following 7 path segments.

\noindent \makebox[1cm][l]{$\bullet$ $P_{1}$:}a path from $s_1$ to $\tail[e^{23}_u]$. This is always possible due to \GS[3] being false.

\noindent \makebox[1cm][l]{$\bullet$ $P_{2}$:}a path from $s_2$ to $\tail[e^{23}_u]$ which is edge-disjoint with $P_1$. This is always possible due to \GS[3] being false and Property~2 of \NotGS[3].

\noindent \makebox[1cm][l]{$\bullet$ $P_{3}$:}a path starting from $e^{23}_u$, using $e^{32}_u$ and $e^{32}_v$, and ending at $e^{23}_v$. This is always possible from the above discussion.

\noindent \makebox[1cm][l]{$\bullet$ $P_{4}$:}a path from $\head[e^{23}_v]$ to $d_1$. This is always possible due to \GS[3] being false.

\noindent \makebox[1cm][l]{$\bullet$ $P_{5}$:}a path from $\head[e^{23}_v]$ to $d_3$ which is edge-disjoint with $P_4$. This is always possible due to \GS[3] being false and Property 2 of \NotGS[3].

\noindent \makebox[1cm][l]{$\bullet$ $P_{6}$:}a path from $s_3$ to $\tail[e^{32}_u]$. This is always possible due to \GS[4] being false.

\noindent \makebox[1cm][l]{$\bullet$ $P_{7}$:}a path from $\head[e^{32}_v]$ to $d_2$. This is always possible due to \GS[4] being false.

We now consider the subgraph $\G[']$ induced by the above 7 path segments and $P_{11}^\ast$. First, one can easily check that $s_i$ can reach $d_j$ for all $i\!\neq\!j$. In particular, $s_1$ can reach $d_2$ through $P_1 P_3 e^{32}_v P_7$; $s_1$ can reach $d_3$ through $P_1 P_3 P_5$; $s_2$ can reach $d_1$ through $P_2 P_3 P_4$; $s_2$ can reach $d_3$ through $P_2 P_3 P_5$; $s_3$ can reach $d_1$ through $P_6 e^{32}_u P_3 P_4$; and $s_3$ can reach $d_2$ through $P_6 e^{32}_u P_3 e^{32}_v P_7$. Moreover, $s_1$ can reach $d_1$ through either $P_{11}^\ast$ or $P_1 P_3 P_4$. As a result, \GS[16] must hold.

We now prove \GS[17]. To that end, we will show that there exists an edge $\tilde{e}\!\in\!P_{11}^\ast$ that cannot reach any of $\{d_2,d_3\}$, and cannot be reached from any of $\{s_2,s_3\}$. Note from \GS[22] being true that $P_{11}^\ast$ was chosen to be vertex-disjoint with $P_3$. Note that $P_{11}^\ast$ must also be vertex-disjoint with $P_2$ (resp. $P_6$) otherwise $s_2$ (resp. $s_3$) can reach $d_1$ without using $P_3$ (resp. $e^{32}_u P_3 e^{32}_v $). Similarly, $P_{11}^\ast$ must also be vertex-disjoint with $P_5$ (resp. $P_7$) otherwise $s_1$ can reach $d_3$ (resp. $d_2$) without using $P_3$ (resp. $e^{32}_u P_3 e^{32}_v$). Hence, among 7 path segments constructed above, the only path segments that can share a vertex with $P_{11}^\ast$ are $P_1$ and $P_4$. Without loss of generality, we also assume that $P_1$ is chosen such that it overlaps with $P_{11}^\ast$ in the beginning but then ``branches out". That is, let $u^\ast$ denote the most downstream vertex among those who are used by both $P_{1}$ and $P_{11}^\ast$ and we can then replace $P_1$ by $s_1 P_{11}^\ast u^\ast P_1 \tail[e^{23}_u]$. Note that the new construction still satisfies the requirement that $P_1$ and $P_2$ are edge-disjoint since $P_{11}^\ast$ is vertex-disjoint with $P_2$. Similarly, we also assume that $P_4$ is chosen such that it does not overlap with $P_{11}^\ast$ in the beginning but then ``merges" with $P_{11}^\ast$ whenever $P_4$ shares a vertex $v^\ast$ with $P_{11}^\ast$ for the first time. The new construction of $P_4$, i.e., $\head[e^{23}_v] P_4 v^\ast P_{11}^\ast d_1$ is still edge-disjoint from $P_5$. 
Then in the considered subgraph $\G[']$, in order for an edge $e\!\in\!P_{11}^\ast$ to reach $d_2$ or $d_3$, we must have $\head[e]\PRECEQ u^\ast$. Similarly, in order for an edge $e\!\in\!P_{11}^\ast$ to be reached from $s_2$ or $s_3$, this edge $e$ must satisfy $v^\ast \PRECEQ \tail[e]$. If there does not exist such an edge $\tilde{e}\!\in\!P_{11}^\ast$ satisfying \GS[17], then it means that $u^\ast\!=\!v^\ast$. This, however, contradicts the assumption that $\G[]$ is acyclic because now we can walk from $u^\ast$ through $P_1 P_3 P_4$ back to $v^\ast\!=\!u^\ast$. Therefore, we thus have \GS[17]. The proof of \RS[21] is thus complete.
\end{proof}

We prove \RS[22] as follows.
\begin{proof} Suppose \LNR\AND(\NotGS[3])\AND(\NotGS[4])\AND\GS[22]\AND\GS[23]\AND\\\noindent(\NotGS[25]) is true. Recall the definitions of $e^{23}_u$, $e^{32}_u$, $e^{23}_v$, and $e^{32}_v$ when (\NotGS[3])\AND(\NotGS[4]) is true. 
From Property~1 of both \NotGS[3] and \NotGS[4], $s_1$ reaches $e^{23}_u$ and $e^{32}_u$, respectively. From \GS[22] being true, there exists a \FromTo[1][1] path $P_{11}^\ast$ who does not use any vertex in-between $\tail[e^{23}_u]$ and $\head[e^{23}_v]$, and any vertex in-between $\tail[e^{32}_u]$ and $\head[e^{32}_v]$.

Note that \GS[23] implies $e^{23}_u \PREC e^{32}_u$. For the following, we prove that $\head[e^{23}_v]\PREC \tail[e^{32}_u]$.
To that end, we consider all the possible cases by \NotGS[25] being true: either $e^{23}_v\PREC e^{32}_u$ or $e^{23}_v\!=\!e^{32}_u$ or not reachable from each other. We first show that the second case is not possible. The reason is that if $e^{23}_v\!=\!e^{32}_u$, then we have $\Sover[2]\cap\Sover[3]\cap\Dover[2]\cap\Dover[3]\NotEqualEmpty$, which contradicts the assumption \LNR. The third case in which $e^{23}_v$ and $e^{32}_u$ are not reachable from each other is also not possible. The reason is that by our construction, there is always an \FromTo[1][1] path through $e^{23}_u$, $e^{32}_u$, and $e^{32}_v$ without using $e^{23}_v$. Note that by Property~3 of \NotGS[3], such \FromTo[1][1] path must use $e^{23}_v$, which is a contradiction. We have thus established $e^{23}_v\PREC e^{32}_u$. We still need to show that $e^{23}_v$ and $e^{32}_u$ are not immediate neighbors since we are proving $\head[e^{23}_v]\PREC\tail[e^{32}_u]$. We prove this by contradiction. Suppose not, i.e., $w\!=\!\head[e^{23}_v]\!=\!\tail[e^{32}_u]$. Since $e^{32}_u\!\in\!\Sover[3]\CAP\Dover[2]\!\subset\!\onecut[s_1][d_2]$, any \FromTo[1][2] path must use its tail $w$. By Property~3 of \NotGS[3] we have $e^{23}_v\!\in\!\onecut[s_1][w]$. This in turn implies that $e^{23}_v$ is also an $1$-edge cut separating $s_1$ and $d_2$. 
By symmetry, we can also prove $e^{32}_u\!\in\!\onecut[s_2][d_1]$. Jointly the above argument implies that $e^{23}_v\!\in\!\Sover[1]\cap\Sover[2]\cap\Dover[3]$ and $e^{32}_u\!\in\!\Sover[3]\cap\Dover[1]\cap\Dover[2]$, which contradicts the assumption \LNR\;by \PropRef{Prop4}.

Based on the above discussions, we construct the following 9 path segments.

\noindent \makebox[1cm][l]{$\bullet$ $P_{1}$:}a path from $s_1$ to $\tail[e^{23}_u]$. This is always possible due to \GS[3] being false.

\noindent \makebox[1cm][l]{$\bullet$ $P_{2}$:}a path from $s_2$ to $\tail[e^{23}_u]$ which is edge-disjoint with $P_1$. This is always possible due to \GS[3] being false and Property~2 of \NotGS[3].

\noindent \makebox[1cm][l]{$\bullet$ $P_{3}$:}a path starting from $e^{23}_u$ and ending at $e^{23}_v$. This is always possible due to \GS[3] being false.

\noindent \makebox[1cm][l]{$\bullet$ $P_{4}$:}a path from $\head[e^{23}_v]$ to $\tail[e^{32}_u]$. This is always possible from the above discussion.

\noindent \makebox[1cm][l]{$\bullet$ $P_{5}$:}a path starting from $e^{32}_u$ and ending at $e^{32}_v$. This is always possible due to \GS[4] being false.

\noindent \makebox[1cm][l]{$\bullet$ $P_{6}$:}a path from $\head[e^{32}_v]$ to $d_1$. This is always possible due to \GS[4] being false.

\noindent \makebox[1cm][l]{$\bullet$ $P_{7}$:}a path from $\head[e^{32}_v]$ to $d_2$ which is edge-disjoint with $P_6$. This is always possible due to \GS[4] being false and Property~2 of \NotGS[4].

\noindent \makebox[1cm][l]{$\bullet$ $P_{8}$:}a path from $s_3$ to $\tail[e^{32}_u]$. This is always possible due to \GS[4] being false.

\noindent \makebox[1cm][l]{$\bullet$ $P_{9}$:}a path from $\head[e^{23}_v]$ to $d_3$. This is always possible due to \GS[3] being false.

From \GS[22] being true, $P_{11}^\ast$ was chosen to be vertex-disjoint with $\{P_3, P_5\}$. Note that $P_{11}^\ast$ must also be vertex-disjoint with $P_2$ (resp. $P_8$) otherwise $s_2$ (resp. $s_3$) can reach $d_1$ without using $P_3$ (resp. $P_5$). Similarly, $P_{11}^\ast$ must also be vertex-disjoint with $P_7$ (resp. $P_9$) otherwise $s_1$ can reach $d_2$ (resp. $d_3$) without using $P_5$ (resp. $P_3$). Hence, among 9 path segments constructed above, the only path segments that can share a vertex with $P_{11}^\ast$ are $P_1$, $P_4$, and $P_6$.

We now consider the subgraph $\G[']$ induced by the above 9 path segments and $P_{11}^\ast$. First, one can easily check that $s_i$ can reach $d_j$ for all $i\!\neq\!j$. In particular, $s_1$ can reach $d_2$ through $P_1 P_3 P_4 P_5 P_7$; $s_1$ can reach $d_3$ through $P_1 P_3 P_9$; $s_2$ can reach $d_1$ through $P_2 P_3 P_4 P_5 P_6$; $s_2$ can reach $d_3$ through $P_2 P_3 P_9$; $s_3$ can reach $d_1$ through $P_8 P_5 P_6$; and $s_3$ can reach $d_2$ through $P_8 P_5 P_7$. Moreover, $s_1$ can reach $d_1$ through either $P_{11}^\ast$ or $P_1 P_3 P_4 P_5 P_6$. Thus we showed \GS[16].

{\bf Case~1:} $P_{11}^\ast$ is also vertex-disjoint with $P_4$. In this case, we will prove that \GS[17] is satisfied. Namely, we claim that there exists an edge $\tilde{e}\!\in\!P_{11}^\ast$ that cannot reach any of $\{d_2,d_3\}$, and cannot be reached from any of $\{s_2,s_3\}$. Note that only path segments that $P_{11}^\ast$ can share a vertex with are $P_1$ and $P_6$. Without loss of generality, we assume that $P_1$ is chosen such that it overlaps with $P_{11}^\ast$ in the beginning but then ``branches out". That is, let $u^\ast$ denote the most downstream vertex among those who are used by both $P_1$ and $P_{11}^\ast$ and we can then replace $P_1$ by $s_1 P_{11}^\ast u^\ast P_1 \tail[e^{23}_u]$. Note that the new construction still satisfies the requirement that $P_1$ and $P_2$ are edge-disjoint since $P_{11}^\ast$ is vertex-disjoint with $P_2$. Similarly, we also assume that $P_6$ is chosen such that it does not overlap with $P_{11}^\ast$ in the beginning but then ``merges" with $P_{11}^\ast$ whenever $P_6$ shares a vertex $v^\ast$ with $P_{11}^\ast$ for the first time. The new construction of $P_6$, i.e, $\head[e^{32}_v] P_6 v^\ast P_{11}^\ast d_1$, is still edge-disjoint from $P_7$. Then in the considered subgraph $\G[']$, in order for an edge $e\!\in\!P_{11}^\ast$ to reach $d_2$ or $d_3$, we must have $\head[e]\PRECEQ u^\ast$. Similarly, in order for an edge $e\!\in\!P_{11}^\ast$ to be reached from $s_2$ or $s_3$, this edge $e$ must satisfy $v^\ast \PRECEQ \tail[e]$. If there does not exist such an edge $\tilde{e}\!\in\!P_{11}^\ast$ satisfying \GS[17], then it means that $u^\ast\!=\!v^\ast$. This, however, contradicts the assumption that $\G[]$ is acyclic because now we can walk from $u^\ast$ through $P_1 P_3 P_4 P_5 P_6$ back to $v^\ast\!=\!u^\ast$. Therefore, we thus have \GS[17] for {\bf Case~1}.



{\bf Case~2:} $P_{11}^\ast$ shares a vertex with $P_4$. In this case, we will prove that \GS[18] is true. Since $P_{11}^\ast$ is vertex-disjoint with $\{P_3,P_5\}$, $P_{11}^\ast$ must share a vertex $w$ with $P_4$ where $\head[e^{23}_v]$ $\PREC w \PREC \tail[e^{32}_u]$. Choose the most downstream vertex among those who are used by both $P_{11}^\ast$ and $P_4$ and denote it as $w'$. Then, denote the path segment $\head[e^{23}_v] P_4 w' P_{11}^\ast d_1$ by $P_{10}$. Note that we do not introduce new paths but only introduce a new notation as shorthand for a combination of some existing path segments. We observe that there may be some edge overlap between $P_4$ and $P_9$ since both starts from $\head[e^{23}_v]$. Let $\tilde{w}$ denote the most downstream vertex that is used by both $P_4$ and $P_9$. We then replace $P_9$ by $\tilde{w} P_9 d_3$, i.e., we truncate $P_9$ so that $P_9$ is now edge-disjoint from $P_4$.

Since the path segment $w' P_{10} d_1$ originally comes from $P_{11}^\ast$, $w' P_{10} d_1$ is also vertex-disjoint with $\{P_2,P_3,P_5,P_7,P_8,P_9\}$. In addition, $P_8$ must be vertex-disjoint with $\{P_1, P_2, P_3, P_{10}\}$, otherwise $s_3$ can reach $d_1$ without using $P_5$. 

Now we consider the another subgraph $\G['']\!\subset\!\G[']$ induced by the path segments $P_1$ to $P_8$, the redefined $P_{9}$, and newly constructed $P_{10}$, i.e., when compared to $\G[']$, we replace $P_{11}^\ast$ by $P_{10}$. One can easily verify that $s_i$ can reach $d_j$ for all $i\neq j$, and $s_1$ can reach $d_1$ on this new subgraph $\G['']$. Using the above topological relationships between these constructed path segments, we will further show that the induced $\G['']$ satisfies $\ChGANA[1][1]\ChGANA[3][2]=\ChGANA[3][1]\ChGANA[1][2]$ and $\LRneq$.

Since $P_8$ is vertex-disjoint from $\{P_1,P_2\}$, one can see that removing $P_3$ separates $\{s_1,s_2\}$ and $\{d_1,d_3\}$. Thus, the considered $\G['']$ also satisfies $\ChGANA[1][1]\ChGANA[3][2] = \ChGANA[3][1]\ChGANA[1][2]$.

To prove $\LRneq$, we first show that $\G['']$ satisfies $\Sover[2]\cap\Sover[3]\EqualEmpty$. Note that any $\Sover[2]$ edge can exist only as one of three cases: (i) $P_2$; (ii) $P_3$; (iii) an edge that $P_{4}$ and $P_{10}$ share, whose head is in the upstream of or equal to $\tilde{w}$, i.e., $\{e\!\in\!P_4\cap P_{10} : \head[e]\PRECEQ \tilde{w}\}$ (may or may not be empty); and (iv) an edge that $P_6$, $P_9$, and $P_{10}$ share. Note also that any $\Sover[3]$ edge can exist only as on of three cases: (i) $P_8$; (ii) $P_5$; and (iii) an edge that $P_6$ and $P_7$ share. But since $P_6$ and $P_7$ were chosen to be edge-disjoint from the above construction, any $\Sover[3]$ edge can exist on either $P_8$ or $P_5$. We then notice that $P_8$ is vertex-disjoint with $\{P_2,P_3,P_{10}\}$. Also, $P_5$ was chosen to be vertex-disjoint with $P_{10}$ and both $P_2$ and $P_3$ are in the upstream of $P_5$. The above arguments show that no edge can be simultaneously in $\Sover[2]$ and $\Sover[3]$. We thus have
$\Sover[2]\cap\Sover[3]\EqualEmpty$ on $\G['']$.


Second, we show that $\G['']$ satisfies $\Dover[1]\cap\Dover[2]\EqualEmpty$. Note that any $\Dover[1]$ edge can exist only an edge that both $P_{6}$ and $P_{10}$ share since any of $\{P_5, P_8\}$ does not share an edge with any of $\{P_2, P_3, P_{10}\}$. Note also that any $\Dover[2]$ edge can exist only as one of three cases: (i) an edge that both $P_4$ and $P_8$ share; (ii) $P_5$; and (iii) $P_7$. However, $P_7$ was chosen to be edge-disjoint with $P_6$, and we have shown that $P_5$ is vertex-disjoint with $P_{10}$. Moreover, we already showed that $P_8$ is vertex-disjoint with $P_{10}$. Thus, $\Dover[1]\cap\Dover[2]\EqualEmpty$ on $\G['']$.

Third, we show that $\G['']$ satisfies $\Dover[1]\cap\Dover[3]\EqualEmpty$. Note that any $\Dover[1]$ edge can exist only on an edge that both $P_{10}$ and $P_6$ share. Note also that any $\Dover[3]$ edge can exist only as one of three cases: (i) a $P_3$ edge; (ii) a $P_4$ edge whose head is in the upstream of or equal to $\tilde{w}$, i.e., $\{e\!\in\!P_4 : \head[e]\PRECEQ \tilde{w}\}$ (may or may not be empty); and (iii) $P_9$. However, $P_6$ is in the downstream of $P_3$ and $P_4$. Moreover, $P_9$ is edge-disjoint with $P_{11}^\ast$ and thus edge-disjoint with $w'P_{10}d_1$. As a result, no edge can be simultaneously in $\Dover[1]$ and $\Dover[3]$. Thus $\Dover[1]\cap\Dover[3]\EqualEmpty$ on $\G['']$.

Hence, the above discussions, together with \PropRef{Prop4}, implies that the considered $\G['']$ satisfies $\LRneq$. We thus have proven \GS[18] being true for {\bf Case~2}.
\end{proof}

By swapping the roles of $s_2$ and $s_3$, and the roles of $d_2$ and $d_3$, the proofs of \RS[20] to \RS[22] can also be used to prove \RS[23] to \RS[25], respectively. More specifically, \GS[3] and \GS[4] are converted back and forth from each other when swapping the flow indices. The same thing happens between \GS[23] and \GS[24]; between \GS[25] and \GS[26]; and between \GS[18] and \GS[19]. Moreover, \LNR, \GS[1], \GS[16], \GS[17], and \GS[22] remain the same after the index swapping. Thus the above proofs of \RS[20] to \RS[22] can thus be used to prove \RS[23] to \RS[25].

\section{Proof of \SS[14]}\label{ProofSS[14]}

\subsection{The fifth set of logic statements}

To prove \SS[14], we need the fifth set of logic statements.

\noindent $\bullet$ \GS[27]{\bf:}\label{ref:GS[27]} $\Sover[2]\cap\Dover[1]\EqualEmpty$.

\noindent $\bullet$ \GS[28]{\bf:}\label{ref:GS[28]} $\Sover[3]\cap\Dover[1]\EqualEmpty$.

\noindent $\bullet$ \GS[29]{\bf:}\label{ref:GS[29]} $\Dover[2]\cap\Sover[1]\EqualEmpty$.

\noindent $\bullet$ \GS[30]{\bf:}\label{ref:GS[30]} $\Dover[3]\cap\Sover[1]\EqualEmpty$.

\noindent $\bullet$ \GS[31]{\bf:}\label{ref:GS[31]} $\Sover[i]\NotEqualEmpty$ and $\Dover[i]\NotEqualEmpty$ for all $i\!\in\!\{1,2,3\}$.

Several implications can be made when \GS[27] is true. We term those implications {\em the properties of \GS[27]}. Several properties of \GS[27] are listed as follows, for which their proofs are provided in \AppRef{ProofsG27G28}.

{\em Consider the case in which \GS[27] is true.} Use $e^\ast_2$ to denote the most downstream edge in $\onecut[s_2][d_1]\cap\onecut[s_2][d_3]$. Since the source edge $e_{s_2}$ belongs to both $\onecut[s_2][d_1]$ and $\onecut[s_2][d_3]$, such $e^\ast_2$ always exists. Similarly, use $e^\ast_1$ to denote the most upstream edge in $\onecut[s_2][d_1]\cap\onecut[s_3][d_1]$. The properties of \GS[27] can now be described as follows.

\noindent \makebox[3.45cm][l]{$\diamond$ {\bf Property~1 of \GS[27]:}}$e^\ast_2 \PREC e^\ast_1$ and the channel gains $\ChGANA[1][2]$, $\ChGANA[3][2]$, and $\ChGANA[1][3]$ can be expressed as $\ChGANA[1][2] = \ChG[e^\ast_2][e_{s_2}]\ChG[e^\ast_1][e^\ast_2]$ $\ChG[e_{d_1}][e^\ast_1]$, $\ChGANA[3][2] = \ChG[e^\ast_2][e_{s_2}]\ChG[e_{d_3}][e^\ast_2]$, and $\ChGANA[1][3] = \ChG[e^\ast_1][e_{s_3}]\ChG[e_{d_1}][e^\ast_1]$.

\noindent \makebox[3.45cm][l]{$\diamond$ {\bf Property~2 of \GS[27]:}}$\GCD[{\ChG[e^\ast_1][e_{s_3}]}][\;{\ChG[e^\ast_2][e_{s_2}]\ChG[e^\ast_1][e^\ast_2]}]\PolyEqual 1$,
$\GCD[{\ChG[e^\ast_1][e^\ast_2]\ChG[e_{d_1}][e^\ast_1]}][\;{\ChG[e_{d_3}][e^\ast_2]}]\PolyEqual 1$, $\GCD[{\ChGANA[1][3]}][\;{\ChG[e^\ast_1][e^\ast_2]}]\PolyEqual 1$, and $\GCD[{\ChGANA[3][2]}][\;{\ChG[e^\ast_1][e^\ast_2]}]\PolyEqual 1$.

On the other hand, when \GS[27] is false, we can also derive several implications, which are termed {\em the properties of \NotGS[27]}.

{\em Consider the case in which \GS[27] is false}.  Use $e^{21}_u$ (resp.\ $e^{21}_v$) to denote the most upstream (resp.\ the most downstream) edge in $\Sover[2]\cap\Dover[1]$. By definition, it must be $e^{21}_u\PRECEQ e^{21}_v$. We now describe the following properties of \NotGS[27].

\noindent \makebox[3.7cm][l]{$\diamond$ {\bf Property~1 of \NotGS[27]:}}The channel gains $\ChGANA[1][2]$, $\ChGANA[3][2]$, and $\ChGANA[1][3]$ can be expressed as $\ChGANA[1][2] = \ChG[e^{21}_u][e_{s_2}]\ChG[e^{21}_v][e^{21}_u]\ChG[e_{d_1}][e^{21}_v]$, $\ChGANA[3][2]\!=\!\ChG[e^{21}_u][e_{s_2}]\ChG[e^{21}_v][e^{21}_u]\ChG[e_{d_3}][e^{21}_v]$, and $\ChGANA[1][3]\!=\! \ChG[e^{21}_u][e_{s_3}]\ChG[e^{21}_v][e^{21}_u]$ $\ChG[e_{d_1}][e^{21}_v]$.

\noindent \makebox[3.7cm][l]{$\diamond$~{\bf Property~2 of \NotGS[27]:}}$\GCD[{\ChG[e^{21}_u][e_{s_2}]}][\;{\ChG[e^{21}_u][e_{s_3}]}]\PolyEqual 1$ and $\GCD[{\ChG[e_{d_1}][e^{21}_v]}][\;{\ChG[e_{d_3}][e^{21}_v]}]\,\PolyEqual 1$.


Symmetrically, we define the following properties of \GS[28] and \NotGS[28].

{\em Consider the case in which \GS[28] is true.} Use $e^\ast_3$ to denote the most downstream edge in $\onecut[s_3][d_1]\cap\onecut[s_3][d_2]$, and use $e^\ast_1$ to denote the most upstream edge in $\onecut[s_2][d_1]\cap\onecut[s_3][d_1]$. We now describe the following properties of \GS[28].

\noindent \makebox[3.45cm][l]{$\diamond$~{\bf Property~1 of \GS[28]:}}$e^\ast_3 \PREC e^\ast_1$ and
the channel gains $\ChGANA[1][3]$, $\ChGANA[2][3]$, and $\ChGANA[1][2]$ can be expressed as $\ChGANA[1][3] = \ChG[e^\ast_3][e_{s_3}]\ChG[e^\ast_1][e^\ast_3]$ $\ChG[e_{d_1}][e^\ast_1]$, $\ChGANA[2][3] = \ChG[e^\ast_3][e_{s_3}]\ChG[e_{d_2}][e^\ast_3]$, and $\ChGANA[1][2] = \ChG[e^\ast_1][e_{s_2}]\ChG[e_{d_1}][e^\ast_1]$.

\noindent \makebox[3.45cm][l]{$\diamond$~{\bf Property~2 of \GS[28]:}}$\GCD[{\ChG[e^\ast_1][e_{s_2}]}][\;{\ChG[e^\ast_3][e_{s_3}]\ChG[e^\ast_1][e^\ast_3]}]\PolyEqual 1$,
$\GCD[{\ChG[e^\ast_1][e^\ast_3]\ChG[e_{d_1}][e^\ast_1]}][\;{\ChG[e_{d_2}][e^\ast_3]}]\PolyEqual 1$, $\GCD[{\ChGANA[1][2]}][\;{\ChG[e^\ast_1][e^\ast_3]}]\PolyEqual 1$, and $\GCD[{\ChGANA[2][3]}][\;{\ChG[e^\ast_1][e^\ast_3]}]\,\PolyEqual 1$.

{\em Consider the case in which \GS[28] is false.} Use $e^{31}_u$ (resp.\ $e^{31}_v$) to denote the most upstream (resp.\ the most downstream) edge in $\Sover[3]\cap\Dover[1]$. By definition, it must be $e^{31}_u\PRECEQ e^{31}_v$. We now describe the following properties of \NotGS[28].

\noindent \makebox[3.7cm][l]{$\diamond$~{\bf Property~1 of \NotGS[28]:}}The channel gains $\ChGANA[1][3]$, $\ChGANA[2][3]$, and $\ChGANA[1][2]$ can be expressed as $\ChGANA[1][3] = \ChG[e^{31}_u][e_{s_3}]\ChG[e^{31}_v][e^{31}_u]\ChG[e_{d_1}][e^{31}_v]$, $\ChGANA[2][3]\!=\!\ChG[e^{31}_u][e_{s_3}]\ChG[e^{31}_v][e^{31}_u]\ChG[e_{d_2}][e^{23}_v]$, and $\ChGANA[1][2]\!=\! \ChG[e^{31}_u][e_{s_2}]\ChG[e^{31}_v][e^{31}_u]$ $\ChG[e_{d_1}][e^{31}_v]$.

\noindent \makebox[3.7cm][l]{$\diamond$~{\bf Property~2 of \NotGS[28]:}}$\GCD[{\ChG[e^{31}_u][e_{s_2}]}][\;{\ChG[e^{31}_u][e_{s_3}]}]\PolyEqual 1$ and $\GCD[{\ChG[e_{d_1}][e^{31}_v]}][\;{\ChG[e_{d_2}][e^{31}_v]}]\,\PolyEqual 1$.

\subsection{The skeleton of proving \SS[14]}

We prove the following relationships, which jointly prove \SS[14].

\noindent $\bullet$ \RS[26]{\bf:}\label{ref:RS[26]} \DS[3]\AND\DS[4]\IMPLY \GS[31].

\noindent $\bullet$ \RS[27]{\bf:}\label{ref:RS[27]} \LNR\AND(\NotGS[27])\AND(\NotGS[28])\AND(\NotGS[29])\AND(\NotGS[30])\IMPLY \\\noindent\CONT.

\noindent $\bullet$ \RS[28]{\bf:}\label{ref:RS[28]} \DS[3]\AND\DS[4]\AND\GS[27]\AND\GS[28]\IMPLY \CONT.

\noindent $\bullet$ \RS[29]{\bf:}\label{ref:RS[29]} \LNR\AND\GS[1]\AND\ES[0]\AND\DS[3]\AND\DS[4]\AND(\NotGS[27])\AND\GS[28]\IMPLY\CONT.

\noindent $\bullet$ \RS[30]{\bf:}\label{ref:RS[30]} \LNR\AND\GS[1]\AND\ES[0]\AND\DS[3]\AND\DS[4]\AND\GS[27]\AND(\NotGS[28])\IMPLY\CONT.

\noindent $\bullet$ \RS[31]{\bf:}\label{ref:RS[31]} \DS[3]\AND\DS[4]\AND\GS[29]\AND\GS[30]\IMPLY \CONT.

\noindent $\bullet$ \RS[32]{\bf:}\label{ref:RS[32]} \LNR\AND\GS[1]\AND\ES[0]\AND\DS[3]\AND\DS[4]\AND(\NotGS[29])\AND\GS[30]\IMPLY\CONT.

\noindent $\bullet$ \RS[33]{\bf:}\label{ref:RS[33]} \LNR\AND\GS[1]\AND\ES[0]\AND\DS[3]\AND\DS[4]\AND\GS[29]\AND(\NotGS[30])\IMPLY\CONT.

One can see that \RS[28] and \RS[31] imply, respectively,
\begin{align}
&\textbf{LNR}\wedge\textbf{G1}\wedge\textbf{E0}\wedge\textbf{D3}\wedge\textbf{D4}\wedge\textbf{G27}\wedge\textbf{G28}\Rightarrow\,\text{false}, \label{RS[17]-1} \\
&\textbf{LNR}\wedge\textbf{G1}\wedge\textbf{E0}\wedge\textbf{D3}\wedge\textbf{D4}\wedge\textbf{G29}\wedge\textbf{G30}\Rightarrow\,\text{false}. \label{RS[17]-2}
\end{align}

Also \RS[27] implies
\begin{equation}\label{RS[17]-3}
\begin{split}
&\textbf{LNR}\wedge\textbf{G1}\wedge\textbf{E0}\wedge\textbf{D3}\wedge\textbf{D4}\wedge(\neg\,\textbf{G27})\,\wedge \\
&\qquad\qquad\;\;\;\;(\neg\,\textbf{G28})\wedge(\neg\,\textbf{G29})\wedge(\neg\,\textbf{G30})\Rightarrow\,\text{false}.
\end{split}\end{equation}

\RS[29], \RS[30], \RS[32], \RS[33], \Ref{RS[17]-1}, \Ref{RS[17]-2}, and \Ref{RS[17]-3} jointly imply
\begin{align*}
&\textbf{LNR}\wedge\textbf{G1}\wedge\textbf{E0}\wedge\textbf{D3}\wedge\textbf{D4}\Rightarrow\,\text{false},
\end{align*}
which proves \SS[14]. The proofs of \RS[26] and \RS[27] are relegated to \AppRef{ProofsRS[26]-RS[27]}. The proofs of \RS[28], \RS[29], and \RS[30] are provided in Appendices~\ref{ProofRS[28]},~\ref{ProofRS[29]}, and~\ref{ProofRS[30]}, respectively.

The logic relationships \RS[31] to \RS[33] are the symmetric versions of \RS[28] to \RS[30]. Specifically, if we swap the roles of sources and destinations, then the resulting graph is still a 3-unicast ANA network; \DS[3] is now converted to \DS[4]; \DS[4] is converted to \DS[3]; \GS[27] is converted to \GS[29]; and \GS[28] is converted to \GS[30]. Therefore, the proof of \RS[28] can serve as a proof of \RS[31]. Further, after swapping the roles of sources and destinations, the \LNR\;condition (see \Ref{GTC1}) remains the same; \GS[1] remains the same (see \Ref{GTC2}); and \ES[0] remains the same. Therefore, the proof of \RS[29] (resp. \RS[30]) can serve as a proof of \RS[32] (resp. \RS[33]).

\section{Proofs of the properties of \GS[27], \GS[28], \NotGS[27], and \NotGS[28]}\label{ProofsG27G28}

We prove Properties 1 and 2 of \GS[27] as follows.
\begin{proof} By swapping the roles of $s_1$ and $s_3$, and the roles of $d_1$ and $d_3$, the proof of the properties of \GS[3] in \AppRef{ProofsG3G4} can be used to prove the properties of \GS[27].
\end{proof}

We prove Properties 1 and 2 of \NotGS[27] as follows.
\begin{proof} By swapping the roles of $s_1$ and $s_3$, and the roles of $d_1$ and $d_3$, the proof of Properties~1 and~2 of \NotGS[3] in \AppRef{ProofsG3G4} can be used to prove the properties of \NotGS[27].
\end{proof}

By swapping the roles of $s_2$ and $s_3$, and the roles of $d_2$ and $d_3$, the above proofs can also be used to prove Properties 1 and 2 of \GS[28] and Properties 1 and 2 of \NotGS[28].

\section{Proofs of \RS[26] and \RS[27]}\label{ProofsRS[26]-RS[27]}
We prove \RS[26] as follows.
\begin{proof}
Suppose \DS[3]\AND\DS[4] is true. By \CorRef{Cor2}, we know that any channel gain cannot have any other channel gain as a factor. Since \DS[3]\AND\DS[4] is true, any one of the four channel gains $\ChGANA[2][1]$, $\ChGANA[1][3]$, $\ChGANA[3][1]$, and $\ChGANA[1][2]$ must be reducible.

Since \DS[4] is true, we must also have for some positive integer $l_4$ such that
\begin{align}
\GCD[{\ChGANA[1][1]\ChGANA[2][1]^{l_4}\ChGANA[3][2]^{l_4}\ChGANA[1][3]^{l_4}}][\;{\ChGANA[1][2]}]= \ChGANA[1][2]. \label{RS[26]-1}
\end{align}

We first note that $\ChGANA[3][2]$ is the only channel gain starting from $s_2$ out of the four channel gains $\{\ChGANA[1][1], \ChGANA[2][1], \ChGANA[3][2], \ChGANA[1][3]\}$. Therefore, we must have $\GCD[{\ChGANA[3][2]}][{\ChGANA[1][2]}]\,\PolyNotEqual 1$ since ``we need to cover the factor of $\ChGANA[1][2]$ that emits from $s_2$." \LemRef{Lem7} then implies that $\Sover[2]\NotEqualEmpty$.

Further, \DS[4] implies $\GCD[{\ChGANA[1][1]\ChGANA[2][1]^{l_4}\ChGANA[3][2]^{l_4}\ChGANA[1][3]^{l_4}}][\,{\ChGANA[3][1]}]= \ChGANA[3][1]$ for some positive integer $l_4$, which, by similar arguments, implies $\GCD[{\ChGANA[3][2]}][{\ChGANA[3][1]}]\PolyNotEqual 1$. \LemRef{Lem7} then implies that $\Dover[3]\NotEqualEmpty$. By similar arguments but focusing on \DS[3] instead, we can also prove that $\Sover[3]\NotEqualEmpty$ and $\Dover[2]\NotEqualEmpty$.

We also notice that out of the four channel gains $\{\ChGANA[1][1], \ChGANA[2][1], \ChGANA[3][2], \ChGANA[1][3]\}$, both $\ChGANA[1][1]$ and $\ChGANA[2][1]$ are the only channel gains starting from $s_1$. By \DS[4], we thus have for some positive integer $l_4$ such that
\begin{align}
\GCD[{\ChGANA[1][1]\ChGANA[2][1]^{l_4}}][\;{\ChGANA[3][1]}]\PolyNotEqual 1. \label{RS[26]-2}
\end{align}

Similarly, by \DS[3] and \DS[4], we have for some positive integers $l_2$ and $l_4$ such that
\begin{align}
& \GCD[{\ChGANA[1][1]\ChGANA[1][3]^{l_4}}][\;{\ChGANA[1][2]}]\not\equiv 1, \label{RS[26]-3} \\
& \GCD[{\ChGANA[1][1]\ChGANA[3][1]^{l_2}}][\;{\ChGANA[2][1]}]\not\equiv 1, \label{RS[26]-4} \\
& \GCD[{\ChGANA[1][1]\ChGANA[1][2]^{l_2}}][\;{\ChGANA[1][3]}]\not\equiv 1. \label{RS[26]-5}
\end{align}

For the following, we will prove $\Sover[1]\NotEqualEmpty$. Consider the following subcases: Subcase 1: If $\GCD[{\ChGANA[2][1]}][\,{\ChGANA[3][1]}]\PolyNotEqual 1$, then by \LemRef{Lem7}, $\Sover[1]\NotEqualEmpty$. Subcase 2: If $\GCD[{\ChGANA[2][1]}][\,{\ChGANA[3][1]}]\PolyEqual 1$, then \Ref{RS[26]-2} and \Ref{RS[26]-4} jointly imply both $\GCD[{\ChGANA[1][1]}][\,{\ChGANA[3][1]}]\PolyNotEqual 1$ and $\GCD[{\ChGANA[1][1]}][\,{\ChGANA[2][1]}]\PolyNotEqual 1$. Then by first applying \LemRef{Lem7} and then applying \LemRef{Lem6}, we have $\Sover[1]\NotEqualEmpty$. The proof of $\Dover[1]\NotEqualEmpty$ can be derived similarly by focusing on \Ref{RS[26]-3} and \Ref{RS[26]-5}. The proof of \RS[26] is complete.
\end{proof}

We prove \RS[27] as follows.
\begin{proof} We prove an equivalent relationship: (\NotGS[27])\AND\\\noindent(\NotGS[28])\AND(\NotGS[29])\AND(\NotGS[30])\IMPLY\NotLNR. Suppose (\NotGS[27])\AND\\\noindent(\NotGS[28])\AND(\NotGS[29])\AND(\NotGS[30]) is true. By \LemRef{Lem4}, we know that (\NotGS[27])\AND(\NotGS[28]) is equivalent to $\Sover[2]\cap\Sover[3]\NotEqualEmpty$. Similarly, (\NotGS[29])\AND(\NotGS[30]) is equivalent to $\Dover[2]\CAP\Dover[3]\NotEqualEmpty$. By \PropRef{Prop4}, we have $\LReq$. The proof is thus complete.
\end{proof}

\section{Proof of \RS[28]}\label{ProofRS[28]}

\subsection{The additional set of logic statements}

To prove \RS[28], we need an additional set of logic statements. The following logic statements are well-defined if and only if \GS[27]\AND\GS[28] is true. Recall the definition of $e^\ast_2$, $e^\ast_3$, and $e^\ast_1$ in \AppRef{ProofSS[14]} when \GS[27]\AND\GS[28] is true.

\noindent $\bullet$ \GS[32]{\bf:}\label{ref:GS[32]} $e^\ast_2\!\neq\!e^\ast_3$ and  $\GCD[{\ChG[e^\ast_2][e_{s_2}]\ChG[e^\ast_1][e^\ast_2]}][\,{\ChG[e^\ast_3][e_{s_3}]\ChG[e^\ast_1][e^\ast_3]}]\,\PolyEqual 1$.

\noindent $\bullet$ \GS[33]{\bf:}\label{ref:GS[33]} $\GCD[{\ChGANA[1][1]}][\,{\ChG[e^\ast_1][e^\ast_2]}]\,\PolyEqual 1$.

\noindent $\bullet$ \GS[34]{\bf:}\label{ref:GS[34]} $\GCD[{\ChGANA[1][1]}][\,{\ChG[e^\ast_1][e^\ast_3]}]\,\PolyEqual 1$.

The following logic statements are well-defined if and only if \GS[27]\AND\GS[28]\AND\GS[31] is true.

\noindent $\bullet$ \GS[35]{\bf:}\label{ref:GS[35]} $\{e^\ast_2,e^\ast_1\}\!\subset\!\onecut[s_1][d_2]$.

\noindent $\bullet$ \GS[36]{\bf:}\label{ref:GS[36]} $\{e^\ast_3,e^\ast_1\}\!\subset\!\onecut[s_1][d_3]$.

\subsection{The skeleton of proving \RS[28]}

We prove the following logic relationships, which jointly proves \RS[28].

\noindent $\bullet$ \RS[34]{\bf:}\label{ref:RS[34]} \GS[27]\AND\GS[28]\IMPLY\GS[32].

\noindent $\bullet$ \RS[35]{\bf:}\label{ref:RS[35]} \DS[4]\AND\GS[27]\AND\GS[28]\AND\GS[31]\AND\GS[33]\IMPLY\GS[35].

\noindent $\bullet$ \RS[36]{\bf:}\label{ref:RS[36]} \DS[3]\AND\GS[27]\AND\GS[28]\AND\GS[31]\AND\GS[34]\IMPLY\GS[36].

\noindent $\bullet$ \RS[37]{\bf:}\label{ref:RS[37]} \GS[27]\AND\GS[28]\AND(\NotGS[33])\AND(\NotGS[34])\IMPLY\CONT.

\noindent $\bullet$ \RS[38]{\bf:}\label{ref:RS[38]} \GS[27]\AND\GS[28]\AND\GS[31]\AND(\NotGS[33])\AND\GS[36]\IMPLY\CONT.

\noindent $\bullet$ \RS[39]{\bf:}\label{ref:RS[39]} \GS[27]\AND\GS[28]\AND\GS[31]\AND(\NotGS[34])\AND\GS[35]\IMPLY\CONT.

\noindent $\bullet$ \RS[40]{\bf:}\label{ref:RS[40]} \GS[27]\AND\GS[28]\AND\GS[31]\AND\GS[35]\AND\GS[36]\IMPLY\CONT.

Specifically, \RS[35] and \RS[39] jointly imply that
\begin{align*}
\textbf{D3}\wedge\textbf{D4}\wedge\textbf{G27}\wedge\textbf{G28}\wedge\textbf{G31}\wedge\textbf{G33}\wedge(\neg\,\textbf{G34})\Rightarrow\,\text{false}. \end{align*}

Moreover, \RS[36] and \RS[38] jointly imply that
\begin{align*}
\textbf{D3}\wedge\textbf{D4}\wedge\textbf{G27}\wedge\textbf{G28}\wedge\textbf{G31}\wedge(\neg\,\textbf{G33})\wedge\textbf{G34}\Rightarrow\,\text{false}. \end{align*}

Furthermore, \RS[35], \RS[36], and \RS[40] jointly imply that
\begin{align*}
\textbf{D3}\wedge\textbf{D4}\wedge\textbf{G27}\wedge\textbf{G28}\wedge\textbf{G31}\wedge\textbf{G33}\wedge\textbf{G34}\Rightarrow\,\text{false}. \end{align*}

Finally, \RS[37] implies that
\begin{align*}
\textbf{D3}\wedge\textbf{D4}\wedge\textbf{G27}\wedge\textbf{G28}\wedge\textbf{G31}\wedge(\neg\,\textbf{G33})\wedge(\neg\,\textbf{G34})\Rightarrow\,\text{false}. \end{align*}


The above four relationships jointly imply \DS[3]\AND\DS[4]\AND\GS[27]\\\noindent\AND\GS[28]\AND\GS[31]\IMPLY\CONT. By \RS[26] in \AppRef{ProofSS[14]}, i.e., \DS[3]\AND\DS[4]\\\noindent\IMPLY\GS[31], we thus have \DS[3]\AND\DS[4]\AND\GS[27]\AND\GS[28]\IMPLY\CONT. The proof of \RS[28] is thus complete. The detailed proofs of \RS[34] to \RS[40] are provided in the next subsection.

\subsection{The proofs of \RS[34] to \RS[40]}

We prove \RS[34] as follows.
\begin{proof} Suppose \GS[27]\AND\GS[28] is true. Since $e^\ast_1$ is the most upstream $1$-edge cut separating $d_1$ from $\{s_2,s_3\}$, there must exist two edge-disjoint paths connecting $\{s_2,s_3\}$ and $\tail[e^\ast_1]$. By Property~1 of \GS[27] and \GS[28], one path must use $e^\ast_2$ and the other must use $e^\ast_3$. Due to the edge-disjointness, $e^\ast_2\!\neq\!e^\ast_3$. Since we have two edge-disjoint paths from $s_2$ (resp. $s_3$) to $\tail[e^\ast_1]$, we also have $\GCD[{\ChG[e_2^\ast][{\EDGE[s_2][]}]\ChG[e_1^\ast][e_2^\ast]}][{\,\ChG[e_3^\ast][{\EDGE[s_3][]}]\ChG[e_1^\ast][e_3^\ast]}]\,\PolyEqual 1$.
\end{proof}

We prove \RS[35] as follows.
\begin{proof} Suppose \DS[4]\AND\GS[27]\AND\GS[28]\AND\GS[31]\AND\GS[33] is true. By the Properties of \GS[27] and \GS[28] and by \GS[31], $e^\ast_2$ (resp. $e^\ast_3$) is the most downstream edge of $\Sover[2]$ (resp. $\Sover[3]$). And both $e^\ast_2$ and $e^\ast_3$ are in the upstream of $e^\ast_1$ where $e^\ast_1$ is the most upstream edge of $\Dover[1]$. Consider $\ChG[e^\ast_1][e^\ast_2]$, a factor of $\ChGANA[1][2]$. From Property 2 of \GS[27], we have $\GCD[{\ChGANA[3][2]}][\,{\ChG[e^\ast_1][e^\ast_2]}]\,\PolyEqual 1$. In addition, since \GS[27]\AND\GS[28]\IMPLY\GS[32] as established in \RS[34], we have $\GCD[{\ChGANA[1][3]}][\,{\ChG[e^\ast_1][e^\ast_2]}]\,\PolyEqual 1$. Together with the assumption that \DS[4] is true, we have for some positive integer $l_4$ such that
\begin{align}
\GCD[{\ChGANA[1][1]\ChGANA[2][1]^{l_4}}][\;{\ChG[e_1^\ast][e_2^\ast]}]=\ChG[e_1^\ast][e_2^\ast]. \label{RS11}
\end{align}

Since we assume that \GS[33] is true, \Ref{RS11} further implies $\GCD[{\ChGANA[2][1]^{l_4}}][\,{\ChG[e_1^\ast][e_2^\ast]}]\!=\!\ChG[e_1^\ast][e_2^\ast]$. By \PropRef{Prop3}, we must have \GS[35]: $\{e^\ast_2,e^\ast_1\}\!\subset\!\onecut[s_1][d_2]$. The proof is thus complete.
\end{proof}

\RS[36] is a symmetric version of \RS[35] and can be proved by relabeling $(s_2,d_2)$ as $(s_3,d_3)$, and relabeling $(s_3,d_3)$ as $(s_2,d_2)$ in the proof of \RS[35].

We prove \RS[37] as follows.
\begin{proof} Suppose \GS[27]\AND\GS[28]\AND(\NotGS[33])\AND(\NotGS[34]) is true. Since \GS[27]\AND\GS[28] is true, we have two edge-disjoint paths $P_{s_2 \tail[e^\ast_1]}$ through $e^\ast_2$ and $P_{s_3 \tail[e^\ast_1]}$ through $e^\ast_3$ if we recall \RS[34]. Consider $\ChG[e^\ast_1][e^\ast_2]$, a factor of $\ChGANA[1][2]$, and $\ChG[e^\ast_1][e^\ast_3]$, a factor of $\ChGANA[1][3]$. Since \NotGS[33] is true, there is an irreducible factor of $\ChG[e^\ast_1][e^\ast_2]$ that is also a factor of $\ChGANA[1][1]$. Since that factor is also a factor of $\ChGANA[1][2]$, by \PropRef{Prop3} and Property~1 of \GS[27], there must exist at least one edge $e'$ satisfying (i) $e^\ast_2 \PRECEQ e' \PREC e^\ast_1$; (ii) $e'\!\in\!\Dover[1][;\{1,2\}]$; and (iii) $e'\!\in\!P_{s_2 \tail[e^\ast_1]}$. 
Similarly, \NotGS[34] implies that there exists at least one edge $e''$ satisfying (i) $e^\ast_3 \PRECEQ e'' \PREC e^\ast_1$; (ii) $e''\!\in\!\Dover[1][;\{1,3\}]$; and (iii) $e''\!\in\!P_{s_3 \tail[e^\ast_1]}$. Then the above observation implies that $e'\!\in\!P_{s_2 \tail[e^\ast_1]}\cap\onecut[s_1][d_1]$ and $e''\!\in\!P_{s_3 \tail[e^\ast_1]}\cap\onecut[s_1][d_1]$. Since $P_{s_2 \tail[e^\ast_1]}$ and $P_{s_3 \tail[e^\ast_1]}$ are edge-disjoint paths, it must be $e'\!\neq\!e''$. But both $e'$ and $e''$ are $1$-edge cuts separating $s_1$ and $d_1$. Thus $e'$ and $e''$ must be reachable from each other: either $e'\PREC e''$ or $e'' \PREC e'$. However, both cases are impossible because one in the upstream can always follow the corresponding $P_{s_2 \tail[e^\ast_1]}$ or $P_{s_3 \tail[e^\ast_1]}$ path to $e^\ast_1$ without using the one in the downstream. For example, if $e' \PREC e''$, then $s_1$ can first reach $e'$ and follow $P_{s_2 \tail[e^\ast_1]}$ to arrive at $\tail[e^\ast_1]$ without using $e''$. Since $e^\ast_1\!\in\!\Dover[1]$ reaches $d_1$, this contradicts $e''\!\in\!\onecut[s_1][d_1]$. Since neither case can be true, the proof is thus complete.
\end{proof}

We prove \RS[38] as follows.
\begin{proof} Suppose \GS[27]\AND\GS[28]\AND\GS[31]\AND(\NotGS[33])\AND\GS[36] is true. By the Properties of \GS[27] and \GS[28] and by \GS[31], $e^\ast_2$ (resp. $e^\ast_3$) is the most downstream edge of $\Sover[2]$ (resp. $\Sover[3]$). And both $e^\ast_2$ and $e^\ast_3$ are in the upstream of $e^\ast_1$ where $e^\ast_1$ is the most upstream edge of $\Dover[1]$. Since $e^\ast_1$ is the most upstream $\Dover[1]$ edge, there exist three edge-disjoint paths $P_{s_2 \tail[e^\ast_1]}$, $P_{s_3 \tail[e^\ast_1]}$, and $P_{\head[e^\ast_1] d_1}$. Fix any arbitrary construction of these paths. Obviously, $P_{s_2 \tail[e^\ast_1]}$ uses $e^\ast_2$ and $P_{s_3 \tail[e^\ast_1]}$ uses $e^\ast_3$.

Since \NotGS[33] is true, there is an irreducible factor of $\ChG[{e^\ast_1}][{e^\ast_2}]$ that is also a factor of $\ChGANA[1][1]$. Since that factor is also a factor of $\ChGANA[1][2]$, by \PropRef{Prop3}, there must exist an edge $e$ satisfying (i) $e^\ast_2 \PRECEQ e \PREC e^\ast_1$; (ii) $e\!\in\!\onecut[s_1][d_1]\CAP\onecut[s_2][d_1]$. By (i), (ii), and the construction $e^\ast_1\!\in\!\Dover[1]\!\subset\!\onecut[s_2][d_1]$, the pre-defined path $P_{s_2 \tail[e^\ast_1]}$ must use such $e$.

Since \GS[36] is true, $e^\ast_3$ is reachable from $s_1$ and $e^\ast_1$ reaches to $d_3$. Choose arbitrarily one path $P_{s_1 \tail[e^\ast_3]}$ from $s_1$ to $\tail[e^\ast_3]$ and one path $P_{\head[e^\ast_1] d_3}$ from $\head[e^\ast_1]$ to $d_3$. We argue that $P_{s_1 \tail[e^\ast_3]}$ must be vertex-disjoint with $P_{s_2 \tail[e^\ast_1]}$. Suppose not and let $v$ denote a vertex shared by $P_{s_1 \tail[e^\ast_3]}$ and $P_{s_2 \tail[e^\ast_1]}$. Then there is a \FromTo[1][3] path $P_{s_1 \tail[e^\ast_3]}v P_{s_2 \tail[e^\ast_1]} e^\ast_1 P_{\head[e^\ast_1] d_3}$ without using $e^\ast_3$. This contradicts the assumption \GS[36] since \GS[36] implies $e^\ast_3\!\in\!\onecut[s_1][d_3]$. However, if $P_{s_1 \tail[e^\ast_3]}$ is vertex-disjoint with $P_{s_2 \tail[e^\ast_1]}$, then there is an \FromTo[1][1] path $P_{s_1 \tail[e^\ast_3]}$ $e^\ast_3 P_{s_3 \tail[e^\ast_1]} e^\ast_1 P_{\head[e^\ast_1] d_1\!}$ not using the edge $e$ defined in the previous paragraph since $e\!\in\!P_{s_2 \tail[e^\ast_1]}$ and $P_{s_2 \tail[e^\ast_1]}$ is edge-disjoint with $P_{s_3 \tail[e^\ast_1]}$. This also contradicts (ii). Since neither case can be true, the proof of \RS[38] is thus complete.
\end{proof}

\RS[39] is a symmetric version of \RS[38] and can be proved by swapping the roles of $s_2$ and $s_3$, and the roles of $d_2$ and $d_3$ in the proof of \RS[38].


We prove \RS[40] as follows.
\begin{proof} Suppose \GS[27]\AND\GS[28]\AND\GS[31]\AND\GS[35]\AND\GS[36] is true. By the Properties of \GS[27] and \GS[28] and by \GS[31], $e^\ast_2$ (resp. $e^\ast_3$) is the most downstream edge of $\Sover[2]$ (resp. $\Sover[3]$). Also $e^\ast_2 \PREC e^\ast_1$ and $e^\ast_3 \PREC e^\ast_1$ where $e^\ast_1$ is the most upstream $\Dover[1]$ edge.

By \GS[36], there exists a path from $s_1$ to $e^\ast_3$. Since $e^\ast_3\!\in\!\Sover[3]$, there exists a path from $e^\ast_3$ to $d_2$ without using $e^\ast_1$. As a result, there exists a path from $s_1$ to $d_2$ through $e^\ast_3$ without using $e^\ast_1$. This contradicts the assumption \GS[35] since \GS[35] implies $e^\ast_1\!\in\!\onecut[s_1][d_2]$. The proof is thus complete.
\end{proof}

\section{Proof of \RS[29]}\label{ProofRS[29]}

\subsection{The additional set of logic statements}

To prove \RS[29], we need some additional sets of logic statements. The following logic statements are well-defined if and only if \GS[28] is true. Recall the definition of $e^\ast_3$ and $e^\ast_1$ when \GS[28] is true.

\noindent $\bullet$ \GS[37]{\bf:}\label{ref:GS[37]} $e^\ast_3\in\onecut[s_1][d_1]$.

\noindent $\bullet$ \GS[38]{\bf:}\label{ref:GS[38]} $e^\ast_3\in\onecut[s_1][d_3]$.

\noindent $\bullet$ \GS[39]{\bf:}\label{ref:GS[39]} $e^\ast_1\in\onecut[s_1][d_1]$.

\noindent $\bullet$ \GS[40]{\bf:}\label{ref:GS[40]} $e^\ast_1\in\onecut[s_1][d_3]$.

\noindent $\bullet$ \GS[41]{\bf:}\label{ref:GS[41]} $e^\ast_3\in\onecut[s_1][d_2]$.

The following logic statements are well-defined if and only if (\NotGS[27])\AND\GS[28] is true. Recall the definition of $e^{21}_u$, $e^{21}_v$, $e^\ast_3$, and $e^\ast_1$ when (\NotGS[27])\AND\GS[28] is true.

\noindent $\bullet$ \GS[42]{\bf:}\label{ref:GS[42]} $e^\ast_1 = e^{21}_u$. 

\noindent $\bullet$ \GS[43]{\bf:}\label{ref:GS[43]} Let $e'$ be the most downstream edge of $\onecut[s_1][d_2]$ $\cap\,\onecut[s_1][{\tail[e^\ast_3]}]$ and also let $e''$ be the most upstream edge of $\onecut[s_1][d_2]\cap\onecut[{\head[e^\ast_3]}][d_2]$. Then, $e'$ and $e''$ simultaneously satisfy the following two conditions: (i) both $e'$ and $e''$ belong to $\onecut[s_1][d_3]$; and (ii) $e''\!\in\!\onecut[{\head[e^{21}_v]}][{\tail[e_{d_3}]}]$ and $e''\PREC e_{d_2}$.

\subsection{The skeleton of proving \RS[29]}

We prove the following relationships, which jointly proves \RS[29].

\noindent $\bullet$ \RS[41]{\bf:}\label{ref:RS[41]} (\NotGS[27])\AND\GS[28]\IMPLY\GS[42].

\noindent $\bullet$ \RS[42]{\bf:}\label{ref:RS[42]} \DS[3]\AND(\NotGS[27])\AND\GS[28]\AND\GS[31]\IMPLY(\GS[37]\OR\GS[38])\AND(\GS[39]\OR\\\noindent\GS[40]).

\noindent $\bullet$ \RS[43]{\bf:}\label{ref:RS[43]} \GS[1]\AND\GS[28]\AND\GS[31]\AND\GS[37]\IMPLY\NotGS[41].

\noindent $\bullet$ \RS[44]{\bf:}\label{ref:RS[44]} \DS[3]\AND(\NotGS[27])\AND\GS[28]\AND\GS[31]\AND\GS[37]\AND(\NotGS[41])\IMPLY\GS[43].

\noindent $\bullet$ \RS[45]{\bf:}\label{ref:RS[45]} \GS[1]\AND\ES[0]\AND\DS[3]\AND(\NotGS[27])\AND\GS[28]\AND\GS[31]\AND\GS[37]\IMPLY\CONT.

\noindent $\bullet$ \RS[46]{\bf:}\label{ref:RS[46]} (\NotGS[27])\AND\GS[28]\AND\GS[31]\AND(\NotGS[37])\AND\GS[38]\AND\GS[39]\IMPLY\CONT.

\noindent $\bullet$ \RS[47]{\bf:}\label{ref:RS[47]} \LNR\AND\DS[4]\AND(\NotGS[27])\AND\GS[28]\AND\GS[31]\AND(\NotGS[37])\AND\GS[38]\AND\\\noindent\GS[40]\IMPLY\CONT.

One can easily verify that jointly \RS[46] and \RS[47] imply
\begin{align*}
\textbf{LNR}\wedge\textbf{D4}&\wedge(\neg\,\textbf{G27})\wedge\textbf{G28}\wedge\textbf{G31} \\
& \wedge(\neg\,\textbf{G37})\wedge\,\textbf{G38}\wedge(\textbf{G39}\vee\textbf{G40})\Rightarrow\text{false}.
\end{align*}

From the above logic relationship and by \RS[42], we have
\begin{align*}
\textbf{LNR}\wedge\textbf{D3}\wedge\textbf{D4}\wedge(\neg\,\textbf{G27})&\wedge\textbf{G28}\wedge\textbf{G31} \\ &\wedge(\neg\,\textbf{G37})\wedge\textbf{G38}\Rightarrow\text{false}.
\end{align*}

From the above logic relationship and by \RS[45], we have
\begin{align*}
\textbf{LNR}\wedge\textbf{G1}\wedge\textbf{E0}\wedge\textbf{D3}\wedge\textbf{D4}&\wedge(\neg\,\textbf{G27})\wedge\textbf{G28}\wedge\textbf{G31} \\
& \wedge(\textbf{G37}\vee\textbf{G38})\Rightarrow\text{false}.
\end{align*}

By applying \RS[42] and \RS[26], we have \LNR\AND\GS[1]\AND\ES[0]\AND\DS[3]\\\noindent\AND\DS[4]\AND(\NotGS[27])\AND\GS[28]\IMPLY\CONT, which proves \RS[29]. The detailed proofs for \RS[41] to \RS[47] are provided in the next subsection.

\subsection{The proofs of \RS[41] to \RS[47]}

We prove \RS[41] as follows.
\begin{proof} Suppose (\NotGS[27])\AND\GS[28] is true. By \NotGS[27] being true and its Property~1, we have $e^{21}_u$ (resp. $e^{21}_v$), the most upstream (resp. downstream) edge of $\Sover[2]\cap\Dover[1]$. Since \NotGS[27] implies that $\Dover[1]\NotEqualEmpty$, by Property~1 of \GS[28], we also have $e^\ast_1$, the most upstream $\Dover[1]$ edge.

Since $\Dover[1]\CAP\Sover[2]\NotEqualEmpty$, we can partition the non-empty $\Dover[1]$ by $\Dover[1]\backslash\Sover[2]$ and $\Dover[1]\CAP\Sover[2]$. By the \SWAPSD-symmetric version of \LemRef{Lem3}, if $\Dover[1]\backslash\Sover[2]\NotEqualEmpty$, then any $\Dover[1]\backslash\Sover[2]$ edge must be in the downstream of $e^{21}_v\!\in\!\Dover[1]\cap\Sover[2]\!\subset\!\Sover[2]$. Thus, $e^{21}_u$, the most upstream $\Dover[1]\cap\Sover[2]$ edge, must also be the most upstream edge of $\Dover[1]$. Therefore, $e^\ast_1 = e^{21}_u$. The proof is thus complete.
\end{proof}

We prove \RS[42] as follows.
\begin{proof} Suppose \DS[3]\AND(\NotGS[27])\AND\GS[28]\AND\GS[31] is true. Since (\NotGS[27])\AND\GS[28]\AND\GS[31] is true, $e^\ast_3$ (resp. $e^\ast_1$) is the most downstream (resp. upstream) edge of $\Sover[3]$ (resp. $\Dover[1]$) and $e^\ast_3\PREC e^\ast_1$. By \RS[41], \GS[42] is also true and thus $e^\ast_1$ is also the most upstream edge of $\Sover[2]\cap\Dover[1]$.

Consider $\ChG[e^\ast_1][e^\ast_3]$, a factor of $\ChGANA[1][3]$. From Property~2 of \GS[28], $\GCD[{\ChGANA[2][3]}][\,{\ChG[e^\ast_1][e^\ast_3]}]\PolyEqual 1$. By \GS[42] being true and Property~2 of \NotGS[27], we also have $\GCD[{\ChG[e^\ast_1][e_{s_2}]}][\,{\ChG[e^\ast_3][e_{s_3}]\ChG[e^\ast_1][e^\ast_3]}]\PolyEqual 1$, which implies that $\GCD[{\ChGANA[1][2]}][\,{\ChG[e^\ast_1][e^\ast_3]}]\,\PolyEqual 1$. Then since \DS[3] is true, we have for some positive integer $l_2$ such that
\begin{equation*}
\GCD[{\ChGANA[1][1]\ChGANA[3][1]^{l_2}}][{\;\ChG[e^\ast_1][e^\ast_3]}]\!=\!\ChG[e^\ast_1][e^\ast_3].
\end{equation*}

\PropRef{Prop3} then implies that both $e^\ast_3$ and $e^\ast_1$ must be in $\onecut[s_1][d_1]\cup\onecut[s_1][d_3]$. This is equivalent to (\GS[37]\OR\GS[38])\\\noindent\AND(\GS[39]\OR\GS[40]) being true. The proof of \RS[42] is complete.
\end{proof}

%

We prove \RS[43] as follows.
\begin{proof} We prove an equivalent form: \GS[28]\AND\GS[31]\AND\GS[37]\\\noindent\AND\GS[41]\IMPLY\NotGS[1]. Suppose \GS[28]\AND\GS[31]\AND\GS[37]\AND\GS[41] is true. Since \GS[28]\AND\GS[31] is true, we have $e^\ast_3$ being the most downstream edge of $\Sover[3]$. Therefore $e^\ast_3\!\in\onecut[s_3][d_1]\cap\onecut[s_3][d_2]$. Since \GS[37]\AND\GS[41] is also true, $e^\ast_3$ belongs to $\onecut[s_1][d_1]\cap\onecut[s_1][d_2]$ as well. As a result, $\EC[\{s_1,s_3\}][\{d_1,d_2\}]\!=\!1$, which, by \CorRef{Cor2} implies \NotGS[1].
\end{proof}

We prove \RS[44] as follows.
\begin{proof} Suppose that \DS[3]\AND(\NotGS[27])\AND\GS[28]\AND\GS[31]\AND\GS[37]\AND\\\noindent(\NotGS[41]) is true, which by \RS[41] implies that \GS[42] is true as well. Since \GS[28]\AND\GS[31] is true, $e^\ast_3$ (resp. $e^\ast_1$) is the most downstream (resp. upstream) edge of $\Sover[3]$ (resp. $\Dover[1]$) and $e^\ast_3 \PREC e^\ast_1$. Recall the definition in \GS[43] that $e'$ is the most downstream edge of $\onecut[s_1][d_2]\cap\onecut[s_1][{\tail[e^\ast_3]}]$ and $e''$ is the most upstream edge of $\onecut[s_1][d_2]\cap\onecut[{\head[e^\ast_3]}][d_2]$. By the constructions of $e'$ and $e''$, we must have $e_{s_1}\PRECEQ e' \PREC e^\ast_3 \PREC e'' \PRECEQ e_{d_2}$. Then, we claim that the above construction together with \NotGS[41] implies $\EC[{\head[e']}][{\tail[e'']}]\!\geq\!2$. The reason is that if $\EC[{\head[e']}][{\tail[e'']}]\!=\!1$, then we can find an $1$-edge cut separating $\head[e']$ and $\tail[e'']$ and by \NotGS[41] such edge cut must not be $e^\ast_3$. Hence, such edge cut is either an upstream or a downstream edge of $e^\ast_3$. However, either case is impossible, because the edge cut being in the upstream of $e^\ast_3$ will contradict that $e'$ is the most downstream one during its construction. Similarly, the edge cut being in downstream of $e^\ast_3$ will contradict the construction of $e''$. The conclusion $\EC[{\head[e']}][{\tail[e'']}]\!\geq\!2$ further implies $\ChG[e''][e']$ is irreducible.


Further, because $e^\ast_3$ is the most downstream $\Sover[3]$ edge and $e''$, by construction, satisfies $e''\!\in\!\onecut[s_3][d_2]$, $e''$ must not belong to $\onecut[s_3][d_1]$, which in turn implies $e''\!\not\in\!\onecut[{\head[e^\ast_3]}][d_1]$. Since \GS[37] is true, $s_1$ can reach $e^\ast_3$. Therefore, there exists an \FromTo[1][1] path using $e^\ast_3$ but not using $e''$. As a result, $e''\!\not\in\!\onecut[s_1][d_1]$. Together with $\ChG[e''][e']$ being irreducible, we thus have $\GCD[{\ChGANA[1][1]}][\,{\ChG[e''][e']}]\,\PolyEqual 1$ by \PropRef{Prop3}.

Now we argue that $\GCD[{\ChGANA[1][2]}][\,{\ChG[e''][e']}]\PolyEqual 1$. Suppose not. Since $\ChG[e''][e']$ is irreducible, we must have $e'$ being an $1$-edge cut separating $s_2$ and $d_1$. Since $e^\ast_1$ is the most upstream $\Dover[1]$ edge, by Property~2 of \GS[28], there exists a \FromTo[2][1] path $P_{21}$ not using $e^\ast_3$. By the construction of $e'$, $s_1$ reaches $e'$. Choose arbitrarily a path $P_{s_1 e'}$ from $s_1$ to $e'$. Then, the following \FromTo[1][1] path $P_{s_1 e'}e'P_{21}$ does not use $e^\ast_3$, which contradicts \GS[37]. As a result, we must have $\GCD[{\ChGANA[1][2]}][\,{\ChG[e''][e']}]\,\PolyEqual 1$.

Now we argue that $\GCD[{\ChGANA[2][3]}][\,{\ChG[e''][e']}]\PolyEqual 1$. Suppose not. Since $\ChG[e''][e']$ is irreducible, both $e'$ and $e''$ must belong to $\onecut[s_3][d_2]$ and there is no $1$-edge cut of $\onecut[s_3][d_2]$ that is strictly being downstream to $e'$ and being upstream to $e''$. This, however, contradicts the above construction that $e' \PREC e^\ast_3 \PREC e''$ and $e^\ast_3\!\in\!\Sover[3]\!\subset\!\onecut[s_3][d_2]$. As a result, we must have $\GCD[{\ChGANA[2][3]}][\,{\ChG[e''][e']}]\,\PolyEqual 1$.

Together with the assumption that \DS[3] is true and the fact that $\ChG[e''][e']$ is a factor of $\ChGANA[2][1]$, we have for some positive integer $l_2$ such that
\begin{equation*}
\GCD[{\ChGANA[3][1]^{l_2}}][\;{\ChG[e''][e']}] = \ChG[e''][e'].
\end{equation*}

\PropRef{Prop3} then implies $\{e',e''\}\!\subset\!\onecut[s_1][d_3]$, which shows the first half of \GS[43].

Therefore, any \FromTo[1][3] path must use $e''$. Since $e^\ast_3 \PREC e''$ and $s_1$ can reach $e^\ast_3$, any path from $\head[e^\ast_3]$ to $d_3$ must use $e''$. Note that when we establish $\GCD[{\ChGANA[1][1]}][\,{\ChG[e''][e']}]\,\PolyEqual 1$ in the beginning of this proof, we also proved that $e''\!\not\in\onecut[s_1][d_1]$. Thus, there exists a path from $\head[e^\ast_3]$ to $d_1$ not using $e''$. Then such path must use $e^{21}_v$ because $e^{21}_v$ is also an $1$-edge cut separating $\head[e^\ast_3]$ and $d_1$ by the facts that $e^{21}_v\!\in\!\Sover[2]\,\cap$ $\Dover[1]\!\subset\!\onecut[s_3][d_1]$; $e^\ast_3\PREC e^{21}_v$; $s_3$ reaches $e^\ast_3$. Moreover, since $e^{21}_v\!\in\!\Sover[2]\CAP\Dover[1]\!\subset\!\onecut[s_2][d_3]$, $\head[e^{21}_v]$ can reach $d_3$. In sum, we have shown that (i) any path from $\head[e^\ast_3]$ to $d_3$ must use $e''$; (ii) there exists a path from $e^\ast_3$ to $e^{21}_v$ not using $e''$; (iii) $\head[e_v^{21}]$ can reach $d_3$. Jointly (i) to (iii) imply that any path from $\head[e^{21}_v]$ to $d_3$ must use $e''$. As a result, we have $e''\!\in\!\onecut[{\head[e^{21}_v]}][d_3]$. Also $e''$ must not be the $d_3$-destination edge $e_{d_3}$ since by construction $e''\PRECEQ e_{d_2}$, $e_{d_2}\!\neq\! e_{d_3}$, and $|\OUT[d_3]|\!=\!0$. This further implies that $e''$ must not be the $d_2$-destination edge $e_{d_2}$ since $e''\PREC e_{d_3}$ and $|\OUT[d_2]|\!=\!0$. We have thus proven the second half of \GS[43]: $e''\!\in\!\onecut[{\head[e^{21}_v]}][{\tail[{e_{d_3}}]}]$ and $e''\PREC e_{d_2}$. The proof of \RS[44] is complete.
\end{proof}

We prove \RS[45] as follows.
\begin{proof} Suppose \GS[1]\AND\ES[0]\AND\DS[3]\AND(\NotGS[27])\AND\GS[28]\AND\GS[31]\AND\\\noindent\GS[37] is true. By \RS[41], \RS[43], and \RS[44], we know that \GS[42], \NotGS[41], and \GS[43] are true as well. For the following we construct 10 path segments that interconnects $s_1$ to $s_3$, $d_1$ to $d_3$, and three edges $e''$, $e^\ast_3$, and $e^\ast_1$.

\noindent \makebox[1cm][l]{$\bullet$ $P_{1}$:}a path starting from $e_{s_1}$ and ending at $e'$. This is always possible due to \GS[43] being true.

\noindent \makebox[1cm][l]{$\bullet$ $P_{2}$:}a path from $s_2$ to $\tail[e^\ast_1]$ without using $e^\ast_3$. This is always possible due to the properties of \GS[28]. 

\noindent \makebox[1cm][l]{$\bullet$ $P_{3}$:}a path from $s_3$ to $\tail[e^\ast_3]$. This is always possible due to \GS[28] and \GS[31] being true. We also impose that $P_3$ is edge-disjoint with $P_2$. Again, this is always possible due to Property~2 of \GS[28].

\noindent \makebox[1cm][l]{$\bullet$ $P_{4}$:}a path from $\head[e']$ to $\tail[e'']$. This is always possible due to \GS[43] being true. We also impose the condition that $e^\ast_3\!\not\in\! P_4$. Again this is always possible since \NotGS[41] being true, which implies that one can always find a path from $s_1$ to $d_2$ not using $e^\ast_3$ but uses both $e'$ and $e''$ (due to the construction of $e'$ and $e''$ of \GS[43]).

\noindent \makebox[1cm][l]{$\bullet$ $P_{5}$:}a path from $\head[e^\ast_3]$ to $\tail[e^\ast_1]$. We also impose the condition that $P_5$ is edge-disjoint with $P_2$. The construction of such $P_5$ is always possible due to the Properties of \GS[28].

\noindent \makebox[1cm][l]{$\bullet$ $P_{6}$:}a path from $\head[e^\ast_1]$ to $d_1$. This is always possible due to (\NotGS[27])\AND\GS[28] being true. We also impose the condition that $e''\!\not\in\! P_6$. Again this is always possible. The reason is that $e^\ast_3$ is the most downstream $\Sover[3]$ edge and thus there are two edge-disjoint paths connecting $\head[e^\ast_3]$ and $\{d_1,d_2\}$. By our construction $e''$ must be in the latter path while we can choose $P_6$ to be part of the first path.

\noindent \makebox[1cm][l]{$\bullet$ $P_{7}$:}a path from $\head[e^\ast_3]$ to $\tail[e'']$, which is edge-disjoint with $\{P_5,e^\ast_1,P_6\}$. This is always possible due to the property of $e^\ast_3$ and the construction of \GS[43].

\noindent \makebox[1cm][l]{$\bullet$ $P_{8}$:}a path from $\head[e'']$ to $d_2$, which is edge-disjoint with $\{P_5,e^\ast_1,P_6\}$. This is always possible due to the property of $e^\ast_3$ and the construction of \GS[43].

\noindent \makebox[1cm][l]{$\bullet$ $P_{9}$:}a path from $\head[e^\ast_1]$ to $\tail[e'']$. This is always possible due to \GS[43] being true  (in particular the (ii) condition of \GS[43]).

\noindent \makebox[1.15cm][l]{$\bullet$ $P_{10}$:}a path from $\head[e'']$ to $d_3$. This is always possible due to \GS[43] being true (in particular the (ii) condition of \GS[43]).

Fig. \ref{Prop5+R21+Fig1} illustrates the relative topology of these 10 paths. We now consider the subgraph $\G[']$ induced by the 10 paths plus the three edges $e''$, $e^\ast_3$, and $e^\ast_1$. One can easily check that $s_i$ can reach $d_j$ for all $i\!\neq\!j$. In particular, $s_1$ can reach $d_2$ through $P_1 P_4 e'' P_8$; $s_1$ can reach $d_3$ through $P_1 P_4 e'' P_{10}$; $s_2$ can reach $d_1$ through $P_2 e^\ast_1 P_6$; $s_2$ can reach $d_3$ through $P_2 e^\ast_1 P_9 e'' P_{10}$; $s_3$ can reach $d_1$ through $P_3 e^\ast_3 P_5 e^\ast_1 P_6$; and $s_3$ can reach $d_2$ through either $P_3 e^\ast_3 P_5 e^\ast_1 P_9 e'' P_8$ or $P_3 e^\ast_3 P_7 e'' P_8$.  Furthermore, topologically, the 6 paths $P_5$ to $P_{10}$ are all in the downstream of $e^\ast_3$.

\begin{figure}[t]
\centering
\includegraphics[scale=0.2]{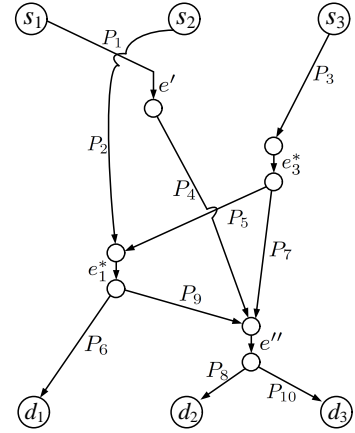}
\caption{The subgraph $\G[']$ of the 3-unicast ANA network $\GANA$ induced by 10 paths and three edges $e''$, $e^\ast_3$, and $e^\ast_1$ in the proof of \RS[45].}
\vspace{-0.04\columnwidth}
\label{Prop5+R21+Fig1}
\end{figure}

For the following we argue that $s_1$ cannot reach $d_1$ in the induced subgraph $\G[']$. To that end, we first notice that by \GS[37], $e^\ast_3\!\in \!\onecut[s_1][d_1]$ in the original graph. Therefore any \FromTo[1][1] path in the subgraph must use $e^\ast_3$ as well. Since $P_5$ to $P_{10}$ are in the downstream of $e^\ast_3$, we only need to consider $P_1$ to $P_4$.

By definition, $P_3$ reaches $e^\ast_3$. We now like to show that $e^\ast_3\!\not\in\!P_2$, and $\{P_2,P_3\}$ are vertex-disjoint paths. The first statement is done by our construction. Suppose $P_2$ and $P_3$ share a common vertex $v$ ($v$ can possibly be $\tail[e^\ast_3]$), then there exists a \FromTo[3][1] path $P_3 v P_2 e^\ast_1 P_6$ not using $e^\ast_3$. This contradicts \GS[28] (specifically $e^\ast_3\!\in\!\Sover[3]\!\subset\!\onecut[s_3][d_1]$). The above arguments show that the first time a path enters/touches part of $P_3$ (including $\tail[e^\ast_3]$) must be along either $P_1$ or $P_4$ (cannot be along $P_2$). As a result, when deciding whether there exists an \FromTo[1][1] path using $e^\ast_3$, we only need to consider whether $P_1$ (and/or $P_4$) can share a vertex with $P_3$. To that end, we will prove that (i) $e^\ast_3\!\not\in\! P_1$; (ii) $\{P_1,P_3\}$ are vertex-disjoint paths; (iii) $e^\ast_3\!\not\in\! P_4$; and (iv) $\{P_3,P_4\}$ are vertex-disjoint paths. Once (i) to (iv) are true, then there is no \FromTo[1][1] path in the subgraph $\G[']$.

We now notice that (i) is true since $e'\PREC e^\ast_3$; (iii) is true due to our construction; (ii) is true otherwise let $v$ denote the shared vertex and there will exist a \FromTo[3][2] path $P_3 v P_1 P_4 e'' P_8$ without using $e^\ast_3$, which contradicts \GS[28] ($e^\ast_3\!\in\!\Sover[3]\!\subset\!\onecut[s_3][d_2]$); and by the same reason, (iv) is true otherwise let $v$ denote the shared vertex and there will exist a \FromTo[3][2] path $P_3 v P_4 e'' P_8$ without using $e^\ast_3$. We have thus proven that there is no \FromTo[1][1] path in $\G[']$.


Since \ES[0] is true, $\GANA$ must satisfy \Ref{E0} with at least one non-zero coefficients $\alpha_i$ and $\beta_j$, respectively. Applying \PropRef{Prop2} implies that the subgraph $\G[']$ must satisfy \Ref{E0} with the same coefficient values. Note that there is no path from $s_1$ to $d_1$ on $\G[']$ but any channel gain $\ChGANA[j][i]$ for all $i\!\neq\!j$ is non-trivial on $\G[']$. Recalling the expression of \Ref{E0}, its LHS becomes zero since it contains the zero polynomial $\ChGANA[1][1]$ as a factor. We have $g(\{\ChGANA[j][i]:\forall\,(i,j)\!\in\!I_{\textrm{3ANA}}\})\,\psi^{(n)}_\beta(\BOLDb,\BOLDa) = 0$ and thus $\psi^{(n)}_\beta(\BOLDb,\BOLDa)=0$ with at least one non-zero coefficients $\beta_j$. This further implies that the set of  polynomials $\{\BOLDb^{n},\BOLDb^{n-1}\BOLDa, \cdots, \BOLDb\BOLDa^{n-1}, \BOLDa^{n}\}$ is linearly dependent on $\G[']$. Since this is the Vandermonde form, it is equivalent to that $\LReq$ holds on $\G[']$. However for the following, we will show that (a) $\Dover[1]\cap\Dover[2]\EqualEmpty$; (b) $\Sover[1]\cap\Sover[3]\EqualEmpty$; and (c) $\Sover[2]\cap\Sover[3]\EqualEmpty$ on $\G[']$, which implies by \PropRef{Prop4} that $\G[']$ indeed satisfies $\LRneq$. This is a contradiction and thus proves \RS[45].

(a) $\Dover[1]\cap\Dover[2]\EqualEmpty$ on $\G[']$: Note that any $\Dover[1]$ edge can exist on (i) $e^\ast_1$; and (ii) $P_6$. Note also that any $\Dover[2]$ edge can exist on (i) $e''$; and (ii) $P_8$. But from the above constructions, $P_6$ was chosen not to use $e''$. In addition, $P_8$ was chosen to be edge-disjoint with $\{e^\ast_1, P_6\}$. Moreover, $e^\ast_1\PREC e''$. Thus, we must have $\Dover[1]\cap\Dover[2]\EqualEmpty$ on $\G[']$.

(b) $\Sover[1]\cap\Sover[3]\EqualEmpty$ on $\G[']$: Note that any $\Sover[1]$ edge can exist on (i) $P_1$; (ii) $P_4$; (iii) $e''$; and (iv) an edge that $P_8$ and $P_{10}$ shares. Note also that any $\Sover[3]$ can exist on (i) $P_3$; and (ii) $e^\ast_3$. But $e^\ast_3$ is in the upstream of $e''$, $P_8$, and $P_{10}$. Also, $e^\ast_3$ is in the downstream of $e'$, ending edge of $P_1$. In addition, $P_4$ was chosen not to use $e^\ast_3$. Moreover, we already showed that $\{P_1,P_3\}$ are vertex-disjoint paths; and $\{P_3, P_4\}$ are vertex-disjoint paths. Thus, we must have $\Sover[1]\cap\Sover[3]\EqualEmpty$ on $\G[']$.

(c) $\Sover[2]\cap\Sover[3]\EqualEmpty$ on $\G[']$: Note that any $\Sover[2]$ edge can exist on (i) $P_2$; (ii) $e^\ast_1$; (iii) an edge that $P_6$ and $P_9$ shares; and (iv) an edge that $P_6$ and $P_{10}$ share. Note also that any $\Sover[3]$ edge can exist on (i) $P_3$; and (ii) $e^\ast_3$. However, $e^\ast_3$ is in the upstream of $e^\ast_1$, $P_6$, $P_9$, and $P_{10}$. In addition, $P_2$ was chosen not to use $e^\ast_3$. Moreover, we already showed that $\{P_2,P_3\}$ are vertex-disjoint paths. Thus, we must have $\Sover[2]\cap\Sover[3]\EqualEmpty$ on $\G[']$.
\end{proof}

We prove \RS[46] as follows.
\begin{proof} Suppose that (\NotGS[27])\AND\GS[28]\AND\GS[31]\AND(\NotGS[37])\AND\\\noindent\GS[38]\AND\GS[39] is true. By \RS[41], \GS[42] is true as well. Since \GS[28]\AND\\\noindent\GS[31] is true, $e^\ast_3$ (resp. $e^\ast_1$) is the most downstream (resp. upstream) edge of $\Sover[3]$ (resp. $\Dover[1]$). From (\NotGS[37])\AND\GS[38]\AND\GS[39] being true, we also have $e^\ast_3\!\in\!\onecut[s_1][d_3]\backslash\onecut[s_1][d_1]$ and $e^\ast_1\!\in\!\onecut[s_1][d_1]$.

Since \GS[42] is true, we have $e^\ast_1\!=\!e^{21}_u$ is in $\Sover[2]$. Any arbitrary \FromTo[2][3] path $P_{23}$ thus must use $e^\ast_1$. Since $e^\ast_3\!\not\in\!\onecut[s_1][d_1]$ and $e^\ast_1\!\in\!\onecut[s_1][d_1]$, there exists an \FromTo[1][1] path $Q_{11}$ using $e^\ast_1$ but not using $e_3^\ast$. Then, we can create a \FromTo[1][3] path $Q_{11} e^\ast_1 P_{23}$ not using $e^\ast_3$, which contradicts $e^\ast_3\!\in\!\onecut[s_1][d_3]$. The proof of \RS[46] is complete.
\end{proof}

We prove \RS[47] as follows.
\begin{proof} Suppose that \LNR\AND\DS[4]\AND(\NotGS[27])\AND\GS[28]\AND\GS[31]\AND\\\noindent(\NotGS[37])\AND\GS[38]\AND\GS[40] is true. Since \GS[28]\AND\GS[31] is true, $e^\ast_3$ (resp. $e^\ast_1$) is the most downstream (resp. upstream) edge of $\Sover[3]$ (resp. $\Dover[1]$). Since (\NotGS[27])\AND\GS[28] implies \GS[42], $e^\ast_1$ also belongs to $\Sover[2]$, which implies that $e^\ast_1\!\in\!\onecut[s_2][d_3]$. Since \GS[40] is true, we have $e^\ast_1\!\in\!\onecut[s_1][d_3]$. Jointly the above arguments imply $e^\ast_1\!\in\!\Dover[1]\cap\Dover[3]$. Also, \GS[38] being true implies $e^\ast_3\!\in\!\Sover[3]\cap\onecut[s_1][d_3]$. Since \LNR\, is true and $\Dover[1]\cap\Dover[3]\NotEqualEmpty$, by \PropRef{Prop4} we must have $\Sover[1]\cap\Sover[3]\EqualEmpty$, which implies that $e^\ast_3$ cannot belong to $\onecut[s_1][d_2]$.

Let a node $u$ be the tail of the edge $e^\ast_3$. Since $e^\ast_3\!\in\!\onecut[s_1][d_3]$, $u$ is reachable from $s_1$. Since $e^\ast_3\!\in\!\Sover[3]$, $u$ is also reachable form $s_3$. Consider the collection of edges, $\onecut[s_1][u]\cap\onecut[s_3][u]$ (may be empty), all edges of which are in the upstream of $e^\ast_3$ if non-empty. Note that $\left(\onecut[s_1][u]\cap\onecut[s_3][u]\right)\cup\{e^\ast_3\}$ is always non-empty (since it contains at least $e^\ast_3$). Then, we use $e''$ to denote the most upstream edge of $\left(\onecut[s_1][u]\cap\onecut[s_3][u]\right)\cup\{e^\ast_3\}$. Let $e'$ denote the most downstream edge among all edges in $\onecut[s_1][{\tail[e'']}]$. Such choice is always possible since $\onecut[s_1][{\tail[e'']}]$ contains at least one edge (the $s_1$-source edge $e_{s_1}$) and thus we have $e_{s_1} \PRECEQ e' \PREC e'' \PRECEQ e^\ast_3$. Since we choose $e'$ to be the most downstream one, by \PropRef{Prop3} the channel gain $\ChG[e''][e']$ must be irreducible. Moreover, since $e^\ast_3\!\in\! \onecut[s_1][d_3]$, any path from $s_1$ to $d_3$ must use $e^\ast_3$. Consequently since $e''\!\in\! \onecut[s_1][u]\cup\{e^\ast_3\}$, any path from $s_1$ to $d_3$ must also use $e''$. Consequently since $e'\!\in\! \onecut[s_1][{\tail[e'']}]$, any path from $s_1$ to $d_3$ must also use $e'$. As a result, $\{e', e''\}\!\subset\!\onecut[s_1][d_3]$. Therefore $\ChG[e''][e']$ is a factor of $\ChGANA[3][1]$.

Now we argue that $\GCD[{\ChGANA[1][3]}][\,{\ChG[e''][e']}]\PolyEqual 1$. Suppose not. Since $\ChG[e''][e']$ is irreducible, by \PropRef{Prop3} we must have $e'\!\in\!\onecut[s_3][d_1]$. Note that $e'\!=\!e_{s_1}$ cannot be a $1$-edge cut separating $s_3$ and $d_1$ from the definitions (i) and (ii) of the 3-unicast ANA network. Thus, we only need to consider the case when $e_{s_1} \PREC e'$ since $e_{s_1}\PRECEQ e'$ from the construction of $e'$. Since $e^\ast_3\!\in\!\onecut[s_3][d_1]$ and $e'\PREC e^\ast_3$ is an $1$-edge cut separating $s_3$ and $d_1$, we must have $e'\!\in\!\onecut[s_3][u]$. Note that the most downstream $\onecut[s_1][{\tail[e'']}]$ edge $e'$ also belongs to $\onecut[s_1][u]$ from our construction. Therefore, jointly, this contradicts the construction that $e''$ is the most upstream edge of $\left(\onecut[s_1][u]\CAP \onecut[s_3][u]\right)\CUP\{e^\ast_3\}$ since $e'\PREC e''$.

Now we argue that $\GCD[{\ChGANA[3][2]}][\,{\ChG[e''][e']}]\PolyEqual 1$. Suppose not. Since $\ChG[e''][e']$ is irreducible, we must have $e'\!\in\!\onecut[s_2][d_3]$ and thus $e_{s_1}\PREC e'$. Choose arbitrarily a path from $s_1$ to $e'$. Since we have already established $e^\ast_3\PREC e^\ast_1$ and $e^\ast_1$ is the most upstream edge of $\Dover[1]$, there exists a path $P_{s_2 \tail[e^\ast_1]}$ from $s_2$ to $\tail[e^\ast_1]$ not using $e^\ast_3$. Since $e^\ast_1$ is also in $\Dover[3]$, $\head[e^\ast_1]$ can reach $d_3$. Note that the chosen path $P_{s_2 \tail[e^\ast_1]}$ must use $e'$ since $e'\!\in\!\onecut[s_2][d_3]$. As a result, $s_1$ can reach $d_3$ by going to $e'$ first, and then following $P_{s_2 \tail[e^\ast_1]}$ to $e^\ast_1$, and then going to $d_3$, without using $e^\ast_3$. This contradicts the assumption that $e^\ast_3\!\in\! \onecut[s_1][d_3]$.

Now we argue that $\GCD[{\ChGANA[2][1]}][\,{\ChG[e''][e']}]\PolyEqual 1$. Suppose not. Since $\ChG[e''][e']$ is irreducible, we must have $e''\!\in\!\onecut[s_1][d_2]$. Since we have established \NotGS[41] (i.e., $e^\ast_3\!\not\in\!\onecut[s_1][d_2]$), we only need to consider the case when $e''\PREC e^\ast_3$. Then by construction there exists a \FromTo[1][2] path $P_{12}$ going through $e''$ but not $e^\ast_3$. However, since by construction $e''$ is reachable from $s_3$, there exists a path from $s_3$ to $e''$ first and then use $P_{12}$ to arrive at $d_2$. Such a \FromTo[3][2] path does not use $e^\ast_3$, which contradicts the assumption that $e^\ast_3\!\in\!\Sover[3]\!\subset\!\onecut[s_3][d_2]$.

Now we argue that $\GCD[{\ChGANA[1][1]}][\,{\ChG[e''][e']}]\PolyEqual 1$. Suppose not. Since $\ChG[e''][e']$ is irreducible, we must have $e''\!\in\!\onecut[s_1][d_1]$. Since \NotGS[37] is true (i.e., $e^\ast_3\!\not\in\! \onecut[s_1][d_1]$), we only need to consider the case when $e''\PREC e^\ast_3$. Then by construction there exists a \FromTo[1][1] path $P_{11}$ going through $e''$ but not $e^\ast_3$. However, since by construction $e''$ is reachable from $s_3$, there exists a path from $s_3$ to $e''$ first and then use $P_{11}$ to arrive at $d_1$. Such a \FromTo[3][1] path does not use $e^\ast_3$, which contradicts the assumption that $e^\ast_3\!\in\!\Sover[3]\!\subset\!\onecut[s_3][d_1]$.

The four statements in the previous paragraphs shows that
\begin{align*}
\GCD[{\ChGANA[1][1]\ChGANA[2][1]\ChGANA[3][2]\ChGANA[1][3]}][\;{\ChG[e''][e']}]\,\PolyEqual 1.
\end{align*}

This, however, contradicts the assumption that \DS[4] is true since we have shown that $\ChG[e''][e']$ is a factor of $\ChGANA[3][1]$. The proof of \RS[47] is thus complete.
\end{proof}

\section{Proof of \RS[30]}\label{ProofRS[30]}

If we swap the roles of $s_2$ and $s_3$, and the roles of $d_2$ and $d_3$, then the proof of \RS[29] in \AppRef{ProofRS[29]} can be directly applied to show \RS[30]. More specifically, note that both \DS[3] and \DS[4] are converted back and forth from each other when swapping the flow indices. Similarly, the index swapping also converts \GS[27] to \GS[28] and vice versa. Since \LNR, \GS[1], and \ES[0] remain the same after swapping the flow indices, we can see that \RS[29] becomes \RS[30] after swapping the flow indices. The proofs of \RS[29] in \AppRef{ProofRS[29]} can thus be used to prove \RS[30].

\bibliographystyle{IEEEtranS}
\bibliography{JMH_IEEEfull,jmhan,infoT,nc_sys,book}
\end{document}